\let\doendproof\endproof
\renewcommand\endproof{~\hfill\qed\doendproof}
\newcommand{\pd}[0]{\tau}
\newcommand{\rs}[0]{\mbox{RS}}
\newcommand{\pth}[0]{\mbox{PS}}
\newcommand{\ds}[0]{\mbox{DS}}
\newcommand{\cd}[0]{\mbox{CD}}
\newcommand{\sig}[0]{\mbox{sig}}
\newcommand{\hevol}[0]{\mbox{hist}}
\newcommand{\del}[0]{d}
\newcommand{\mdel}[0]{\bar{d}}
\begin{document}

\title{On the Effectiveness of Punishments in a Repeated Epidemic Dissemination Game}

\author{Xavier Vila\c{c}a \and Lu\'{\i}s Rodrigues}

\institute{INESC-ID, Instituto Superior T\'{e}cnico, Universidade de Lisboa\\
xvilaca@gsd.inesc-id.pt\\
ler@ist.utl.pt}

\date{\today}

\maketitle

\begin{abstract}
This work uses Game Theory to study the effectiveness of punishments as an incentive
for rational nodes to follow an epidemic dissemination protocol.
The dissemination process is modeled as an infinite repetition of a stage game. At the end
of each stage, a monitoring mechanism informs each player of the actions of other nodes.
The effectiveness of a punishing strategy is measured as the range of values for the benefit-to-cost ratio that sustain cooperation.
This paper studies both public and private monitoring.  Under public monitoring, we show that direct reciprocity is not an effective incentive,
whereas full indirect reciprocity provides a nearly optimal effectiveness. Under private monitoring,
we identify necessary conditions regarding the topology of the graph in order for punishments to be effective.
When punishments are coordinated, full indirect reciprocity is also effective with private monitoring.
\keywords{Epidemic Dissemination, Game Theory, Peer-to-Peer.}
\end{abstract}

\section{Introduction}
Epidemic broadcast protocols are known to be extremely scalable and
robust\,\cite{Birman:99,Deshpande:06,Libo:08}. As a result, they are particularly
well suited to support the dissemination of information in large-scale
peer-to-peer systems, for instance, to support live
streaming\,\cite{Li:06,Li:08}. In such an environment, nodes do not
belong to the same administrative domain. On the contrary, many of
these systems rely on resources made available by self-interested
nodes that are not necessarily obedient to the protocol. In
particular, participants may be rational and aim at maximizing their
utility, which is a function of the benefits obtained from receiving
information and the cost of contributing to its dissemination.

Two main incentive mechanisms may be implemented to
ensure that rational nodes are not interested in deviating from the
protocol: one is to rely on balanced exchanges\,\cite{Li:06,Li:08};
other is to monitor the degree of cooperation of every node
and punish misbehavior\,\cite{Guerraoui:10}. When balanced exchanges are
enforced, in every interaction, nodes must exchange an equivalent
amount of messages of interest to each other. This approach has the
main disadvantage of requiring symmetric interactions between
nodes. In some cases, more efficient protocols may be achieved with
asymmetric interactions\,\cite{Deshpande:06,Libo:08,Guerraoui:10}, 
where balanced exchanges become infeasible.
Instead, nodes are expected to forward messages without immediately receiving any benefit
in return. Therefore, one must consider repeated interactions for nodes to able to
collect information about the behavior of their neighbors, which may
be used to detect misbehavior and trigger punishments.

Although monitoring has been used to detect and expel free-riders from epidemic dissemination protocols\,\cite{Guerraoui:10}, 
no theoretical analysis studied the ability of punishments to sustain cooperation among rational nodes.
Therefore, in this paper, we tackle this gap by using Game Theory\,\cite{Osborne:94}. The aim is to study the existence of equilibria
in an infinitely repeated Epidemic Dissemination game. The stage
game consists in a sequence of messages disseminated by the source,
which are forwarded by every node $i$ to each neighbor $j$ with an independent
probability $p_i[j]$. At the end of each stage, a monitoring
mechanism provides information to each node regarding $p_i[j]$.

Following work in classical Game Theory that shows that cooperation in repeated games
can be sustained using punishing strategies\,\cite{Fudenberg:86}, we focus on this class of strategies.
We assume that there is a pre-defined target for the reliability of the epidemic dissemination process. To
achieve this reliability, each node should forward every
message to each of its neighbors with a probability higher than some 
threshold probability $p$, known a priory by the two neighboring nodes. We
consider that a player $i$ defects from a neighbor $j$ if it uses a probability lower
than $p$ when forwarding information to $j$. Each node $i$ receives
a benefit $\beta_i$ per received message, but incurs a cost $\gamma_i$ of forwarding
a message to a neighbor. Given this, we are particularly interested in determining the range 
of values of the ratio benefit-to-cost $(\beta_i/\gamma_i)$ that allows punishing
strategies to be equilibria. The wider is this range, the more likely it is for all
nodes to cooperate.

The main contribution of this paper is a quantification of the
effectiveness of different punishing strategies under two types of monitors: public and private.
Public monitors inform every node of the actions of every other node with no delays. On the other hand,
private monitoring inform only a subset of nodes of the actions of each node, and possibly with some delays. 
In addition, we study two particular types of punishing strategies: direct and full indirect reciprocity.
In the former type, each node is solely responsible for monitoring and punishing each neighbor, individually. The
latter type specifies that each misbehaving node should eventually be punished by every neighbor.
More precisely, we make the following contributions:

\begin{itemize}
  \item We derive a generic necessary and sufficient condition for a punishing strategy to be a Subgame Perfect Equilibrium under public monitoring.
  From this condition, we also derive an upper bound for the effectiveness of strategies that use direct reciprocity
  as an incentive. We observe that this value decreases very quickly with an increasing reliability, in many realistic scenarios.
  On the other hand, if full indirect reciprocity is used, then this problem can be avoided. We derive a lower bound for the effectiveness
  of these strategies, which is not only independent from the desired reliability, but also close to the theoretical
  optimum, under certain circumstances.

  \item Using private monitoring with delays, information collected by each node may be incomplete, even if local monitoring is perfect.
  We thus consider the alternative solution concept of Sequential Equilibrium, which requires the specification of a belief
  system that captures the belief held by each player regarding past events of which it has not been informed. For a punishing
  strategy to be an equilibrium, this belief must be consistent. We provide a definition of consistency that is sufficient
  to derive the effectiveness of punishing strategies.
  
  \item  Under private monitoring with a consistent belief system, 
  we show that certain topologies are ineffective when monitoring is fully distributed.
  Then, we prove that, unless full indirect reciprocity is possible,
  the effectiveness decreases monotonically with the reliability. To avoid this problem, punishments should be coordinated, i.e.,
  punishments applied to a misbehaving node $i$ by every neighbor of $i$ should overlap in time.
  We derive a lower bound for the effectiveness of full indirect reciprocity strategies with coordinated punishments.
  The results indicate that the number of stages during which punishments overlap should be at least of the order of
  the maximum delay of the monitoring mechanism. This suggests that, when implementing a distributed monitoring mechanism,
  delays should be minimized.
\end{itemize}

The remainder of the paper is structured as follows. Section~\ref{sec:related}
discusses some related work. The general model
is provided in Section~\ref{sec:model}. The analysis of public and private monitoring
are given in Sections~\ref{sec:public} and~\ref{sec:private}, respectively. Section~\ref{sec:conc}
concludes the paper and provides directions of future work.

\section{Related Work}
\label{sec:related}
There are examples of work that use monitoring to persuade rational nodes to
engage in a dissemination protocol. In Equicast\,\cite{Keidar:09}, the authors perform
a Game Theoretical analysis of a multicast protocol where nodes monitor the rate of messages
sent by their neighbors and apply punishments whenever the rate drops below a certain threshold. 
The protocol is shown to be a dominating strategy.
Nodes are disposed in an approximately random network, thus, the dissemination process
resembles epidemic dissemination. However, given that the network is connected and
nodes are expected to forward messages to every neighbor with probability $1$, there is no non-determinism
in the delivery of messages. Furthermore, the authors restrict the
actions available to each player by assuming that they only adjust the number of neighbors with which they interact
and a parameter of the protocol. This contrasts with our analysis, where we consider non-deterministic
delivery of messages and a more general set of strategies available to players.

Guerraoui et al.\,\cite{Guerraoui:10} propose a mechanism
that monitors the degree of cooperation of each node in epidemic dissemination protocols.
The goal is to detect and expel free-riders.
This mechanism performs statistical inferences on the reports provided by every node regarding its neighbors,
and estimates the cooperation level of each node.
 If this cooperation level is lower than a minimum value, then the node is expelled from the network. The authors perform a theoretical and experimental analysis
to show that this mechanism guarantees that free-riders only benefit by deviating from the protocol
if the degree of deviation is not significantly high. However, no Game Theoretical analysis is performed to
determine in what conditions are free-riders willing to abide to the protocol.

In~\cite{Li:06,Li:08}, the authors rely on balanced exchanges to provide incentives for
nodes to cooperate in dissemination protocols for data streaming. In BAR Gossip\,\cite{Li:06}, the proposed epidemic dissemination protocol
enforces strictly balanced exchanges. This requires the use of a pseudo-random
number generator to determine the set of interactions in every round of exchanged updates,
and occasionally nodes may have to send garbage as a payment for any unbalance in the amount
of information exchanged with a neighbor. A stepwise analysis shows that nodes cannot increase their utility by deviating in any step of the protocol.
In FlightPath\,\cite{Li:08}, the authors remove the need for sending garbage by allowing imbalanced exchanges.
By limiting the maximum allowed imbalance between every pair of nodes, the authors
show that it is possible for the protocol to be an $1/10$-Nash equilibrium, while still
ensuring a streaming service with high quality. Unfortunately, these results might not
hold for other dissemination protocols that rely on highly imbalanced exchanges. 
In these cases, a better alternative might be to rely on a monitoring approach.

Other game theoretical analysis have addressed a similar problem, but
in different contexts.  In particular, the tit-for-tat strategy used
in BitTorrent, a P2P file sharing system, has been subjected to a
wide variety of Game Theoretical analysis\,\cite{Feldman:04,Rahman:11,Qiu:04,Levin:08}. 
These works consider a set of $n$ nodes deciding with which nodes to
cooperate, given a limited number of available connections,
with the intent to share content. Therefore, contrary to our analysis, 
there is no non-determinism in content delivery.

Closer to our goal is the trend of work that applies game theory to
selfish routing\,\cite{Srivastava:06,Felegyhazi:06,Ji:06}. 
In this problem, each node may be a source of messages to be routed
along a fixed path of multiple relay nodes to a given destination. The benefit
of a node is to have its messages delivered to the destination, while it incurs
the costs of forwarding messages as a relay node. This results in a linear
relationship between the actions of a player and the utility of other nodes. In our case,
that relation is captured by the definition of reliability, which is non-linear. Consequently,
the utility functions of an epidemic dissemination and a routing game possess an inherently
different structure.

\section{Model}
\label{sec:model}
We now describe the System and Game Theoretical models, followed by the definition of effectiveness.
In Appendix~\ref{sec:epidemic}, we provide a more thorough description of the considered epidemics model and include
some auxiliary results that are useful for the analysis.

\subsection{System Model}
There is a set of nodes ${\cal N}$ organized into a directed graph $G$.
This models a P2P overlay network with a stable membership.
Each node has a set of in (${\cal N}_i^{-1}$)
and out-neighbors (${\cal N}_i$). Communication channels are assumed to be reliable.
We model the generation of messages in this network by considering the existence of a single
external source $s$. Its behavior is described by a
profile $\vec{p}_s$, which defines for each node $i \in
{\cal N}_s$ the probability $p_s[i] \in [0,1)$ of $i$ receiving a message
directly from $s$, with the restriction that $p_s[i] > 0$ for some $i$.
We consider the graph to be connected from the source $s$, i.e.,
there exists a path from $s$ to every node $i \in {\cal N}$.
Conversely, every node $i$ forwards messages to every neighbor $j \in {\cal N}_i$ with an independent
probability $p_{i}[j]$. Provided a profile of probabilities $\vec{p}$, which includes
the vector of probabilities $\vec{p}_s$ and $\vec{p}_i$ used by $s$ and by every node $i$, respectively,
we can define the reliability of the dissemination protocol as the probability of a node receiving
a message. For the analysis, it is convenient to consider the probability of a node $i$ \emph{not} receiving
a message, denoted by $q_i[\vec{p}]$. The reliability of the protocol is then defined by $1-q_i[\vec{p}]$.
The exact expression of $q_i$ is included in Appendix~\ref{sec:epidemic}.

\subsection{Monitoring Mechanism.}
The monitoring mechanism emits a signal $s \in {\cal S}$, where every player $i$ may observe a different private signal $s_i \in s$.
This signal can take two values for every pair of nodes $j \in {\cal N}$ and $k\in{\cal N}_j$:
$s_i[j,k] = \mbox{\emph{cooperate}}$ notifies $i$ that $j$ forwarded messages to $k$ with a probability higher than a specified threshold, and
$s_i[j,k] = \mbox{\emph{defect}}$ signals the complementary action. This signal may be public, if all nodes read the same signal,
or private, otherwise. Moreover, if the signal is perfectly correlated with the action taken by a node, then
monitoring is perfect; otherwise, monitoring is said to be imperfect.

We consider that monitoring is performed locally by every node. 
A possible implementation of such monitoring mechanism in the context of P2P networks can be based on the work of~\cite{Guerraoui:10}.
A simpler and cheaper mechanism would instead consist in every out-neighbor $j$ of a given node $i$ recording the fraction
of messages sent by $i$ to $j$ during the dissemination of a fixed number $M$ of messages. Then, $j$ may use this information along with an estimate of the reliability of the dissemination
of messages to $i$ ($1-q_i$) in order to determine whether $i$ is cooperating or defecting. When a defection is detected, $j$ is
disseminates an accusation against $i$ towards other nodes. If $i$ is expected to use $p_i[j]<1$ towards $j$, then monitoring is imperfect. Furthermore,
accusations may be blocked, disrupted, or wrongly emitted against one node due to both malicious and rational behavior. However, in this paper, we consider only perfect monitoring, faithful propagation of accusations, and that nodes are rational. 
Almost perfect monitoring can be achieved with a large $M$. Faithful propagation may be reasonable to assume if the impact of punishments on the reliability of each non-punished node is small
and the cost of sending accusations is not significant. We intend to relax these assumptions in future work.

In our model, an accusation emitted by a node $j$ against an in-neighbor $i$ may only be received by
the nodes that are reachable from $j$ by following paths in the graph. 
In addition, if we consider the obvious possibility that $i$ might block any accusation emitted by one of its neighbors,
then these paths cannot cross $i$. Finally, the number of nodes informed of each defection may be further reduced to minimize
the monitoring costs. This restricts the set of in-neighbors of $i$ that may punish
$i$ for defecting $j$. In this paper, we consider two alternative models.
First, we study public monitoring, where all nodes may be informed about any
defection with no delays. Then, we study the private monitoring case, taking into consideration the possible
delay of the dissemination of accusations.

\subsection{Game Theoretical Model}
Our model considers an infinite repetition of a stage game. Each stage consists in
the dissemination of a sequence of messages and is interleaved with the execution
of the monitoring mechanism, which provides every node with some information
regarding the actions taken by other nodes during the stage game.

\subsubsection{Stage Game.}
The stage game is modeled as a strategic game. An action of a player $i$ is 
a vector of probabilities $\vec{p}_i \in {\cal P}_i$, such that $p_i[j]>0$ only if $j \in {\cal N}_i$. 
Thus, $\vec{p}_i$ represents the average probability used by $i$ to forward messages during the stage.
It is reasonable to consider that $i$ adheres to $\vec{p}_i$ during the complete stage, since $i$ expects to be monitored by other nodes
with regard to a given $\vec{p}_i$. Hence, changing strategy is equivalent to following a different $\vec{p}_i$.
Despite $s$ not being a player, for simplicity, we consider that every profile $\vec{p} \in {\cal P}$ implicitly contains $\vec{p}_s$.
We can also define a mixed strategy $a_i \in {\cal A}_i$ as a probability distribution over ${\cal P}_i$,
and a profile of mixed strategies $\vec{a} \in {\cal A}$ as a vector containing the mixed strategies followed by every player.
The utility of a player $i$ is a function of the benefit $\beta_i$ obtained per received message and
the cost $\gamma_i$ of forwarding a message to each neighbor.
More precisely, this utility is given by the probability of receiving messages ($1-q_i[\vec{p}]$) multiplied by the difference
between the benefit per message ($\beta_i$) and the expected cost of forwarding that message to every neighbor ($\gamma_i \sum_{j \in {\cal N}_i} p_{i}[j]$):
$$u_i[\vec{p}] = (1-q_i[\vec{p}])(\beta_i - \gamma_i \sum_{j \in {\cal N}_i} p_{i}[j]).$$

If players follow a profile of mixed strategies $\vec{a}$, then the expected utility is denoted by $u_i[\vec{a}]$, which definition depends
on the structure of every $a_j$.

\subsubsection{Repeated Game.}
The repeated game consists in the infinite interleaving between the stage game and the execution of the monitoring
mechanism, where future payoffs are discounted by a factor $\omega_i$ for every player $i$. 
The game is characterized by (possibly infinite) sequences of previously observed signals, named histories.
The set of finite histories observed by player $i$ is represented by ${\cal H}_i$ and ${\cal H} = ({\cal H}_i)_{i \in {\cal N}}$
is the set of all histories observed by any player. A pure strategy for the repeated game $\sigma_i \in \Sigma_i$ maps each history to an action $\vec{p}_i$,
where $\vec{\sigma} \in \Sigma$ is a profile of strategies. Consequently, $\vec{\sigma}[h]$ specifies for some history $h \in {\cal H}$ the profile
of strategies $\vec{p}$ for the stage game to be followed by every node after history $h$ is observed. A behavioral strategy $\sigma_i$ differs 
from a pure strategy only in that $i$ assigns a probability distribution $a_i \in {\cal A}_i$
over the set of actions for the stage game. For simplicity, we will use the same notation for the two types of strategies.
The expected utility of player $i$ after having observed history $h_i$ is given by $\pi_i[\vec{\sigma}|h_i]$. The exact definitions
of equilibrium and expected utility depend on the type of monitoring being implemented. Hence, these definitions will be provided
in each of the sections regarding public and private monitoring.

\subsubsection{A Brief Note on Notation.}
Throughout the paper, we will conveniently simplify the notation as follows. Whenever referring
to a profile of strategies $\vec{\sigma}$, followed by all nodes except $i$,
we will use the notation $\vec{\sigma}_{-i}$. Also, $(\sigma_i,\vec{\sigma}_{-i})$ denotes the composite of a strategy $\sigma_i$ and a profile $\vec{\sigma}_{-i}$. 
The same reasoning applies to profiles of pure and mixed strategies of the stage game. 
Finally, we will let $(h,s)$ denote the history that follows $h$ after signal $s$ is observed.

\subsection{Effectiveness}
We know from Game Theoretic literature that certain punishing strategies can sustain cooperation if the discount factor $\omega_i$
is sufficiently close to $1$\,\cite{Fudenberg:86}. This minimum value is a function of the parameters $\beta_i$ and $\gamma_i$ for every player $i$.
More precisely, for larger values of the benefit-to-cost ratio $\beta_i/\gamma_i$, the minimum required value of $\omega_i$ is smaller.
In addition, for certain values of the benefit-to-cost ratio, no value of $\omega_i$ can sustain cooperation. Notice that these parameters
are specified by the environment and thus cannot be adjusted in the protocol. Thus, a strategy is more effective if it is an equilibrium
for wider ranges of $\omega_i$, $\beta_i$, and $\gamma_i$. In this paper, we only measure the effectiveness of
a profile of strategies $\vec{\sigma}$ as the allowed range of values for the benefit-to-cost ratio.
\begin{definition}
The effectiveness of a profile $\vec{\sigma} \in \Sigma$ is given by $\psi[\vec{\sigma}] \subseteq [0,\infty)$,
such that, if, for every $i \in {\cal N}$, $\frac{\beta_i}{\gamma_i} \in \psi[\vec{\sigma}]$,
then there exists $\omega_i \in (0,1)$ for every $i \in {\cal N}$ such that $\vec{\sigma}$ is an equilibrium.
\end{definition}

\section{Public Monitoring}
\label{sec:public}
In this section, we assume that the graph allows public monitoring to be implemented. 
That is, every node is informed about each defection at the end of the stage when the defection occurred.
We can thus simplify the notation by considering only public signals $s \in {\cal S}$ and histories $h \in {\cal H}$. 
With perfect monitoring, the public signal observed after players follow $\vec{p} \in {\cal P}$ is deterministic.
This type of monitoring requires accusations to be broadcast. However, since the dissemination of accusations is interleaved
with the dissemination of a sequence of messages, monitoring costs may not be relevant if the size of each accusation is small, compared to the size
of messages being disseminated.

The section is organized as follows. We start by providing a general definition of punishing strategies and then introduce the definition of expected utility
and the solution concept for public monitoring. We then proceed to a Game Theoretical analysis, where
we analyze punishing strategies that use direct reciprocity and full indirect reciprocity.

\subsection{Public Signal and Punishing Strategies}
We study a wide variety of punishing strategies, by considering a parameter $\pd$
that specifies the duration of punishments. Of particular interest to this analysis is the case where the duration of punishments
is infinite, which is known in the Game Theoretical literature as the Grim-trigger strategy.
Furthermore, a punishing strategy specifies a Reaction Set
$\rs[i,j] \subseteq {\cal N}$ of nodes that are expected to react to every defection of $i$ from $j$ during $\pd$ stages.
This set always contains $i$ and $j$, but it may also contain other nodes. In particular, a third node $k \in \rs[i,j]$
that is an in-neighbor of $i$ ($k \in {\cal N}_i^{-1}$) is expected to stop forwarding any messages to $i$, as a punishment. 
If $k$ is not a neighbor of $i$, then $k$ may also adapt the probabilities used towards its out-neighbors, for instance,
to keep the reliability high for every unpunished node.

In order for a node $j \in {\cal N}$ to monitor an in-neighbor $i \in {\cal N}_j^{-1}$, the protocol must define for every history $h \in {\cal H}$
a threshold probability $p_i[j|h]$ with which $i$ should forward messages to $j$. Since $h$ is public, $p_i[j|h]$ is common knowledge
between $i$ and $j$, allowing for an accurate monitoring. Given this, the public signal for perfect public monitoring is defined as follows.
\begin{definition}
\label{def:pubsig}
For every $h \in {\cal H}$ and $\vec{p}' \in {\cal P}$, let $s = \sig[\vec{p}'|h]$ be the public signal observed
when players follow $\vec{p}'$. For every $i \in {\cal N}$ and $j \in {\cal N}_i$,
$s[i,j] = \mbox{\emph{cooperate}}$ if and only if $p_i'[j] \geq p_i[j|h]$.
\end{definition}

Then, a punishing strategy becomes a set of rules specifying how every $p_i[j|h]$ should be defined.
Namely, let $\sigma_i^* \in \Sigma_i$ denote a punishing strategy, which specifies that after a history $h$
every node $i$ should forward messages to a neighbor $j$ with probability $p_i[j|h]$.
We will denote by $\vec{\sigma}^* \in \Sigma$ the profile of punishing strategies. The restrictions imposed on every $p_i[j|h]$ can be defined as follows.
Every node $i$ evaluates the set of defections observed in a history $h$ by $i$ and every neighbor $j$ to which both nodes should react. 
Basing on this information, $i$ uses a deterministic function to determine the probability $p_i[j|h]$.
For convenience, we will define $p_i[k|h] = 0$ for every node $k \in {\cal N} \setminus {\cal N}_i$ that is not an out-neighbor of $i$.

For the precise definition of punishing strategy, we need an additional data structure called Defection Set ($\ds_i[j|h]$) containing the set of defections
to which both $i$ and $j$ are expected to react, according to $\rs$. This information is specified in the form of tuples $(k_1,k_2,r)$ stating that both $i$ and $j$
are expected to react to a defection of $k_1$ from $k_2$ that occurred in the previous $r$-th stage. This way, $p_i[j|h]$
is defined as a function of $\ds_i[j|h]$. Namely, if $\ds_i[j|h]$ contains some defection of
$j$ from $k$ and $i$ should react to it ($i \in\rs[j,k]$) or if $i$ defected from $j$, then $p_i[j|h]$ is $0$.
Otherwise, $i$ forwards messages to $j$ with any positive probability that is a deterministic function of $\ds_i[j|h]$.

\begin{definition}
\label{def:thr}
Define $\ds_i[j|h] \subseteq {\cal N} \times {\cal N} \times \mathbb{Z}$ as follows for every $i \in {\cal N}$, $j \in {\cal N}_i$, and $h \in {\cal H}$:
\begin{itemize}
     \item $\ds_i[j|\emptyset] = \emptyset$.
     \item For $h = (h',s)$, $\ds_i[j|h] = L_1 \cup L_2$, where:
     \begin{enumerate}
       \item $L_1 = \{(k_1,k_2,r+1) | (k_1,k_2,r) \in \ds_i[j|h'] \land r +1 < \pd)\}$.
       \item $L_2 = \{(k_1,k_2,0) | k_1, k_2 \in {\cal N} \land i,j \in \rs[k_1,k_2] \land s[k_1,k_2] = \mbox{\emph{defect}}\}$.
     \end{enumerate}
\end{itemize}
For every $h \in {\cal H}$, $i \in {\cal N}$, and $j \in {\cal N}_i$:
  \begin{itemize}
    \item If there exists $r <\pd$ such that $(i,j,r) \in \ds_i[j|h]$, then $p_i[j|h] = 0$.
    \item If there exist $r < \pd$ and $k \in {\cal N}_j$ such that $(j,k,r) \in \ds_i[j|h]$, then $p_i[j|h] = 0$.
    \item Otherwise, $p_i[j|h]$ is a positive function of $\ds_i[j|h]$.
  \end{itemize}
\end{definition}

We consider that the source $s$ also abides to this strategy.
In Section~\ref{sec:pub-evol}, we will show that it follows by construction that if some node $k$ observes a defection of $i$ 
from a node $j$ and $k \in \rs[i,j]$, then $k$ reacts to this defection during $\pd$ stages,
regardless of the ensuing actions of $i$ and the current punishments being applied.
In addition, after defecting some neighbor $j$, $i$ does not forward messages to any node of
$\rs[i,j]$ in any of the following $\pd$ stages.

\subsection{Expected Utility and Solution Concept}
The expected utility of a profile of pure strategies for every player $i$ and history $h$ is given by:
\begin{equation}
\label{eq:pure-util}
\pi_i[\vec{\sigma}|h] = u_i[\vec{p}] + \omega_i \pi_i[\vec{\sigma}|(h,\sig[\vec{p}|h])],
\end{equation}
where $\vec{p} = \vec{\sigma}[h]$. Conversely, we can define the expected utility for a profile of behavioral strategies as follows:
\begin{equation}
\label{eq:behave-util}
\pi_i[\vec{\sigma}|h] = u_i[\vec{a}] + \omega_i \sum_{s \in {\cal S}} \pi_i[\vec{\sigma}|(h,s)]pr[s|\vec{a},h],
\end{equation}
where $\vec{a} = \vec{\sigma}[h]$ and $pr[s|\vec{a},h]$ is defined as
$$pr[s|\vec{a},h] = \prod_{j \in {\cal N}}pr_{j}[s|a_j,h],$$
where $pr_{j}[s|a_j,h]$ is the probability of the actions of $j$ in $a_j$ leading to $s$.

The considered solution concept for this model is the notion of Subgame Perfect Equilibrium (SPE)\,\cite{Osborne:94}, which refines
the solution concept of Nash Equilibrium (NE) for repeated games. In particular, a profile of strategies is a NE if no player can increase
its utility by deviating, given that other players follow the specified strategies. The solution concept of NE is adequate for instance
for strategic games, where players choose their actions prior to the execution of the game. However, in repeated games, players
have multiple decision points, where they may adapt their actions according to the observed history of signals. In this case, the notion
of NE ignores the possibility of players being faced with histories that are not consistent with the defined strategy, e.g., when some
defection is observed. This raises the possibility of the equilibrium only being sustained by non-credible threats.
To tackle this issue, the notion of SPE was proposed, which requires in addition
the defined strategy to be a NE after any history. This intuition is formalized as follows.

\begin{definition}
A profile of strategies $\vec{\sigma}^*$ is a Subgame Perfect Equilibrium if and only if for every player $i \in {\cal N}$,
history $h \in {\cal H}$, and strategy $\sigma_i' \in \Sigma_i$, 
$$\pi_i[\sigma_i^*,\vec{\sigma}_{-i}^*|h] \geq \pi_i[\sigma_i',\vec{\sigma}_{-i}^*|h].$$
\end{definition}

While this definition considers variations in strategies, it is possible to analyze only variations in the first
strategy for the stage game after any history and holding $\vec{\sigma}^*$ for the remaining stages, as stated by the one-deviation property.
For any $a_i' \in {\cal A}_i$ and $\vec{p}_i' \in {\cal P}_i$, let $\sigma_i^*[h|a_i']$ and $\sigma_i^*[h|\vec{p}_i']$
denote the strategies where $i$ always follows $\sigma_i^*$, except after history $h$,
when it chooses $a'_i$ and $\vec{p}_i'$ respectively. The same notation will be used for profiles $\vec{a}' \in {\cal A}$ and $\vec{p}' \in {\cal P}$,
namely, $\vec{\sigma}^*[h|\vec{a}']$ and $\vec{\sigma}^*[h|\vec{p}']$, respectively.
The following property captures the above intuition, which is known
to be true from Game Theoretic literature and can be proven in a similar fashion to~\cite{Blackwell:65}:

\begin{property}
\label{prop:one-dev}
\textbf{One-deviation.} 
A profile of strategies $\vec{\sigma}^*$ is a SPE if and only if
for every player $i \in {\cal N}$, history $h \in {\cal H}$, and $a_i' \in {\cal A}_i$,
$\pi_i[\sigma_i^*,\vec{\sigma}_{-i}^*|h] \geq \pi_i[\sigma_i',\vec{\sigma}_{-i}^*|h]$,
where $\sigma_i' = \sigma_i^*[h|a_i']$.
\end{property}

\subsection{Evolution of the Network}
\label{sec:pub-evol}
After any history $h \in {\cal H}$, the network induced by $h$ when players follow $\vec{\sigma}^*$ can be characterized by a subgraph,
where a link $(i,j)$ is active iff $i$ is not punishing $j$ and $i$ has not defected from $j$ in the last $\pd$ stages. All the remaining
links are inactive. When considering a profile of punishing strategies $\vec{\sigma}^*$, the evolution of this subgraph over time is deterministic.
That is, after a certain number of stages, inactive links become active, such that at most after $\pd$ stages we obtain the original graph.
Given this, we prove some correctness properties of the punishment strategy, which require the introduction of some auxiliary notation.
The complete proofs are in Appendix~\ref{sec:proof-pub-evol}. All the considered proofs are performed by induction.

For any profile of pure strategies $\vec{\sigma} \in \Sigma$ and $h \in {\cal H}$, let $\hevol[h,r|\vec{\sigma}]$ denote the history resulting
from players following $\vec{\sigma}$ during $r$ stages, after having observed $h$. That is:
$$\hevol[h,r|\vec{\sigma}] = (h,(s^{r'})_{r' \in \{1\ldots r\}}),$$
such that $s^1 = \sig[\vec{\sigma}[h]|h]$ and for every $r' \in \{1\ldots r-1\}$ we have $s^{r'+1} = \sig[\vec{\sigma}[h']|h']$, where $h' = \hevol[h,r'|\vec{\sigma}]$.
Notice that $\hevol[h,0|\vec{\sigma}] = h$ for every $h \in {\cal H}$ and $\vec{\sigma} \in \Sigma$.

The following notation will be useful in the analysis, where, for every $r>0$, $h' = \hevol[h,r-1|\vec{\sigma}]$ and $\vec{p}' = \vec{\sigma}[h']$:
\begin{itemize}
  \item $q_i[h,r|\vec{\sigma}] = q_i[\vec{p}']$.
  \item $\bar{p}_i[h,r|\vec{\sigma}] = \sum_{j \in {\cal N}_i} p_i'[j]$.
  \item $u_i[h,r|\vec{\sigma}] = u_i[\vec{p}'] = (1-q_i[h,r|\vec{\sigma}])(\beta_i - \gamma_i \bar{p}_i[h,r|\vec{\sigma}])$.
  \item ${\cal N}_i[h] = \{j \in {\cal N}_i|p_i[j|h]>0\}$.
\end{itemize}

The following lemma characterizes the evolution of the punishments being applied to any pair of nodes.

\begin{lemma}
\label{lemma:corr-0}
For every $h \in {\cal H}$, $r \in \{1 \ldots \pd-1\}$, $i \in {\cal N}$,  and $j \in {\cal N}_i$, 
$$\ds_i[j|h_r^*] = \{(k_1,k_2,r'+r) | (k_1,k_2,r') \in \ds_i[j|h] \land r' + r < \pd \},$$
where $h_r^* = \hevol[h,r|\vec{\sigma}^*]$.
\end{lemma}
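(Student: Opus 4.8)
The plan is to prove the statement by induction on $r$, exploiting one key observation: when every node (including the source $s$) adheres to the punishing profile $\vec{\sigma}^*$, no fresh defection is ever recorded, so the set $L_2$ in Definition~\ref{def:thr} is empty at every stage of the evolution and the Defection Set merely \emph{ages} by one unit per stage. Concretely, for any history $h'$ the action profile $\vec{p}' = \vec{\sigma}^*[h']$ has $p_i'[j] = p_i[j|h']$ for every $i \in {\cal N}$ and $j \in {\cal N}_i$; hence by Definition~\ref{def:pubsig} the signal $s = \sig[\vec{p}'|h']$ satisfies $s[k_1,k_2] = \mbox{\emph{cooperate}}$ for all pairs (the inequality $p_{k_1}'[k_2] \geq p_{k_1}[k_2|h']$ holds with equality, even in the punishing case where the prescribed threshold is $p_{k_1}[k_2|h'] = 0$). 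Thus along the evolution $h_r^*$ every observed signal is the all-cooperate signal, and I would record this as the first step.

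For the base case $r = 1$, I would unfold $h_1^* = (h,s^1)$ with $s^1 = \sig[\vec{\sigma}^*[h]|h]$ and apply Definition~\ref{def:thr}. Since $s^1$ is all-cooperate, $L_2 = \emptyset$, so $\ds_i[j|h_1^*] = L_1 = \{(k_1,k_2,r'+1) \mid (k_1,k_2,r') \in \ds_i[j|h] \land r'+1 < \pd\}$, which is exactly the claimed set for $r = 1$. For the inductive step, assuming the claim for some $r$ with $r+1 \leq \pd-1$, I would write $h_{r+1}^* = (h_r^*,s^{r+1})$ with $s^{r+1}$ again all-cooperate, so that $\ds_i[j|h_{r+1}^*] = L_1$ is obtained by aging $\ds_i[j|h_r^*]$ by one unit. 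Substituting the induction hypothesis, a generic element $(k_1,k_2,r'+r)$ of $\ds_i[j|h_r^*]$ becomes $(k_1,k_2,r'+r+1)$ provided $r'+r+1 < \pd$; and since $r'+r+1 < \pd$ already implies $r'+r < \pd$, the side condition inherited from the hypothesis is subsumed. Rewriting $r'+r+1 = r'+(r+1)$ yields precisely $\{(k_1,k_2,r'+(r+1)) \mid (k_1,k_2,r') \in \ds_i[j|h] \land r'+(r+1) < \pd\}$, completing the step.

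The argument is essentially bookkeeping once the all-cooperate observation is in place, so I do not expect a genuine obstacle; the one point requiring care is verifying that following $\vec{\sigma}^*$ indeed produces cooperate signals even on the coordinates where the prescribed threshold is $0$ (a node that is itself punishing or that has recently defected), since there the comparison in Definition~\ref{def:pubsig} must be read as $0 \geq 0$. I would also note that the range $r \in \{1 \ldots \pd-1\}$ is exactly what the induction supports: each stage consumes one unit of the punishment budget $\pd$, and the simplification of the index condition in the inductive step relies on $r+1 \leq \pd-1$.
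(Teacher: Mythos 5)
Your proof is correct and follows essentially the same route as the paper's: an induction on $r$ whose engine is the observation that along $\hevol[h,r|\vec{\sigma}^*]$ every public signal is all-\emph{cooperate} (since each $p_i'[j]$ equals the threshold $p_i[j|h']$, including the $0 \geq 0$ case), so the $L_2$ component of Definition~\ref{def:thr} is empty and $\ds_i[j|\cdot]$ only ages, with the condition $r'+r+1 < \pd$ subsuming $r'+r < \pd$ in the inductive step. Your explicit check of the zero-threshold coordinates is a point the paper asserts implicitly via Definition~\ref{def:pubsig}, but there is no substantive difference in the argument.
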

\begin{proof}
The proof is by induction, where for the initial case it follows immediately from the definition of $\ds_i$ that 
for every $(k_1,k_2,r' ) \in \ds_i[j|h]$ such that $r' < 1$,  $r'$ is incremented by $1$ in the next stage.
Inductively, after $r$ stages, either $r' + r< \pd$ and  $(k_1,k_2,r' +r) \in \ds_i[j|h]$ or 
$(k_1,k_2,r')$ has been removed from $\ds_i$. 
(\hyperref[proof:lemma:corr-0]{Complete proof in Section~\ref{proof:lemma:corr-0}}).
\end{proof}

From this lemma, we obtain the following trivial corollary that simply states that
every punishment ends after $\pd$ stages. This is true by the
fact that for every $h \in {\cal H}$, $i \in {\cal N}$, $j \in {\cal N}_i$, and $(k_1,k_2,r') \in \ds_i^{h}[j]$, 
we have $r' < \tau$.

\begin{corollary}
\label{corollary:corr-0}
For every $h \in {\cal H}$, $r \geq \pd$,
$i \in {\cal N}$, and $j \in {\cal N}_i$, it holds $\ds_i[j|h_r^*] = \emptyset$,
where $h_r^* = \hevol[h,r|\vec{\sigma}^*]$.
\end{corollary}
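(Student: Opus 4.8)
The plan is to derive the statement from Lemma~\ref{lemma:corr-0} by pushing its characterization one stage beyond $\pd - 1$ and then arguing that, once emptied, the defection sets are never repopulated along the evolution induced by $\vec{\sigma}^*$. The two facts I would lean on are that every counter occurring in a defection set is strictly below $\pd$ (immediate from Definition~\ref{def:thr}, since $L_2$ inserts tuples with counter $0$ and $L_1$ retains only those whose incremented counter stays below $\pd$, whence all counters are also nonnegative), and that no fresh defection is ever registered while all nodes follow $\vec{\sigma}^*$.

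First I would establish the second fact, which is the heart of the argument. Along $h_r^* = \hevol[h,r|\vec{\sigma}^*]$, every node $i$ forwards to each $j$ with exactly the prescribed probability $p_i[j|h']$ (the value $0$ included, while punishing). By Definition~\ref{def:pubsig}, the test $p_i'[j] \geq p_i[j|h']$ then holds with equality, so $s[k_1,k_2] = \emph{cooperate}$ for every pair at every stage of the evolution. Consequently, in the recursion of Definition~\ref{def:thr} the set $L_2$ is empty at each step taken after $h$, and the only tuples that can occur in $\ds_i[j|h_r^*]$ for $r \geq 1$ are those carried over from $\ds_i[j|h]$ by repeated application of $L_1$.

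Next I would invoke Lemma~\ref{lemma:corr-0} at $r = \pd - 1$: a tuple $(k_1,k_2,r')$ of $\ds_i[j|h]$ survives to $h_{\pd-1}^*$ only if $r' + \pd - 1 < \pd$, i.e. $r' = 0$ by nonnegativity, and it then carries counter $\pd - 1$. Taking one further stage to $h_\pd^*$, the $L_1$ update would increment this counter to $\pd$, which violates the retention condition $r + 1 < \pd$, so $L_1 = \emptyset$; together with $L_2 = \emptyset$ this yields $\ds_i[j|h_\pd^*] = \emptyset$. A trivial induction on $r > \pd$ then closes the argument: if $\ds_i[j|h_{r-1}^*] = \emptyset$, then $L_1$ has nothing to increment and $L_2$ is empty, so $\ds_i[j|h_r^*] = \emptyset$.

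The main obstacle — really the only non-mechanical point — is the claim that $L_2 = \emptyset$ along the $\vec{\sigma}^*$-path, i.e. that following the punishing strategy (forwarding at exactly the threshold, zero included) never triggers a defection signal. Once this is pinned down via Definition~\ref{def:pubsig}, the remainder is just the bookkeeping of the $r + 1 < \pd$ cutoff and the nonnegativity of counters, both immediate from Definition~\ref{def:thr}.
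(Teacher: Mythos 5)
Your proof is correct and takes essentially the same route as the paper: the corollary is read off from Lemma~\ref{lemma:corr-0} together with the two bookkeeping facts from Definitions~\ref{def:pubsig} and~\ref{def:thr} — that counters in any $\ds_i[j|h]$ lie in $\{0,\ldots,\pd-1\}$ and that $L_2=\emptyset$ along the $\vec{\sigma}^*$-path. Your explicit bridge from stage $\pd-1$ to stage $\pd$ and the induction for $r>\pd$ merely spell out, a bit more carefully, what the paper dismisses as trivial.
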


The following lemma proves that every node $i$ that defects from a neighbor $j$
expects to be punished exactly during the next $\pd$ stages, regardless
of the following actions of $i$ or the punishments already being applied to $i$.
The auxiliary notation $\cd_i[\vec{p}'|h]$ is used to denote the characterization
of the defections performed by $i$ in $\vec{p}'$ after history $h$. More precisely, for every $i \in {\cal N}$,
$$\cd_i[\vec{p}'|h] =\{j \in {\cal N}_i | p_i'[j] < p_i[j|h]\}.$$

\begin{lemma}
\label{lemma:corr-1}
For every $h \in {\cal H}$, $\vec{p}' \in {\cal P}$, $r \in \{1 \ldots \pd\}$,
$i \in {\cal N}$, and $j \in {\cal N}_i$,
\begin{equation}
\label{eq:res-corr1}
\ds_i[j|h_r'] = \ds_i[j|h_r^*] \cup \{(k_1,k_2,r-1) | k_1,k_2 \in {\cal N} \land k_2 \in \cd_{k_1}[\vec{p}'|h] \land  i,j \in \rs[k_1,k_2]\},
\end{equation}
where $h_r^* = \hevol[h,r|\vec{\sigma}^*]$, $h_r' = \hevol[h,r| \vec{\sigma}']$, and
$\vec{\sigma}' =\vec{\sigma}^*[h|\vec{p}']$ is the profile of strategies where all players follow $\vec{p}'$ in the first stage.
\end{lemma}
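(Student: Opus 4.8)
The plan is to prove Lemma~\ref{lemma:corr-1} by induction on $r$, mirroring the structure of the proof of Lemma~\ref{lemma:corr-0} but now tracking the additional defections introduced by the single first-stage deviation $\vec{p}'$. The key conceptual point is that the deviation $\vec{p}'$ differs from the prescribed profile $\vec{\sigma}^*[h]$ only in the very first stage; for every subsequent stage both $\vec{\sigma}'$ and $\vec{\sigma}^*$ follow the punishing strategy, and crucially they both operate under the \emph{same} public signal processing rules from Definition~\ref{def:thr}. The extra set on the right-hand side of~\eqref{eq:res-corr1}, namely $\{(k_1,k_2,r-1) \mid k_1,k_2 \in {\cal N} \land k_2 \in \cd_{k_1}[\vec{p}'|h] \land i,j \in \rs[k_1,k_2]\}$, is precisely the record of those defections committed in the first stage (captured by $\cd_{k_1}[\vec{p}'|h]$, the set of neighbors of $k_1$ that $k_1$ underserves relative to the threshold) to which both $i$ and $j$ are expected to react, aged by $r-1$ stages.

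For the base case $r=1$, I would compute $\ds_i[j|h_1']$ directly from Definition~\ref{def:thr} applied to the history $(h, \sig[\vec{p}'|h])$. The set decomposes as $L_1 \cup L_2$: the $L_1$ part ages the pre-existing entries of $\ds_i[j|h]$ by one, which is exactly $\ds_i[j|h_1^*]$ since the first-stage aging is identical whether or not a deviation occurs; the $L_2$ part collects the fresh defections signalled by $\sig[\vec{p}'|h]$. By Definition~\ref{def:pubsig}, $s[k_1,k_2] = \emph{defect}$ precisely when $p'_{k_1}[k_2] < p_{k_1}[k_2|h]$, i.e.\ when $k_2 \in \cd_{k_1}[\vec{p}'|h]$, and these enter $\ds_i[j|h]$ with counter $0 = r-1$ whenever $i,j \in \rs[k_1,k_2]$. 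This yields exactly the claimed union at $r=1$.

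For the inductive step, assuming~\eqref{eq:res-corr1} holds at some $r < \pd$, I would apply the $\ds$ recursion once more to pass from $h_r'$ to $h_{r+1}'$. Here I would use the fact, established earlier in Section~\ref{sec:pub-evol}, that from the second stage onward the signal is deterministic under $\vec{\sigma}^*$, so no \emph{new} defections are introduced in stages $2$ through $r+1$ (both $\vec{\sigma}'$ and $\vec{\sigma}^*$ merely serve the threshold probabilities, which are positive and hence register as \emph{cooperate}). Consequently the $L_2$ contribution at stage $r+1$ is empty, and the update reduces to aging: every entry of $\ds_i[j|h_r']$ with counter strictly below $\pd - 1$ has its counter incremented, while entries reaching $\pd$ are discarded. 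Applying this aging separately to the two pieces of the inductive hypothesis, the term $\ds_i[j|h_r^*]$ ages into $\ds_i[j|h_{r+1}^*]$ by Lemma~\ref{lemma:corr-0}, and each $(k_1,k_2,r-1)$ in the extra set ages into $(k_1,k_2,r)$, which matches the claimed set at $r+1$ provided $r < \pd$ so that the entry is not yet dropped.

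The main obstacle I anticipate is the careful bookkeeping of the counter bound $r < \pd$ at the boundaries: I must verify that entries in the extra set survive exactly through stage $\pd$ and no longer, consistent with the restriction $r \in \{1 \ldots \pd\}$ in the statement, and that the condition $r'+r < \pd$ in the $L_1$ clause interacts correctly with the aging of both components. A secondary subtlety is justifying that $\cd_{k_1}[\vec{p}'|h]$ is computed against the threshold $p_{k_1}[k_2|h]$ at the original history $h$ rather than at any intermediate history; this is immediate because the deviation occurs only at $h$, but it must be stated cleanly so that the freshly recorded defections are attributed to the first stage alone and not double-counted in later aging steps.
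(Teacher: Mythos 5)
Your proposal matches the paper's proof essentially step for step: induction on $r$, a base case that identifies the fresh $L_2$ entries with $\cd_{k_1}[\vec{p}'|h]$ via Definition~\ref{def:pubsig} (entering with counter $0=r-1$ when $i,j \in \rs[k_1,k_2]$), and an inductive step in which the $L_2$ term is empty from the second stage onward, so the update reduces to aging the two components of the inductive hypothesis, with the same boundary check $r<\pd$ that the paper performs. One immaterial slip in your justification: the all-cooperate signals after the first stage hold because every player plays \emph{exactly} the prescribed threshold $p_{k_1}[k_2|h']$, so $p \geq p$ trivially satisfies Definition~\ref{def:pubsig} even when that threshold is $0$ during an ongoing punishment --- positivity of the thresholds is neither needed nor generally true.
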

\begin{proof}
By induction, the base case follows from the definition of $\ds_i[j|h]$ and the fact that
$i$ registers every defection of $k_1$ to $k_2$ detected in $\vec{p}'$ by
adding $(k_1,k_2,0)$ to $\ds_i[j|h]$. Inductively, after $r\leq\pd$ stages,
this pair is transformed into $(k_1,k_2,r-1)$. 
(\hyperref[proof:lemma:corr-1]{Complete proof in Section~\ref{proof:lemma:corr-1}}).
\end{proof}

From the previous lemmas, it follows that any punishment ceases after $\pd$ stages,
which is proven in Lemma~\ref{lemma:corr-2}.

\begin{lemma}
\label{lemma:corr-2}
For every $h \in {\cal H}$, $\vec{p}' \in {\cal P}$, $r > \pd$,
$i \in {\cal N}$, and $j \in {\cal N}_i$,
\begin{equation}
\label{eq:res-corr2}
\ds_i[j|h_r'] = \ds_i[j|h_r^*] = \emptyset,
\end{equation}
where $h_r^* = \hevol[r|\vec{\sigma}^*]$, $h_r' = \hevol[r| \vec{\sigma}']$,
and $\vec{\sigma}' = \vec{\sigma}^*[h|\vec{p}']$.
\end{lemma}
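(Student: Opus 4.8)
The plan is to prove Lemma~\ref{lemma:corr-2} by combining the two preceding results and pushing the induction one stage past $\pd$. The statement asserts two things: that the defection sets under the deviation profile $\vec{\sigma}'$ and under the honest profile $\vec{\sigma}^*$ coincide after $r > \pd$ stages, and that both are empty. The second claim for $\vec{\sigma}^*$ is exactly Corollary~\ref{corollary:corr-0}, so the real work is to show that the extra term introduced by the first-stage deviation, which appears explicitly in Equation~\eqref{eq:res-corr1} of Lemma~\ref{lemma:corr-1}, has drained out of $\ds_i[j|h_r']$ by the time $r$ exceeds $\pd$.

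First I would invoke Lemma~\ref{lemma:corr-1} at the boundary $r = \pd$ to get an exact description of $\ds_i[j|h_{\pd}']$, namely $\ds_i[j|h_{\pd}^*]$ together with the deviation term $\{(k_1,k_2,\pd-1) \mid k_1,k_2 \in {\cal N} \land k_2 \in \cd_{k_1}[\vec{p}'|h] \land i,j \in \rs[k_1,k_2]\}$. By Corollary~\ref{corollary:corr-0} the first piece is already empty at $r = \pd$, so at this boundary the defection set consists solely of tuples whose third coordinate equals $\pd - 1$. The key observation is that from stage $\pd$ onward the deviation is over: the profile $\vec{\sigma}'$ agrees with $\vec{\sigma}^*$ in every stage after the first, so the evolution of $\ds_i$ from $h_{\pd}'$ onward is governed purely by the honest update rule of Definition~\ref{def:thr}. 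That update rule increments the age coordinate $r'$ of each tuple by one and, crucially, discards any tuple once $r' + 1$ reaches $\pd$ (the condition $r + 1 < \pd$ in the definition of $L_1$), while the fresh-defection set $L_2$ contributes nothing because no further defections occur under $\vec{\sigma}^*$.

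Therefore I would argue that advancing one more stage, to $r = \pd + 1$, increments the surviving tuples from age $\pd - 1$ to age $\pd$, which exceeds the retention threshold, so every one of them is dropped and no new tuples are added; hence $\ds_i[j|h_{\pd+1}'] = \emptyset$. A short induction on $r$ then propagates emptiness to all $r > \pd$: an empty defection set evolving under the honest rule with no new defections stays empty. This establishes $\ds_i[j|h_r'] = \emptyset$, and since Corollary~\ref{corollary:corr-0} gives $\ds_i[j|h_r^*] = \emptyset$ for the same range, the chain of equalities in \eqref{eq:res-corr2} follows.

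The main obstacle I anticipate is bookkeeping on the age index at the seam between the deviation regime and the honest regime, i.e.\ making sure the tuple $(k_1,k_2,\pd-1)$ produced by Lemma~\ref{lemma:corr-1} is genuinely flushed at exactly the next application of the $L_1$ rule rather than lingering or being double-counted. The off-by-one risk is that one might expect the deviation term to vanish already at $r = \pd$ rather than at $r = \pd + 1$; the lemma's restriction to $r > \pd$ (strict) is precisely what accommodates the fact that the age-$(\pd-1)$ tuples are still present at $r = \pd$ and only expire on the following stage. I would handle this by writing the base case of the induction explicitly at $r = \pd + 1$ using the honest update applied to the Lemma~\ref{lemma:corr-1} expression, and leaving the inductive step, which is a routine repetition of the empty-set-stays-empty argument, to the appendix in the same style as the other correctness proofs.
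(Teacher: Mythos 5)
Your proposal is correct and follows essentially the same route as the paper's proof: both invoke Lemma~\ref{lemma:corr-1} at the seam $r=\pd$, observe that the honest component is empty by Corollary~\ref{corollary:corr-0}, show that the deviation tuples of age $\pd-1$ are flushed by the $L_1$ retention condition (and that $L_2$ stays empty since no defections occur after the first stage), and then propagate emptiness by induction for $r>\pd+1$. Your explicit handling of the off-by-one issue at the boundary matches the paper's base case at $r=\pd+1$ exactly.
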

\begin{proof}
From Corollary~\ref{corollary:corr-0} and~Lemma~\ref{lemma:corr-2}, it follows that after $\pd$ stages, every pair
$(k_1,k_2,r)$ is removed from $\ds_i[j|h]$.
(\hyperref[proof:lemma:corr-2]{Complete proof in Section~\ref{proof:lemma:corr-2}}).
\end{proof}

\subsection{Generic Results}
\label{sec:gen-cond}
This section provides some generic results. Namely, we first derive the theoretically optimal
effectiveness, which serves as an upper bound for the effectiveness of any profile of strategies.
Then, we derive a simplified generic necessary and sufficient condition for any profile of strategies to be a SPE. 
The complete proofs are in Appendix~\ref{sec:proof:gen-cond}.

Proposition~\ref{prop:folk} establishes a minimum necessary benefit-to-cost ratio
for any profile of strategies to be a SPE of the repeated Epidemic Dissemination Game. Intuitively, the benefit-to-cost ratio must be greater
than the expected costs of forwarding messages to neighbors ($\bar{p}_i = \sum_{j \in {\cal N}_i}p_i[j|\emptyset]$), since otherwise a player has incentives
to not forward any messages. This is the minimum benefit-to-cost ratio that provides
an enforceable utility as defined by the Folk Theorems\,\cite{Fudenberg:86}, given that the utility that results from
nodes following any profile $\vec{p} \in {\cal P}$ is feasible and the minmax utility is $0$. Consequently, this establishes an upper bound for the
effectiveness of any strategy.

\begin{proposition}
\label{prop:folk}
For every profile of punishing strategies $\vec{\sigma}^*$, if $\vec{\sigma}^*$ is a SPE,
then, for every $i \in {\cal N}$, $\frac{\beta_i}{\gamma_i} >\bar{p}_i$.
Consequently, $\psi[\vec{\sigma}^*] \subseteq (v,\infty)$,
where $v = \max_{i \in {\cal N}} \bar{p}_i$.
\end{proposition}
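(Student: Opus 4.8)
The plan is to prove the necessary condition by exhibiting one concrete profitable deviation whenever $\beta_i/\gamma_i \le \bar p_i$, and then to read off the effectiveness bound directly from the definition. I take $\beta_i>0$ and $\gamma_i>0$ as standing assumptions. First I would evaluate the equilibrium payoff along the cooperative path. Starting from the empty history with every node following $\vec\sigma^*$, no defection is ever signalled, so by Definition~\ref{def:thr} the defection set $\ds_i[j|h]$ remains empty after every history reached on this path; hence $p_i[j|h]=p_i[j|\emptyset]$ is constant and the stage profile equals $\vec p=\vec\sigma^*[\emptyset]$ in every stage. Unfolding the recursion~\eqref{eq:pure-util} then gives the geometric sum
\begin{equation*}
\pi_i[\vec\sigma^*|\emptyset]=\frac{u_i[\vec p]}{1-\omega_i}=\frac{(1-q_i[\vec p])(\beta_i-\gamma_i\bar p_i)}{1-\omega_i}.
\end{equation*}
I also need $1-q_i[\vec p]>0$: since the graph is connected from $s$ and every forwarding probability on the cooperative path is positive, there is a directed source-to-$i$ path that carries a message with positive probability, so $i$ receives with positive probability. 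This is the one step that relies on the explicit form of $q_i$ from Appendix~\ref{sec:epidemic}.

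Next I would bound the value of a concrete deviation. Let $\sigma_i'$ be the strategy in which $i$ forwards no message in any stage. Because $q_i$ depends only on the forwarding decisions of $i$'s in-neighbours and never on $i$'s own action, each stage payoff of $i$ under $(\sigma_i',\vec\sigma_{-i}^*)$ has the form $(1-q)\beta_i$ with $q\le 1$ and is therefore non-negative; moreover, in the very first stage the in-neighbours have not yet observed any defection, so that payoff equals $(1-q_i[\vec p])\beta_i>0$. Summing and discounting, $\pi_i[\sigma_i',\vec\sigma_{-i}^*|\emptyset]\ge(1-q_i[\vec p])\beta_i>0$. If $\vec\sigma^*$ is an SPE, the equilibrium payoff must be at least this deviation value at $h=\emptyset$, so
\begin{equation*}
\frac{(1-q_i[\vec p])(\beta_i-\gamma_i\bar p_i)}{1-\omega_i}\ \ge\ (1-q_i[\vec p])\beta_i\ >\ 0 .
\end{equation*}
Dividing by $1-q_i[\vec p]>0$ and by $1-\omega_i>0$ forces $\beta_i-\gamma_i\bar p_i>0$, that is $\frac{\beta_i}{\gamma_i}>\bar p_i$. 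Note that it is precisely the strict positivity of the forward-nothing deviation (which needs $\beta_i>0$) that upgrades the inequality from weak to strict, and that the argument remains correct when $i$ is a sink ($\bar p_i=0$), where it simply returns $\beta_i>0$.

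Finally, for the ``consequently'' part I would argue from the definition of effectiveness. Fix any $x\in\psi[\vec\sigma^*]$ and consider the instance in which every node has $\frac{\beta_i}{\gamma_i}=x$; since every ratio then lies in $\psi[\vec\sigma^*]$, the definition guarantees discount factors $\omega_i$ for which $\vec\sigma^*$ is an SPE, and the first part yields $x>\bar p_i$ for every $i$, hence $x>\max_{i\in{\cal N}}\bar p_i=v$. As $x$ was arbitrary, $\psi[\vec\sigma^*]\subseteq(v,\infty)$.

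The only delicate point is the strict positivity $1-q_i[\vec p]>0$ on the cooperative path, which I would justify via the connectivity-from-the-source assumption together with the expression for $q_i$; everything else reduces to the closed-form cooperative payoff and a single one-line deviation comparison. The conceptual content is exactly the individual-rationality bound of the Folk Theorems: the minmax value is $0$, a player can always secure a strictly positive payoff by receiving while paying nothing, and an SPE payoff cannot fall below that guarantee.
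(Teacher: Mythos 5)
Your proof is correct and takes essentially the same route as the paper's: evaluate the cooperative payoff as a geometric series, compare it at the empty history against the never-forward deviation whose first-stage payoff $(1-q_i[\vec{p}^*])\beta_i$ is strictly positive because the equilibrium path has positive forwarding probabilities along a source-to-$i$ path (the paper invokes its Lemma~\ref{lemma:pprob} for exactly this), and conclude $\beta_i > \gamma_i \bar{p}_i$. The only cosmetic differences are that you argue directly rather than by contradiction, and that you spell out the instantiation argument for the ``consequently'' part, which the paper leaves implicit.
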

\begin{proof}
(\hyperref[proof:prop:folk]{See Section~\ref{proof:prop:folk}}).
\end{proof}

A necessary and sufficient condition for any profile of punishing strategies to be an equilibrium
is that no node has incentives to stop forwarding messages to any subset of neighbors, i.e., to drop
those neighbors. This condition is named the DC Condition, which is defined as follows:

\begin{definition}
\textbf{DC Condition.}
\label{drop:condition}
For every player $i \in {\cal N}$, history $h \in {\cal H}$,  and $D \subseteq {\cal N}_{i}[h]$,
\begin{equation}
\label{eq:gen-cond}
\sum_{r=0}^\pd \omega_i^r (u_i[h,r|\vec{\sigma}^*] - u_i[h,r|\vec{\sigma}']) \geq 0,
\end{equation}
where $\vec{\sigma}' = (\sigma_i',\vec{\sigma}_{-i}^*)$,
$\sigma_i' = \sigma_i^*[h|\vec{p}_i']$, and $\vec{p}_i'$ is defined as:
\begin{itemize}
  \item For every $j \in D$, $p_i'[j] = 0$.
  \item For every $j \in {\cal N}_i \setminus D$, $p_i'[j] = p_i[j|h]$.
\end{itemize}
\end{definition}

The following Lemma shows that the DC Condition is necessary.

\begin{lemma}
\label{lemma:gen-cond-nec}
If $\vec{\sigma}^*$ is a SPE, then the DC Condition is fulfilled.
\end{lemma}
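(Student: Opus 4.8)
The plan is to invoke the SPE hypothesis for the single specific deviation that the DC Condition encodes, and then to show that the resulting utility inequality collapses exactly into Eq.~\ref{eq:gen-cond}. Fix a player $i \in {\cal N}$, a history $h \in {\cal H}$, and a set $D \subseteq {\cal N}_i[h]$, and let $\vec{\sigma}' = (\sigma_i',\vec{\sigma}_{-i}^*)$ with $\sigma_i' = \sigma_i^*[h|\vec{p}_i']$ the drop deviation of Definition~\ref{drop:condition}. Since $\vec{\sigma}^*$ is a SPE, applying the definition at $h$ with the strategy $\sigma_i'$ directly yields $\pi_i[\sigma_i^*,\vec{\sigma}_{-i}^*|h] \geq \pi_i[\sigma_i',\vec{\sigma}_{-i}^*|h]$; note that because $\sigma_i'$ is already a one-shot deviation (it reverts to $\sigma_i^*$ after $h$), there is no need to invoke the one-deviation property of Property~\ref{prop:one-dev}. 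It then remains only to rewrite this inequality into the claimed form.

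First I would unroll both utilities via the recursion in Eq.~\ref{eq:pure-util}. Because monitoring is perfect and public and the punishing strategies are deterministic after every history, the continuation play under each profile follows a single deterministic path, namely $h_r^* = \hevol[h,r|\vec{\sigma}^*]$ and $h_r' = \hevol[h,r|\vec{\sigma}']$. Iterating the recursion therefore expresses each utility as a discounted sum of stage utilities, so that the difference becomes the termwise series $\sum_{r \geq 0} \omega_i^r (u_i[h,r|\vec{\sigma}^*] - u_i[h,r|\vec{\sigma}'])$, adopting the stage indexing of Eq.~\ref{eq:gen-cond} in which stage $0$ is the stage played at $h$. Here $D \subseteq {\cal N}_i[h]$ guarantees $p_i[j|h] > 0$ for each dropped $j$, so that $p_i'[j] = 0$ genuinely registers as a defection in the signal $\sig[\vec{p}'|h]$ and triggers the punishment machinery.

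The crux is to show that this series truncates at $r = \pd$. From the second stage onward both continuation profiles coincide with $\vec{\sigma}^*$, since by construction $\sigma_i^*[h|\vec{p}_i']$ reverts to $\sigma_i^*$ after $h$; hence the only way the two paths can differ at stage $r$ is through the punishment state recorded in the Defection Sets. By Lemma~\ref{lemma:corr-2} (itself resting on Lemmas~\ref{lemma:corr-0} and~\ref{lemma:corr-1}), for every $r > \pd$ and every pair $i,j$ one has $\ds_i[j|h_r'] = \ds_i[j|h_r^*] = \emptyset$, i.e., every punishment provoked by the one-shot drop has expired after exactly $\pd$ stages. Since each threshold $p_i[j|h]$ is by Definition~\ref{def:thr} a deterministic function of $\ds_i[j|h]$, equal and empty Defection Sets force the full stage profiles to agree for all $r > \pd$, whence $u_i[h,r|\vec{\sigma}^*] = u_i[h,r|\vec{\sigma}']$ and every such term vanishes.

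Combining the two observations, the SPE inequality reduces exactly to $\sum_{r=0}^{\pd} \omega_i^r (u_i[h,r|\vec{\sigma}^*] - u_i[h,r|\vec{\sigma}']) \geq 0$, which is Eq.~\ref{eq:gen-cond}; as $i$, $h$, and $D$ were arbitrary, the DC Condition holds. I expect the main obstacle to be the truncation argument of the third step: one must carefully confine all divergence between the two deterministic histories to the Defection Sets and verify that these reconverge to the empty set after $\pd$ stages, which is precisely the content of the correctness lemmas together with the deterministic dependence of the thresholds on $\ds_i$. The remaining bookkeeping (the unrolling and the final cancellation) is routine once this reconvergence is established.
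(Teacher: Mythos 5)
Your proof is correct and follows essentially the same route as the paper's: the same drop deviation, the same unrolling of $\pi_i$ into a discounted sum over the two deterministic continuation histories, and the same truncation at $r = \pd$ via Lemma~\ref{lemma:corr-2} together with the deterministic dependence of the thresholds $p_j[k|h]$ on the Defection Sets. The only (immaterial) difference is that the paper phrases the initial inequality as an application of Property~\ref{prop:one-dev}, whereas you derive it directly from the SPE definition since the one-shot drop deviation is itself a strategy in $\Sigma_i$ --- both are the trivial direction of the same fact.
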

\begin{proof}
By the One-deviation property, for a profile to be a SPE, a player $i$ must not 
be able to increase its utility by unilaterally deviating in the first stage. In particular, this is true
if $i$ deviates by dropping any subset $D$ of neighbors.
Furthermore, since any punishment ends after $\pd$ stages, by Lemma~\ref{lemma:corr-2},
we have that if nodes follow the deviating profile $\vec{\sigma}'$, then for every $r > \pd$, 
$$u_i[h,r|\vec{\sigma}^*] =u_i[h,r|\vec{\sigma}'].$$
The DC Condition follows by the One-deviation property and the fact that
$$\pi_i[\vec{\sigma}^*|h] - \pi_i[\vec{\sigma}'|h] = \sum_{r = 0}^\infty \omega_i^r(u_i[h,r|\vec{\sigma}^*]-u_i[h,r|\vec{\sigma}']).$$
(\hyperref[proof:lemma:gen-cond-nec]{Complete proof in Section~\ref{proof:lemma:gen-cond-nec}}).
\end{proof}

In Lemma~\ref{lemma:gen-cond-suff}, we also show that the DC Condition is sufficient.
In order to prove this, we first need to show that every node $i$ cannot increase its utility by not following
a pure strategy in every stage game and by not forwarding messages with a probability
in $\{0,p_i[j|h]\}$ to every neighbor $j$. This is shown in two steps. First, Lemma~\ref{lemma:best-response1}
proves that any local best response mixed strategy only gives positive probability to an action in $\{0,p_i[j|h]\}$. Second,
Lemma~\ref{lemma:best-response2} proves that there is a pure strategy for the stage game that is a local best response.

Define the set of local best response strategies for history $h \in {\cal H}$ and any $i \in {\cal N}$ as:
$$BR[\vec{\sigma}_{-i}^*|h] = \{a_i \in {\cal A}_i | \forall_{a_i' \in {\cal A}_i} \pi_i[(\sigma_i^*[h|a_i],\vec{\sigma}_{-i}^*)|h] \geq \pi_i[(\sigma_i^*[h|a_i'],\vec{\sigma}_{-i}^*)|h]\}.$$
Notice that $BR[\vec{\sigma}_{-i}^*|h]$ is not empty. The following lemma first proves that every player $i$
always uses probabilities in $ \{0,p_i[j|h]\}$ towards a neighbor $j$.

\begin{lemma}
\label{lemma:best-response1}
For every $i \in {\cal N}$, $h \in {\cal H}$, $a_i \in BR[\vec{\sigma}_{-i}^*|h]$, and $\vec{p}_i \in {\cal P}_i$ such that $a_i[\vec{p}_i] > 0$,
it is true that for every $j \in {\cal N}_i$ we have $p_i[j] \in \{0,p_i[j|h]\}$.
\end{lemma}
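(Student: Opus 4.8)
The plan is to exploit two structural features of the game: that a player's own forwarding probabilities do not influence its own reception probability, and that, under perfect public monitoring, the continuation play depends on the current action only through the emitted signal. First I would record the key fact, inherited from the epidemic model of Appendix~\ref{sec:epidemic}, that $q_i[\vec{p}]$ is independent of $\vec{p}_i$: whether $i$ receives a message is determined by what its in-neighbours forward towards $i$, not by what $i$ forwards outward. Fixing the profile $\vec{p}_{-i}^* = \vec{\sigma}_{-i}^*[h]$ played by the others, the reception probability $1-q_i$ is therefore a constant $c$, and the stage payoff reduces to $u_i[(\vec{p}_i,\vec{p}_{-i}^*)] = c(\beta_i - \gamma_i\sum_{j \in {\cal N}_i}p_i[j])$, which (provided reliability is positive, i.e.\ $c>0$) is \emph{strictly decreasing} in every $p_i[j]$.

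The second ingredient is that, by Definition~\ref{def:pubsig}, the signal component $s[i,j]$ depends on $\vec{p}_i$ only through the single bit ``$p_i[j]\geq p_i[j|h]$'' and is independent across neighbours, while the components $s[k,\cdot]$ for $k\neq i$ are fixed by $\vec{p}_{-i}^*$. Since monitoring is perfect, the signal $\sig[\cdot]$ is a deterministic function of the action, so the continuation value $\pi_i[\vec{\sigma}^*|(h,s)]$ entering Equation~\ref{eq:pure-util} depends on $\vec{p}_i$ only through the binary cooperate/defect pattern $(s[i,j])_{j\in{\cal N}_i}$. Writing $V(\vec{p}_i) = u_i[(\vec{p}_i,\vec{p}_{-i}^*)] + \omega_i\pi_i[\vec{\sigma}^*|(h,\sig[(\vec{p}_i,\vec{p}_{-i}^*)|h])]$ for the value of the pure deviation $\sigma_i^*[h|\vec{p}_i]$, I would argue pointwise: if some $p_i[j]\notin\{0,p_i[j|h]\}$, replace it by $p_i[j|h]$ when $p_i[j]>p_i[j|h]$ and by $0$ when $0<p_i[j]<p_i[j|h]$. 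This replacement leaves the cooperate/defect bit of $j$ unchanged, leaves every other signal bit unchanged, and hence leaves the continuation value unchanged, while strictly lowering the cost term; by strict monotonicity it strictly increases $V$. Consequently every maximiser of $V$, i.e.\ every pure local best response, lies in the finite set $\prod_{j\in{\cal N}_i}\{0,p_i[j|h]\}$.

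Finally I would lift this to mixed strategies. Because monitoring is perfect, $pr[s|\vec{a},h]$ is linear in $a_i$ and the continuation values are constants given $s$, so the objective $\pi_i[(\sigma_i^*[h|a_i],\vec{\sigma}_{-i}^*)|h]$ of Equation~\ref{eq:behave-util} appearing in the definition of $BR[\vec{\sigma}_{-i}^*|h]$ equals the average $\sum_{\vec{p}_i}a_i[\vec{p}_i]\,V(\vec{p}_i)$. A mixed strategy therefore maximises this objective if and only if its support consists solely of maximisers of $V$, and by the previous paragraph every such maximiser uses probabilities in $\{0,p_i[j|h]\}$. Hence any $\vec{p}_i$ with $a_i[\vec{p}_i]>0$ satisfies $p_i[j]\in\{0,p_i[j|h]\}$ for every $j\in{\cal N}_i$.

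I expect the main obstacle to reside in the first paragraph: making rigorous, from the explicit expression of $q_i$, that $i$'s own action cannot feed back into its own reception (the ``forwarding is downstream of receiving'' property, which must survive any cycles in $G$), together with securing $c>0$ so that the monotonicity is \emph{strict} rather than merely weak. Without strictness, an intermediate probability could tie with an endpoint of $\{0,p_i[j|h]\}$, and the sharp support characterisation claimed by the lemma would break down.
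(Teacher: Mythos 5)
Your proposal is correct and takes essentially the same route as the paper's own proof: the paper likewise rounds each non-conforming coordinate down to $p_i[j|h]$ or $0$, observes via Definition~\ref{def:pubsig} that this leaves the public signal (hence the continuation value) unchanged, and shifts the probability mass of the offending action onto its rounded version to obtain a strictly better strategy, contradicting membership in $BR[\vec{\sigma}_{-i}^*|h]$. The caveat you flag---that strictness requires $1-q_i>0$, which can fail, e.g., after histories in which every in-neighbour punishes $i$ so that $q_i=1$ by Lemma~\ref{lemma:noneib}---is a genuine gap, but it is equally present in the paper, whose proof simply asserts $u_i[\vec{a}^1]<u_i[\vec{a}^2]$ without addressing this degenerate case.
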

\begin{proof}
The proof is by contradiction. Namely, assume that for some $\vec{p}_i$ and $a_i$ that is a best response
and $a_i[\vec{p}_i] > 0$, we have that $\vec{p}_i$ does not fulfill the restrictions defined above.
We can find $a_i'$ and $\vec{p}_i'$ such that:
\begin{itemize}
 \item $\vec{p}_i'$ fulfills the restrictions defined in the lemma.
 \item $a_i'[\vec{p}_i'] = a_i[\vec{p}_i] + a_i[\vec{p}_i']$ and $a_i'[\vec{p}_i] = 0$.
\end{itemize}
By letting $\vec{p}^* = \vec{\sigma}^*[h]$, we have that 
$$\sig[(\vec{p}_i,\vec{p}^*_{-i})|h] = \sig[(\vec{p}_i',\vec{p}^*_{-i})|h].$$
Consequently,
$$\pi_i[\sigma_i^*[h|a_i],\vec{\sigma}_{-i}^*|h] < \pi_i[\sigma_i^*[h|a_i'],\vec{\sigma}_{-i}^*|h].$$
Thus, $a_i$ cannot be a best response, which is a contradiction.
(\hyperref[proof:lemma:best-response1]{Complete proof in Section~\ref{proof:lemma:best-response1}}).
\end{proof}

We now have to show that there exists a pure strategy in $BR[\vec{\sigma}_{-i}^*|h]$.

\begin{lemma}
\label{lemma:best-response2}
For every $h \in {\cal H}$ and $i \in {\cal N}$, there exists $a_i \in BR[\vec{\sigma}_{-i}^*| h]$ and $\vec{p}_i \in {\cal P}_i$
such that $a_i[\vec{p}_i] = 1$.
\end{lemma}
\begin{proof}
First, notice that if only pure strategies are best-responses, then the result follows immediately. If there exists a mixed strategy 
$a_i$ that is a best response, then $i$ must be indifferent between following any profile $\vec{p}_i$ such that $a_i[\vec{p}_i]>0$.
Otherwise, $i$ could find a better strategy $a_i'$. In that case, any such profile $\vec{p}_i$ is a best response.
(\hyperref[proof:lemma:best-response2]{Complete proof in Section~\ref{proof:lemma:best-response2}}).
\end{proof}

Lemma~\ref{lemma:best-response} is a direct consequence of Lemmas~\ref{lemma:best-response1} and~\ref{lemma:best-response2}.

\begin{lemma}
\label{lemma:best-response}
For every $h \in {\cal H}$ and $i \in {\cal N}$, there exists $\vec{p}_i \in {\cal P}_i$ and a pure strategy $\sigma_i=\sigma_i^*[h|\vec{p}_i]$ such that:
\begin{enumerate}
 \item For every $j \in {\cal N}_i$, $p_i[j] \in \{0,p_{i}[j|h]\}$.
 \item For every $a_i \in {\cal A}_i$, $\pi_i[\sigma_i,\vec{\sigma}_{-i}^*|h] \geq \pi_i[\sigma_i',\vec{\sigma}_{-i}^*|h]$,
where $\sigma_i' = \sigma_i^*[h|a_i]$.
\end{enumerate}
\end{lemma}
\begin{proof}
(\hyperref[proof:lemma:best-response]{See Section~\ref{proof:lemma:best-response}}).
\end{proof}

It is now possible to show that the DC Condition is sufficient.

\begin{lemma}
\label{lemma:gen-cond-suff}
If the DC Condition is fulfilled, then $\vec{\sigma}^*$ is a SPE.
\end{lemma}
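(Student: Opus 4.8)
The plan is to invoke the One-deviation property (Property~\ref{prop:one-dev}) to reduce the claim to a purely local comparison at each history, and then to use Lemma~\ref{lemma:best-response} to restrict attention to a single, structured deviation that is precisely one of the deviations covered by the DC Condition. Concretely, by Property~\ref{prop:one-dev}, $\vec{\sigma}^*$ is a SPE if and only if, for every $i \in {\cal N}$, every $h \in {\cal H}$, and every $a_i' \in {\cal A}_i$, we have $\pi_i[\sigma_i^*,\vec{\sigma}_{-i}^*|h] \geq \pi_i[\sigma_i^*[h|a_i'],\vec{\sigma}_{-i}^*|h]$. Fix $i$ and $h$. By Lemma~\ref{lemma:best-response}, there is a pure stage action $\vec{p}_i \in {\cal P}_i$ with $p_i[j] \in \{0,p_i[j|h]\}$ for every $j \in {\cal N}_i$ such that the one-stage deviation $\sigma_i = \sigma_i^*[h|\vec{p}_i]$ is at least as good as $\sigma_i^*[h|a_i']$ for every $a_i' \in {\cal A}_i$. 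Hence it suffices to prove the single inequality $\pi_i[\vec{\sigma}^*|h] \geq \pi_i[\sigma_i,\vec{\sigma}_{-i}^*|h]$, since $\pi_i[\sigma_i^*,\vec{\sigma}_{-i}^*|h] = \pi_i[\vec{\sigma}^*|h]$ and $\sigma_i$ dominates every other local deviation.

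Next I would observe that this structured deviation is exactly a drop of the neighbor set $D = \{j \in {\cal N}_i[h] \mid p_i[j] = 0\}$. Indeed, for $j \in D$ we have $p_i[j] = 0$; for $j \in {\cal N}_i[h]\setminus D$ we have $p_i[j] = p_i[j|h]$; and for $j \in {\cal N}_i \setminus {\cal N}_i[h]$ we already have $p_i[j|h] = 0$, so $p_i[j] = 0 = p_i[j|h]$. Thus $\vec{p}_i$ coincides with the deviating profile $\vec{p}_i'$ of the DC Condition for this choice of $D \subseteq {\cal N}_i[h]$, and the associated $\vec{\sigma}' = (\sigma_i^*[h|\vec{p}_i'],\vec{\sigma}_{-i}^*)$ equals $(\sigma_i,\vec{\sigma}_{-i}^*)$. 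It remains to convert the finite sum in the DC Condition into the full discounted payoff difference. Exactly as in the necessity argument (Lemma~\ref{lemma:gen-cond-nec}), Lemma~\ref{lemma:corr-2} guarantees that all defection sets are empty and equal under $\vec{\sigma}^*$ and $\vec{\sigma}'$ after $\pd$ stages; since every threshold is a deterministic function of its defection set, the induced stage profiles, and hence the per-stage utilities $u_i[h,r|\cdot]$, coincide for all $r > \pd$. Therefore
$$\pi_i[\vec{\sigma}^*|h] - \pi_i[\vec{\sigma}'|h] = \sum_{r=0}^\infty \omega_i^r\bigl(u_i[h,r|\vec{\sigma}^*]-u_i[h,r|\vec{\sigma}']\bigr) = \sum_{r=0}^\pd \omega_i^r\bigl(u_i[h,r|\vec{\sigma}^*]-u_i[h,r|\vec{\sigma}']\bigr) \geq 0,$$
where the final inequality is precisely the DC Condition for $D$.

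Combining the two steps yields $\pi_i[\sigma_i^*,\vec{\sigma}_{-i}^*|h] = \pi_i[\vec{\sigma}^*|h] \geq \pi_i[\sigma_i,\vec{\sigma}_{-i}^*|h] \geq \pi_i[\sigma_i^*[h|a_i'],\vec{\sigma}_{-i}^*|h]$ for every $a_i' \in {\cal A}_i$, which is exactly the One-deviation condition; hence $\vec{\sigma}^*$ is a SPE. The main obstacle I anticipate is not any single inequality but the bookkeeping behind the two reductions: first, that Lemma~\ref{lemma:best-response} genuinely lets us replace the worst-case mixed deviation $a_i'$ by the single pure, $\{0,p_i[j|h]\}$-valued profile $\vec{p}_i$, so that one instance of the DC Condition suffices rather than a separate argument per deviation; and second, that the tail utilities truly coincide for $r > \pd$, which must be read off from the defection sets via Lemma~\ref{lemma:corr-2} rather than from the raw histories, since the histories $h_r^*$ and $h_r'$ themselves remain distinct even though they prescribe identical behavior once punishments have elapsed.
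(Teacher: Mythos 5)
Your proof is correct and follows essentially the same route as the paper's: both reduce via the One-deviation property and Lemma~\ref{lemma:best-response} to a single structured pure deviation with $p_i[j] \in \{0,p_i[j|h]\}$, identify it with a drop set $D \subseteq {\cal N}_i[h]$ covered by the DC Condition, and pass from the finite sum to the full discounted payoff difference using the coincidence of utilities for $r > \pd$ (Lemma~\ref{lemma:corr-2}). Your write-up is in fact more explicit than the paper's about the two bookkeeping steps — the identification of $\vec{p}_i$ with a DC-Condition deviation and the tail argument — which the paper compresses into the phrase ``these assumptions imply that.''
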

\begin{proof}
If the DC Condition holds, then no player $i$ can increase its utility by dropping
any subset of neighbors. By Lemma~\ref{lemma:best-response}, it follows that $i$
cannot increase its utility by following any alternative strategy for the first stage game,
which by the One-deviation property implies that the profile $\vec{\sigma}^*$ is a SPE.
(\hyperref[proof:lemma:gen-cond-suff]{Complete proof in Section~\ref{proof:lemma:gen-cond-suff}}).
\end{proof}

The following theorem merges the results from Lemmas~\ref{lemma:gen-cond-nec}
and~\ref{lemma:gen-cond-suff}.
\begin{theorem}
\label{theorem:gen-cond}
$\vec{\sigma}^*$ is a SPE if and only if the DC Condition holds.
\end{theorem}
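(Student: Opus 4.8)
The plan is to recognize that the statement is a biconditional whose two halves are exactly the two immediately preceding lemmas, so the proof reduces to stitching them together rather than developing any new machinery. Concretely, I would prove the theorem in two lines: the implication ``$\vec{\sigma}^*$ is a SPE $\Rightarrow$ DC Condition'' is precisely Lemma~\ref{lemma:gen-cond-nec}, and the converse ``DC Condition $\Rightarrow$ $\vec{\sigma}^*$ is a SPE'' is precisely Lemma~\ref{lemma:gen-cond-suff}. Chaining the two gives the equivalence, and there is nothing left to compute.

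For the forward direction I would simply invoke Lemma~\ref{lemma:gen-cond-nec}. Internally, the content there is that SPE (via the One-deviation property, Property~\ref{prop:one-dev}) forbids any profitable unilateral deviation in the first stage, in particular the deviation that drops an arbitrary subset $D \subseteq {\cal N}_i[h]$ of neighbors; and that the infinite-horizon utility difference collapses to the finite sum in~(\ref{eq:gen-cond}) because, by Lemma~\ref{lemma:corr-2}, every punishment has expired after $\pd$ stages, so $u_i[h,r|\vec{\sigma}^*]=u_i[h,r|\vec{\sigma}']$ for all $r>\pd$. This is exactly the inequality of the DC Condition.

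For the reverse direction I would invoke Lemma~\ref{lemma:gen-cond-suff}. Its substance is that the DC Condition rules out the only deviations that could possibly be profitable: by Lemma~\ref{lemma:best-response} (itself assembled from Lemmas~\ref{lemma:best-response1} and~\ref{lemma:best-response2}), among all stage-game responses after any history $h$ there is a pure local best response that forwards to each neighbor $j$ with probability in $\{0,p_i[j|h]\}$, i.e.\ a pure drop-of-subset deviation. Since the DC Condition says no such deviation is profitable, the One-deviation property yields that $\vec{\sigma}^*$ is a SPE.

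The honest remark is that this theorem carries no independent difficulty: all the real work has already been absorbed into the supporting lemmas. The two genuinely delicate points were (i) truncating the infinite discounted sum to the horizon $\pd$, which rests on the deterministic expiry of punishments established in Section~\ref{sec:pub-evol}, and (ii) reducing the continuum of mixed stage-game deviations to the finite family of drop-of-subset deviations, which is the role of the best-response lemmas. Once those are in hand, the merge into a single ``if and only if'' is immediate, and I would present the proof as a one-sentence combination of Lemmas~\ref{lemma:gen-cond-nec} and~\ref{lemma:gen-cond-suff}.
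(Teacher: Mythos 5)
Your proof is correct and is exactly the paper's own argument: the paper states the theorem as the merge of Lemma~\ref{lemma:gen-cond-nec} (necessity, via the One-deviation property and the punishment-expiry result of Lemma~\ref{lemma:corr-2}) and Lemma~\ref{lemma:gen-cond-suff} (sufficiency, via the best-response reduction of Lemma~\ref{lemma:best-response}), with no additional content. Your accompanying remarks on where the real work lies match the paper's structure as well.
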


\subsection{Direct Reciprocity is not Effective}
\label{sec:direct}
If $G$ is undirected, then it is possible to use direct reciprocity only, by defining $\rs[i,j] = \{i,j\}$
for every $i\in{\cal N}$ and $j \in {\cal N}_i$. That is, if $i$ defects from $j$, then only $j$ punishes $i$.
Direct reciprocity is the ideal incentive mechanism in a fully distributed environment, since it does not require
accusations to be sent by any node. The goal of this section is to show that punishments that use direct reciprocity
are not effective, even using public monitoring. To prove this, we first derive a generic necessary benefit-to-cost ratio
and then we identify the conditions under which direct reciprocity is ineffective. The complete proofs are included in Appendix~\ref{sec:proof:direct}.

Lemma~\ref{lemma:nec-btc} derives a minimum benefit-to-cost ratio for direct reciprocity.

\begin{lemma}
\label{lemma:nec-btc}
If $\vec{\sigma}^*$ is a SPE, then, for every $i \in {\cal N}$ and $j \in {\cal N}_i$, it is true that $q_i' > q_i^*$ and:
\begin{equation}
\label{eq:nec-btc}
\frac{\beta_i}{\gamma_i} > \bar{p}_i + \frac{p_i[j|\emptyset]}{q_i' - q_i^*}\left(1-q_i' + \frac{1-q_i^*}{\pd}\right),
\end{equation}
where $\vec{p}_i'$ is the strategy where $i$ drops $j$, $\vec{\sigma}' = (\sigma_i^*[\emptyset|\vec{p}_i'],\vec{\sigma}_{-i}^*)$,
$q_i'=q_i[\vec{\sigma}'[\emptyset]]$, and $q_i^* = q_i[\vec{\sigma}^*[\emptyset]]$.

\end{lemma}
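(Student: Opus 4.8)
The plan is to specialize the DC Condition---which is necessary for any SPE by Lemma~\ref{lemma:gen-cond-nec}---to the empty history $h=\emptyset$ and to the single-neighbor deviation $D=\{j\}$, and then to strip the discount factor $\omega_i$ out of the resulting inequality. First I would use direct reciprocity, $\rs[i,j]=\{i,j\}$, together with Lemmas~\ref{lemma:corr-1} and~\ref{lemma:corr-2}, to pin down exactly what the deviation $\vec{p}_i'$ (in which $i$ drops $j$) sets in motion. Because only $j$ reacts, and it reacts for exactly $\pd$ stages, and because $G$ is undirected so that $j$ is also an in-neighbor of $i$, the punishment consists in $j$ withholding its forwarding to $i$ during stages $1,\dots,\pd$, while $i$ itself withholds forwarding to $j$ over the same $\pd$ stages by the first clause of Definition~\ref{def:thr}. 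After stage $\pd$ all punishments have ceased (Lemma~\ref{lemma:corr-2}), so the two profiles yield identical utilities and the sum truncates at $r=\pd$.

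Next I would compute the per-stage utility gaps $u_i[\emptyset,r|\vec{\sigma}^*]-u_i[\emptyset,r|\vec{\sigma}']$. The crucial timing point is that $j$ detects the defection only at the end of stage $0$, so in the deviation stage $i$ still receives with its prescribed reliability $1-q_i^*$ yet already saves the cost $\gamma_i p_i[j|\emptyset]$ of forwarding to $j$; this stage contributes the strictly negative term $-(1-q_i^*)\gamma_i p_i[j|\emptyset]$ to the gap. In each of the $\pd$ punishment stages $i$'s reliability drops to $1-q_i'$ while the cost saving persists, since $i$ does not forward to $j$ during the punishment, so each such stage contributes $(q_i'-q_i^*)(\beta_i-\gamma_i\bar{p}_i)-(1-q_i')\gamma_i p_i[j|\emptyset]$. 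Assembling these into the DC Condition gives
\begin{equation*}
\bigl[(q_i'-q_i^*)(\beta_i-\gamma_i\bar{p}_i)-(1-q_i')\gamma_i p_i[j|\emptyset]\bigr]\sum_{r=1}^{\pd}\omega_i^r \;\geq\; (1-q_i^*)\gamma_i p_i[j|\emptyset].
\end{equation*}

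Finally I would establish $q_i'>q_i^*$ and eliminate $\omega_i$. Since the right-hand side is strictly positive and the geometric sum is positive, the bracketed coefficient must itself be strictly positive; as $\beta_i-\gamma_i\bar{p}_i>0$ by Proposition~\ref{prop:folk} and $-(1-q_i')\gamma_i p_i[j|\emptyset]<0$, this forces $q_i'>q_i^*$ (intuitively, the punishment must actually lower $i$'s reliability, for otherwise dropping $j$ would be unconditionally profitable). Knowing the coefficient is positive, I would then weaken the inequality using $\sum_{r=1}^{\pd}\omega_i^r<\pd$, valid because $\omega_i\in(0,1)$, which removes $\omega_i$ and yields a necessary condition satisfied by every SPE; dividing through by $\gamma_i(q_i'-q_i^*)>0$ and rearranging reproduces Equation~\eqref{eq:nec-btc}. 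The hard part will be getting the stage accounting exactly right---in particular recognizing that the cost saving survives into the punishment stages whereas the reliability loss is delayed by one stage---since it is precisely this asymmetry that produces the two separate contributions $1-q_i'$ and $\tfrac{1-q_i^*}{\pd}$; the $\omega_i$-elimination must also be oriented correctly, bounding the discounted punishment sum above by $\pd$ so that the $\omega_i$-dependent DC inequality becomes a valid $\omega_i$-free lower bound on $\beta_i/\gamma_i$.
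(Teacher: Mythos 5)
Your proposal is correct and reaches exactly the paper's inequality, but the step where you eliminate $\omega_i$ takes a genuinely different (and more elementary) route than the paper's. The setup coincides: both you and the paper specialize the DC Condition (necessary by Theorem~\ref{theorem:gen-cond}) to $h=\emptyset$ and $D=\{j\}$, use Lemmas~\ref{lemma:corr-1} and~\ref{lemma:corr-2} to pin down the $\pd$ punishment stages, and obtain the same stage accounting: a gap of $-c$ with $c=(1-q_i^*)\gamma_i p_i[j|\emptyset]$ in the deviation stage, and a gap of $a-b$ with $a=(q_i'-q_i^*)(\beta_i-\gamma_i\bar{p}_i)$ and $b=(1-q_i')\gamma_i p_i[j|\emptyset]$ in each punishment stage. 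From there the paper rewrites the DC Condition as the degree-$(\pd+1)$ polynomial inequality $\omega_i(a-b+c)-\omega_i^{\pd+1}(a-b)-c\geq 0$, observes it vanishes at $\omega_i=1$ and is negative at $\omega_i=0$, and uses first- and second-derivative analysis to conclude that a solution in $(0,1)$ requires strict concavity ($a>b$, whence $q_i'>q_i^*$ via Proposition~\ref{prop:folk}) and a critical point inside $(0,1)$, whose location condition $a\pd>b\pd+c$ is precisely~\eqref{eq:nec-btc}. You instead read both conclusions directly off the inequality: $(a-b)\sum_{r=1}^{\pd}\omega_i^r\geq c>0$ forces $a>b$, and then the bound $\sum_{r=1}^{\pd}\omega_i^r<\pd$ gives $c<(a-b)\pd$, which is the same condition; your orientation of this bound is correct, since $a-b>0$ makes $(a-b)\pd$ an upper bound for the discounted sum's coefficient side. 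Your argument avoids all calculus and is all that necessity requires; what the paper's heavier machinery buys is an implicit sufficiency characterization for this particular deviation — it locates the optimal discount factor $\omega_i^{\pd}=\frac{a-b+c}{(a-b)(\pd+1)}$ — and it is the same polynomial technique the paper reuses in its sufficiency results (Theorem~\ref{theorem:indir-suff}, Lemma~\ref{lemma:priv-suff}). One microscopic caveat: your assertion that $-(1-q_i')\gamma_i p_i[j|\emptyset]<0$ tacitly assumes $q_i'<1$; if $q_i'=1$ that term is zero, but the conclusion $q_i'>q_i^*$ still follows from $a>b\geq 0$ together with $\beta_i-\gamma_i\bar{p}_i>0$.
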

\begin{proof}
By the definition of SPE and Theorem~\ref{theorem:gen-cond}, the DC Condition must hold for
the initial empty history and every deviation in the first stage where any player $i$ drops an out-neighbor $j$.
After some manipulations of the DC Condition for this specific scenario, Inequality~\ref{eq:nec-btc} is obtained.
(\hyperref[proof:lemma:nec-btc]{Complete proof in Section~\ref{proof:lemma:nec-btc}}).
\end{proof}

Lemma~\ref{lemma:dir-recip} also shows that direct reciprocity is not
an effective incentive mechanism under certain circumstances. Namely, by letting $q_i^*$ to be the probability
of delivery of messages in equilibrium ($q_i[\vec{\sigma}^*[\emptyset]]$), we find
that, if $p_i[j|\emptyset] + q_i^* \ll 1$, then the effectiveness is of the order $(1/q_i^*,\infty)$,
which decreases to $\emptyset$ very quickly with an increasing reliability.
The conditions under which direct reciprocity is ineffective are easily met, for instance, when a node has more neighbors than what is
strictly necessary to ensure high reliability.

\begin{lemma}
\label{lemma:dir-recip}
Suppose that for any $i \in {\cal N}$ and $j \in {\cal N}_i$, $p_i[j|\emptyset] + q_i^* \ll 1$.
If $\vec{\sigma}^*$ is a SPE, then:
\begin{equation}
\label{eq:dir-necessary}
\psi[\vec{\sigma}^*] \subseteq \left(\frac{1}{q_i^*},\infty\right),
\end{equation}
where $q_i^* = q_i[\vec{\sigma}^*[\emptyset]]$.
\end{lemma}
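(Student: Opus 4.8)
The plan is to specialize the necessary benefit-to-cost bound of Lemma~\ref{lemma:nec-btc} to the direct-reciprocity setting ($\rs[i,j]=\{i,j\}$) and to show that, in the regime $p_i[j|\emptyset]+q_i^* \ll 1$, the right-hand side of that bound is dominated by the single term $1/q_i^*$. First I would fix a node $i$ and one out-neighbor $j$ and invoke Lemma~\ref{lemma:nec-btc}. Since $\bar{p}_i \ge 0$ and $\frac{1-q_i^*}{\pd} \ge 0$, discarding these two non-negative contributions yields the weaker but cleaner necessary inequality
$$\frac{\beta_i}{\gamma_i} > \frac{p_i[j|\emptyset]}{q_i' - q_i^*}\,(1 - q_i').$$
Everything then reduces to controlling $q_i'-q_i^*$, i.e. how much $i$'s miss probability grows once $j$ retaliates by ceasing to forward to $i$, which under direct reciprocity is the \emph{only} reaction triggered by $i$ dropping $j$.

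Next I would use the epidemic model of Appendix~\ref{sec:epidemic}: removing the single incoming link $(j,i)$ rescales the miss probability as $q_i' = q_i^*/(1-\rho_j)$, where $\rho_j$ is the probability that $i$ receives a message specifically through $j$, and $\rho_j \le p_j[i|\emptyset]$. Because the graph is undirected and the forwarding threshold is common to both endpoints of an edge, $p_j[i|\emptyset]=p_i[j|\emptyset]=p$, hence $\rho_j \le p$. From $q_i'=q_i^*/(1-\rho_j)$ one obtains $q_i'-q_i^* = \rho_j\,q_i'$ and $q_i' \le q_i^*/(1-p)$. Substituting the identity and then using $\rho_j \le p$ gives
$$\frac{p_i[j|\emptyset]}{q_i'-q_i^*}(1-q_i') = \frac{p}{\rho_j\,q_i'}(1-q_i') \ge \frac{1-q_i'}{q_i'} = \frac{1}{q_i'}-1 \ge \frac{1-p}{q_i^*}-1 = \frac{1-p-q_i^*}{q_i^*}.$$

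Combining the two displays, any SPE forces $\frac{\beta_i}{\gamma_i} > \frac{1}{q_i^*}(1-p-q_i^*)$; since $p+q_i^* = p_i[j|\emptyset]+q_i^* \ll 1$, the factor $1-p-q_i^*$ is $\approx 1$ and the bound is of order $1/q_i^*$. Thus no ratio at or below $1/q_i^*$ can support an equilibrium, and I would conclude $\psi[\vec{\sigma}^*] \subseteq (1/q_i^*,\infty)$, matching Inequality~\ref{eq:dir-necessary}.

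The hard part will be the second step: pinning down the exact dependence of $q_i'$ on $q_i^*$ when one incoming edge is cut, and justifying $\rho_j \le p$. In a graph with cycles, deleting $(j,i)$ can in principle perturb $i$'s reception along longer paths as well, so I must argue from the product structure of $q_i$ in Appendix~\ref{sec:epidemic}, together with the fact that in the high-reliability regime every node receives with probability close to one, that the direct-edge contribution dominates and that $\rho_j$ is bounded by the forwarding threshold. The symmetry $p_i[j|\emptyset]=p_j[i|\emptyset]$ (legitimate for an undirected graph with a per-edge threshold) is precisely what lets the numerator $p_i[j|\emptyset]$ cancel the $\rho_j \le p$ in the denominator and expose the clean $1/q_i^*$ term, with everything else being a lower-order correction that vanishes as $p_i[j|\emptyset]+q_i^* \to 0$.
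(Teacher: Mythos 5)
Your proposal follows essentially the same route as the paper's proof: specialize Lemma~\ref{lemma:nec-btc} to the deviation in which $i$ drops $j$, bound the punishment's impact on the miss probability by $q_i' \le q_i^*/(1-p)$, and let the assumption $p_i[j|\emptyset]+q_i^* \ll 1$ absorb the lower-order terms to expose the $1/q_i^*$ threshold. The step you flag as the remaining hard part --- justifying $\rho_j \le p$, which in your algebra is used only through the equivalent inequality $q_i' \le q_i^*/(1-p_j[i|\emptyset])$ --- is precisely the paper's Lemma~\ref{lemma:single-impact} in Appendix~\ref{sec:epidemic}, which the paper's own proof simply cites (with the same edge-symmetry of thresholds that you make explicit), so nothing further needs to be argued.
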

\begin{proof}
The proof follows directly from Lemma~\ref{lemma:nec-btc} and the fact that,
as we show in Lemma~\ref{lemma:single-impact} from Appendix~\ref{sec:epidemic},
we have that if $q_i'$ results from exactly $j$ punishing $i$ for defecting from $j$, then
$$q_i' \leq q_i^* \frac{1}{1-p_i[j|\emptyset]}.$$
(\hyperref[proof:lemma:dir-recip]{Complete proof in Section~\ref{proof:lemma:dir-recip}}).
\end{proof}

\subsection{Full Indirect Reciprocity is Sufficient}
\label{sec:indir}
Unlike direct reciprocity, if full indirect reciprocity is used, then the effectiveness may be independent of the reliability of the dissemination protocol.
This consists in the case where for every $i \in {\cal N}$ and $j \in {\cal N}_i$ we have 
\begin{equation}
{\cal N}_i^{-1} \subseteq \rs[i,j].
\end{equation}

The goal of this section is to show that, if full indirect reciprocity
is used, then the effectiveness is independent of the reliability of the dissemination protocol under certain circumstances.
To prove this, we proceed in two steps. First, we conveniently simplify
the DC Condition. Then, we derive a sufficient benefit-to-cost ratio for $\vec{\sigma}^*$ to be a SPE.
The complete proofs are included in Appendix~\ref{sec:proof:indir}.

The following lemma simplifies the DC Condition for this specific type of punishing strategies.
\begin{lemma}
\label{lemma:indir-equiv}
The profile of strategies $\vec{\sigma}^*$ is a SPE if and only if for every $h \in {\cal H}$ and $i \in {\cal N}$:
\begin{equation}
\label{eq:indir-equiv}
\sum_{r=1}^\pd (\omega_i^r u_i[h,r|\vec{\sigma}^*]) - (1-q_i[h,0|\vec{\sigma}^*])\gamma_i \bar{p}_i[h,0|\vec{\sigma}^*] \geq 0.
\end{equation} 
\end{lemma}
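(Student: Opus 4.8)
The plan is to invoke Theorem~\ref{theorem:gen-cond}, which states that $\vec{\sigma}^*$ is a SPE if and only if the DC Condition holds, and then to simplify the DC Condition under the full indirect reciprocity hypothesis ${\cal N}_i^{-1} \subseteq \rs[i,j]$. Fix a player $i$ and a history $h$. For any $D \subseteq {\cal N}_i[h]$, let $\vec{\sigma}'$ be the deviation profile in which $i$ drops $D$ in the first stage, exactly as in Definition~\ref{drop:condition}. I would split the sum $\sum_{r=0}^{\pd}\omega_i^r(u_i[h,r|\vec{\sigma}^*] - u_i[h,r|\vec{\sigma}'])$ into the current stage $r=0$ and the punishment block $r \in \{1,\ldots,\pd\}$, and evaluate each part separately.

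The crucial structural observation is that, under full indirect reciprocity, dropping any nonempty $D$ causes \emph{every} in-neighbor of $i$ to react. Indeed, each $j \in D \subseteq {\cal N}_i[h]$ has $p_i[j|h] > 0$, so the action $p_i'[j] = 0$ is detected as a defection, and every $k \in {\cal N}_i^{-1} \subseteq \rs[i,j]$ stops forwarding to $i$ throughout the next $\pd$ stages (by the evolution properties established in Lemmas~\ref{lemma:corr-1} and~\ref{lemma:corr-2}). Hence $i$ is isolated for $r \in \{1,\ldots,\pd\}$, so $q_i[h,r|\vec{\sigma}'] = 1$ and therefore $u_i[h,r|\vec{\sigma}'] = 0$ for each such $r$. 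For the current stage $r=0$, the deviation alters only $i$'s own forwarding; since $i$'s probability of receiving a message is insensitive to its own forwarding decisions, $q_i[h,0|\vec{\sigma}'] = q_i[h,0|\vec{\sigma}^*]$, and the only change in utility comes from the saved forwarding cost. A direct computation then gives $u_i[h,0|\vec{\sigma}^*] - u_i[h,0|\vec{\sigma}'] = -(1-q_i[h,0|\vec{\sigma}^*])\gamma_i\sum_{j\in D}p_i[j|h]$.

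Substituting these evaluations, the DC inequality for a given $D$ reduces to $\sum_{r=1}^{\pd}\omega_i^r u_i[h,r|\vec{\sigma}^*] \geq (1-q_i[h,0|\vec{\sigma}^*])\gamma_i\sum_{j\in D}p_i[j|h]$. The left-hand side does not depend on $D$, whereas the right-hand side is nondecreasing under set inclusion and is maximized at $D = {\cal N}_i[h]$, where $\sum_{j\in{\cal N}_i[h]}p_i[j|h] = \bar{p}_i[h,0|\vec{\sigma}^*]$ (recall $p_i[j|h] = 0$ for $j \notin {\cal N}_i[h]$). Consequently the whole family of DC inequalities over all $D$ holds if and only if the single inequality for $D = {\cal N}_i[h]$ holds, and the latter is precisely Inequality~\eqref{eq:indir-equiv}. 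Combining this reduction with Theorem~\ref{theorem:gen-cond} in both directions yields the claimed equivalence.

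I would expect the main obstacle to be rigorously justifying the two facts that collapse the punishment block: first, that dropping even a single out-neighbor isolates $i$ so completely that $u_i[h,r|\vec{\sigma}'] = 0$ for all $\pd$ punishment stages (accounting in particular for any reception from the source), and second, that $q_i[h,0|\cdot]$ is genuinely unaffected by $i$'s own first-stage forwarding. Both rely on the network-evolution lemmas and the epidemics model of the appendix, and they are exactly the points where the full indirect reciprocity hypothesis does the decisive work, so they warrant careful treatment.
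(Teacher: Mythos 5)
Your proposal is correct and follows essentially the same route as the paper's proof: both invoke Theorem~\ref{theorem:gen-cond}, use the network-evolution lemmas together with Lemma~\ref{lemma:noneib} to show that under full indirect reciprocity any nonempty drop leaves $i$ completely isolated (so $u_i[h,r|\vec{\sigma}']=0$ for $r\in\{1,\ldots,\pd\}$), exploit the fact that $q_i$ is insensitive to $i$'s own forwarding in the deviation stage, and reduce the whole family of DC inequalities to the single worst case of dropping everything, which is exactly Inequality~\eqref{eq:indir-equiv}. The only cosmetic difference is that you argue by monotonicity of the right-hand side $(1-q_i[h,0|\vec{\sigma}^*])\gamma_i\sum_{j\in D}p_i[j|h]$ in $D$, whereas the paper compares the drop-$D$ deviation directly against the drop-all deviation; these are the same reduction.
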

\begin{proof}
This simplification is obtained directly from the DC Condition and the fact that, if a node $i$ has incentives
to drop some out-neighbor $j$, then the best response strategy is to drop all out-neighbors.
This is proven by defining $\vec{p}''$, where $i$ drops a subset $D$ of out-neighbors,
and $\vec{p}'$, where $i$ drops every out-neighbor. We can prove that:
\begin{itemize}
  \item $\sig[\vec{p}'|h] = \sig[\vec{p}''|h]$.
  \item $u_i[\vec{p}'] > u_i[\vec{p}'']$.
  \item For any $r>0$, $u_i[h,r|\vec{\sigma}'] = u_i[h,r|\vec{\sigma}'']$, where $\vec{\sigma}'$ and $\vec{\sigma}''$
  differ from $\vec{\sigma}^*$ exactly in that $i$ follows $\vec{p}'$ and $\vec{p}''$ in the first stage, respectively.
\end{itemize}
This implies that $\pi_i[\vec{\sigma}'|h] >\pi_i[\vec{\sigma}''|h]$, and therefore the best response for $i$ is
to drop all neighbors.
(\hyperref[proof:lemma:indir-equiv]{Complete proof in Section~\ref{proof:lemma:indir-equiv}}).
\end{proof}

Theorem~\ref{theorem:indir-suff} derives a lower bound for the effectiveness of a full indirect reciprocity profile of strategies $\vec{\sigma}^*$.
This is done in two steps. First, it is shown that the history $h$ that minimizes the left side of Inequality~\ref{eq:indir-equiv}
results exactly in the same punishments being applied during the first $\pd-1$ stages. This is proven in Lemma~\ref{lemma:maxh}.

\begin{lemma}
\label{lemma:maxh}
Let $h \in {\cal H}$ be defined such that for every $h' \in {\cal H}$,
the value of the left side of Inequality~\ref{eq:indir-equiv} for $h$ is lower than or equal to the value for $h'$.
Then, for every $r \in \{1 \ldots \pd-2\}$,
$$u_i[h,r|\vec{\sigma}^*] = u_i[h,r+1|\vec{\sigma}^*].$$
\end{lemma}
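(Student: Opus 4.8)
The plan is to treat the left-hand side of Inequality~\ref{eq:indir-equiv} as a quantity to be minimized over all histories $h \in {\cal H}$, and to show that at any minimizer the punishments acting on $i$ cannot expire strictly inside the window of stages $\{1,\ldots,\pd-1\}$. The key structural tool is the deterministic decay of Lemma~\ref{lemma:corr-0}: under $\vec{\sigma}^*$ no new defections are recorded after $h$, each recorded defection merely ages by one per stage, and a defection of age $a$ at $h$ stays active for exactly the first $\pd-a$ stages. Consequently the set of active defections is nested and non-increasing in $r$, so $u_i[h,r|\vec{\sigma}^*]$ can change between consecutive stages only at the stages where some defection expires. Proving that a minimizer has no expiry at stages $2,\ldots,\pd-1$ therefore yields $u_i[h,r|\vec{\sigma}^*]=u_i[h,r+1|\vec{\sigma}^*]$ for all $r\in\{1,\ldots,\pd-2\}$, which is the assertion.

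The core is an exchange (``freshening'') argument. Suppose $h$ is a minimizer and that, contrary to the claim, some defection that lowers $i$'s delivery expires at a stage $r+1\in\{2,\ldots,\pd-1\}$, i.e. has age $\pd-r-1\ge 1$ at $h$. I would build $h'$ agreeing with $h$ except that this defection is reset to age $0$, so that it remains active through stage $\pd-1$. By the ageing rule of Lemma~\ref{lemma:corr-0}, the defection is active at the current stage in both histories, hence $q_i[h,0|\vec{\sigma}^*]=q_i[h',0|\vec{\sigma}^*]$ and $\bar{p}_i[h,0|\vec{\sigma}^*]=\bar{p}_i[h',0|\vec{\sigma}^*]$ and the subtracted cost term is unchanged; the stages $1,\ldots,r$ are also unaffected (the defection was already active there), and by Corollary~\ref{corollary:corr-0} stage $\pd$ is unaffected because even a fresh punishment has expired by then. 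The only stages that change are $r+1,\ldots,\pd-1$, where the now-active defection strictly lowers $u_i$ (the factor $1-q_i$ shrinks while $\beta_i-\gamma_i\bar{p}_i>0$ by Proposition~\ref{prop:folk}). Thus $\sum_{r'=1}^{\pd}\omega_i^{r'}u_i[h',r'|\vec{\sigma}^*]$ strictly decreases while the cost term is fixed, contradicting the minimality of $h$.

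The step I expect to be the main obstacle is certifying the sign of the change in $u_i$ when a defection is made active at an intermediate stage, because $u_i[\vec{p}]=(1-q_i[\vec{p}])(\beta_i-\gamma_i\bar{p}_i)$ multiplies a delivery factor and a forwarding factor, so the effect of one defection is not additive and depends on what else is active. A single recorded defection can both degrade $i$'s delivery (lowering $u_i$, which helps the adversary) and force $i$ to stop forwarding to a punished out-neighbor (raising $u_i$ and shrinking the cost term, which hurts it). I would remove the purely-forwarding effect first: a recorded defection of an out-neighbor of $i$ that lies on no path returning to $i$ only increases $i$'s forwarding savings, so deleting it raises the cost term and lowers every future $u_i$, strictly decreasing the left-hand side; hence a minimizer contains no such defection and we may assume $\bar{p}_i[h,r|\vec{\sigma}^*]$ is the full forwarding probability at every stage. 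After this reduction each surviving defection acts on $i$ only through its delivery, and the freshening exchange above has an unambiguous sign.

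Finally, with $i$'s forwarding held fixed and every surviving punishment aligned to be active precisely over the current stage and stages $1,\ldots,\pd-1$, the delivery $q_i[h,r|\vec{\sigma}^*]$ is constant on $\{1,\ldots,\pd-1\}$, so $u_i[h,r|\vec{\sigma}^*]$ is constant there; the current stage enters only through the cost term and is not compared, while stage $\pd$ is free to differ since all punishments have expired by Corollary~\ref{corollary:corr-0}. This is exactly the claimed equality. The genuinely delicate remaining point is a cyclic topology in which a single defection simultaneously reduces $i$'s delivery and $i$'s forwarding: there the two effects cannot be separated by deletion, and I would disentangle them with a stage-by-stage comparison rather than a single freshening step.
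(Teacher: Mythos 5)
Your proposal has the right skeleton (minimizer plus an exchange argument over expiry times, using the deterministic ageing from Lemma~\ref{lemma:corr-0}), but it contains a genuine gap that you yourself flag and never close: the sign of the change in $u_i$ when a punishment is made active. Your freshening step resets an expiring defection to age $0$, which alters \emph{all} stages $r+1,\ldots,\pd-1$, and the contradiction requires that $u_i$ (weakly, in fact strictly somewhere) \emph{decreases} at each of those stages. You try to secure this by first deleting defections that act only on $i$'s forwarding, and then asserting that ``each surviving defection acts on $i$ only through its delivery.'' That assertion is false: a defection by an out-neighbor $j$ of $i$ that also lies on a path from $s$ back to $i$ survives your deletion reduction and still has both effects (it shrinks $1-q_i$ \emph{and} shrinks $\gamma_i\bar p_i$), so the sign of its contribution to $u_i=(1-q_i)(\beta_i-\gamma_i\bar p_i)$ is indeterminate. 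Deferring this to ``a stage-by-stage comparison'' is deferring precisely the hard case. Moreover, even for third-party defections your sign claim is not automatic: Definition~\ref{def:thr} lets unpunished links be set to any positive function of $\ds_i[j|h]$ (e.g., raised to keep reliability high), so activating a punishment elsewhere in the graph need not shrink $1-q_i$ at all.

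The paper's proof shows how to avoid the sign analysis entirely, and this is the idea your proposal is missing. Instead of resetting ages to $0$, it shifts the expiring set $D$ of defections by exactly \emph{one} stage, in whichever direction is useful: if $u_i[h,r|\vec\sigma^*]<u_i[h,r+1|\vec\sigma^*]$, postpone expiry by one stage, so stage $r+1$ of $h'$ reproduces the punishment state of stage $r$ of $h$ and hence $u_i[h',r+1|\vec\sigma^*]=u_i[h,r|\vec\sigma^*]$; if $u_i[h,r|\vec\sigma^*]>u_i[h,r+1|\vec\sigma^*]$, anticipate expiry by one stage instead. Either way only a single stage's utility changes (the stage-$0$ cost term, stages $1,\ldots,r$, and stages beyond $r+1$ are verified unchanged), and it is replaced by the \emph{smaller} of the two hypothesized values — whichever that is — so the objective strictly decreases and minimality is contradicted without ever deciding whether activating a punishment raises or lowers $u_i$. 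To repair your write-up you would either have to adopt this bidirectional one-stage shift, or genuinely carry out the mixed-effect case you postponed, which the one-stage shift renders unnecessary.
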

\begin{proof}
The proof is performed by contradiction, where we assume that $h$ minimizes the left side of Inequality~\ref{eq:indir-equiv},
but for some $r \in \{1 \ldots \pd- 2\}$
$$u_i[h,r|\vec{\sigma}^*] \neq u_i[h,r+1|\vec{\sigma}^*].$$
This implies that in $h$ a set of punishments ends at the end of stage $r$. We can find $h'$ where those
punishments are either postponed or anticipated one stage and such that the left side of Inequality~\ref{eq:indir-equiv} is 
lower for $h'$ than for $h$, which is a contradiction.
(\hyperref[proof:lemma:maxh]{Complete proof in Section~\ref{proof:lemma:maxh}}).
\end{proof}

It is now possible to derive a sufficient benefit-to-cost ratio for full indirect reciprocity to be a SPE,
which constitutes a lower bound for the effectiveness of these strategies. However, this derivation
is only valid when the following assumption holds. There must exist a constant $c\geq 1$ such
that for every history $h$:
\begin{equation}
\label{eq:indir-assum}
q_i[h,0|\vec{\sigma}^*] \geq 1- c(1-q_i[h,1|\vec{\sigma}^*]).
\end{equation}

Intuitively, this states that, after some history $h$, if the value of $q_i$ varies from the first stage to the second
due to some punishments being concluded, then this variation is never too large. With this assumption,
we can derive a sufficient benefit-to-cost for $\vec{\sigma}^*$ to be a SPE.

\begin{theorem}
\label{theorem:indir-suff}
If there exists a constant $c \geq 1$ such that, for every $h \in {\cal H}$ and $i \in {\cal N}$, Assumption~\ref{eq:indir-assum} holds, then
$\psi[\vec{\sigma}^*] \supseteq (v,\infty)$, where
\begin{equation}
\label{eq:indir-suff}
v = \max_{h \in {\cal H}}\max_{i \in {\cal N}} \bar{p}_i[h,0|\vec{\sigma}^*]\left(1 + \frac{c}{\pd}\right).
\end{equation}
\end{theorem}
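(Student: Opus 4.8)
The plan is to prove Theorem~\ref{theorem:indir-suff} by showing that the simplified SPE condition from Lemma~\ref{lemma:indir-equiv} is satisfied whenever $\frac{\beta_i}{\gamma_i} > v$, with $v$ as in Inequality~\ref{eq:indir-suff}. First I would fix an arbitrary player $i$ and let $h \in {\cal H}$ be the history that minimizes the left-hand side of Inequality~\ref{eq:indir-equiv}; by Lemma~\ref{lemma:indir-equiv}, it suffices to show this minimal value is nonnegative. Applying Lemma~\ref{lemma:maxh}, the punishments applied during stages $1$ through $\pd - 1$ are all identical, so $u_i[h,r|\vec{\sigma}^*]$ takes a common value, say $u_i[h,1|\vec{\sigma}^*]$, for every $r \in \{1 \ldots \pd - 1\}$. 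Meanwhile, by Corollary~\ref{corollary:corr-0} (or Lemma~\ref{lemma:corr-2}), all punishments have ceased by stage $\pd$, so $u_i[h,\pd|\vec{\sigma}^*]$ equals the unpunished stage utility $u_i[\emptyset,1|\vec{\sigma}^*]$, which is at least $u_i[h,1|\vec{\sigma}^*]$. This lets me bound the geometric sum $\sum_{r=1}^\pd \omega_i^r u_i[h,r|\vec{\sigma}^*]$ from below by $u_i[h,1|\vec{\sigma}^*]\sum_{r=1}^\pd \omega_i^r$, or more carefully track the final unpunished term separately.

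Next I would expand the utilities using the definitions $u_i[h,r|\vec{\sigma}^*] = (1 - q_i[h,r|\vec{\sigma}^*])(\beta_i - \gamma_i \bar{p}_i[h,r|\vec{\sigma}^*])$ and substitute into Inequality~\ref{eq:indir-equiv}. The term being subtracted, $(1 - q_i[h,0|\vec{\sigma}^*])\gamma_i \bar{p}_i[h,0|\vec{\sigma}^*]$, is the one-stage gain from deviating by dropping all out-neighbors; the summation captures the discounted cost of being punished thereafter. Dividing through by $\gamma_i$ and isolating $\frac{\beta_i}{\gamma_i}$, I expect to reach a condition of the form $\frac{\beta_i}{\gamma_i} \geq \text{(some expression in the } q_i \text{ and } \bar{p}_i \text{ values)}$. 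The role of Assumption~\ref{eq:indir-assum} is precisely to control the gap between $q_i[h,0|\vec{\sigma}^*]$ and $q_i[h,1|\vec{\sigma}^*]$: it guarantees $1 - q_i[h,0|\vec{\sigma}^*] \leq c(1 - q_i[h,1|\vec{\sigma}^*])$, so the deviation gain (weighted by the stage-$0$ delivery probability) can be dominated by the punishment loss (weighted by the stage-$1$ delivery probability), with the factor $c$ absorbing the mismatch.

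The main obstacle I anticipate is the bookkeeping that produces exactly the factor $\left(1 + \frac{c}{\pd}\right)$ in $v$. The summation $\sum_{r=1}^\pd \omega_i^r$ must be compared against the single deviation term, and the cleanest route is likely to take $\omega_i \to 1$ (which is permissible since the definition of effectiveness only requires the existence of \emph{some} $\omega_i \in (0,1)$, and the condition is tightest in that limit), turning the geometric sum into $\pd$ identical terms. Under $\omega_i \to 1$, the punishment loss accrues over roughly $\pd$ stages while the deviation gain is a single stage, which is where the $\frac{1}{\pd}$ factor arises; the constant $c$ enters when bounding the stage-$0$ quantities by stage-$1$ quantities. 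I would need to verify that the unpunished final stage (stage $\pd$) contributes favorably rather than weakening the bound, and that the $\bar{p}_i[h,0|\vec{\sigma}^*]$ appearing in $v$ correctly dominates the relevant $\bar{p}_i[h,r|\vec{\sigma}^*]$ across stages — here the fact that dropping all neighbors sets the deviation cost to zero while the equilibrium cost is positive works in the right direction.

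Finally, I would take the maximum over all histories $h$ and all players $i$ to obtain the uniform threshold $v$, concluding that whenever $\frac{\beta_i}{\gamma_i} \in (v,\infty)$ for every $i$, a suitable $\omega_i$ close to $1$ makes Inequality~\ref{eq:indir-equiv} hold, so by Lemma~\ref{lemma:indir-equiv} the profile $\vec{\sigma}^*$ is a SPE and therefore $\psi[\vec{\sigma}^*] \supseteq (v,\infty)$ as claimed.
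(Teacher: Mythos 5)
Your proposal follows the same architecture as the paper's proof: fix the history $h$ minimizing the left side of Inequality~\ref{eq:indir-equiv}, use Lemma~\ref{lemma:maxh} to make the punishment-phase utilities constant, reduce to a scalar condition comparing the one-stage deviation gain $(1-q_i[h,0|\vec{\sigma}^*])\gamma_i\bar{p}_i[h,0|\vec{\sigma}^*]$ against $\pd$ copies of $u_i^h = u_i[h,1|\vec{\sigma}^*]$, and invoke Assumption~\ref{eq:indir-assum} to bound $(1-q_i[h,0|\vec{\sigma}^*])/(1-q_i[h,1|\vec{\sigma}^*])$ by $c$. Your $\omega_i \to 1$ shortcut is sound and is in fact equivalent to the paper's longer route: the paper analyzes the degree-$(\pd+1)$ polynomial in $\omega_i$ and extracts, from strict concavity and the location of its maximum, exactly the condition $\pd\, u_i^h > (1-q_i[h,0|\vec{\sigma}^*])\gamma_i\bar{p}_i[h,0|\vec{\sigma}^*]$, which is your limiting inequality; continuity then gives some $\omega_i<1$. (One small slip: the constraint is \emph{loosest}, not tightest, as $\omega_i \to 1$ --- that is precisely why passing to the limit is permissible.)

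The genuine gap is your treatment of stage $\pd$. You justify $u_i[h,\pd|\vec{\sigma}^*] \geq u_i[h,1|\vec{\sigma}^*]$ by claiming that stage $\pd$ is unpunished and that the unpunished stage utility dominates the punished one. The second claim is false in general: when $i$ is a \emph{punisher} rather than the punished node, it sets $p_i[m|\cdot]=0$ towards the defector $m$, saving forwarding cost $\gamma_i p_i[m|\emptyset]$ per delivered message, while its own reliability can be completely unaffected (if no path from $s$ to $i$ crosses $m$, Lemma~\ref{lemma:bottleneck-impact} gives $q_i$ unchanged); then $u_i[h,1|\vec{\sigma}^*] = (1-q_i^*)\bigl(\beta_i - \gamma_i(\bar{p}_i - p_i[m|\emptyset])\bigr)$ strictly exceeds the unpunished utility. (Also, under the paper's indexing of $u_i[h,r|\vec{\sigma}^*]$, punishments triggered by defections recorded in the last signal of $h$ may still be in force at stage $\pd$, which is why the paper's proof splits into two cases there rather than citing Corollary~\ref{corollary:corr-0} outright.) The inequality you need \emph{is} true, but only for the \emph{minimizing} history, and the paper derives it from minimality: if the punishments expiring at stage $\pd$ made $u_i[h,\pd|\vec{\sigma}^*] < u_i^h$, one could postpone their expiry by one stage and obtain a history $h'$ with a strictly smaller left side, contradicting the choice of $h$ --- the same postpone/anticipate perturbation used to prove Lemma~\ref{lemma:maxh}. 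Replacing your monotonicity claim by that minimality argument closes the gap; the rest of your derivation then goes through as in the paper.
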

\begin{proof}
We consider the history $h$ that minimizes the left side of Inequality~\ref{eq:indir-equiv}.
Using the result of Lemma~\ref{lemma:maxh}, after some manipulations,
we can find that if for every $i \in {\cal N}$, $\beta_i/\gamma_i\in (v,\infty)$, then there exist $\omega_i \in (0,1)$ for every $i \in {\cal N}$ such that Inequality~\ref{eq:indir-equiv}
is true for every history $h'$, which implies by Lemma~\ref{lemma:indir-equiv} that $\vec{\sigma}^*$ is a SPE. This allows us to conclude
that $\psi[\vec{\sigma}^*] \supseteq (v,\infty)$.
(\hyperref[proof:theorem:indir-suff]{Complete proof in Section~\ref{proof:theorem:indir-suff}}).
\end{proof}

We can then conclude that if $c$ is small or $\pd$ is large, and the maximum of $\bar{p}_i[h,0|\vec{\sigma}^*]$ is never much larger than $\bar{p}_i$ for every $i$ and $h$, 
then the effectiveness of full indirect reciprocity is close to the optimum derived in Proposition~\ref{prop:folk}. In particular, if Grim-trigger is used ($\tau \to \infty$) 
and $\bar{p}_i$ is maximal for every $i$, then the effectiveness is optimal.
Furthermore, if for any $h$ both $q_i[h,0]$ and $q_i[h,1]$ are small,
then the effectiveness of full indirect reciprocity differs from the theoretical optimum only by
a factor $1+1/\pd$, which is upper bounded by $2$ for any $\pd\geq1$.

\section{Private Monitoring}
\label{sec:private}
When using public monitoring, we make the implicit assumption that the monitoring mechanism is able
to provide the same information instantly to every node, which requires the existence of a path from every out-neighbor $j$ of any node $i$ to every node of the graph,
that does not cross $i$. In addition, public monitoring is only possible if accusations are broadcast to every node.
We now consider private monitoring, where the dissemination of accusations may be restricted by the topology and scalability constraints.
However, any node that receives an accusation may react to it. Therefore, the definition of $\rs$ is no longer necessary.
In addition, accusations may be delayed.

\subsection{Private Signals}
In private monitoring, signals are determined by the history of previous signals $h$ and the profile $\vec{p}$
followed in the last stage. Namely, $\sig[\vec{p}|h]$ returns
a signal $s$, such that every node $i$ observes only its private signal $s_i \in s$, indicating for every other node $j \in {\cal N}$ whether $j$ cooperated or defected
with its out-neighbors in previous stages. The distinction between cooperation and defection is now determined
by a threshold probability $p_i[j|h_i]$. If a node $i$ defects an out-neighbor $j$ in stage $r$, then $k$ is informed of this defection
with a delay $\del_k[i,j]$, i.e., $k$ is informed only at the end of stage $r+\del_k[i,j]$.
We only assume that, for every node $i \in {\cal N}$ and $j \in {\cal N}_i$,
both $i$ and $j$ are informed instantly of the action of $i$ towards $j$ in the previous stage, i.e.:
$$\del_i[i,j] = \del_j[i,j] = 0.$$ 

We consider that these delays are common knowledge among players.
Moreover, we still assume that monitoring is perfect and that accusations are propagated faithfully.
We intend to relax the assumptions in future work. 
With this in mind, it is possible to provide a precise definition of a private signal.
For every player $i \in {\cal N}$ and history $h \in {\cal H}$,
we denote by $h_i \in h$ the private history observed by $i$ when all players observe the history $h \in {\cal H}$.
If $|h_i| \geq r \geq 1$, then let $h_i^r$ denote the last $r$-th signal observed by $i$, where $h_i^1$ is the last signal.
A private signal is defined such that if some node $j$ observes a defection of an in-neighbor $i \in {\cal N}_j^{-1}$,
every node $k \in {\cal N}$ such that $\del_k[i,j]$ is finite ($\del_k[i,j] < \infty$) observes this defection $\del_k[i,j]$ stages
after the end of the stage it occurred. The value $\del_k[i,j]$ is infinite if and only if accusations emitted by $j$ against $i$ may never reach $k$,
either due to every path from $j$ to $k$ crossing $i$ or the monitoring mechanism not disseminating the
accusation to $k$. However, if there exists a path from $j$ to $k$ without crossing $i$ and $k \in {\cal N}_i^{-1}$,
then $\del_k[i,j] < \infty$.

Formally:

\begin{definition}
\label{def:privsig}
For every $i \in {\cal N}$ and $h \in {\cal H}$, let $s_i' \in \sig[\vec{p}'|h]$ be the private signal observed by $i$
when players follow $\vec{p}' \in {\cal P}$ after having observed $h$. We have:
\begin{itemize}
  \item For every $j \in {\cal N}$ and $k \in {\cal N}_j$ such that $\del_i[j,k] = 0$, $s_i'[j,k] = \mbox{\emph{cooperate}}$ if and only if $p_j'[k] \geq p_j[k|h_j]$,
  where $h_j \in h$.
  \item For every $j \in {\cal N}$ and $k \in {\cal N}_j$ such that $0<\del_i[j,k] < \infty$, $s_i'[j,k] = \mbox{\emph{defect}}$ if and only if:
  \begin{itemize}
    \item $|h| \geq \del_i[j,k]$.
    \item For $h_k \in h$ and $s_k' = h_k^{\del_i[j,k]}$, $s_k'[j,k] = \mbox{\emph{defect}}$. 
  \end{itemize}
  \item For every $j \in {\cal N}$ and $k \in {\cal N}_j$ such that $\del_i[j,k] = \infty$, $s_i'[j,k] = \mbox{\emph{cooperate}}$.
\end{itemize}
\end{definition}

\subsection{Private Punishments}
In this context, we can define a punishing strategy $\sigma_i^*$ for every node $i$ as a 
function of the threshold probability $p_i[j|h_i]$ determined by $i$ for every private history $h_i$
and out-neighbor $j$. Notice that in the definition of private signals we assume that an accusation by $j$ is emitted against $i$
iff $i$ uses $p_i[j] < p_i[j|h_i]$ for any private history $h_i$. This was reasonable to assume in public monitoring,
where histories were public. Here, the strategy must also specify for every private history $h_j$
the threshold probability $p_i[j|h_j]$, since $h_i$ may differ from $h_j$. In order for $j$ to accurately monitor $i$,
for every $h \in {\cal H}$ and $h_i,h_j \in h$, we must have
\begin{equation}
\label{eq:common-knowledge}
p_i[j|h_i] = p_i[j|h_j].
\end{equation}

Therefore, both threshold probabilities must be computed as a function of the same set of signals.
The only issue with this requirement is that defection signals may arrive at different stages to $i$ and $j$.
For instance, if $k_1$ defects from $k_2$ in stage $r$ and $\del_i[k_1,k_2] < \del_j[k_1,k_2]$, then $i$ must wait for
stage $r+\del_j[k_1,k_2]$ before taking this defection into consideration in the computation of the threshold
probability. Furthermore, as in public monitoring, $i$ and $j$ are expected to react to a given defection
for a finite number of stages. However, as we will see later, this number should vary according to the delays
in order for punishments to be effective. Thus, for every $k_1 \in {\cal N}$ and $k_2 \in {\cal N}_{k_1}$,
we define $\pd[k_1,k_2|i,j]$ to be the number of stages during which $i$ and $j$ react to a given defection of $k_1$ from $k_2$.
Notice that $\pd[k_1,k_2|i,j] = \pd[k_1,k_2|j,i]$.

This intuition is formalized as follows. As in public monitoring, $\ds_i[j|h_i]$ denotes the 
set of defections observed by $i$ and that $j$ \emph{will eventually observe}. This set also contains tuples in the form $(k_1,k_2,r)$. 
The main difference is that now $i$ may have to wait before considering this tuple in the definition of $p_i[j|h_i]$. We signal
this by allowing $r$ to be negative and by using the tuple in the definition of $p_i[j|h_i]$ only when $r \geq0$. 
When $i$ observes a defection for the first time, it adds $(k_1,k_2,v)$ to  $\ds_i[j|h_i]$, where 
$$v= \min[\del_i[k_1,k_2] - \del_j[k_1,k_2],0].$$

Then, $i$ removes this pair when $r = \pd[k_1,k_2|i,j]$. For simplicity, we allow $v$ to take the value $\infty$
when $\del_j[k_1,k_2] = \infty$, resulting in that $i$ never takes into consideration this defection when determining $p_i[j|h_i]$.
This leads to the following definition.

\begin{definition}
\label{def:priv-thr}
For every $i \in {\cal N}$, $h_i \in {\cal H}_i$, and $j \in {\cal N}_i \cup {\cal N}_i^{-1}$, define $\ds_i[j|h_i] \subseteq {\cal N} \times \mathbb{Z}$ as follows:
  \begin{itemize}
     \item $\ds_i[j|\emptyset] = \emptyset$.
     \item For $h_i = (h_i',s_i')$, $\ds_i[j|h_i] = L_1 \cup L_2$, where:
     \begin{enumerate}
       \item $L_1 = \{(k_1,k_2,r+1) | (k_1,k_2,r) \in \ds_i[j|h_i'] \land r+1 < \pd[k_1,k_2|i,j]\}$.
       \item $L_2 = \{(k_1,k_2,v) | k_1,k_2 \in {\cal N} \land s_i'[k_1,k_2] = \mbox{\emph{defect}}\}$, where:
      $$v= \min[\del_i[k_1,k_2] - \del_j[k_1,k_2],0].$$     
     \end{enumerate}
  \end{itemize}
For every $i \in {\cal N}$, $h_i \in {\cal H}_i$, and  $j \in {\cal N}_i$, $\sigma_i^*[h_i] = p_i[j|h_i]$:
    \begin{itemize}
      \item Let $K = \{(k_1,k_2,r) \in \ds_i[j|h_i] | r \geq 0\}$.
      \item If there exists $r \geq 0$ such that $(i,j,r) \in K$, then $p_i[j|h_i] = 0$.
      \item If there exist $r \geq 0$ and $k \in {\cal N}_j$ such that $(j,k,r) \in K$, then $p_i[j|h_i] = 0$.
      \item Otherwise, $p_i[j|h]$ is a positive function of $K$.
    \end{itemize}

For every $j \in {\cal N}_i^{-1} \setminus {\cal N}_i$, $\sigma_i^*[h_i] = p_j[i|h_i]$ such that:
    \begin{itemize}
      \item Let $K = \{(k_1,k_2,r) \in \ds_i[j|h_i] | r \geq 0 \}$.
      \item If there exists $r \geq 0$ such that $(j,i,r) \in K$, then $p_j[i|h_i] = 0$.
      \item If there exist $r \geq 0$ and $k \in {\cal N}_i$ such that $(i,k,r) \in K$, then $p_j[i|h_i] = 0$.
      \item Otherwise, $p_j[i|h_i]$ is a positive function of $K$.
    \end{itemize}  
\end{definition}

\subsection{Expected Utility and Solution Concept}
We now model the interactions as a repeated game with imperfect information and perfect recall, for which
the solution concept of Sequential Equilibrium is adequate\,\cite{Kreps:85}.
Its definition requires the specification of a belief system $\vec{\mu}$. After a player $i$
observes a private history $h_i \in {\cal H}_i$, $i$ must form some expectation regarding
the history $h \in {\cal H}$ observed by every player, which must include $h_i$. This is captured by
a probability distribution $\mu_i[.|h_i]$ over ${\cal H}$. By defining $\vec{\mu}=(\mu_i)_{i \in {\cal N}}$,
we call a pair $(\vec{\sigma},\vec{\mu})$ an assessment, which is assumed to be common
knowledge among all players. The expected utility of a profile of strategies $\vec{\sigma}$ is then defined as:
\begin{equation}
\label{eq:priv-exp-util}
\pi_i[\vec{\sigma}|\vec{\mu},h_i] = \sum_{h \in {\cal H}} \mu_i[h|h_i]\pi_i[\vec{\sigma}|h],
\end{equation}
where $\pi_i[\vec{\sigma}|h]$ is defined as in the public monitoring case.

An assessment $(\vec{\sigma}^*,\vec{\mu}^*)$ is a Sequential Equilibrium if and only if
$(\vec{\sigma}^*,\vec{\mu}^*)$ is Sequentially Rational and Consistent.
The definition of sequential rationality is identical to that of subgame perfection:
\begin{definition}
An assessment $(\vec{\sigma}^*,\vec{\mu}^*)$ is Sequentially Rational if and only if for every $i \in {\cal N}$,
$h_i \in {\cal H}_i$, and $\sigma_i' \in \Sigma_i$, $\pi_i[\vec{\sigma}^*|\vec{\mu}^*,h_i] \geq \pi_i[\sigma_i',\vec{\sigma}^*_{-i}|\vec{\mu}^*,h_i]$.
\end{definition}

However, defining consistency for an assessment $(\vec{\sigma}^*,\vec{\mu}^*)$ is more intricate. The idea of defining
this concept was introduced in~\cite{Kreps:85}, intuitively defined as follows in our context. For any profile $\vec{\sigma}^*$, every private history $h_i$ that may
be reached with positive probability when players follow $\vec{\sigma}^*$ is said to be consistent with $\vec{\sigma}^*$;
otherwise, $h_i$ is inconsistent. For any consistent $h_i$, $\mu_i[h|h_i]$ must be defined using the Bayes rule.
The definition of $\mu_i$ for inconsistent private histories varies with the specific definition of Consistent Assessment.
It turns out that, in our case, the notion of Preconsistency introduced in~\cite{Hendon:96} is sufficient.

We now provide the formal definition of Preconsistency and later provide an interpretation in the context of punishment strategies.
Let $pr_i[h'|h,\vec{\sigma}]$ be the probability assigned by $i$ to $h' \in {\cal H}$ being reached from $h \subset h'$ when
players follow $\vec{\sigma} \in \Sigma$. Given $h_i \in {\cal H}_i$, we can define 
$$pr_i[h'|\vec{\mu},h_i,\vec{\sigma}] = \sum_{h \in {\cal H} } \mu_i[h|h_i]pr_i[h'|h,\vec{\sigma}].$$
For $h_i' \in {\cal H}_i$ such that $h_i \subset h_i'$, let:
$$pr_i[h_i'|\vec{\mu},h_i,\vec{\sigma}] = \sum_{h' \in {\cal H} : h_i' \in h'} pr_i[h'|\vec{\mu},h_i,\vec{\sigma}].$$

\begin{definition}
An assessment $(\vec{\sigma}^*,\vec{\mu}^*)$ is Preconsistent if and only if
for every $i \in {\cal N}$, $h_i \in {\cal H}_i$, and $h_i' \in {\cal H}_i$ such that $h_i \subset h_i'$,
if there exists $\sigma_i' \in \Sigma_i$ such that $pr_i[h_i'|\vec{\mu}^*,h_i,(\sigma_i',\vec{\sigma}^*_{-i})] > 0$,
then for every $h' \in {\cal H}$ such that $h_i' \in h'$: 
$$\mu_i[h'|h_i'] = \frac{pr_i[h'|\vec{\mu}^*,h_i,(\sigma_i',\vec{\sigma}^*_{-i})]}{pr_i[h_i'|\vec{\mu}^*,h_i,(\sigma_i',\vec{\sigma}^*_{-i})]}.$$
\end{definition}

The underlying intuition of this definition when considering a profile of punishing strategies $\vec{\sigma}^*$ is
as follows. First, notice that a history $h_i$ is consistent with $\vec{\sigma}^*$ if and only if no defections are observed in $h_i$.
For any $\sigma_i'$, the profile of strategies $\vec{\sigma}' = (\sigma_i',\vec{\sigma}_{-i}^*)$ may specify non-deterministic actions for the stage game,
by only due to $\sigma_i'$. Consider any $h_i'$. If $h_i'$ does not contain any defections, then the only strategies $\sigma_i'$
such that $h_i'$ is consistent with any such $\vec{\sigma}'$ are those where $i$ does not defect any node. Therefore, we can set $h_i$
to the empty history and from the above definition derive the conclusion that $\mu_i[h'|h_i']=1$ if and only if $h'$ does not contain
any defections. Similarly, if $h_i'$ only contains defections performed by $i$, then $h_i$ can be the empty set and there must
be only one $h'$ such that $\mu_i[h'|h_i'] =1$, which is the history where only defections performed by $i$ are observed by any player.

If $h_i'$ contains only defections performed by $i$, then we can use induction on the number of defections committed by other nodes to prove that
$\mu_i[h'|h_i'] =1$ if and only if $h'$ contains exactly the defections observed by $i$ in $h_i'$. The base case follows from the two previous scenarios.
As for the induction step, there are two hypothesis. If the last defections were performed in the last stage, then there is no $h_i \subset h_i'$ 
such that 
\begin{equation}
\label{eq:consistent-aux}
pr_i[h_i'|\vec{\mu},h_i,\vec{\sigma}'] > 0.
\end{equation}

Otherwise, consider that the last defections performed by other nodes occurred in the last $r$-th stage where $r>1$, 
when $i$ observed $h_i$. In this case, it is true that~\ref{eq:consistent-aux} holds. Here, there is only one history $h$
such that $h_i \in h$ and $\mu_i[h|h_i] = 1$, which is true by the induction hypothesis. This history contains exactly
the defections observed by $i$ in $h_i$, which are also included in $h_i'$. Thus, the only history $h'$ that may follow
$h_i$ fulfills the condition that no other defection was performed other than what $i$ observed in $h_i'$.

In summary, $\mu[h|h_i] = 1$ if and only if $h$ is the history containing
$h_i$ and the set of defections observed by any node $j \in {\cal H}$ in $h_j \in h$ is a subset of the set of defections
observed in $h_i$. The importance of this definition of consistency is that in~\cite{Kreps:85} the authors prove that the One-deviation
property also holds for Preconsistent assessments, which is sufficient for our analysis.

\begin{property}
\label{prop:priv-one-dev}
\textbf{One-deviation.} A Preconsistent assessment $(\vec{\sigma}^*,\vec{\mu}^*)$ is Sequentially Rational if and only if
for every player $i \in {\cal N}$, history $h_i \in {\cal H}_i$, and profile $\vec{a}_i' \in {\cal A}_i$,
$$\pi_i[\sigma_i^*,\vec{\sigma}_{-i}^*|\vec{\mu}^*,h_i] \geq \pi_i[\sigma_i',\vec{\sigma}_{-i}^*|\vec{\mu}^*,h_i],$$
where $\sigma_i' = \sigma_i^*[h_i|\vec{a}_i']$ is defined as in public monitoring.
\end{property}

\subsection{Evolution of the Network}
\label{sec:priv-evol}
When a player $i$ observes a private history $h_i \in {\cal H}_i$, only the histories $h \in {\cal H}$ such that $h_i \in h$ can be observed by other players.
Given this, we use the same notation as in public monitoring, when referring to the evolution of the network after a history $h$
is observed. Namely, $\hevol[h,r|\vec{\sigma}]$ is the resulting history starting from the observation of $h$ and when all players
follow the pure strategy $\vec{\sigma}$. Therefore, we continue to use the same notation for $q_i$ ($q_i[h,r|\vec{\sigma}]$),
$\bar{p}_i$ ($\bar{p}_i[h,r|\vec{\sigma}]$), and $u_i$ ($u_i[h,r|\vec{\sigma}]$). Now, we have
$${\cal N}_i[h_i] = \{j \in {\cal N}_i | p_i[j|h_i] > 0\}.$$

The definition of $\cd_i[\vec{p}|h] \subseteq {\cal N}$ is
almost identical to that of the public monitoring case. Namely, for every $i \in {\cal N}$, $h \in {\cal H}$, and $h_i \in h$,
$$\cd_i[\vec{p}|h] = \{j \in {\cal N}_i | p_i[j] < p_i[j|h_i]\}.$$

The following lemma proves that every node $k_1$ that defects from an out-neighbor $k_2$
expects $i$ and $j$ to react to this defection during the next $\pd[k_1,k_2|i,j]$ stages, regardless
of the following actions of $k_1$ or the punishments already being applied to $k_1$.

\begin{lemma}
\label{lemma:priv-corr-1}
For every $h \in {\cal H}$, $\vec{p}' \in {\cal P}$, $r >0$,
$i \in {\cal N}$, and $j \in {\cal N}_i$:
\begin{equation}
\label{eq:priv-res-corr1}
\begin{array}{ll}
\ds_i[j|h_{i,r}'] =& \ds_i[j|h_{i,r}^*] \cup \{(k_1,k_2,r -1 - \del_i[k_1,k_2]+v[k_1,k_2]) | k_1,k_2 \in {\cal N} \land \\
                            & k_2 \in \cd_{k_1}[\vec{p}'|h] \land r \in \{\del_i[k_1,k_2]+1 \ldots \del_i[k_1,k_2] + \pd[k_1,k_2|i,j]-v[k_1,k_2]\}\land\\
                            & v[k_1,k_2] = \min[\del_i[k_1,k_2] - \del_j[k_1,k_2],0]\},\\
\end{array}
\end{equation}
where $h_{i,r}^* \in \hevol[h,r|\vec{\sigma}^*]$, $h_{i,r}' \in \hevol[h,r| \vec{\sigma}']$, and
$\vec{\sigma}' =\vec{\sigma}^*[h|\vec{p}']$ is the profile of strategies where all players follow $\vec{p}'$ in the first stage.
\end{lemma}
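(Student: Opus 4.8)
The plan is to prove the identity by induction on $r$, mirroring the argument for the public-monitoring analogue Lemma~\ref{lemma:corr-1}, but carefully tracking the two bookkeeping quantities that are new to private monitoring: the delay $\del_i[k_1,k_2]$ that governs when a stage-$1$ defection first becomes visible to $i$, and the synchronisation offset $v[k_1,k_2] = \min[\del_i[k_1,k_2]-\del_j[k_1,k_2],0]$ with which each defection is first registered in $\ds_i[j|\cdot]$. The first fact I would establish is that, under $\vec{\sigma}' = \vec{\sigma}^*[h|\vec{p}']$, the only stage at which any node may forward below its threshold is the first one, described by the sets $\cd_{k_1}[\vec{p}'|h]$; in every later stage each node follows $\vec{\sigma}^*$ and hence, by Definition~\ref{def:privsig}, sets $p_{k_1}'[k_2]=p_{k_1}[k_2|h_{k_1}]$, producing only \emph{cooperate} signals (in particular, applying a prescribed punishment with probability $0$ meets the threshold $0$ and is \emph{not} a defection). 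Consequently every \emph{extra} defect signal that $i$ ever observes along $\vec{\sigma}'$ originates from a pair $(k_1,k_2)$ with $k_2 \in \cd_{k_1}[\vec{p}'|h]$, and it reaches $i$ at stage $1+\del_i[k_1,k_2]$.

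For the base case $r=1$ I would unfold $\ds_i[j|h_{i,1}']=L_1\cup L_2$ from Definition~\ref{def:priv-thr}. The set $L_1$ increments the common prefix $\ds_i[j|h]$ identically under $\vec{\sigma}'$ and $\vec{\sigma}^*$, accounting for $\ds_i[j|h_{i,1}^*]$; within $L_2$, the defect signals inherited from defections already present in $h$ are common to $h_{i,1}^*$ and $h_{i,1}'$ and cancel in the difference, so the only extra contributions are the stage-$1$ defections with $\del_i[k_1,k_2]=0$, each inserted with coordinate $v[k_1,k_2]=-\del_j[k_1,k_2]$. This matches the claimed formula, since $r=1$ forces $\del_i[k_1,k_2]=0$ in the range $\{\del_i[k_1,k_2]+1\ldots\}$ and the stated coordinate evaluates to $r-1-\del_i[k_1,k_2]+v[k_1,k_2]=v[k_1,k_2]$.

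For the induction step I would assume the identity at $r$ and split $\ds_i[j|h_{i,r+1}']$ into its $L_1$ and $L_2$ parts. Applying $L_1$ to the base set $\ds_i[j|h_{i,r}^*]$ regenerates the base part of $\ds_i[j|h_{i,r+1}^*]$, while applying it to the extra set from the induction hypothesis increments each coordinate $r-1-\del_i+v \mapsto r-\del_i+v$ and, via the guard $r'+1<\pd[k_1,k_2|i,j]$, discards exactly those entries whose coordinate would reach $\pd[k_1,k_2|i,j]$; this enforces the upper endpoint $r+1 \le \del_i[k_1,k_2]+\pd[k_1,k_2|i,j]-v[k_1,k_2]$. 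The $L_2$ part contributes the newly-arriving extra defections, namely the pairs with $1+\del_i[k_1,k_2]=r+1$, i.e.\ $\del_i[k_1,k_2]=r$, each registered with coordinate $v[k_1,k_2]$, which is precisely the lower endpoint $r+1=\del_i[k_1,k_2]+1$ with coordinate $(r+1)-1-\del_i[k_1,k_2]+v[k_1,k_2]=v[k_1,k_2]$. Collecting the three contributions reproduces the claimed expression at $r+1$.

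The main obstacle I anticipate is purely the index arithmetic: one must verify simultaneously that (i) a given extra pair first enters $\ds_i[j|\cdot]$ at stage $\del_i[k_1,k_2]+1$ with coordinate $v[k_1,k_2]$, (ii) its coordinate at a general in-range stage $r$ is $r-1-\del_i[k_1,k_2]+v[k_1,k_2]$, and (iii) it is deleted precisely when this coordinate would hit $\pd[k_1,k_2|i,j]$, yielding the last in-range stage $\del_i[k_1,k_2]+\pd[k_1,k_2|i,j]-v[k_1,k_2]$. A secondary delicacy is that $v[k_1,k_2]$ may be negative: I would note that the deletion guard in $L_1$ compares only against $\pd[k_1,k_2|i,j]$, so negative coordinates persist harmlessly (they merely keep the pair out of the set $K$ used to define $p_i[j|h_i]$), and hence the pair survives for the full $\pd[k_1,k_2|i,j]-v[k_1,k_2]$ stages without premature removal.
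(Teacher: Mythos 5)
Your proposal is correct and follows essentially the same route as the paper's proof: your preliminary observation that only first-stage defections in $\cd_{k_1}[\vec{p}'|h]$ generate extra defect signals, each reaching $i$ with delay $\del_i[k_1,k_2]$, is exactly the paper's auxiliary Lemma~\ref{lemma:pevol-aux}, and your induction with the $L_1 \cup L_2$ unfolding of Definition~\ref{def:priv-thr}, the entry arithmetic (coordinate $v[k_1,k_2]$ at stage $\del_i[k_1,k_2]+1$), and the deletion guard against $\pd[k_1,k_2|i,j]$ matches the paper's base case and induction step. Your closing remark that negative coordinates persist harmlessly under the $L_1$ guard is likewise the mechanism the paper relies on implicitly in its index manipulations.
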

\begin{proof}
By induction, the base case follows from the definition of $\ds_i$ and the fact that
$i$ registers every defection of $k_1$ to $k_2$ in stage $\del_i[k_1,k_2]$,
adding $(k_1,k_2,v[k_1,k_2])$ to $\ds_i[j|h]$. Inductively, after $r\leq\del_i[k_1,k_2] +\pd[k_1,k_2|i,j] - v[k_1,k_2]$ stages,
this pair is transformed into $(k_1,k_2,r-1)$.
(\hyperref[proof:lemma:priv-corr-1]{Complete proof in Section~\ref{proof:lemma:priv-corr-1}}).
\end{proof}

For the sake of completeness, we prove in Lemma~\ref{lemma:priv-history} that
the strategy is well defined, in terms of defining threshold probabilities that are always
common knowledge between pairs of players. This supports our assumption in the
definition of private signals that an accusation is emitted by $j$ against $i$
iff $i$ uses $p_i[j] < p_i[j|h_j]$ towards $j$. 

\begin{lemma}
\label{lemma:priv-history}
For every $i \in {\cal N}$, $j \in {\cal N}_i$, $h \in {\cal H}$, and $h_i,h_j \in h$:
$$p_i[j|h_i] = p_i[j|h_j].$$
\end{lemma}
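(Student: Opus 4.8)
The plan is to reduce the claimed equality to an equality of (filtered) defection sets and then verify that equality by tracking the contribution of each individual defection. First I would observe that, by Definition~\ref{def:priv-thr}, both $p_i[j|h_i]$ (computed by $i$ through the out-neighbor branch) and $p_i[j|h_j]$ (computed by $j$ through the symmetric in-neighbor branch) are obtained by the \emph{same} deterministic rule: a node sets the threshold to $0$ exactly when its filtered set $K=\{(k_1,k_2,r)\mid r\ge 0\}$ witnesses either a defection of $i$ from $j$, or a defection of $j$ from one of its out-neighbors, and otherwise returns a fixed positive function of $K$. Here $i$ uses $K_i=\{(k_1,k_2,r)\in\ds_i[j|h_i]\mid r\ge0\}$ and $j$ uses $K_j=\{(k_1,k_2,r)\in\ds_j[i|h_j]\mid r\ge0\}$. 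Since the rule is identical, it suffices to prove $K_i=K_j$.

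To establish $K_i=K_j$ I would argue by induction on $|h|$, isolating the contribution of a single defection of some $k_1$ from $k_2$ that occurs at stage $t$ of $h$. By Definition~\ref{def:privsig} node $i$ observes this defection exactly once, at stage $t+\del_i[k_1,k_2]$, and by Definition~\ref{def:priv-thr} it then inserts the tuple $(k_1,k_2,v_i)$ into $\ds_i[j|\cdot]$ with $v_i=\min[\del_i[k_1,k_2]-\del_j[k_1,k_2],0]$; thereafter branch $L_1$ increments the last coordinate by one per stage and deletes the tuple once it would reach $\pd[k_1,k_2|i,j]$. Symmetrically $j$ inserts $(k_1,k_2,v_j)$ at stage $t+\del_j[k_1,k_2]$ with $v_j=\min[\del_j[k_1,k_2]-\del_i[k_1,k_2],0]$, deleting it upon reaching $\pd[k_1,k_2|j,i]$. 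Because $L_1$ is a pure shift and $L_2$ a pure insertion, both defection sets decompose as disjoint unions over defection events, so it is enough to match these two contributions.

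The core computation is as follows. Assume without loss of generality $\del_i[k_1,k_2]\le\del_j[k_1,k_2]$, so that $v_i=\del_i[k_1,k_2]-\del_j[k_1,k_2]\le 0$ while $v_j=0$. At any stage $T$ the last coordinate of $i$'s tuple equals $v_i+(T-t-\del_i[k_1,k_2])=T-t-\del_j[k_1,k_2]$, and the last coordinate of $j$'s tuple (present only for $T\ge t+\del_j[k_1,k_2]$) equals $T-t-\del_j[k_1,k_2]$; the two values coincide. Hence the tuple has nonnegative last coordinate — and so lies in the filtered set — for exactly the same range of stages $T\ge t+\del_j[k_1,k_2]$ for both nodes, with an identical coordinate at each such stage; and since $\pd[k_1,k_2|i,j]=\pd[k_1,k_2|j,i]$, both nodes delete it at the same stage. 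Thus each defection contributes identically to $K_i$ and $K_j$, giving $K_i=K_j$ and therefore $p_i[j|h_i]=p_i[j|h_j]$. The remaining edge case is $\del_j[k_1,k_2]=\infty$ (symmetrically $\del_i[k_1,k_2]=\infty$): then $j$ never observes the defection by Definition~\ref{def:privsig}, while $i$ inserts the tuple with $v_i=-\infty$, whose coordinate never becomes nonnegative, so the defection is absent from both $K_i$ and $K_j$.

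The main obstacle is the delay-compensation bookkeeping in the core computation: one must verify that the negative offset $v_i$ assigned to the node that learns of the defection first exactly cancels its head start, so that the tuple crosses into the active window $r\ge 0$ at the same absolute stage $t+\max(\del_i[k_1,k_2],\del_j[k_1,k_2])$ for both nodes, and — using the symmetry $\pd[k_1,k_2|i,j]=\pd[k_1,k_2|j,i]$ — also leaves that window at the same stage. Once this synchronization is in place the remainder is routine, and the argument closely parallels the single-stage propagation analysis already carried out in Lemma~\ref{lemma:priv-corr-1}.
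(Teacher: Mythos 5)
Your proof is correct and follows essentially the same route as the paper's: both reduce the claim to equality of the filtered sets $K_i = K_j$ (since Definition~\ref{def:priv-thr} applies the same deterministic zeroing rule and positive function on both sides) and then verify, per defection, that the offset $v = \min[\del_i[k_1,k_2]-\del_j[k_1,k_2],0]$ makes the tuple enter the active window $r \geq 0$ at the same absolute stage and with the same coordinate for both nodes, with $\pd[k_1,k_2|i,j]=\pd[k_1,k_2|j,i]$ forcing simultaneous deletion. The only organizational difference is that the paper carries out this tuple tracking by invoking Lemma~\ref{lemma:priv-corr-1} and case-splitting on the sign of $r$ and on which delay is larger, whereas you re-derive the same evolution inline by forward simulation of the $L_1$/$L_2$ updates.
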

\begin{proof}
It follows from the definition of $\ds$ that a node $i$ never includes in the set $K$
a tuple $(k_1,k_2,r)$ such that $r<0$, for any out-neighbor $j$. This value is only negative when $\del_i[k_1,k_2] < \del_j[k_1,k_2]$,
in which case $v$ is set to $-(\del_j[k_1,k_2] - \del_i[k_1,k_2])$. By Lemma~\ref{lemma:priv-corr-1},
this value only becomes $0$ when $j$ is also informed of this defection, in which case both nodes include the pair in $K$.
Consequently, $K$ is always defined identically by $i$ and $j$ after any history $h$, which implies the result.
(\hyperref[proof:lemma:priv-history]{Complete proof in Section~\ref{proof:lemma:priv-history}}).
\end{proof}

\subsection{Generic Results}
Proposition~\ref{prop:priv-folk} reestablishes the optimal effectiveness for private monitoring. 
The proof of this proposition is identical to that of Proposition~\ref{prop:folk}. The only difference
lies in the fact that now the effectiveness of a profile of strategies $\vec{\sigma}^*$ is conditional on
a belief system $\vec{\mu}^*$ ($\psi[\vec{\sigma}^*|\vec{\mu}^*]$).

\begin{proposition}
\label{prop:priv-folk}
For every assessment $(\vec{\sigma}^*,\vec{\mu}^*)$, if $(\vec{\sigma}^*,\vec{\mu}^*)$ is Sequentially Rational,
then, for every $i \in {\cal N}$, $\frac{\beta_i}{\gamma_i} \geq \bar{p}_i$. 
Consequently, $\psi[\vec{\sigma}^*] \subseteq (v,\infty)$,
where $v = \max_{i \in {\cal N}} \bar{p}_i$.
\end{proposition}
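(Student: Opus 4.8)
The plan is to reuse the Folk-Theorem argument behind Proposition~\ref{prop:folk} and adapt it to the belief-conditioned payoff of the private-monitoring model. The only genuinely new ingredient is that the equilibrium payoff is now computed against $\vec{\mu}^*$, so I first observe that this makes no difference at the empty private history: the unique full history $h \in {\cal H}$ consistent with $h_i = \emptyset$ is $h = \emptyset$ itself, whence $\mu_i^*[\emptyset|\emptyset] = 1$ and $\pi_i[\vec{\sigma}^*|\vec{\mu}^*,\emptyset] = \pi_i[\vec{\sigma}^*|\emptyset]$. From this point on the reasoning is identical to the public case, and it invokes only the definition of Sequential Rationality, not Preconsistency nor Property~\ref{prop:priv-one-dev}.

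Next I would pin down the two payoffs to compare. On the equilibrium path no node ever defects, so play stays on-path forever; each stage player $i$ forwards with total probability $\bar{p}_i = \sum_{j \in {\cal N}_i} p_i[j|\emptyset]$ and receives with probability $1-q_i^*$, giving
$$\pi_i[\vec{\sigma}^*|\emptyset] = \frac{(1-q_i^*)(\beta_i - \gamma_i \bar{p}_i)}{1-\omega_i}.$$
Against this I would test the deviation $\sigma_i'$ in which $i$ forwards nothing in every stage. Because $i$'s own reception probability $q_i$ is independent of $i$'s forwarding (it is fixed by the in-neighbors of $i$, a property of the epidemics model), and because forwarding nothing annihilates the cost term, the stage utility of this deviation is $(1-q_i)\beta_i \geq 0$ in every stage, whatever punishments it triggers. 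Hence $\pi_i[\sigma_i',\vec{\sigma}^*_{-i}|\emptyset] \geq 0$; this is precisely the statement that the minmax value of every player is $0$, as required by the Folk-Theorem framing.

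Finally, Sequential Rationality at $h_i = \emptyset$ forces $\pi_i[\vec{\sigma}^*|\emptyset] \geq \pi_i[\sigma_i',\vec{\sigma}^*_{-i}|\emptyset] \geq 0$. Since the graph is connected from the source and on-path threshold probabilities are positive, reliability is strictly positive and so $q_i^* < 1$; together with $\gamma_i > 0$ and $\omega_i < 1$ this yields $\beta_i - \gamma_i \bar{p}_i \geq 0$, i.e. $\frac{\beta_i}{\gamma_i} \geq \bar{p}_i$. (In fact the first-stage deviation payoff $(1-q_i^*)\beta_i$ is strictly positive, so the bound sharpens to the strict inequality, which is what yields the open interval.) Taking $v = \max_{i \in {\cal N}} \bar{p}_i$ then gives $\psi[\vec{\sigma}^*|\vec{\mu}^*] \subseteq (v,\infty)$, exactly as in Proposition~\ref{prop:folk}.

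I expect the only delicate step to be the uniform lower bound on the deviation payoff: I must argue that forwarding nothing guarantees a nonnegative return \emph{regardless} of the off-path play induced by the other players' punishments and, crucially, regardless of the belief $i$ holds over histories it cannot distinguish. This is clean here precisely because the bound $(1-q_i)\beta_i \geq 0$ holds history-by-history, so it survives any convex combination over ${\cal H}$ weighted by $\mu_i^*[\cdot|h_i]$. No belief-consistency assumption is therefore needed, which is exactly why the proposition hypothesizes only Sequential Rationality.
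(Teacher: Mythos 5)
Your proof is correct and follows essentially the same route as the paper's: you collapse the belief at the empty private history to the unique consistent full history, compare the equilibrium payoff against the permanent forward-nothing deviation, and use positive reliability (Lemma~\ref{lemma:pprob}) together with the fact that $q_i$ does not depend on $i$'s own forwarding to obtain the bound. The only cosmetic difference is that the paper argues by contradiction from $\frac{\beta_i}{\gamma_i} \leq \bar{p}_i$ whereas you argue directly; the content is identical.
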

\begin{proof}
(\hyperref[proof:prop:priv-folk]{See Section~\ref{proof:prop:priv-folk}}).
\end{proof}

As in the public monitoring case, we can define a necessary and sufficient condition for the defined
profile of strategies to be Sequentially Rational, which we name PDC Condition.
The proof of both necessity and sufficiency is almost identical to that of public monitoring. 

\begin{definition}
\textbf{PDC Condition}.
For every $i \in {\cal N}$, $h_i \in {\cal H}_i$, and $D \subseteq {\cal N}_i[h_i]$,
\begin{equation}
\label{eq:priv-drop}
\sum_{h \in {\cal H}}\mu_i^*[h|h_i]\sum_{r=0}^{\infty} \omega_i^r (u_i[h,r|\vec{\sigma}^*] - u_i'[h,r|\vec{\sigma}']) \geq 0,
\end{equation}
where $\vec{\sigma}' = (\sigma_i^*[h_i|\vec{p}_i'],\vec{\sigma}_{-i}^*)$ and $\vec{p}_i'$ is defined as:
\begin{itemize}
  \item For every $j \in D$, $p_i'[j] = 0$.
  \item For every $j \in {\cal N} \setminus D$, $p_i'[j] = p_i[j|h_i]$.
\end{itemize}
\end{definition}

The following corollary captures the fact that the PDC condition is necessary, which
follows directly from the One-deviation property and the definition of $\pi_i$ for private monitoring.

\begin{corollary}
\label{corollary:priv-drop-nec}
If the assessment $(\vec{\sigma}^*,\vec{\mu}^*)$ is Sequentially Rational and Preconsistent, then the PDC Condition
is fulfilled.
\end{corollary}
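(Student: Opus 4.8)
The plan is to mirror the necessity argument for the DC Condition under public monitoring (Lemma~\ref{lemma:gen-cond-nec}), replacing the single continuation payoff by the belief-weighted payoff of Equation~\ref{eq:priv-exp-util}. Because $(\vec{\sigma}^*,\vec{\mu}^*)$ is assumed Preconsistent and Sequentially Rational, the One-deviation property (Property~\ref{prop:priv-one-dev}) applies: for every $i \in {\cal N}$, $h_i \in {\cal H}_i$, and $\vec{a}_i' \in {\cal A}_i$, the strategy $\sigma_i' = \sigma_i^*[h_i|\vec{a}_i']$ satisfies $\pi_i[\vec{\sigma}^*|\vec{\mu}^*,h_i] \geq \pi_i[\sigma_i',\vec{\sigma}_{-i}^*|\vec{\mu}^*,h_i]$. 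First I would specialise this to the deviation appearing in the PDC Condition, namely the pure action $\vec{p}_i'$ that drops a subset $D \subseteq {\cal N}_i[h_i]$ of out-neighbours (setting $p_i'[j]=0$ for $j \in D$ and $p_i'[j]=p_i[j|h_i]$ otherwise). Since a pure action is a degenerate $\vec{a}_i'$, the one-deviation inequality holds in particular for $\vec{\sigma}' = (\sigma_i^*[h_i|\vec{p}_i'],\vec{\sigma}_{-i}^*)$.

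Next I would expand both sides with Equation~\ref{eq:priv-exp-util}, so that each expected utility is the $\mu_i^*[\cdot|h_i]$-weighted average over full histories $h \in {\cal H}$ of the per-history payoff $\pi_i[\vec{\sigma}|h]$. The belief $\mu_i^*[\cdot|h_i]$ is determined by the observed private history alone and is therefore identical under $\vec{\sigma}^*$ and $\vec{\sigma}'$, so the two averages can be subtracted term by term. For each fixed $h$, unrolling the recursion of Equation~\ref{eq:pure-util} gives $\pi_i[\vec{\sigma}|h] = \sum_{r=0}^{\infty} \omega_i^r u_i[h,r|\vec{\sigma}]$ (the same identity used in Lemma~\ref{lemma:gen-cond-nec}), valid since both $\vec{\sigma}^*$ and $\vec{\sigma}'$ are pure after $h_i$. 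Substituting and regrouping yields
$$\pi_i[\vec{\sigma}^*|\vec{\mu}^*,h_i] - \pi_i[\vec{\sigma}'|\vec{\mu}^*,h_i] = \sum_{h \in {\cal H}}\mu_i^*[h|h_i]\sum_{r=0}^{\infty} \omega_i^r\left(u_i[h,r|\vec{\sigma}^*] - u_i[h,r|\vec{\sigma}']\right),$$
whose nonnegativity is exactly Inequality~\ref{eq:priv-drop}. This establishes the PDC Condition.

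The step requiring the most care is the regrouping in the displayed identity, which interchanges the sum over histories $h$ with the infinite discounted sum over stages $r$. This is legitimate because $\mu_i^*[\cdot|h_i]$ is a probability distribution, the stage utilities $u_i[h,r|\vec{\sigma}]$ are uniformly bounded, and $\omega_i \in (0,1)$ makes the discounted series absolutely convergent, so Fubini's theorem licenses the exchange. The remaining verifications are routine and parallel the public case: that dropping neighbours is an admissible stage-game action, and that the opponents' continuation profile $\vec{\sigma}_{-i}^*$ is held fixed throughout, so that only $i$'s own first-stage choice varies.
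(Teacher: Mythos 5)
Your proof is correct and follows essentially the same route as the paper, which derives the corollary directly from the One-deviation property (Property~\ref{prop:priv-one-dev}) and the definition of expected utility for private monitoring (Equation~\ref{eq:priv-exp-util}), exactly as you do by specialising to the drop deviation, expanding the belief-weighted payoff, and unrolling each per-history payoff into its discounted stage sum. The Fubini-style justification you add is harmless over-caution; the paper treats this regrouping as immediate.
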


In order to prove that the PDC Condition is sufficient, we proceed in the same fashion to public monitoring, always
implicitly assuming that the considered assessment is Preconsistent.
Redefine the set of local best responses for every node $i$ and private history $h_i \in {\cal H}_i$ as:
$$BR[\vec{\sigma}_{-i}^*|\vec{\mu}^*,h_i] = \{a_i \in {\cal A}_i | \forall_{a_i' \in {\cal A}_i} \pi_i[(\sigma_i^*[h_i|a_i],\vec{\sigma}_{-i}^*)|\vec{\mu}^*,h_i] \geq \pi_i[(\sigma_i^*[h_i|a_i'],\vec{\sigma}_{-i}^*)|\vec{\mu}^*,h_i]\}.$$

\begin{lemma}
\label{lemma:priv-best-response1}
For every $i \in {\cal N}$, $h_i \in {\cal H}_i$, $a_i \in BR[\vec{\sigma}_{-i}^*|\vec{\mu}^*,h_i]$, and $\vec{p}_i \in {\cal P}_i$ such that $a_i[\vec{p}_i] > 0$,
it is true that for every $j \in {\cal N}_i$ we have $p_i[j] \in \{0,p_i[j|h_i]\}$.
\end{lemma}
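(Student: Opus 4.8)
I would mirror the contradiction argument of Lemma~\ref{lemma:best-response1}, adapting it to private signals and to the belief-weighted expected utility $\pi_i[\cdot|\vec{\mu}^*,h_i]$ of equation~\ref{eq:priv-exp-util}. Suppose, for contradiction, that $a_i \in BR[\vec{\sigma}_{-i}^*|\vec{\mu}^*,h_i]$ and that some $\vec{p}_i$ with $a_i[\vec{p}_i]>0$ uses, for at least one $j \in {\cal N}_i$, a probability $p_i[j] \notin \{0,p_i[j|h_i]\}$. I would define $\vec{p}_i'$ by rounding each coordinate within its threshold class: set $p_i'[j]=p_i[j|h_i]$ whenever $p_i[j] \geq p_i[j|h_i]$, and $p_i'[j]=0$ whenever $p_i[j] < p_i[j|h_i]$. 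Then, exactly as in the public case, I would construct $a_i'$ from $a_i$ by transferring the mass of $\vec{p}_i$ onto $\vec{p}_i'$, i.e.\ $a_i'[\vec{p}_i']=a_i[\vec{p}_i']+a_i[\vec{p}_i]$ and $a_i'[\vec{p}_i]=0$, leaving the remaining weights untouched.

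The first key step is to show that this rounding is invisible to every other player, for every history in the support of the belief. Fixing any $h \in {\cal H}$ with $\mu_i^*[h|h_i]>0$ and letting $\vec{p}^*=\vec{\sigma}^*[h]$, Definition~\ref{def:privsig} tells us that the cooperate/defect status of $i$ towards a neighbor $j$, as eventually observed by any node $k$, is determined solely by whether $p_i[j] \geq p_i[j|h_j]$. The subtlety relative to public monitoring is that the threshold $j$ monitors against is $p_i[j|h_j]$, whereas I rounded against $p_i[j|h_i]$; here I would invoke Lemma~\ref{lemma:priv-history} to conclude that these two thresholds coincide. Since $\vec{p}_i$ and $\vec{p}_i'$ lie on the same side of the threshold in every coordinate, they induce the same private signal for every node, so that
$$\sig[(\vec{p}_i,\vec{p}^*_{-i})|h]=\sig[(\vec{p}_i',\vec{p}^*_{-i})|h].$$
Consequently the continuation histories $(h,s)$, the belief weights $\mu_i^*[h|h_i]$, and every continuation payoff $\pi_i[\vec{\sigma}|(h,s)]$ are identical under the two profiles, so the only possible difference in $\pi_i[\cdot|h]$ lies in the current-stage term $u_i$.

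The second step isolates that current-stage gain. Because the probability $q_i$ that $i$ fails to receive a message depends only on the forwarding behaviour of $i$'s in-neighbours and of the source, and not on $i$'s own outgoing probabilities $\vec{p}_i$ (the epidemic model of Appendix~\ref{sec:epidemic}), the factor $1-q_i$ is unaffected by the rounding, whereas $\sum_{j \in {\cal N}_i} p_i'[j] < \sum_{j \in {\cal N}_i} p_i[j]$ strictly, since the violating coordinate was strictly reduced. As the graph is connected from the source, $1-q_i>0$, and hence $u_i[(\vec{p}_i',\vec{p}^*_{-i})]>u_i[(\vec{p}_i,\vec{p}^*_{-i})]$. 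Averaging this strict inequality over the (unchanged) beliefs $\mu_i^*[\cdot|h_i]$ and using equations~\ref{eq:priv-exp-util} and~\ref{eq:behave-util} with the fact that moving mass from $\vec{p}_i$ to $\vec{p}_i'$ leaves the signal distribution intact, I would obtain
$$\pi_i[(\sigma_i^*[h_i|a_i'],\vec{\sigma}_{-i}^*)|\vec{\mu}^*,h_i] > \pi_i[(\sigma_i^*[h_i|a_i],\vec{\sigma}_{-i}^*)|\vec{\mu}^*,h_i],$$
contradicting that $a_i$ is a best response and thereby establishing the claim.

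The hard part is the signal-invariance step: unlike in public monitoring, the equality of signals must hold simultaneously for every $h$ in the support of $\mu_i^*[\cdot|h_i]$, and the relevant threshold is the one the observer $j$ applies, $p_i[j|h_j]$, rather than $i$'s own $p_i[j|h_i]$. Reconciling the two thresholds is precisely the content of Lemma~\ref{lemma:priv-history}, so that lemma must be applied carefully; the remaining manipulations are routine once signal invariance and the independence of $q_i$ from $\vec{p}_i$ are in hand.
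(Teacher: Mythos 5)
Your proposal is correct and follows essentially the same route as the paper's own proof: a contradiction argument with the identical rounding of each coordinate into $\{0,p_i[j|h_i]\}$, the same transfer of probability mass onto the rounded profile, the observation that the induced signals (hence belief weights and continuation payoffs) are unchanged while the current-stage cost strictly decreases, and finally averaging the strict inequality over $\mu_i^*[\cdot|h_i]$. If anything, your explicit appeal to Lemma~\ref{lemma:priv-history} to reconcile the thresholds $p_i[j|h_i]$ and $p_i[j|h_j]$ is more careful than the paper, which handles this point implicitly by citing Definition~\ref{def:pubsig}.
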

\begin{proof}
(\hyperref[proof:lemma:priv-best-response1]{See Section~\ref{proof:lemma:priv-best-response1}}).
\end{proof}

\begin{lemma}
\label{lemma:priv-best-response2}
For every $i \in {\cal N}$ and $h_i \in {\cal H}_i$, there exists $a_i \in BR[\vec{\sigma}_{-i}^*|\vec{\mu}^*,h_i]$ and $\vec{p}_i \in {\cal P}_i$
such that $a_i[\vec{p}_i] = 1$.
\end{lemma}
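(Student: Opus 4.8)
The plan is to mirror the proof of Lemma~\ref{lemma:best-response2} from the public-monitoring case, the only new ingredient being the averaging over the belief $\vec{\mu}^*$. First I would note that $BR[\vec{\sigma}_{-i}^*|\vec{\mu}^*,h_i]$ is nonempty, so some (possibly mixed) local best response $a_i$ exists; the goal is then to extract from it a single action on which all weight can be placed while remaining a best response. If only pure strategies are best responses, the conclusion is immediate, so the work is in the case where $a_i$ is genuinely mixed.

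The key observation I would establish is that the objective $\pi_i[(\sigma_i^*[h_i|a_i],\vec{\sigma}_{-i}^*)|\vec{\mu}^*,h_i]$ is affine in the mixing weights $a_i$. By \eqref{eq:priv-exp-util} this quantity equals $\sum_{h \in {\cal H}}\mu_i^*[h|h_i]\,\pi_i[(\sigma_i^*[h_i|a_i],\vec{\sigma}_{-i}^*)|h]$, where the weights $\mu_i^*[h|h_i]$ are fixed by the assessment and do not depend on the deviation $a_i$. Expanding each inner term via \eqref{eq:behave-util}, the player follows $a_i$ only in the single stage that follows $h_i$ and reverts to $\sigma_i^*$ afterwards; hence the stage payoff $u_i[(a_i,\vec{a}_{-i})]$ is linear in $a_i$, the signal probabilities $pr[s|(a_i,\vec{a}_{-i}),h]$ are linear in $a_i$, and the continuation values $\pi_i[\vec{\sigma}|(h,s)]$ are constant in $a_i$ because the continuation strategy is fixed. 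Therefore each $\pi_i[\cdot|h]$, and with it the belief-weighted sum, is affine in $a_i$.

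Writing $V(\vec{p}_i) = \pi_i[(\sigma_i^*[h_i|\vec{p}_i],\vec{\sigma}_{-i}^*)|\vec{\mu}^*,h_i]$ for the payoff of the pure action $\vec{p}_i$ and $U^*$ for the best-response value, affineness gives $U^* = \sum_{\vec{p}_i} a_i[\vec{p}_i]\,V(\vec{p}_i)$ as a convex combination. Since $a_i$ is a best response, $V(\vec{p}_i)\le U^*$ for every $\vec{p}_i$, and a convex combination can equal its maximum only if every action in the support attains it; thus $V(\vec{p}_i)=U^*$ for all $\vec{p}_i$ with $a_i[\vec{p}_i]>0$. Selecting any such $\vec{p}_i$ and taking the degenerate distribution placing all weight on it produces an $a_i' \in BR[\vec{\sigma}_{-i}^*|\vec{\mu}^*,h_i]$ with $a_i'[\vec{p}_i]=1$, which is exactly the claim.

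I would invoke Lemma~\ref{lemma:priv-best-response1} to keep the selection clean: it confines the support of any best response to the finitely many actions with $p_i[j]\in\{0,p_i[j|h_i]\}$, so the convex combination runs over a finite set and there is no measure-theoretic subtlety in choosing a support point that attains $U^*$. The main thing to get right is precisely the independence of both the belief weights $\mu_i^*[h|h_i]$ and the continuation values from the one-stage deviation $a_i$; once that affineness is in place, the extraction of a pure maximiser is routine and identical in form to the public-monitoring argument.
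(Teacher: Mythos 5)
Your proof is correct and follows essentially the same route as the paper's: both rest on decomposing $\pi_i[(\sigma_i^*[h_i|a_i],\vec{\sigma}_{-i}^*)|\vec{\mu}^*,h_i]$ as a convex combination, over the finite support of $a_i$ (finite by Lemma~\ref{lemma:priv-best-response1}), of the pure-action values, with the belief weights $\mu_i^*[h|h_i]$ and the continuation payoffs unaffected by the one-stage deviation. The only difference is presentational: the paper packages the conclusion as a three-case contradiction in which weight is shifted from a worse to a better support point, whereas you invoke the standard fact that a convex combination of values bounded by the maximum attains the maximum only if every support point does — the same indifference argument in direct form.
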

\begin{proof}
(\hyperref[proof:lemma:priv-best-response2]{See Section~\ref{proof:lemma:priv-best-response2}}).
\end{proof}

\begin{lemma}
\label{lemma:priv-best-response}
For every $i \in {\cal N}$ and $h_i \in {\cal H}_i$, there exists $\vec{p}_i \in {\cal P}_i$ and a pure strategy $\sigma_i=\sigma_i^*[h_i|\vec{p}_i]$ such that:
\begin{enumerate}
 \item For every $j \in {\cal N}_i$, $p_i[j] \in \{0,p_{i}[j|h_i]\}$.
 \item For every $a_i \in {\cal A}_i$, $\pi_i[\sigma_i,\vec{\sigma}_{-i}^*|\vec{\mu}^*,h_i] \geq \pi_i[\sigma_i',\vec{\sigma}_{-i}^*|\vec{\mu}^*,h_i]$,
where $\sigma_i' = \sigma_i^*[h_i|a_i]$.
\end{enumerate}
\end{lemma}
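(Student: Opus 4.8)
The plan is to combine Lemmas~\ref{lemma:priv-best-response1} and~\ref{lemma:priv-best-response2} in exactly the way Lemma~\ref{lemma:best-response} combined Lemmas~\ref{lemma:best-response1} and~\ref{lemma:best-response2} in the public-monitoring case, so the statement follows as a direct corollary. First I would invoke Lemma~\ref{lemma:priv-best-response2} to obtain a best response that is in fact pure: there exist $a_i \in BR[\vec{\sigma}_{-i}^*|\vec{\mu}^*,h_i]$ and $\vec{p}_i \in {\cal P}_i$ with $a_i[\vec{p}_i] = 1$. Setting $\sigma_i = \sigma_i^*[h_i|\vec{p}_i]$, the key observation is that concentrating the stage-game mixed strategy $a_i$ on the single action $\vec{p}_i$ makes $\sigma_i^*[h_i|a_i]$ and $\sigma_i^*[h_i|\vec{p}_i]$ the same repeated-game strategy, so $\sigma_i = \sigma_i^*[h_i|a_i]$.

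Second, to establish property (1) I would apply Lemma~\ref{lemma:priv-best-response1} to this same $a_i$. Since $a_i$ is a local best response and $a_i[\vec{p}_i] = 1 > 0$, that lemma yields directly $p_i[j] \in \{0, p_i[j|h_i]\}$ for every $j \in {\cal N}_i$, which is exactly property (1). Property (2) is then just the defining inequality of the best-response set rewritten: for any $a_i' \in {\cal A}_i$, membership $a_i \in BR[\vec{\sigma}_{-i}^*|\vec{\mu}^*,h_i]$ gives $\pi_i[(\sigma_i^*[h_i|a_i],\vec{\sigma}_{-i}^*)|\vec{\mu}^*,h_i] \geq \pi_i[(\sigma_i^*[h_i|a_i'],\vec{\sigma}_{-i}^*)|\vec{\mu}^*,h_i]$; substituting $\sigma_i = \sigma_i^*[h_i|a_i]$ on the left and renaming $\sigma_i' = \sigma_i^*[h_i|a_i']$ on the right reproduces property (2) verbatim.

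Because the statement is a direct corollary of the two preceding lemmas, I do not anticipate a genuine obstacle. The one point requiring care is that all three ingredients must be read against the same belief-weighted expected utility $\pi_i[\cdot|\vec{\mu}^*,h_i]$ of Equation~\ref{eq:priv-exp-util}, rather than the per-history utility $\pi_i[\cdot|h]$ used under public monitoring. Since the set $BR[\vec{\sigma}_{-i}^*|\vec{\mu}^*,h_i]$ and both auxiliary lemmas are already phrased in terms of $\vec{\mu}^*$, this consistency is automatic, and no further appeal to Preconsistency beyond what those lemmas already invoke is needed at this step.
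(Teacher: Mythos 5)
Your proposal is correct and matches the paper's own proof essentially verbatim: it combines Lemma~\ref{lemma:priv-best-response2} (existence of a pure local best response) with Lemma~\ref{lemma:priv-best-response1} (support restricted to $\{0,p_i[j|h_i]\}$), and then reads off Condition~2 from the definition of $BR[\vec{\sigma}_{-i}^*|\vec{\mu}^*,h_i]$. The extra remarks you add --- identifying $\sigma_i^*[h_i|a_i]$ with $\sigma_i^*[h_i|\vec{p}_i]$ when $a_i$ is degenerate, and noting that all utilities are the belief-weighted ones of Equation~\ref{eq:priv-exp-util} --- are sound clarifications of points the paper leaves implicit.
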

\begin{proof}
(\hyperref[proof:lemma:priv-best-response]{See Section~\ref{proof:lemma:priv-best-response}}).
\end{proof}

\begin{lemma}
\label{lemma:priv-drop-suff}
If the PDC Condition is fulfilled and $(\vec{\sigma}^*,\vec{\mu}^*)$ is Preconsistent, then $(\vec{\sigma}^*,\vec{\mu}^*)$ is Sequentially Rational.
\end{lemma}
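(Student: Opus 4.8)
The plan is to follow exactly the sufficiency argument used for public monitoring in Lemma~\ref{lemma:gen-cond-suff}, substituting each ingredient by its private-monitoring counterpart. The objective is to establish that, after every private history $h_i \in {\cal H}_i$, player $i$ has no profitable one-shot deviation in the stage game, and then to invoke the One-deviation property for Preconsistent assessments (Property~\ref{prop:priv-one-dev}) to conclude Sequential Rationality. Throughout, Preconsistency of $(\vec{\sigma}^*,\vec{\mu}^*)$ is what legitimizes the use of that property despite the imperfect private information.

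First I would unfold the belief-weighted expected utility of Equation~\ref{eq:priv-exp-util} together with the recursive definition of $\pi_i[\vec{\sigma}|h]$, so that $\pi_i[\vec{\sigma}|\vec{\mu}^*,h_i] = \sum_{h \in {\cal H}} \mu_i^*[h|h_i] \sum_{r=0}^{\infty} \omega_i^r\, u_i[h,r|\vec{\sigma}]$. Read through this identity, the PDC Condition is precisely the statement that $\pi_i[\vec{\sigma}^*|\vec{\mu}^*,h_i] \geq \pi_i[\vec{\sigma}'|\vec{\mu}^*,h_i]$ for every deviation $\vec{\sigma}' = (\sigma_i^*[h_i|\vec{p}_i'],\vec{\sigma}_{-i}^*)$ in which $i$ drops an arbitrary subset $D \subseteq {\cal N}_i[h_i]$ of out-neighbors in the first stage (setting $p_i'[j]=0$ for $j \in D$ and $p_i'[j]=p_i[j|h_i]$ otherwise) and thereafter reverts to $\sigma_i^*$. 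Hence the hypothesis directly rules out every profitable drop-subset deviation, measured under the belief $\vec{\mu}^*$.

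Next I would apply Lemma~\ref{lemma:priv-best-response} to obtain a pure local best response $\sigma_i = \sigma_i^*[h_i|\vec{p}_i]$ in which $i$ forwards to each out-neighbor $j$ with a probability in $\{0, p_i[j|h_i]\}$. Such a strategy is exactly a drop-subset deviation: taking $D = \{\, j \in {\cal N}_i[h_i] : p_i[j] = 0 \,\}$, the coordinates with $p_i[j|h_i]=0$ are forced to $0$ anyway, while every non-dropped coordinate equals $p_i[j|h_i]$, so $\vec{p}_i$ coincides with the $\vec{p}_i'$ of the PDC Condition. Consequently this best response is one of the deviations already covered by the PDC Condition and therefore cannot yield strictly more than $\vec{\sigma}^*$ under $\vec{\mu}^*$. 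Since $\sigma_i$ is a best response yet does no better than $\vec{\sigma}^*$, the strategy $\sigma_i^*$ is itself a local best response after $h_i$; that is, no stage-game action $\vec{a}_i' \in {\cal A}_i$ can strictly improve $i$'s belief-weighted continuation utility. As this holds for every $i$ and every $h_i$, the premise of Property~\ref{prop:priv-one-dev} is met and $(\vec{\sigma}^*,\vec{\mu}^*)$ is Sequentially Rational.

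The main obstacle is the correct alignment of the objects rather than any new calculation: I must verify that Equation~\ref{eq:priv-exp-util} expands into exactly the double sum over $h$ and $r$ appearing in the PDC Condition, so that the PDC inequality really is the comparison $\pi_i[\vec{\sigma}^*|\vec{\mu}^*,h_i] \geq \pi_i[\vec{\sigma}'|\vec{\mu}^*,h_i]$, and that the pure best response supplied by Lemma~\ref{lemma:priv-best-response} genuinely has the drop-subset form with its non-dropped coordinates equal to $p_i[j|h_i]$ — this is what makes it an instance of the PDC deviations instead of some other action. Once these two identifications are in place, the argument is a faithful transcription of the public-monitoring proof, with Preconsistency ensuring that restricting attention to one-shot stage-game deviations loses no generality.
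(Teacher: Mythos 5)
Your proposal is correct and follows essentially the same route as the paper's proof: identify the PDC Condition with the belief-weighted comparison $\pi_i[\vec{\sigma}^*|\vec{\mu}^*,h_i] \geq \pi_i[\vec{\sigma}'|\vec{\mu}^*,h_i]$ over drop-subset deviations, invoke Lemma~\ref{lemma:priv-best-response} to obtain a pure local best response of exactly that form, and conclude via Property~\ref{prop:priv-one-dev}. The only difference is that you spell out explicitly the two identifications (the expansion of Equation~\ref{eq:priv-exp-util} and the drop-subset form of the best response) that the paper leaves implicit.
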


\begin{proof}
(\hyperref[proof:lemma:priv-drop-suff]{See Section~\ref{proof:lemma:priv-drop-suff}}).
\end{proof}

The following theorem merges the results from Corollary~\ref{corollary:priv-drop-nec} and Lemma~\ref{lemma:priv-drop-suff}.

\begin{theorem}
\label{theorem:priv-drop}
If $(\vec{\sigma}^*,\vec{\mu}^*)$ is Preconsistent, then
$(\vec{\sigma}^*,\vec{\mu}^*)$ is Sequentially Rational if and only if the PDC Condition is fulfilled.
\end{theorem}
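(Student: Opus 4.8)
The plan is to prove the biconditional by splitting it into its two implications, taking the Preconsistency of $(\vec{\sigma}^*,\vec{\mu}^*)$ as a standing hypothesis throughout, and then to assemble the result from the two statements established immediately above.

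For the forward direction I would assume that $(\vec{\sigma}^*,\vec{\mu}^*)$ is Sequentially Rational and invoke Corollary~\ref{corollary:priv-drop-nec} directly: since the assessment is Preconsistent and Sequentially Rational, the PDC Condition is fulfilled. The content behind that corollary is an application of the One-deviation property for Preconsistent assessments (Property~\ref{prop:priv-one-dev}): sequential rationality forbids any profitable one-shot deviation after every private history $h_i$, and expanding $\pi_i[\vec{\sigma}|\vec{\mu}^*,h_i] = \sum_{h}\mu_i^*[h|h_i]\pi_i[\vec{\sigma}|h]$ together with the recursive form of $\pi_i[\vec{\sigma}|h]$ turns the particular deviation of dropping a subset $D \subseteq {\cal N}_i[h_i]$ of out-neighbors into exactly Inequality~\ref{eq:priv-drop}.

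For the reverse direction I would assume the PDC Condition holds and invoke Lemma~\ref{lemma:priv-drop-suff}: under Preconsistency, fulfillment of the PDC Condition implies Sequential Rationality. The engine behind that lemma is the chain of best-response results (Lemmas~\ref{lemma:priv-best-response1}, \ref{lemma:priv-best-response2}, and~\ref{lemma:priv-best-response}), which reduce any local deviation after $h_i$ to a pure stage strategy that forwards to each neighbor $j$ with probability in $\{0, p_i[j|h_i]\}$, that is, to dropping some subset of out-neighbors. Because the PDC Condition rules out a profitable drop of any such subset, no one-shot deviation is profitable, and the One-deviation property then yields Sequential Rationality.

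Combining the two directions under the common Preconsistency hypothesis gives the stated equivalence. I do not expect a genuine obstacle here, since both halves are already carried out in Corollary~\ref{corollary:priv-drop-nec} and Lemma~\ref{lemma:priv-drop-suff}; the only care needed is to keep the Preconsistency assumption active in \emph{both} directions, since it is precisely what licenses the One-deviation property (Property~\ref{prop:priv-one-dev}) on which each half silently relies, and which fails for assessments that are not Preconsistent.
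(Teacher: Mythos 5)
Your proposal is correct and follows exactly the paper's own route: the theorem is stated there as the merge of Corollary~\ref{corollary:priv-drop-nec} (necessity, via the One-deviation property for Preconsistent assessments) and Lemma~\ref{lemma:priv-drop-suff} (sufficiency, via the best-response reduction in Lemmas~\ref{lemma:priv-best-response1}--\ref{lemma:priv-best-response}). Your added remark about keeping Preconsistency as a standing hypothesis in both directions is precisely the right point of care, and nothing further is needed.
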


\subsection{Ineffective Topologies}
An important consequence of Theorem~\ref{theorem:priv-drop} is that not every topology allows the existence
of equilibria for punishing strategies. In fact, if there is some node $i$ and a neighbor $j$ such that every node $k$
that is reachable from $j$ without crossing $i$ is never in between $s$ and $i$, then the impact of the punishments
applied to $i$ after defecting from $j$ is null. This intuition is formalized in Lemma~\ref{lemma:paths},
where $\pth[i,j]$ denotes the set of paths from $i$ to $j$ in $G$.

\begin{lemma}
\label{lemma:paths}
If the assessment $(\vec{\sigma}^*,\vec{\mu}^*)$ is Preconsistent and 
Sequentially Rational, then for every $i \in {\cal N}$ and $j \in {\cal N}_i$,
there is $k \in {\cal N} \setminus \{i\}$, $x \in \pth[s,i]$, and $x' \in \pth[j,k]$, such that $k \in x$ and $i \notin x'$.
\end{lemma}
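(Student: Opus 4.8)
The plan is to prove the contrapositive: if no triple $(k,x,x')$ as described exists for some $i$ and $j \in {\cal N}_i$, then the assessment cannot be Sequentially Rational, because $i$ has a profitable deviation consisting in dropping $j$. First I would unpack the negation of the conclusion. Using that $\forall x\,\forall x'\,[(k \notin x)\lor(i \in x')]$ is equivalent to $[\forall x\,k\notin x]\lor[\forall x'\,i\in x']$, the negation states that for every $k \in {\cal N}\setminus\{i\}$, either $k$ lies on no path in $\pth[s,i]$, or every path in $\pth[j,k]$ crosses $i$. In words, no node is simultaneously upstream of $i$ (between $s$ and $i$) and reachable from $j$ by a path avoiding $i$.

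The key reduction is to translate this into a statement about the in-neighbours of $i$ that can actually punish it. I would show that the conclusion is equivalent to the existence of an in-neighbour $k' \in {\cal N}_i^{-1}$ that both lies on some path of $\pth[s,i]$ and is reachable from $j$ by a path avoiding $i$. The direction ``general $k$ yields such a $k'$'' uses path concatenation: given $x \in \pth[s,i]$ with $k \in x$ and $x' \in \pth[j,k]$ with $i \notin x'$, the predecessor $k'$ of $i$ on $x$ is an in-neighbour of $i$ on a path from $s$ to $i$, and concatenating $x'$ with the sub-path of $x$ from $k$ to $k'$ yields a walk from $j$ to $k'$ that never meets $i$ (since $i$ appears only as the endpoint of the simple path $x$). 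The converse is immediate by taking $k = k'$. By Definition~\ref{def:privsig}, such a $k'$ satisfies $\del_{k'}[i,j] < \infty$, hence is informed of the defection and, by Definition~\ref{def:priv-thr}, sets $p_{k'}[i] = 0$; being upstream of $i$, this withdrawal strictly increases $q_i$.

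With this reduction in hand, I would assume the negation and analyse the drop-$j$ deviation at the empty history, taking $D = \{j\}$ in the PDC Condition. Since $\ds_i[j|\emptyset]=\emptyset$ forces $p_i[j|\emptyset] > 0$, the stage-$0$ term $u_i[\emptyset,0|\vec{\sigma}^*] - u_i[\emptyset,0|\vec{\sigma}']$ is strictly negative: dropping $j$ leaves $q_i$ unchanged (a node's own reliability does not depend on its out-forwarding, by the structure of $q_i$ in Appendix~\ref{sec:epidemic}) while reducing $\bar{p}_i$ and hence the forwarding cost. For stages $r \geq 1$ the only effects of the deviation are (i) $i$'s self-imposed silence toward $j$, which merely saves further cost, and (ii) the withdrawal of in-neighbours of $i$ that learn of the defection. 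Under the negation, every such informed in-neighbour fails to be upstream of $i$, while every upstream in-neighbour is uninformed; and an edge from a node lying on no path of $\pth[s,i]$ leaves $q_i$ invariant (as follows from the expression of $q_i$ in Appendix~\ref{sec:epidemic}, such a node forwards to $i$ only if reached from $s$, which would place it on a path from $s$ to $i$). Hence $q_i[\emptyset,r|\vec{\sigma}'] = q_i[\emptyset,r|\vec{\sigma}^*]$ for all $r$, and the deviation is weakly better in every later stage and strictly better in stage $0$.

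Consequently the left-hand side of the PDC Condition is strictly negative, so the condition is violated; by Theorem~\ref{theorem:priv-drop} the Preconsistent assessment $(\vec{\sigma}^*,\vec{\mu}^*)$ is not Sequentially Rational, contradicting the hypothesis. Preconsistency enters only to license the reduction to the PDC Condition through the one-deviation property, so it may be invoked wholesale. I expect the main obstacle to be the graph-theoretic equivalence of the second paragraph together with the clean invocation of reliability-invariance: one must argue carefully that an informed but non-upstream punisher cannot affect $q_i$ and that any punisher whose withdrawal would affect $q_i$ is necessarily uninformed, so that the two clauses of the negation jointly rule out every effective punishment.
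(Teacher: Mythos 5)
Your proposal is correct and follows essentially the same route as the paper's proof: assume the negation, construct a dropping deviation, observe that any node that can ever be informed of the defection (a finite delay requires a path from $j$ avoiding $i$) lies off every path from $s$ to $i$, conclude that $q_i$ is unchanged at every stage by the reliability-invariance property (the paper's Lemma~\ref{lemma:bottleneck-impact}, which you re-derive informally from the expression of $q_i$), and turn the resulting strict cost saving into a violation of the PDC Condition and hence of Sequential Rationality via Theorem~\ref{theorem:priv-drop}. The only minor differences are that the paper's deviation drops the entire set $D$ of out-neighbours for which the condition fails (and holds that play fixed in all stages) rather than your one-shot drop of $\{j\}$, and that your in-neighbour reduction via path concatenation is an extra observation the paper does not need.
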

\begin{proof}
Assume by contradiction the opposite. Then, nodes can follow $\vec{p}'$
where $i$ drops $j$, such that, if $\vec{p}''$ is the profile resulting from nodes punishing $i$,
then by Lemma~\ref{lemma:bottleneck-impact} from Appendix~\ref{sec:epidemic},
$q_i[\vec{p}''] = q_i[\vec{p}^*]$, where $\vec{p}^* = \vec{\sigma}^*[\emptyset]$. 
Since $i$ increases its utility in the first stage by deviating, we have
$$u_i[\vec{p}''] > u_i[\vec{p}^*].$$
Moreover, by letting $\vec{\sigma}' =\vec{\sigma}^*[\vec{p}']$ to be the profile where exactly $i$ defects $j$, it is true that for every $r>0$
$$u_i[\emptyset,r|\vec{\sigma}'] = u_i[\emptyset,r|\vec{\sigma}^*].$$
This implies that $\pi_i[\vec{\sigma}'|\vec{\mu}^*,\emptyset] > \pi_i[\vec{\sigma}^*|\vec{\mu}^*,\emptyset]$, which is a contradiction.
(\hyperref[proof:lemma:paths]{Complete proof in Section~\ref{proof:lemma:paths}}).
\end{proof}

The main implication of this result is that many non-redundant topologies, i.e., that do not contain multiple paths between $s$
and every node, are ineffective at sustaining cooperation. This is not entirely surprising, since it was already known that cooperation cannot be sustained
using punishments as incentives in non-redundant graphs such as trees\,\cite{Ngan:04}. But even slightly redundant
structures, such as directed cycles, do not fulfill the necessary condition specified in Lemma~\ref{lemma:paths}.
Although redundancy is desirable to fulfill the above condition, it might decrease the effectiveness of punishments unless full indirect reciprocity may be used,
as shown in the following section.

\subsection{Redundancy may decrease Effectiveness}
In addition to the need to fulfill the necessary condition of Lemma~\ref{lemma:paths}, 
a higher redundancy increases tolerance to failures. We show in Theorem~\ref{theorem:problem} that if the graph is redundant and it does not allow full indirect reciprocity to be implemented, 
then the effectiveness decreases monotonically with the increase of the reliability.
We consider the graph to be redundant if there are multiple non-overlapping paths from the source to every node.
More precisely, if for every $i \in {\cal N}$ and $j \in {\cal N} \setminus \{i\}$, there exists $x \in \pth[s,i]$ such that
$j \notin x$, then $G$ is redundant. 

The reliability increases as the probabilities $p_i[j|\emptyset]$ approach $1$  for every node $i$ and out-neighbor $j$.
This is denoted by $\lim_{\vec{\sigma}^*\to \vec{1}}$. We find that the effectiveness of any punishing strategy that cannot implement full indirect
reciprocity converges to $\emptyset$, i.e., no benefit-to-cost ratio can sustain cooperation. This intuition is formalized as follows:

\begin{equation}
\label{eq:problem}
\lim_{\vec{\sigma}^*\to \vec{1}} \psi[\vec{\sigma}^*|\vec{\mu}^*] = \cdot_{i \in {\cal N},j \in {\cal N}_i} \lim_{p_i[j|\emptyset]\to 1} \psi[\vec{\sigma}^*|\vec{\mu}^*] = \emptyset.
\end{equation}

Theorem~\ref{theorem:problem} proves that Equality~\ref{eq:problem} holds for any graph that does not allow full indirect reciprocity
to be implemented, which in our model occurs when there is no path from some $j \in {\cal N}_i$ to some $k \in {\cal N}_i^{-1}$
that does not cross $i$.

\begin{theorem}
\label{theorem:problem}
If $G$ is redundant and there exist $i \in {\cal N}$, $j \in {\cal N}_i$, and $k \in {\cal N}_i^{-1}$ such that for every $x \in \pth[j,k]$ we have $i \in x$,
then Equality~\ref{eq:problem} holds.
\end{theorem}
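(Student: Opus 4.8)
The plan is to argue against effectiveness directly: I fix the triple \(i\in{\cal N}\), \(j\in{\cal N}_i\), \(k\in{\cal N}_i^{-1}\) granted by the hypothesis and show that, as the profile approaches \(\vec 1\), the single deviation in which \(i\) drops \(j\) at the empty history can no longer be deterred for any finite benefit-to-cost ratio. Since the definition of \(\psi[\vec\sigma^*|\vec\mu^*]\) requires the equilibrium condition to hold \emph{for this particular} \(i\), driving the ratio necessary for \(i\) to \(\infty\) already empties \(\psi\), which is exactly the intersection form asserted in Equality~\ref{eq:problem}.

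First I would specialize Theorem~\ref{theorem:priv-drop} to the empty history. Since \(\emptyset\) is consistent with \(\vec\sigma^*\), the belief \(\mu_i^*[\cdot|\emptyset]\) is degenerate and the PDC Condition for \(D=\{j\}\) collapses to a single-history inequality of the same shape as the public DC Condition. Manipulating it exactly as in Lemma~\ref{lemma:nec-btc} then yields a necessary lower bound
\[
\frac{\beta_i}{\gamma_i} > \bar p_i + \frac{p_i[j|\emptyset]}{q_i' - q_i^*}\left(1 - q_i' + \frac{1-q_i^*}{\pd}\right),
\]
where \(q_i^* = q_i[\vec\sigma^*[\emptyset]]\), \(q_i'\) is the non-reception probability of \(i\) while it is being punished for this defection, and \(\pd\) is the relevant punishment duration. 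Deriving this bound is routine; all the content is in controlling \(q_i'\).

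The topological heart is the following. The hypothesis that every \(x\in\pth[j,k]\) contains \(i\) says that \(k\) is unreachable from \(j\) in \(G\setminus\{i\}\); hence no accusation emitted by \(j\) against \(i\) can ever reach \(k\), i.e.\ \(\del_k[i,j]=\infty\). By Definition~\ref{def:privsig} the private signal of \(k\) therefore always reports \emph{cooperate} for the pair \((i,j)\), so under \(\vec\sigma^*\) the in-neighbour \(k\) never withdraws its forwarding to \(i\) during the punishment. The set of in-neighbours that actually stop feeding \(i\) thus excludes \(k\), and \(i\) retains a live supplier throughout. Because \(G\) is redundant, \(k\) is itself reached from the source along a path avoiding \(i\), so as all forwarding probabilities tend to \(1\) the chain source \(\rightsquigarrow k\to i\) becomes asymptotically perfect. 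I would make this quantitative through the epidemic estimates of Appendix~\ref{sec:epidemic} — the same multiplicative control used in Lemma~\ref{lemma:dir-recip} via Lemma~\ref{lemma:single-impact}, here applied to the edges that remain live, together with Lemma~\ref{lemma:bottleneck-impact} to handle the cut edges — to obtain \(q_i' \le q_i^*/(1-P)\) with \(P\) bounded away from \(1\), equivalently \(q_i'-q_i^* = O(q_i^*)\). The conclusion then follows by letting reliability grow: \(p_i[j|\emptyset]\to 1\) while \(q_i^*\to 0\), so \(\tfrac{p_i[j|\emptyset]}{q_i'-q_i^*}\to\infty\) while \(1-q_i'+(1-q_i^*)/\pd\) stays bounded below, forcing the necessary lower bound on \(\beta_i/\gamma_i\) to diverge; hence no finite ratio survives and \(\psi[\vec\sigma^*|\vec\mu^*]\to\emptyset\). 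The same monotone growth of the bound gives, en route, the advertised monotonic decay of the effectiveness.

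I expect the main obstacle to be precisely the quantitative step \(q_i'-q_i^* = O(q_i^*)\): I must show that the punishment degrades \(i\)'s reception \emph{at the same rate} at which \(q_i^*\) vanishes, rather than by an additive constant. The delicate point is that redundancy guarantees a live, source-fed path through the uninformed \(k\) that avoids \(i\), so the informed in-neighbours that do cut \(i\) can only inflate its failure probability by a bounded \emph{multiplicative} factor; isolating this factor, and confirming that \(q_i^*\to 0\) holds in the redundant graph under the stated limit, is where the epidemic machinery of the appendix must be invoked with care — a merely additive loss would leave \(\psi\) non-empty and break the argument.
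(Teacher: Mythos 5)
Your outline shares the paper's key insight (fix the triple $(i,j,k)$, note $\del_k[i,j]=\infty$, use redundancy to feed $k$ from the source while avoiding $i$, and show the deviation ``drop $j$'' becomes undeterrable in the limit), but the quantitative step you yourself single out as the main obstacle is a genuine gap, and it fails in the direction opposite to the one you worry about. The claim $q_i' \le q_i^*/(1-P)$ with $P$ bounded away from $1$ (equivalently $q_i'-q_i^*=O(q_i^*)$) is false in general: in equilibrium $i$ is fed by \emph{all} of its in-neighbours, so $q_i^*$ is roughly a product of vanishing factors, one per in-neighbour, whereas under the punishment $q_i'$ retains only the factors of the non-punishing ones. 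Already with two in-neighbours, an uninformed $k$ and a punishing $l$, one has $q_i'/q_i^*\approx 1/\bigl(1-(1-q_l)p_l[i|\emptyset]\bigr)\to\infty$ in the limit you take. Your proposed tools cannot rescue this: Lemma~\ref{lemma:single-impact} applied to a punishing in-neighbour $l$ produces exactly the factor $1/(1-p_l[i|\emptyset])$, which blows up as $p_l[i|\emptyset]\to 1$, and Lemma~\ref{lemma:bottleneck-impact} is inapplicable because it requires \emph{every} path from $s$ to $i$ to be unaffected by the deviation, while the punishing in-neighbours of $i$ lie on some of those paths.

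What your argument actually needs is weaker, and it is what the paper proves: $q_i'\to 0$ in absolute terms, so that $q_i'-q_i^*\to 0$ while the deviation gain $\gamma_i\,p_i[j|\emptyset]$ stays bounded away from $0$; then your necessary bound diverges (equivalently, the PDC sum becomes negative term by term, which is how the paper argues). The paper obtains $q_i'\to 0$ from Lemma~\ref{lemma:prob-1}, and this requires a step your sketch omits: it is not enough that $k$ is uninformed — \emph{every} node $l$ on the redundant path $x\in\pth[s,k]$ with $i\notin x$ must satisfy $\del_l[i,j]=\infty$. This follows by concatenation: if some $l\in x$ had finite delay, there would be a path from $j$ to $l$ avoiding $i$, and appending the tail of $x$ from $l$ to $k$ would yield a path in $\pth[j,k]$ avoiding $i$, contradicting the hypothesis. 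Without this, an informed intermediate node may, after hearing the accusation, play any ``positive function of $K$'' (Definition~\ref{def:priv-thr}), which the limit $\vec{\sigma}^*\to\vec{1}$ does not force to $1$ (the limit only constrains empty-history probabilities), and the live-path argument collapses. A secondary, fixable issue: the inequality of Lemma~\ref{lemma:nec-btc} does not transfer verbatim to private monitoring, since punishments are staggered by the delays $\del_l[i,j]$ and have pair-dependent durations $\pd[i,j|k,l]$, so $q_i[\emptyset,r|\vec{\sigma}']$ is stage-dependent; you would need to re-derive the necessary condition or argue directly on the PDC sum. In short, your closing concern is inverted: a vanishing \emph{additive} loss is precisely what makes the theorem true, while the multiplicative control you seek is both unobtainable and unnecessary.
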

\begin{proof}
The proof defines a deviating profile of strategies $\vec{\sigma}'$ where exactly $i$ drops $j$.
It follows that there is a path $x$ from $s$ to $i$ such that every node $k \in x$ never
reacts to this defection. By Definition~\ref{def:priv-thr}, $k$ uses
$p_k[l|\emptyset]$ towards every out-neighbor $l$; a value that converges to $1$. It follows from Lemma~\ref{lemma:prob-1}
in Appendix~\ref{sec:epidemic} that 
$$\lim_{\vec{\sigma}^* \to 1}( \pi_i[\vec{\sigma}^*|\vec{\mu}^*,\emptyset] - \pi_i[\vec{\sigma}'|\vec{\mu}^*,\emptyset] ) < 0.$$
Therefore, for any $\beta_i$ and $\gamma_i$, the left side of the PDC Condition converges to a negative value.
By Theorem~\ref{theorem:priv-drop}, this implies that $\psi[\vec{\sigma}^*|\vec{\mu}^*]$ converges to $\emptyset$.
(\hyperref[proof:theorem:problem]{Complete proof in Section~\ref{proof:theorem:problem}}).
\end{proof}

Notice that this result does not imply that only full indirect reciprocity is effective at incentivizing rational
nodes to cooperate in all scenarios. In fact, in many realistic scenarios, it might suffice for a majority of the in-neighbors
of a node $i$ to punish $i$ after any deviation. A more sensible analysis would take into consideration
the rate of converge to $\emptyset$ as the reliability increases. 
However, full indirect reciprocity is necessary in order to achieve an effectiveness fully independent of the desired reliability
in any redundant graph.

\subsection{Coordination is Desirable}
Although full indirect reciprocity is desirable for redundant graphs, 
we now show that for some definitions of punishing strategies, it might not be sufficient if monitoring incurs large delays. 
In particular, nodes also need to coordinate the punishments
being applied to any node, such that these punishments overlap during at least one stage after the defection, cancelling out any benefit obtained for receiving 
messages along some redundant path. This intuition is formalized as follows.
\begin{definition}
\label{def:coord}
An assessment $(\vec{\sigma}^*,\vec{\mu}^*)$ enforces coordination if and only if
for every $i \in {\cal N}$ and $j \in {\cal N}_i$, there exists $r>0$ such that,
for every $k \in {\cal N}_i^{-1}$,
$$r \in \{\del_k[i,j] + 1 \ldots \del_k[i,j] + \pd[i,j|k,i]\}.$$
\end{definition}

We prove a similar theorem to Theorem~\ref{theorem:problem}, which states that for
some definitions of punishing strategies, with a redundant graph, the effectiveness
decreases to $\emptyset$ with the reliability.

\begin{theorem}
\label{theorem:problem-coord}
If the graph is redundant and $\vec{\sigma}^*$ does not enforce coordination, then there is
a definition of $\vec{\sigma}^*$ such that Equality~\ref{eq:problem} holds.
\end{theorem}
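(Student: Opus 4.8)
The plan is to mirror the proof of Theorem~\ref{theorem:problem}: by Theorem~\ref{theorem:priv-drop} it suffices to exhibit a single first-stage deviation, taken after the empty history, whose presence drives the left-hand side of the PDC Condition (Inequality~\ref{eq:priv-drop}) to a strictly negative value in the limit $\vec{\sigma}^*\to\vec{1}$, for every choice of $\beta_i$ and $\gamma_i$. Let $(i,j)$ with $j\in{\cal N}_i$ be a pair witnessing the failure of coordination in Definition~\ref{def:coord}, let $\vec{p}_i'$ be the action that drops $j$, and let $\vec{\sigma}'=(\sigma_i^*[\emptyset|\vec{p}_i'],\vec{\sigma}_{-i}^*)$. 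The definition of $\vec{\sigma}^*$ I would exhibit is any one whose durations $\pd[i,j|k,i]$ realize this non-coordination for $(i,j)$ and whose positive forwarding functions are driven to $1$; since at the empty history the belief $\mu_i^*[\cdot|\emptyset]$ is concentrated on the empty global history, the PDC Condition there reduces to the single discounted sum $\sum_{r}\omega_i^r\bigl(u_i[\emptyset,r|\vec{\sigma}^*]-u_i[\emptyset,r|\vec{\sigma}']\bigr)$, which I must show tends to a negative value.

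First I would translate non-coordination into a per-stage statement. Negating the existential in Definition~\ref{def:coord} for the pair $(i,j)$ yields that for every stage $r>0$ there is an in-neighbor $k(r)\in{\cal N}_i^{-1}$ with $r\notin\{\del_{k(r)}[i,j]+1,\ldots,\del_{k(r)}[i,j]+\pd[i,j|k(r),i]\}$. By Lemma~\ref{lemma:priv-corr-1}, under $\vec{\sigma}'$ the only defection is that of $i$ from $j$, so the only node ever punished is $i$ and each in-neighbor $k$ withholds forwarding to $i$ exactly during its window. Hence in every stage $r$ the in-neighbor $k(r)$ is still forwarding to $i$, at a probability that tends to $1$ as $\vec{\sigma}^*\to\vec{1}$.

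Next I would show that $i$'s reliability never drops. Because $i$ is the unique punished node, every node other than $i$ forwards at a rate tending to $1$. Redundancy provides, for the node $k(r)\neq i$, a path $x\in\pth[s,k(r)]$ with $i\notin x$; all nodes along $x$ forward at rate tending to $1$ (none of them is $i$, hence none is punished), so $k(r)$ receives messages with probability tending to $1$ and relays them to $i$ with probability tending to $1$. Applying Lemma~\ref{lemma:prob-1} to $x$ extended by the edge $(k(r),i)$ gives $q_i[\emptyset,r|\vec{\sigma}']\to 0$ for every $r$, and the same holds trivially for $\vec{\sigma}^*$.

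Finally I would compare utilities stage by stage. In stage $0$ no window has opened yet (every window starts at $\del_k[i,j]+1\geq 1$), so $q_i[\emptyset,0|\vec{\sigma}']=q_i[\emptyset,0|\vec{\sigma}^*]$ while $i$ saves the cost $\gamma_i p_i[j|\emptyset]$ of forwarding to $j$; thus $u_i[\emptyset,0|\vec{\sigma}^*]-u_i[\emptyset,0|\vec{\sigma}']<0$. For $r>0$, both $q_i[\emptyset,r|\vec{\sigma}^*]$ and $q_i[\emptyset,r|\vec{\sigma}']$ tend to $0$ and, by Definition~\ref{def:priv-thr}, under $\vec{\sigma}'$ node $i$ forwards to no more out-neighbors than under $\vec{\sigma}^*$, so $u_i[\emptyset,r|\vec{\sigma}^*]-u_i[\emptyset,r|\vec{\sigma}']\to\gamma_i\bigl(\bar{p}_i[\emptyset,r|\vec{\sigma}']-\bar{p}_i[\emptyset,r|\vec{\sigma}^*]\bigr)\leq 0$. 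Summing (with discounting $\omega_i<1$ justifying the interchange of limit and series) gives a strictly negative value dominated by the stage-$0$ term, for any $\beta_i,\gamma_i$; by Theorem~\ref{theorem:priv-drop} the assessment is then not Sequentially Rational, so $\psi[\vec{\sigma}^*|\vec{\mu}^*]\to\emptyset$ and Equality~\ref{eq:problem} holds. I expect the main obstacle to be the uniform-over-stages reliability bound of the third paragraph: one must argue that a \emph{possibly different} surviving in-neighbor $k(r)$ keeps $q_i\to 0$ in every punished stage simultaneously, which is exactly where redundancy and the per-stage form of non-coordination are combined, and where the interplay between the delays $\del_k[i,j]$ and the durations $\pd[i,j|k,i]$ must be handled carefully.
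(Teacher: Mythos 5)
Your proposal is correct and is essentially the paper's own argument: you exhibit the same strategy (all non-punishing thresholds frozen at the equilibrium values $p_a[b|\emptyset]$, which tend to $1$; this is the paper's Equality~\ref{eq:coo-1}), the same single deviation in which $i$ drops $j$ after the empty history, the same translation of the failure of Definition~\ref{def:coord} into a per-stage non-punishing in-neighbor $k(r)$, and the same limit computation via Lemma~\ref{lemma:prob-1} and continuity of $q_i$, so that the left side of the PDC Condition converges to a strictly negative value for every $\beta_i$, $\gamma_i$, and $\omega_i \in (0,1)$, whence Theorem~\ref{theorem:priv-drop} gives $\psi[\vec{\sigma}^*|\vec{\mu}^*]\to\emptyset$. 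The one divergence is in your favor: your redundant path need only avoid the single node $i$ (exactly what the paper's definition of redundancy provides), which suffices because under the exhibited strategy the only links whose probabilities ever drop are those pointing into the unique defector $i$, whereas the paper's own proof asks for a path avoiding all of $({\cal N}_i\cup{\cal N}_i^{-1}\cup\{i\})\setminus\{k\}$, a stronger property than its stated redundancy definition guarantees.
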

\begin{proof}
Consider the punishing strategy where every node $i$ reacts only to the defections of out-neighbors
or to its own defections, and uses $p_i[j|\emptyset]$ in any other case. If $\vec{\sigma}^*$ does not
enforce coordination, then for some $i \in {\cal N}$ and $j \in {\cal N}_i$, and for every $r>0$,
we can find a path from $s$ to $i$ such that every node $k$ along the path uses the probability
$p_k[l|\emptyset]$ towards the next node $l$ in the path. These probabilities converge to $1$ as
the reliability increases, which by Lemma~\ref{lemma:prob-1} implies that
$$\lim_{\vec{\sigma}^* \to 1}( \pi_i[\vec{\sigma}^*|\vec{\mu}^*,\emptyset] - \pi_i[\vec{\sigma}'|\vec{\mu}^*,\emptyset] ) < 0,$$
where $\vec{\sigma}'$ is the alternative profile of strategies where exactly $i$ drops $j$.
Therefore, for any $\beta_i$ and $\gamma_i$, the left side of the PDC Condition converges to a negative value.
By Theorem~\ref{theorem:priv-drop}, this implies that $\psi[\vec{\sigma}^*|\vec{\mu}^*]$ converges to $\emptyset$.
(\hyperref[proof:theorem:problem-coord]{Complete proof in Section~\ref{proof:theorem:problem-coord}}).
\end{proof}

In order to allow nodes to obtain messages with high probabilities, while
keeping the effectiveness independent of the desired reliability, punishments must be coordinated,
such that after any deviation any node $i$ expects to
be punished by every in-neighbor during $\pd>0$ stages.  More precisely, for every node $i$, by letting
$$\mdel_i = \max_{j \in {\cal N}_i} \max_{k\in{\cal N}_i^{-1}} \del_k[i,j]$$
to be the maximum delay of accusations against $i$ towards any in-neighbor of $i$,
the protocol must define $\pd[i,j|k,i]$ for every $k \in {\cal N}_i^{-1}$ and $j \in {\cal N}_i$ in order to fulfill 
$$\pd[i,j|k,i] + \del_k[i,j] \geq \mdel_i  + \pd.$$ 
It is sufficient and convenient for the sake of simplicity to provide a definition of $\pd[i,j|k,l]$ for every $k\in{\cal N}$ 
and $l \in {\cal N}_k$ where $\del_k[i,j] < \infty$, such that every node stops reacting to a given defection in the same stage.
More precisely, 
\begin{definition}
\label{def:overlap}
For every $k \in {\cal N}$ and $l \in {\cal N}_k$ such that $d_k[i,j] < \infty$, if $k$ and $l$ observe the defection before
$\mdel_i + \pd$, i.e., $g = \max[\del_k[i,j],\del_l[i,j]] < \mdel_i + \pd$, then $k$ and $l$ react to a defection of $i$ from $j$:
$$\pd[i,j|k,l] = \mdel_i + \pd - g.$$
Otherwise,
$$\pd[i,j|k,l] = 0.$$
\end{definition}

This ensures that no node $k$ reacts to a defection of $i$ from $j$ after stage $\mdel_i + \pd$.

\subsection{Coordinated Full Indirect Reciprocity}
We now study the set of punishing strategies that use full indirect reciprocity. This requires the
existence of a path from every $j \in {\cal N}_i$ to every $k \in {\cal N}_i^{-1}$, which must not cross $i$.
Under some circumstances, the effectiveness of a Preconsistent assessment $(\vec{\sigma}^*,\vec{\mu}^*)$
that uses full indirect reciprocity does not increase with the reliability of the dissemination process. 
As seen in the previous section, this requires punishments to be coordinated, which we assume to be
defined as in Definition~\ref{def:overlap}.

The fact that accusations may be delayed has an impact on the effectiveness, which
is quantified in Lemma~\ref{lemma:priv-suff}. To prove this lemma, we first derive in Lemma~\ref{lemma:delay-equiv} 
an intermediate sufficient condition for the PDC Condition to be true. The lemma
simplifies the PDC Condition for the worst scenario, where all in-neighbors of a node $i$ begin
punishing $i$ for any defection simultaneously. The proofs assume that
punishing strategies are defined in a reasonable manner. More precisely, if in reaction to a defection of node $i$
other nodes increase the probabilities used towards out-neighbors other than $i$, then $i$ should never expect a large increase
in its reliability during the initial stages, before every in-neighbor starts punishing $i$. This intuition is captured
in Assumption~\ref{def:non-neg}.

\begin{definition}
\label{def:non-neg}
(\textbf{Assumption})
There exists a constant $\epsilon \in [0,1)$ such that,
for every $h\in {\cal H}$, $i \in {\cal N}$, $\vec{p}_i' \in {\cal P}_i$, $\vec{\sigma}' = (\sigma_i^*[h|\vec{p}_i'],\vec{\sigma}_{-i}^*)$,
and $r > 0$,
$$q_i[h,r|\vec{\sigma}^*] - q_i[h,r|\vec{\sigma}'] < \epsilon.$$
\end{definition}

\begin{lemma}
\label{lemma:delay-equiv}
If $(\vec{\sigma}^*,\mu^*)$ is Preconsistent, Assumption~\ref{def:non-neg} holds, and Inequality~\ref{eq:delay-equiv} 
is fulfilled for every $i \in {\cal N}$, $h_i \in {\cal H}_i$, and $h \in {\cal H}$ such 
that $\mu_i^*[h|h_i]>0$, then $(\vec{\sigma}^*,\vec{\mu}^*)$ is Sequentially Rational:
\begin{equation}
\label{eq:delay-equiv}
- \sum_{r=0}^{\mdel_i} \omega_i^r ((1-q_i[h,r|\vec{\sigma}^*])\gamma_i \bar{p}_i[h,r|\vec{\sigma}^*]  + \epsilon \beta_i)+ \sum_{r=\mdel_i+1}^{\mdel_i + \pd} \omega_i^r u_i[h,r | \vec{\sigma}^*]  \geq 0.
\end{equation}
\end{lemma}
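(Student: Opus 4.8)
The plan is to reduce Inequality~\ref{eq:delay-equiv} to the PDC Condition: since the assessment is Preconsistent, Theorem~\ref{theorem:priv-drop} then immediately upgrades the PDC Condition to Sequential Rationality. Fix a player $i$, a private history $h_i$, and a dropping set $D \subseteq {\cal N}_i[h_i]$ (the case $D = \emptyset$ is trivial, as then $\vec{\sigma}' = \vec{\sigma}^*$ and the PDC summand is $0$). I would prove the stronger, per-history statement that for every $h$ with $\mu_i^*[h|h_i] > 0$ the inner sum $\sum_{r=0}^\infty \omega_i^r(u_i[h,r|\vec{\sigma}^*] - u_i[h,r|\vec{\sigma}'])$ of the PDC Condition is bounded below by the left-hand side of Inequality~\ref{eq:delay-equiv}. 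Multiplying by $\mu_i^*[h|h_i] \geq 0$ and summing over $h$ then delivers the PDC Condition directly from the hypothesis that Inequality~\ref{eq:delay-equiv} is nonnegative.

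To bound that inner sum I would split the horizon into three windows dictated by the delay structure: the initial window $0 \leq r \leq \mdel_i$, the overlap window $\mdel_i + 1 \leq r \leq \mdel_i + \pd$, and the tail $r > \mdel_i + \pd$. Using Lemma~\ref{lemma:priv-corr-1} together with the coordinated choice of $\pd[i,j|k,i]$ fixed in Definition~\ref{def:overlap}, I would show that, for any $j \in D$, every in-neighbor $k \in {\cal N}_i^{-1}$ reacts to the defection of $i$ from $j$ throughout the whole overlap window, since $k$'s reaction spans stages $\del_k[i,j]+1,\ldots,\mdel_i+\pd$ and $\del_k[i,j] \leq \mdel_i$, and that no node reacts after stage $\mdel_i + \pd$. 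Under full indirect reciprocity this means all in-neighbors simultaneously withhold forwarding from $i$ during the overlap window, so $i$ receives nothing and $u_i[h,r|\vec{\sigma}'] = 0$ there; hence this window contributes exactly $\sum_{r=\mdel_i+1}^{\mdel_i+\pd} \omega_i^r u_i[h,r|\vec{\sigma}^*]$. In the tail window the network states under $\vec{\sigma}^*$ and $\vec{\sigma}'$ coincide (all reactions have ceased and $i$ has reverted to $\sigma_i^*$ after the first stage), so every term vanishes.

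For the initial window I would bound $u_i[h,r|\vec{\sigma}']$ from above by $(1-q_i[h,r|\vec{\sigma}'])\beta_i$, as the forwarding cost is nonnegative, and then invoke Assumption~\ref{def:non-neg}, which gives $q_i[h,r|\vec{\sigma}^*] - q_i[h,r|\vec{\sigma}'] < \epsilon$. Substituting and regrouping the benefit and cost terms yields
$$u_i[h,r|\vec{\sigma}^*] - u_i[h,r|\vec{\sigma}'] \geq -\left((1-q_i[h,r|\vec{\sigma}^*])\gamma_i\bar{p}_i[h,r|\vec{\sigma}^*] + \epsilon\beta_i\right),$$
which is precisely the summand of the first sum in Inequality~\ref{eq:delay-equiv}. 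Assembling the three windows with their discount factors $\omega_i^r$ then reproduces the entire left-hand side of Inequality~\ref{eq:delay-equiv} as a lower bound for the inner sum, completing the reduction.

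I expect the main obstacle to be the combinatorial bookkeeping of the overlap and tail windows: one must verify, uniformly over the dropping set $D$ and over all in-neighbors with \emph{heterogeneous} delays, that the coordination prescribed by Definition~\ref{def:overlap} forces a complete and simultaneous cut-off of $i$ on exactly $[\mdel_i+1,\mdel_i+\pd]$ and that every reaction has terminated afterwards. This is where Lemma~\ref{lemma:priv-corr-1} is essential, and where the full-indirect-reciprocity hypothesis (a path avoiding $i$ from each $j \in {\cal N}_i$ to each $k \in {\cal N}_i^{-1}$) is used to guarantee that no in-neighbor is missing from the punishment so that $q_i[h,r|\vec{\sigma}']=1$ throughout the overlap. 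By contrast, the initial-window estimate is routine algebra once Assumption~\ref{def:non-neg} is available.
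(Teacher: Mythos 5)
Your proposal is correct and follows essentially the same route as the paper: the same reduction to the PDC Condition via Preconsistency and Theorem~\ref{theorem:priv-drop}, the same three-window split at $\mdel_i$ and $\mdel_i+\pd$, and the same ingredients (Lemma~\ref{lemma:priv-corr-1}, Definition~\ref{def:overlap}, the null-reliability argument for the overlap window, and Assumption~\ref{def:non-neg} for the initial window). The only cosmetic difference is that the paper packages your overlap-window and tail-window claims as standalone auxiliary lemmas (Lemmas~\ref{lemma:aux:impdel-1} and~\ref{lemma:aux:impdel-2}) rather than proving them inline, and your initial-window algebra is in fact stated more carefully than the paper's (which drops a $\bar{p}_i$ factor in an intermediate line).
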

\begin{proof}
By the definition of coordinated punishments if $i$ deviates in $\sigma_i'$ by dropping some subset of neighbors such that all players follow $\vec{\sigma}'=(\sigma_i',\vec{\sigma}_{-i}^*)$,
then in the worst scenario no node punishes $i$ in any of the first $\mdel_i$ stages. Therefore, by our assumptions, for
every $r \in \{1\ldots \mdel_i\}$,
$$u_i[h,r|\vec{\sigma}^*] - u_i[h,r|\vec{\sigma}'] \geq - \sum_{r=0}^{\mdel_i} \omega_i^r ((1-q_i[h,r|\vec{\sigma}^*])\gamma_i \bar{p}_i[h,r|\vec{\sigma}^*]  + \epsilon \beta_i).$$
Also, for every $r \in \{\mdel_i+1 \ldots \mdel_i + \pd\}$, every in-neighbor of $i$ punishes $i$, which by Lemma~\ref{lemma:noneib} from Appendix~\ref{sec:epidemic}
implies
$$u_i[h,r|\vec{\sigma}'] = 0.$$
Finally, for every $r \geq  \mdel_i +\pd +1$, every node ends its reaction to any defection of $i$ after $\mdel_i+\pd+1$ stages, implying that
$$u_i[h,r|\vec{\sigma}^*] = u_i[h,r|\vec{\sigma}'].$$

These three facts have the consequence that if Inequality~\ref{eq:delay-equiv} is fulfilled, then the PDC Condition holds.
Therefore, by Theorem~\ref{theorem:priv-drop}, $(\vec{\sigma}^*,\vec{\mu}^*)$ is Sequentially Rational.
(\hyperref[proof:lemma:delay-equiv]{Complete proof in Section~\ref{proof:lemma:delay-equiv}}).
\end{proof}

We can now derive a lower bound for the effectiveness
of the considered punishing strategies, in a similar fashion to Theorem~\ref{theorem:indir-suff}.
However, now a stronger assumption is made, defined in~\ref{def:priv-assum}.
The reasoning is similar in that after any history, if a node $i$ defects from some out-neighbor, then the reliability
$i$ would obtain during the initial stages when it is not being punished by all in-neighbors is not
significantly greater than the reliability $i$ would obtain in the subsequent stages, had $i$ not deviated
from the specified strategy. 
\begin{definition}
\label{def:priv-assum}
(\textbf{Assumption}).
There exists a constant $c>0$, such that, for every $i \in {\cal N}$, $r \in \{0 \ldots \mdel_i \}$, and $r' \in \{\mdel_i+1 \ldots \mdel_i+ \pd\}$,
$$q_i[h,r|\vec{\sigma}^*] \geq 1-c(1-q_i[h,r'|\vec{\sigma}^*]).$$
\end{definition}

\begin{lemma}
\label{lemma:priv-suff}
If $(\vec{\sigma}^*,\vec{\mu}^*)$ is Preconsistent, 
Assumptions~\ref{def:non-neg} and~\ref{def:priv-assum} hold, 
and Inequality~\ref{eq:priv-suff} is fulfilled for every $h$, $i \in {\cal N}$, and $r,r' \leq \mdel_i + \pd$ such that $q_i[h,r|\vec{\sigma}^*] <1$ and $q_i[h,r'|\vec{\sigma}^*] <1$,
then there exist $\omega_i \in (0,1)$ for every $i \in {\cal N}$ such that $(\vec{\sigma}^*,\vec{\mu}^*)$ is Sequentially Rational:
\begin{equation}
\label{eq:priv-suff}
\frac{\beta_i}{\gamma_i}> \bar{p}_i[h,r|\vec{\sigma}^*] \frac{1}{A}+\bar{p}_i[h,r'|\vec{\sigma}^*]\frac{1}{B - C},
\end{equation}
where
\begin{itemize}
  \item $A = 1 - \frac{\epsilon(\mdel_i+1)}{(1-q_i[h,r|\vec{\sigma}^*])\pd}$.
  \item $B=\frac{\pd}{c}$.
  \item $C=\frac{\epsilon(\mdel_i+1)}{1-q_i[h,r'|\vec{\sigma}^*]}$.
\end{itemize}
\end{lemma}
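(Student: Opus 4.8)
The plan is to build directly on Lemma~\ref{lemma:delay-equiv}, which, under Preconsistency and Assumption~\ref{def:non-neg}, already reduces sequential rationality to Inequality~\ref{eq:delay-equiv} holding for every relevant $h$ and $i$. It therefore suffices to exhibit, for each $i \in {\cal N}$, a discount factor $\omega_i \in (0,1)$ for which Inequality~\ref{eq:delay-equiv} is satisfied whenever Inequality~\ref{eq:priv-suff} holds; Theorem~\ref{theorem:priv-drop} then yields the claim. First I would substitute $u_i[h,r|\vec{\sigma}^*] = (1-q_i[h,r|\vec{\sigma}^*])(\beta_i - \gamma_i\bar{p}_i[h,r|\vec{\sigma}^*])$ into the second sum of~\ref{eq:delay-equiv} and collect the coefficients of $\beta_i$ and of $\gamma_i$ separately.

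Next I would pass to the limit $\omega_i \to 1$. Because every punishment of $i$ ends after $\mdel_i+\pd$ stages, only finitely many terms survive in~\ref{eq:delay-equiv}, and its left-hand side is a polynomial in $\omega_i$; moreover, as $h$ varies, the quantities $q_i[h,r|\vec{\sigma}^*]$ and $\bar{p}_i[h,r|\vec{\sigma}^*]$ take only finitely many values, since the active punishments are fixed by a bounded configuration. Hence it is enough to verify the strict version of~\ref{eq:delay-equiv} with every $\omega_i^r$ set to $1$: if that strict inequality holds for each of the finitely many instances, continuity provides a single $\omega_i<1$, close enough to $1$, for which~\ref{eq:delay-equiv} holds across all histories.

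At $\omega_i = 1$ the condition rearranges to a lower bound on $\beta_i/\gamma_i$ whose numerator is the total reliability-weighted forwarding cost over both the $\mdel_i+1$ grace stages $r\in\{0\ldots\mdel_i\}$ and the $\pd$ punishment stages $r'\in\{\mdel_i+1\ldots\mdel_i+\pd\}$, and whose denominator is the punishment-stage reliability minus the penalty $\epsilon(\mdel_i+1)$ contributed by Assumption~\ref{def:non-neg}. I would bound this fraction above by the right-hand side of~\ref{eq:priv-suff}, treating the two stage groups separately. For the punishment group, writing the denominator as $\sum_{r'}((1-q_i[h,r'|\vec{\sigma}^*]) - \epsilon(\mdel_i+1)/\pd)$ displays it as a sum of positive weights, so a weighted-average (convexity) argument bounds that contribution by $\max_{r'}\bar{p}_i[h,r'|\vec{\sigma}^*]/A$, producing exactly the factor $A$. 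For the grace group, I would invoke Assumption~\ref{def:priv-assum} to replace each $1-q_i[h,r|\vec{\sigma}^*]$ by $c(1-q_i[h,r'|\vec{\sigma}^*])$, introducing the slack factor $c$ and collapsing that contribution into the $1/(B-C)$ term with $B=\pd/c$ and $C=\epsilon(\mdel_i+1)/(1-q_i[h,r'|\vec{\sigma}^*])$.

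The main obstacle is this final bounding step: the penalty $\epsilon(\mdel_i+1)$ from Assumption~\ref{def:non-neg} and the factor $c$ from Assumption~\ref{def:priv-assum} must be apportioned between the two stage groups so that the denominators reduce precisely to $A$ and $B-C$, and both must be kept strictly positive. Positivity of $A$ and of $B-C$ is exactly the requirement that $\pd$ dominate $\epsilon(\mdel_i+1)$, i.e. that the overlap $\pd$ be of the order of the maximum delay $\mdel_i$; when it fails the derived bound is vacuous, matching the intuition that short overlaps cannot sustain cooperation. Once the bound is in place, any $\beta_i/\gamma_i$ satisfying~\ref{eq:priv-suff} for all admissible $r,r'$ makes the $\omega_i=1$ inequality strict, and the continuity argument of the second step then supplies the required $\omega_i \in (0,1)$.
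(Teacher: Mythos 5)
Your proposal follows essentially the same route as the paper's proof: it reduces the claim to Inequality~\ref{eq:delay-equiv} via Lemma~\ref{lemma:delay-equiv}, establishes positivity of the discounted sum near $\omega_i = 1$ (the paper phrases this as a concavity/derivative condition on a polynomial that vanishes at $\omega_i=1$, which is exactly equivalent to your evaluation at $\omega_i=1$ plus continuity, since that polynomial is the sum multiplied by $1-\omega_i$), and then bounds the grace-stage and punishment-stage groups using Assumptions~\ref{def:non-neg} and~\ref{def:priv-assum} to obtain the $1/A$ and $1/(B-C)$ terms. The remaining differences are presentational rather than substantive: you keep the full sums and apply a mediant inequality where the paper first passes to the argmin histories $h_1,h_2$, and your explicit finiteness argument for extracting a single $\omega_i$ valid across all histories makes a uniformity step precise that the paper leaves implicit.
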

\begin{proof}
The proof considers two histories $h_1$ and $h_2$ that minimize the first and the second factors of Inequality~\ref{eq:delay-equiv},
respectively. Thus, if the following condition is true, then Inequality~\ref{eq:delay-equiv}  is true:
$$- \sum_{r=0}^{\mdel_i} \omega_i^r ((1-q_i^*[h_1,0|\vec{\sigma}^*])\gamma_i \bar{p}_i[h_1,0|\vec{\sigma}^*]  + \epsilon \beta_i)+ \sum_{r=\mdel_i+1}^{\mdel_i+\pd+1} \omega_i^r u_i[h_2,0 | \vec{\sigma}^*]  \geq 0.$$
After some manipulations, we conclude that the above condition is fulfilled if~\ref{eq:priv-suff} is true.
This implies by Lemma~\ref{lemma:delay-equiv} that if~\ref{eq:priv-suff} holds for every $h$, and $r,r' \leq \mdel_i + \pd$,
then Inequality~\ref{eq:delay-equiv} also holds for some $\omega_i \in (0,1)$ and $(\vec{\sigma}^*,\vec{\mu}^*)$ is Sequentially Rational.
(\hyperref[proof:lemma:priv-suff]{Complete proof in Section~\ref{proof:lemma:priv-suff}}).
\end{proof}

The main conclusion that can be drawn from this lemma is that if we pick the values of $\pd$ and $\epsilon$ such that
$\pd \geq \mdel + 1$ and $\epsilon \ll 1$, then we can simplify the above condition to what is expressed in Theorem~\ref{theorem:priv-effect},
where $\mdel = \max_{i \in {\cal N}} \mdel_i$ is the maximum delay of the monitoring mechanism.

\begin{theorem}
\label{theorem:priv-effect}
If $(\vec{\sigma}^*,\vec{\mu}^*)$ is Preconsistent, Assumptions~\ref{def:non-neg} and~\ref{def:priv-assum} hold for $\epsilon \ll 1$,
and $\pd \geq \mdel + 1$, then there exists a constant $c>0$
such that $\psi[\vec{\sigma}^*|\vec{\mu}^*] \supseteq (v,\infty)$, where
$$
v = \max_{i \in {\cal N}} \max_{h \in {\cal H}}\bar{p}_i[h,0|\vec{\sigma}^*](1+c).
$$
\end{theorem}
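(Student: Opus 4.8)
The plan is to derive Theorem~\ref{theorem:priv-effect} as a direct corollary of Lemma~\ref{lemma:priv-suff} by specializing the sufficient condition~\ref{eq:priv-suff} to the regime $\pd \geq \mdel + 1$ and $\epsilon \ll 1$. First I would recall that Lemma~\ref{lemma:priv-suff} guarantees sequential rationality whenever, for every history $h$, every node $i$, and all relevant stage indices $r, r' \leq \mdel_i + \pd$, we have
$$\frac{\beta_i}{\gamma_i} > \bar{p}_i[h,r|\vec{\sigma}^*]\frac{1}{A} + \bar{p}_i[h,r'|\vec{\sigma}^*]\frac{1}{B-C},$$
with $A$, $B$, $C$ as defined there. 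The task reduces to bounding the two coefficients $1/A$ and $1/(B-C)$ from above under the theorem's hypotheses, so that the entire right-hand side is dominated by a clean expression of the form $\max_{i}\max_{h}\bar{p}_i[h,0|\vec{\sigma}^*](1+c)$.

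Next I would analyze the factor $A = 1 - \tfrac{\epsilon(\mdel_i+1)}{(1-q_i[h,r|\vec{\sigma}^*])\pd}$. Since $\pd \geq \mdel_i + 1$, the ratio $(\mdel_i+1)/\pd \leq 1$, so the subtracted term is bounded by $\epsilon/(1-q_i[h,r|\vec{\sigma}^*])$; as $\epsilon \ll 1$, this forces $A$ arbitrarily close to $1$ and hence $1/A$ close to $1$ (I would keep the mild requirement that the relevant reliability $1-q_i$ stays bounded away from zero, which holds since we quantify only over $r$ with $q_i[h,r|\vec{\sigma}^*]<1$). For the second factor, $C = \tfrac{\epsilon(\mdel_i+1)}{1-q_i[h,r'|\vec{\sigma}^*]}$ likewise vanishes as $\epsilon \to 0$, leaving $B - C \approx B = \pd/c'$ for the constant $c'$ of Assumption~\ref{def:priv-assum}; thus $1/(B-C) \approx c'/\pd$, which can be absorbed into a single overall constant. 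Combining these estimates and replacing both $\bar{p}_i[h,r|\vec{\sigma}^*]$ and $\bar{p}_i[h,r'|\vec{\sigma}^*]$ by the dominating quantity $\max_{h}\bar{p}_i[h,0|\vec{\sigma}^*]$ yields the stated bound, and choosing $c$ to collect the residual contributions from $1/A$ and $c'/\pd$ gives $\psi[\vec{\sigma}^*|\vec{\mu}^*] \supseteq (v,\infty)$ with $v = \max_{i\in{\cal N}}\max_{h\in{\cal H}}\bar{p}_i[h,0|\vec{\sigma}^*](1+c)$.

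The main obstacle I anticipate is the subtlety of how $\bar{p}_i[h,r|\vec{\sigma}^*]$ at stages $r \geq 1$ relates to $\bar{p}_i[h,0|\vec{\sigma}^*]$: the bound in Inequality~\ref{eq:priv-suff} involves the cost terms at arbitrary post-history stages, yet the target $v$ is phrased solely in terms of the zero-th stage value. I would justify the replacement by observing that after any history $h$, the quantity $\bar{p}_i[h,r|\vec{\sigma}^*]$ for $r \geq 1$ corresponds to the per-stage cost of $i$ under the deterministic evolution described in Section~\ref{sec:priv-evol}, and each such evolved history is itself some history in ${\cal H}$; hence taking the maximum over all $h$ at stage $0$ already dominates every $\bar{p}_i[h,r|\vec{\sigma}^*]$. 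The remaining care is purely bookkeeping: tracking that the constant $c$ can be chosen uniformly over all nodes and histories, and confirming that the limit $\epsilon \to 0$ is taken only to suppress lower-order terms rather than to make $A$ or $B-C$ degenerate, so that the interval $(v,\infty)$ is genuinely nonempty and independent of the desired reliability.
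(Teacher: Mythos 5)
Your proposal follows essentially the same route as the paper's own proof: you specialize Lemma~\ref{lemma:priv-suff}, bound $1/A$ by roughly $1$ using $\pd \geq \mdel_i+1$ together with $\epsilon \ll 1$, bound $1/(B-C)$ by a constant of order $c/\pd$ via Assumption~\ref{def:priv-assum}, dominate the evolved-history cost terms $\bar{p}_i[h,r|\vec{\sigma}^*]$ by $\max_{h \in {\cal H}}\bar{p}_i[h,0|\vec{\sigma}^*]$ (exactly the step the paper takes implicitly, which you correctly justify by noting that evolved histories are themselves histories in ${\cal H}$), and then invoke Lemma~\ref{lemma:priv-suff} to conclude sequential rationality. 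The only differences are cosmetic --- the paper's bound on $1/(B-C)$ is stated as $c/(\mdel_i+1)$ rather than $c'/\pd$, and your explicit caveat that $1-q_i$ must stay bounded away from zero relative to $\epsilon$ is, if anything, slightly more careful than the paper's own manipulation --- so the argument is correct at the same level of rigor as the original.
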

\begin{proof}
(\hyperref[proof:theorem:priv-effect]{See Section~\ref{proof:theorem:priv-effect}}).
\end{proof}

As in public monitoring, the effectiveness is close to optimal only if the initial expected costs of forwarding messages $\bar{p}_i$ are not
significantly smaller than the expected costs incurred after any history.
Provided this guarantee, if $\epsilon$ is small and $\pd$ is chosen to be at least of the order of the maximum delay between
out and in-neighbors of any node, then for any other punishing strategy, the effectiveness differs from the optimal by a constant factor.

Notice that, although we can adjust the value of $\pd$ to compensate for higher delays, it is desirable
to have a low maximum delay. First, this is due to the fact that higher values of $\pd$ correspond to harsher
punishments, which we may want to avoid, especially when monitoring is imperfect and honest nodes may wrongly
be accused of deviating. Second, a larger delay decreases the range of values of $\omega_i$ for each benefit-to-cost
ratio that ensures that punishing strategies are an equilibrium. In particular, we can derive from the proof of Lemma~\ref{lemma:priv-suff} the strict minimum $\omega_i$
for Grim-trigger to be an equilibrium. Under our assumptions, it is approximately given by
$$\omega_i \geq \sqrt[\mdel_i]{\frac{\gamma_i \bar{p}_i}{\beta_i}}.$$
For larger values of $\mdel_i$ and the same benefit-to-cost ratio, the minimum $\omega_i$ is also larger,
reducing the likelihood of punishments to persuade rational nodes to not deviate from the specified strategy.

\section{Discussion and Future Work}
\label{sec:conc}
From this analysis, we can derive several desirable properties of a fully distributed monitoring
mechanism for an epidemic dissemination protocol with asymmetric interactions, which uses
punishments as the main incentive. This mechanism is expected to operate on top of
an overlay network that provides a stable membership to each node. The results of this paper determine
that the overlay should optimally explore the tradeoff between maximal randomization and higher clustering coefficient.
The former is ideal for minimizing the latency of the dissemination process and fault tolerance, whereas the latter
is necessary to minimize the distances between the neighbors of each node, while maximizing the number of in-neighbors
of every node $i$ informed about any defection of $i$. The topology of this overlay should also fulfill the necessary
conditions identified in this paper. Furthermore, the analysis of private monitoring shows that
each accusation may be disseminated to a subset of nodes close to the accused node, without hindering
the effectiveness. As future work, we plan to extend this analysis by considering
imperfect monitoring, unreliable dissemination of accusations, malicious behavior, and churn. One possible application
of the considered monitoring mechanism would be to sustain cooperation in a P2P news recommendation system
such as the one proposed in~\cite{Boutet:13}. Due to the lower rate of arrival of news, a monitoring approach
may be better suited in this context.

\section*{Acknowledgements}
This work was partially supported by Funda\c{c}\~{a}o para a Ci\^{e}ncia e Tecnologia (FCT) 
via the INESC-ID multi-annual funding through the PIDDAC Program fund grant, under 
project PEst-OE/ EEI/ LA0021/ 2013, and via the project PEPITA (PTDC/EEI-SCR/2776/2012).

\bibliographystyle{splncs}
\bibliography{bibfile}

\newpage
\appendix

%\end{comment}

\section{Epidemic Model}
\label{sec:epidemic}
The probability that a node $i$ does not receive a message from $s$ when all players follow $\vec{p}$ can be defined recursively, as follows.
\begin{proposition}
\label{prop:non-delivery}
Define $\phi$ as follows: i) $\phi[R,\emptyset | \vec{p},L] = 1$ and ii) for $I \neq \emptyset$ and $R' = R \cup I \cup L$:
\begin{equation}
\label{eq:prod-non-delivery}
\begin{array}{ll}
\phi[R,I | \vec{p},L] = \sum_{H \subseteq {\cal N} \setminus R'} (P[I,H|\vec{p}] \cdot Q[{\cal N},R,I,H|\vec{p}]\cdot \phi[R \cup I, H | \vec{p},L]),
\end{array}                                     
\end{equation}
where
$$
\begin{array}{l}
P[I,H|\vec{p}] = \prod_{k \in H} (1 - \prod_{l \in I} (1 - p_l[k])).\\
Q[{\cal N},R,I,H|\vec{p}] = \prod_{k \in {\cal N} \setminus (H \cup R \cup I)} \prod_{l \in I} (1 - p_l[k]).
\end{array}
$$

Then, $\phi[\emptyset,\{s\}|\vec{p},L]$ is the probability that no node of $L$ receives a message disseminated by $s$. In particular, 
$q_i[\vec{p}] = \phi[\emptyset,\{s\}|\vec{p},\{i\}]$.
\end{proposition}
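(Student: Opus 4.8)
The plan is to model the dissemination as directed bond percolation: independently declare each edge $(l,k)$ \emph{open} with probability $p_l[k]$ (and closed otherwise), and observe that a node receives the message exactly when it is reachable from $s$ through a path of open edges. The quantity we must compute, $\phi[\emptyset,\{s\}|\vec{p},L]$, is then the probability that no node of $L$ is reachable from $s$. I would interpret the two extra arguments of $\phi$ operationally through a breadth-first, wave-by-wave exploration of the open cluster of $s$: $R$ is the set of nodes already reached and processed in earlier waves, $I$ is the current frontier of nodes reached but whose out-edges have not yet been examined, and $L$ is the target set we require to stay unreached. Accordingly, I would prove the stronger statement that, for all disjoint $R,I$ with $L \cap (R \cup I) = \emptyset$, the value $\phi[R,I|\vec{p},L]$ equals the probability that no node of $L$ is reachable from $I$ once the nodes of $R$ are deleted from the graph; the proposition is the special case $R=\emptyset$, $I=\{s\}$, and $q_i[\vec{p}]$ follows by taking $L=\{i\}$.

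The argument is an induction on the number of not-yet-reached nodes $|{\cal N}\setminus(R\cup I)|$. The base case is $I=\emptyset$: there is no frontier left to forward, so nothing new is reached and $L$ (which is disjoint from $R$) stays dark with probability $1$, matching $\phi[R,\emptyset|\vec{p},L]=1$. For the inductive step I would condition on the set $H$ of nodes newly reached in one wave, i.e. those $k\in{\cal N}\setminus(R\cup I)$ that receive a forward from at least one node of $I$. Since distinct target nodes $k$ involve disjoint edges, these events are independent across $k$, so the probability that the newly reached set equals exactly $H$ factorizes as $P[I,H|\vec{p}]\cdot Q[{\cal N},R,I,H|\vec{p}]$: the factor $P$ collects, for each $k\in H$, the probability $1-\prod_{l\in I}(1-p_l[k])$ of receiving from some node of $I$, while $Q$ collects, for each $k\in{\cal N}\setminus(H\cup R\cup I)$, the probability $\prod_{l\in I}(1-p_l[k])$ of receiving from none. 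Because $L$ is disjoint from $H\cup R\cup I$, the factor $Q$ already enforces that no node of $L$ is reached in this wave; the remaining requirement that $L$ stay unreached in all later waves is, by the induction hypothesis, exactly $\phi[R\cup I,H|\vec{p},L]$. Summing over all admissible $H$ --- precisely those with $H\subseteq{\cal N}\setminus(R\cup I\cup L)={\cal N}\setminus R'$ --- and applying the law of total probability reproduces the defining recurrence of $\phi$.

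The delicate point, and the step I expect to require the most care, is justifying that after conditioning on the history that produced the state $(R,I)$ the remaining edges behave as independent, unconditioned percolation edges with their original marginals. The key observation is that the history --- the fact that exactly $R\cup I$ has been reached so far and that $I$ is the current frontier --- constrains only the status of edges whose tail lies in $R$ (the earlier frontiers); in particular every edge out of $R$ leading to a node outside $R\cup I$ must be closed, for otherwise that node would already have been reached. Edges out of $I$ and out of ${\cal N}\setminus(R\cup I)$ are never examined by these earlier waves, so by the independence of edge statuses they remain fresh. I would make this precise as a Markov-type lemma: conditioned on $(R,I)$, the future reached set equals the set reachable from $I$ in the induced subgraph on ${\cal N}\setminus R$ under independent open edges, which depends on the history only through $(R,I)$. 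This both legitimizes treating $\phi$ as a function of $(R,I)$ alone and guarantees that the one-wave edges captured by $P$ and $Q$ are independent of the edges governing $\phi[R\cup I,H|\vec{p},L]$, closing the induction.
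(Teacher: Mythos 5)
Your proof is correct, and it arrives at the same one-wave factorization $P[I,H|\vec{p}]\cdot Q[{\cal N},R,I,H|\vec{p}]$ as the paper, but by a genuinely different route. The paper's own argument (explicitly labelled a justification) takes the generation-based Reed--Frost dynamics as the model itself: states are pairs $(R,I)$, the displayed product is the transition probability to $(R\cup I,H)$, and $\phi[\emptyset,\{s\}|\vec{p},L]$ is identified with the total weight of all root-to-leaf paths in the resulting tree, i.e.\ with the probability of all infection histories that never touch $L$; the Markov property across generations is built into the model and never discussed. You instead start from the static bond-percolation formulation (receipt iff reachability along open edges), give every intermediate value an explicit meaning ($\phi[R,I|\vec{p},L]$ is the probability that $L$ is unreachable from $I$ after deleting $R$), and close a strong induction using a deferred-decisions lemma to show that, conditioned on the first wave being exactly $H$, the unexplored edges remain fresh. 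Your route buys rigor and self-containedness: the equivalence of receipt and reachability, and the conditional independence across waves, are proved rather than postulated, and the informal path-enumeration is replaced by an invariant-based induction; the paper's route buys brevity, since the recursion can be read directly off the transition probabilities of the assumed dynamics. One point you should make explicit when writing this up: a simple open path from $I$ to $L$ avoiding $R$ may re-enter $I$, so to reduce your event to ``$L$ unreachable from $H$ in the graph with $R\cup I$ deleted'' you should take the suffix of the path after its \emph{last} visit to $I$; with that observation the two events coincide and the inductive step is airtight.
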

\begin{proof}
(\textbf{Justification})
The considered epidemics model is very similar to the Reed-Frost model\,\cite{Abbey:52}, where dissemination is performed by having nodes
forwarding messages with independent probabilities. The main difference is that the probability of forwarding
each message is determined by a vector $\vec{p}$, instead of being identical for every node. This implies that the
dissemination process can be modeled as a sequence of steps, such that, at every step, there is a set $I$ of nodes
infected in the last step, a set $R$ of nodes infected in previous steps other than the last, and a set $S$ of susceptible nodes.
Given $R$ and $I$, the probability of the set $H \subseteq {\cal N} \setminus (I \cup R \cup L)$
containing exactly the set of nodes infected at the current step is 
\begin{equation}
\label{eq:non-delivery}
P[I,H|\vec{p}]\cdot Q[{\cal N},I,H|\vec{p}] = \prod_{k \in H} (1 - \prod_{l \in I} (1 - p_l[k])) \cdot \prod_{k \in {\cal N} \setminus (H \cup R \cup I)} \prod_{l \in I} (1 - p_l[k]).
\end{equation}

That is, every node $i \in H$ is infected with a probability equal to $1$ minus the probability of no node of $I$ choosing $i$,
and these probabilities are all independent. Furthermore, all nodes of $I$ do not infect any node of ${\cal N} \setminus (H \cup R \cup I)$.
We can characterize $\phi$ with a weighted tree, where nodes correspond to a pair $(R,I)$.
Moreover, for every parent node $(R,I)$, each child node corresponds to a pair $(R\cup I,H)$ for every $H \subseteq {\cal N} \setminus (R \cup I \cup L)$.
The root node is the pair $(\emptyset,\{s\})$ and leaf nodes are in the form $(R,\emptyset)$ for every $R \subseteq \{s \} \cup {\cal N} \setminus L$.
The weight of the transition from $(R,I)$ to $(R\cup I,H)$ is given by~\ref{eq:non-delivery}, which is the probability of exactly
the nodes of $H$ being infected among every node of ${\cal N} \setminus (R \cup I \cup L)$. The sum of these factors for any path
from $(\emptyset,\{s\})$ to $(R,\emptyset)$ gives the probability of exactly the nodes of $R$ being infected in a specific order.
By summing over all leaf nodes in the form $(R,\emptyset)$, we have the total probability of exactly the nodes of $R$ being infected.
Finally, by summing over all possible $R \subseteq \{s\} \cup {\cal N} \setminus L$,  
we obtain the probability of no node in $L$ being infected. In particular, $q_i[\vec{p}] = \phi[\emptyset,\{s\} | \vec{p},\{i\}]$.
\end{proof}

The following are some useful axioms for the proofs, for any $\vec{p}$, $R$, $I$, $L$, and $R' = R \cup I \cup L$:
\begin{equation}
\label{eq:epid-1}
\sum_{H \subseteq {\cal N} \setminus R'} P[I,H|\vec{p}]\cdot Q[{\cal N},R,I,H|\vec{p}] = 1.
\end{equation}

If $(A,B)$ is a partition of ${\cal N}$, then
\begin{equation}
\label{eq:epid-2}
\begin{array}{ll}
&\sum_{H \subseteq {\cal N} \setminus R'} P[I,H|\vec{p}]\cdot Q[{\cal N},R,I,H|\vec{p}]=\\
=&\sum_{H_1 \subseteq A \setminus R'} P[I,H_1|\vec{p}]Q[A,R,I,H_1|\vec{p}] \cdot\\
\cdot&\sum_{H_2 \subseteq B \setminus R'} P[I,H_2|\vec{p}]\cdot Q[B,R,I,H_1|\vec{p}]\\
\end{array}
\end{equation}

\begin{equation}
\label{eq:epid-3}
j \in I \Rightarrow P[I,H|\vec{p}] \leq P[I \setminus \{j\},H|\vec{p}].
\end{equation}

\subsection{Deterministic Delivery}
Lemma~\ref{lemma:prob-1} proves the straightforward fact that if there is a path from $s$ to some node $i$
where all nodes along the path forward messages with probability $1$, then $q_i=0$.

\begin{lemma}
\label{lemma:prob-1}
For any $\vec{p} \in {\cal P}$, if there exists $i \in {\cal N}$ and $x \in \pth[s,i]$ such that for every $r \in \{0\ldots |x|-1\}$ we have
$p_{x_r}[x_{r+1}]= 1$, then $q_i[\vec{p}] = 0$.
\end{lemma}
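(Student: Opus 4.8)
The plan is to prove directly that $q_i[\vec p]=\phi[\emptyset,\{s\}\,|\,\vec p,\{i\}]=0$ from the recursive characterization in Proposition~\ref{prop:non-delivery}. Put $L=\{i\}$, let $n=|x|$, and recall $x_0=s$, $x_n=i$, and $p_{x_r}[x_{r+1}]=1$ for every $r\in\{0,\ldots,n-1\}$. I would read $\phi[\emptyset,\{s\}\,|\,\vec p,L]$ as the sum, over all root-to-leaf paths of the weighted tree of Proposition~\ref{prop:non-delivery}, of the products of edge weights $P[I,H\,|\,\vec p]\cdot Q[{\cal N},R,I,H\,|\,\vec p]$ (leaf value $1$); it then suffices to show every such trajectory contains at least one edge of weight $0$, since then every summand vanishes and $\phi=0$. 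The enabling observation is that $i\in L$ is excluded from every $H$ (as $H\subseteq{\cal N}\setminus R'$ with $i\in L\subseteq R'$) and is never in the initial frontier, so $L$ stays disjoint from $R\cup I$ at every node of the tree; hence $i\in{\cal N}\setminus(H\cup R\cup I)$ always, and $Q[{\cal N},R,I,H\,|\,\vec p]$ always carries the factor $\prod_{l\in I}(1-p_l[i])$. Thus whenever the current frontier $I$ contains a node $a$ with $p_a[i]=1$, that factor is at most $1-p_a[i]=0$ and the edge weight is $0$.

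For the main argument I would fix an arbitrary root-to-leaf trajectory $(R_1,I_1)\to(R_2,I_2)\to\cdots$, with $R_1=\emptyset$, $I_1=\{s\}$, $R_{t+1}=R_t\cup I_t$, $I_{t+1}=H_t$, and assume for contradiction that all its weights $w_t=P[I_t,H_t\,|\,\vec p]\cdot Q[{\cal N},R_t,I_t,H_t\,|\,\vec p]$ are strictly positive. Positivity of $Q$ forces a self-enforcing property: if $l\in I_t$, $k\notin L$, and $p_l[k]=1$, then $k\in H_t\cup R_t\cup I_t$, since otherwise $k$ would stay susceptible and contribute the zero factor $(1-p_l[k])$. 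I would then track the path by induction on $r$, showing each $x_r$ (for $r$ up to the first index where the path meets $L$, which exists since $x_n=i\in L$) lies in some frontier $I_{t_r}$ with $t_{r+1}\le t_r+1$. The base case is $x_0=s\in I_1$. For the step, from $x_r\in I_{t_r}$, $p_{x_r}[x_{r+1}]=1$, $x_{r+1}\notin L$, and $w_{t_r}>0$, the self-enforcing property gives $x_{r+1}\in H_{t_r}\cup R_{t_r}\cup I_{t_r}$; since $R_t\cup I_t=I_1\cup\cdots\cup I_t$ is a disjoint union of frontiers, $x_{r+1}$ lies in a unique $I_{t_{r+1}}$ with $t_{r+1}\le t_r+1$. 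Carrying this up to $x_{n-1}\in I_{t_{n-1}}$, the edge $p_{x_{n-1}}[x_n]=p_{x_{n-1}}[i]=1$ with $i\in L$ makes $w_{t_{n-1}}=0$ by the mechanism above, a contradiction (for $n=1$ the same mechanism kills $w_1$ at the root). Hence no trajectory has all-positive weights and $q_i[\vec p]=0$.

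The hard part is precisely the bookkeeping that makes this induction sound, and it is worth saying why the obvious route fails. The tempting approach is to induct on the \emph{length} of the prob-$1$ path via the auxiliary claim ``$\phi[R,I\,|\,\vec p,L]=0$ whenever some $a\in I$ reaches $L$ along a prob-$1$ path.'' This breaks down because a later path node may become infected early through an unrelated route and be absorbed into $R$; a node in $R$ never forwards again in the recursion, and in fact the auxiliary claim is simply \emph{false} once the path is allowed to start in $R$ (a prob-$1$ edge out of an $R$-node has no effect on $Q$). The self-enforcing trajectory argument avoids this entirely: rather than synchronizing recursion depth with position along $x$, it only uses that any positive-weight trajectory is \emph{compelled} to infect the prob-$1$ target of every current-frontier node, which suffices to propagate infection along $x$ (with $t_{r+1}\le t_r+1$) until it is driven onto $i\in L$, the impossible event that forces $\phi$ to $0$.
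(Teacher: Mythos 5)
Your proof is correct, and it takes a genuinely different route from the paper's. The paper argues by backwards induction on the position $r$ along $x$: its induction hypothesis asserts that $\phi[R,I|\vec{p},\{i\}]=0$ for every configuration in which $x_r\in I$, all earlier path nodes lie in $R\cup I$, and all later path nodes lie outside $R\cup I$; the base case is the last node before $i$, whose probability-$1$ edge to the never-infectable $i$ forces a zero factor in $Q$, and the inductive step splits the sum over $H$ according to whether $x_r\in H$. You instead unfold the recursion of Proposition~\ref{prop:non-delivery} into a sum over root-to-leaf trajectories and show that every summand carries a zero factor, by propagating infection forward along $x$ inside a single allegedly positive-weight trajectory. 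Both arguments run on the same engine (a frontier node with a probability-$1$ edge to a still-susceptible target forces either infection of that target or $Q=0$, and $i\in L$ is excluded from infection forever), but your bookkeeping is in fact sounder than the paper's. The complication you single out --- a later path node becoming infected early through an off-path route --- is precisely where the paper's inductive step is loose: it applies its hypothesis to configurations $(R\cup I,\,H\cup\{x_r\})$ even though $H$ may contain path nodes beyond $x_r$, which violates the hypothesis' requirement that all later path nodes be susceptible. Those terms do vanish, but justifying this needs either your trajectory argument or a restated hypothesis keyed to the largest infected path index rather than to a fixed position with all earlier nodes infected and all later nodes susceptible. So the paper's route buys a short, self-contained induction on $\phi$-values at the price of a fragile invariant, while yours buys robustness --- no synchronization between recursion depth and position along $x$ --- at the price of making the (standard) trajectory expansion of $\phi$ explicit. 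The one detail worth adding to your write-up: note that a positive-weight trajectory cannot reach a leaf before step $t_{n-1}$, since each $I_{t_r}$ is nonempty, hence $(R_{t_r},I_{t_r})$ is not a leaf and $H_{t_r}=I_{t_r+1}$ is defined; this is immediate, but it is what licenses writing $w_{t_{n-1}}$ at the final step.
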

\begin{proof}
The proof goes by induction on $r$ where the induction hypothesis is
that, for every $r \in \{0\ldots |x|-2\}$, $R \subseteq {\cal N} \cup \{s\} \setminus \{i\}$, and $I \subseteq {\cal N} \cup \{s\} \setminus \{i\} \cup R$,
such that:
\begin{itemize}
  \item $x_r \in I$,
  \item for every $r' \in \{0\ldots r\}$, $x_{r'} \in R \cup I$,
  \item for every $r' \in \{r+1\ldots |x|-1\}$, $x_{r'} \in {\cal N} \setminus (R \cup I)$,
\end{itemize} 
we have $\phi[R,I|\vec{p},\{i\}] = 0$.

Consider the base case for $r=|x|-2$ and let $R' = R \cup I \cup \{i\}$. In this case, for $j= x_{r}$,
$$
\begin{array}{lll}
\phi[R,I | \vec{p},\{i\}] &=& \sum_{H \subseteq {\cal N} \setminus R'} (P[I,H|\vec{p}] \cdot Q[{\cal N},R,I,H|\vec{p}] \cdot \phi[R \cup I, H | \vec{p},\{i\}])\\
                                     &\\
                                     &= &\sum_{H \subseteq {\cal N} \setminus R'} (P[,I,H|\vec{p}] \cdot Q[{\cal N}\setminus \{i\},R,I,H|\vec{p}]\\
                                     &&\prod_{l \in I \setminus \{j\}} (1 - p_l[i])(1-p_j[i])\phi[R \cup I, H | \vec{p},\{i\}])\\
				&\\
                                     &= &\sum_{H \subseteq {\cal N} \setminus R' } (P[,I,H|\vec{p}] \cdot Q[{\cal N}\setminus \{i\},R,I,H|\vec{p}]\\
				& & 0 \cdot \phi[R \cup I, H | \vec{p},\{i\}])\\
				&\\
                                     &= &0.
\end{array}
$$

This proves the base case. Assume now that the hypothesis is true for every $r' \in \{0 \ldots r\}$ and some $r \in \{1 \ldots |x|-2\}$.
Let $a = x_{r-1}$ and $b = x_{r}$, let $R_1 = R \cup I$ and $R_2 = R_1 \cup H$. We thus have
$$
\begin{array}{lll}
\phi[R,I | \vec{p},\{i\}] &= &\sum_{H \subseteq {\cal N} \setminus (R_1 \cup \{i\})}(P[I,H|\vec{p}]\cdot Q[{\cal N},R,I,H|\vec{p}]\cdot \phi[R_1, H | \vec{p},\{i\}])\\
                                      &&\\
                                      &= &\sum_{H \subseteq {\cal N} \setminus (R_1 \cup \{i,b\})} (P[I,H|\vec{p}] \cdot Q[{\cal N} \setminus \{b\},R,I,H|\vec{p}]\\
                                      & &\prod_{l \in I \setminus \{a\}} (1 - p_l[b]) (1 - p_a[b]) \phi[R_1, H | \vec{p},\{i\}])\\
                                      &+&\sum_{H \subseteq {\cal N} \setminus (R_1 \cup \{i,b\})} (P[I,H \cup \{b\}|\vec{p}] \cdot Q[{\cal N},R,I,H\cup \{b\}|\vec{p}] \cdot \phi[R_1, H\cup \{b\} | \vec{p},\{i\}])\\
                                      &&\\
                                      &= &\sum_{H \subseteq {\cal N} \setminus (R_1 \cup \{i,b\})} (P[I,H|\vec{p}] \cdot Q[{\cal N} \setminus \{b\},R,I,H|\vec{p}] \cdot 0\cdot \phi[R_1, H | \vec{p},\{i\}])\\
                                      &+&\sum_{H \subseteq {\cal N} \setminus (R_1 \cup \{i,b\})} (P[I,H \cup \{b\}|\vec{p}] \cdot Q[{\cal N},R,I,H\cup \{b\}|\vec{p}] \cdot 0)\\
                                      &&\\
                                      &=&0.
\end{array}
$$
This proves the induction step for $r-1$. Consequently, since $s = x_0$ and $x_r \in {\cal N}$ for every $r \in \{1 \ldots |x|-1\}$,
$$q_i[\vec{p}] = \phi[\emptyset,\{s\}|\vec{p},\{i\}] = 0.$$
\end{proof}
\newpage
\subsection{Positive Reliability}
Lemma~\ref{lemma:pprob} shows that if every node forwards messages with a positive probability, then every node
of the graph receives a message with positive probability as well.
\begin{lemma}
\label{lemma:pprob}
For any $\vec{p} \in {\cal P}$, if there exists $i \in {\cal N}$ and $x \in \pth[s,i]$ such that for every $r \in \{0\ldots |x|-1\}$ we have
$p_{x_r}[x_{r+1}]>0$, then $q_i[\vec{p}] <1$.
\end{lemma}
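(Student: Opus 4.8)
The plan is to mirror the recursion-and-induction argument of Lemma~\ref{lemma:prob-1}, but to replace the equalities $\phi=0$ by strict inequalities $\phi<1$ and to keep track of where the strictness is gained. The guiding intuition is that the single path $x$ already carries a positive amount of probability mass to $i$: since every edge satisfies $p_{x_r}[x_{r+1}]>0$, there is a positive-probability trajectory of the epidemic in which infection travels exactly along $x$, so $i$ is reached with positive probability and hence $q_i[\vec{p}]=\phi[\emptyset,\{s\}|\vec{p},\{i\}]<1$. As a first step I would record a helper inequality that isolates the last hop: for any $R,I$ with $i\notin R\cup I$, the factor $\prod_{l\in I}(1-p_l[i])$ appears in every $Q[{\cal N},R,I,H]$ of the recursion (because $i\in{\cal N}\setminus(H\cup R\cup I)$), so it can be pulled out; using $\phi[R\cup I,H]\le 1$ and Axiom~\ref{eq:epid-1} applied to the node set ${\cal N}\setminus\{i\}$ to collapse the remaining sum to $1$, one obtains
\[
\phi[R,I|\vec{p},\{i\}]\le\prod_{l\in I}(1-p_l[i]).
\]
Whenever the current front $I$ contains an in-neighbor $l$ of $i$ with $p_l[i]>0$, the right-hand side is at most $1-p_l[i]<1$, which already settles the base case (the node $x_{|x|-2}$ immediately before $i$).

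The core is an induction on the distance $d=|x|-1-m$, with induction hypothesis: for every $R,I$ with $i\notin R\cup I$ containing a path node $x_m\in I$ whose downstream tail $x_{m+1},\dots,x_{|x|-1}=i$ is still uninfected ($\notin R\cup I$), we have $\phi[R,I|\vec{p},\{i\}]<1$. For the inductive step I would split the recursion sum according to whether the next path node $b=x_{m+1}$ lies in the newly infected set $H$, exactly as in Lemma~\ref{lemma:prob-1}: writing $\beta=\prod_{l\in I}(1-p_l[b])$ and factoring the $k=b$ term out of $P$ and $Q$, the recursion rewrites as $\phi[R,I]=\beta S_1+(1-\beta)S_2$, where $S_1$ and $S_2$ are convex combinations (weights summing to $1$ by Axiom~\ref{eq:epid-1} over ${\cal N}\setminus\{b\}$) of the values $\phi[R\cup I,H_0]$ and $\phi[R\cup I,H_0\cup\{b\}]$ respectively. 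Since $x_m\in I$ forwards to $b$ with positive probability, $\beta\le 1-p_{x_m}[b]<1$; hence it suffices to prove the single strict bound $S_2<1$, because then $\phi[R,I]\le\beta+(1-\beta)S_2<1$.

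To obtain $S_2<1$ I would exhibit one summand of positive weight whose $\phi$ value is strictly below $1$. Choosing $H_0$ to be exactly the set of nodes (other than $b$ and $i$) that $I$ infects deterministically makes both $P[I,H_0]>0$ and $Q[{\cal N}\setminus\{b\},R,I,H_0]>0$, since every node left outside is infectable only with probability $<1$, and, unless $\prod_{l\in I}(1-p_l[i])=0$ (in which case the helper inequality finishes immediately), $i$ itself contributes a positive factor. The resulting front $H_0\cup\{b\}$ contains $b=x_{m+1}$ and possibly further path nodes; taking the largest index $m'\ge m+1$ with $x_{m'}\in H_0\cup\{b\}$, its tail is clear by maximality, so the induction hypothesis at distance $|x|-1-m'\le d-1$ gives $\phi[R\cup I,H_0\cup\{b\}]<1$. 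All other summands contribute $\phi\le 1$, so $S_2<1$. Finally, applying the statement with $R=\emptyset$ and $I=\{s\}=\{x_0\}$ (where the tail $x_1,\dots,i$ is uninfected because the path $x$ is simple) yields $q_i[\vec{p}]<1$.

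The main obstacle is the bookkeeping in the inductive step: selecting a next-infected set $H_0$ that is simultaneously of positive probability weight — which forces all deterministically infected nodes into $H_0$ and bars every non-infectable node — and that advances the front to a path node with a clean downstream tail so that the induction hypothesis applies. The delicate case is precisely when deterministic forwarding pushes extra path nodes into $H_0$; this is resolved by applying the induction hypothesis to the furthest-along path node in the new front rather than to $x_{m+1}$ itself, which also forces the use of strong induction on $d$.
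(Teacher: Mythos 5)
Your proposal is correct, and it shares the skeleton of the paper's own proof of Lemma~\ref{lemma:pprob} --- backward induction along the path $x$, a base case supplied by the positive last hop $p_{x_{|x|-2}}[i]>0$, and an inductive step that splits the one-step recursion for $\phi$ according to whether $b=x_{m+1}$ is newly infected --- but the way you close the inductive step is genuinely different. The paper's displayed computation never actually invokes its stated induction hypothesis: it replaces the factor $(1-p_a[b])$ by $1$, bounds both occurrences of $\phi$ by $1$, trades the front $I$ for $I\setminus\{a\}$ using Axiom~\ref{eq:epid-3}, and collapses the two sums with Axiom~\ref{eq:epid-1}. That route is shorter but fragile: the claimed strictness needs some surviving summand to be positive (not argued), and the term-by-term domination $P[I,\cdot]\,Q[{\cal N},R,I,\cdot]\leq P[I\setminus\{a\},\cdot]\,Q[{\cal N},R,I\setminus\{a\},\cdot]$ is not justified, since adding an infector \emph{increases} $P$ (the direction of Axiom~\ref{eq:epid-3} as stated is the opposite of what monotonicity gives), while it decreases $Q$. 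Your argument instead localizes the strictness in a single summand of positive weight --- the deterministically infected set $H_0^*$, whose weight is positive precisely when $\prod_{l\in I}(1-p_l[i])>0$, the complementary case being settled by your helper inequality $\phi[R,I|\vec{p},\{i\}]\leq\prod_{l\in I}(1-p_l[i])$ --- and applies a strong induction hypothesis to the furthest-along path node of the new front. This buys two things the paper's write-up does not deliver: the strictness bookkeeping is explicit, and the case in which deterministic forwarding pushes several path nodes into the new front is handled, whereas the paper's induction hypothesis (requiring all earlier path nodes to lie in $R\cup I$ and all later ones outside) could not even be applied to such a configuration. The one blemish is your claim that the weights $P[I,H_0]\,Q[{\cal N}\setminus\{b\},R,I,H_0]$ sum to $1$; they in fact sum to $\prod_{l\in I}(1-p_l[i])\leq 1$, but this is harmless, because whenever that product is strictly below $1$ your helper inequality already yields $\phi[R,I|\vec{p},\{i\}]<1$ without any recursion.
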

\begin{proof}
The proof goes by induction on $r$ where the induction hypothesis is
that, for every $r \in \{0\ldots |x|-2\}$, $R \subseteq {\cal N} \cup \{s\} \setminus \{i\}$, and $I \subseteq {\cal N} \cup \{s\} \setminus \{i\} \cup R$,
such that:
\begin{itemize}
  \item $x_r \in I$,
  \item for every $r' \in \{0\ldots r\}$, $x_{r'} \in R \cup I$,
  \item for every $r' \in \{r+1\ldots |x|-1\}$, $x_{r'} \in {\cal N} \setminus (R \cup I)$,
\end{itemize} 
we have $\phi[R,I|\vec{p},\{i\}] < 1$.

Consider the base case for $r=|x|-2$ and let $R_1 = R \cup I \cup \{i\}$. In this case, by Axiom~\ref{eq:epid-1}, for $j= x_{r}$,
$$
\begin{array}{lll}
\phi[R,I | \vec{p},\{i\}] &= &\sum_{H \subseteq {\cal N} \setminus R_1} (P[I,H|\vec{p}] \cdot Q[{\cal N},R,I,H|\vec{p}] \cdot \phi[R \cup I, H | \vec{p},\{i\}])\\
                                     & & \\
                                     &= &\sum_{H \subseteq {\cal N} \setminus R_1}(P[I,H|\vec{p}] \cdot Q[{\cal N}\setminus \{i\},R,I,H|\vec{p}] \cdot \\
                                     & & \prod_{l \in I \setminus \{j\}} (1 - p_l[i])(1-p_j[i])\phi[R \cup I, H | \vec{p},\{i\}])\\
				&&\\
                                     &< &\sum_{H \subseteq {\cal N} \setminus R_1}P[I,H|\vec{p}] \cdot Q[{\cal N},R,I,H|\vec{p}]\\
				&&\\
                                     &=&1.
\end{array}
$$
This proves the base case. Assume now that the hypothesis is true for every $r' \in \{0 \ldots r\}$ and some $r \in \{1 \ldots |x|-2\}$.
Let $a = x_{r-1}$ and $b = x_{r}$. Consider that $R_1 = R \cup I$ and $R_2 = R \cup H \cup I$.

We thus have by Axioms~\ref{eq:epid-1} and~\ref{eq:epid-3},
$$
\begin{array}{lll}
\phi[R,I | \vec{p},\{i\}] &= &\sum_{H \subseteq {\cal N} \setminus (R_1\cup \{i\})} (P[I,H|\vec{p}] \cdot Q[{\cal N},R,I,H|\vec{p}] \phi[R_1, H | \vec{p},\{i\}])\\
                                      &&\\
                                      &= &\sum_{H \subseteq {\cal N} \setminus (R_1 \cup \{i,b\})} (P[I,H|\vec{p}] \cdot Q[{\cal N} \setminus \{b\},R,I,H|\vec{p}]\\
                                      & & \prod_{l \in I} (1 - p_l[k])\prod_{l \in I \setminus \{a\}} (1 - p_l[k])(1 - p_a[b])\phi[R_1, H | \vec{p},\{i\}])\\
                                      &+&\sum_{H \subseteq {\cal N} \setminus (R_1 \cup \{i,b\})} (P[I,H\cup \{b\}|\vec{p}] \cdot Q[{\cal N},R,I,H \cup \{b\}|\vec{p}] \cdot \phi[R_1, H\cup \{b\} | \vec{p},\{i\}])\\
                                      &&\\
                                      &< &\sum_{H \subseteq {\cal N} \setminus (R_1 \cup \{i,b\})} (P[I,H|\vec{p}] \cdot Q[{\cal N} \setminus \{b\},R,I,H|\vec{p}]\\
                                      & & \prod_{l \in I} (1 - p_l[k])\prod_{l \in I \setminus \{a\}} (1 - p_l[k]) \cdot 1 \cdot \phi[R_1, H | \vec{p},\{i\}])\\
                                      &+&\sum_{H \subseteq {\cal N} \setminus (R_1 \cup \{i,b\})} (P[I \setminus \{a\},H\cup \{b\}|\vec{p}] \cdot Q[{\cal N},R,I \setminus \{a\},H \cup \{b\}|\vec{p}])\\
                                      &&\\
                                      &\leq &\sum_{H \subseteq {\cal N} \setminus (R_1 \cup \{i,b\})} (P[I \setminus \{a\},H|\vec{p}] \cdot Q[{\cal N},R,I \setminus \{a\},H|\vec{p}])\\
                                      &+&\sum_{H \subseteq {\cal N} \setminus (R_1 \cup \{i\})} (P[I \setminus \{a\},H\cup \{b\}|\vec{p}] \cdot Q[{\cal N},R,I \setminus \{a\},H \cup \{b\}|\vec{p}])\\
                                      &&\\
                                      &= &\sum_{H \subseteq {\cal N} \setminus (R_1 \cup \{i\})} (P[I \setminus \{a\},H|\vec{p}] \cdot Q[{\cal N},R,I \setminus \{a\},H|\vec{p}])\\
                                      &&\\
                                      & = &1.
\end{array}
$$
This proves the induction step for $r-1$. Consequently, since $s = x_0$ and $x_r \in {\cal N}$ for every $r \in \{1 \ldots |x|-1\}$,
$$q_i[\vec{p}] = \phi[\emptyset,\{s\}|\vec{p},\{i\}] < 1.$$
\end{proof}
\newpage
\subsection{Null Reliability}
Lemma~\ref{lemma:noneib} shows that if every in-neighbor of a node $i$ does not forward messages to $i$, then $q_i = 0$.
\begin{lemma}
\label{lemma:noneib}
If $\vec{p} \in {\cal P}$ is defined such that for some $i \in {\cal N}$ and for every $j \in {\cal N}_i^{-1}$
$p_j[i] = 0$, then $q_i[\vec{p}] = 1$.
\end{lemma}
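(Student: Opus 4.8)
The plan is to prove the more general claim that $\phi[R,I\,|\,\vec{p},\{i\}]=1$ for every pair $(R,I)$ arising in the recursion of Proposition~\ref{prop:non-delivery} with $i\notin R\cup I$, and then to specialize to $(R,I)=(\emptyset,\{s\})$, obtaining $q_i[\vec{p}]=\phi[\emptyset,\{s\}\,|\,\vec{p},\{i\}]=1$. First I would record the one structural consequence of the hypothesis that the whole argument rests on: since $p_l[i]>0$ forces $i\in{\cal N}_l$, i.e.\ $l\in{\cal N}_i^{-1}$, the assumption $p_j[i]=0$ for all $j\in{\cal N}_i^{-1}$ (together with the source obeying the same restriction) yields $p_l[i]=0$ for \emph{every} potential sender $l$. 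Consequently, in $Q[{\cal N},R,I,H\,|\,\vec{p}]$ the factor corresponding to $k=i$, namely $\prod_{l\in I}(1-p_l[i])$, is always equal to $1$, and moreover $i$ never enters $R\cup I$ along the recursion because every admissible $H\subseteq{\cal N}\setminus R'$ excludes $i$.

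I would then run an induction on $n=|{\cal N}\setminus(R\cup I\cup\{i\})|$, the number of still-susceptible nodes other than $i$, treating $I=\emptyset$ as an immediate base case via clause (i) of the definition of $\phi$. In the base case $n=0$ with $I\neq\emptyset$, the only admissible set is $H=\emptyset$, so the recursion collapses to $\phi[R,I\,|\,\vec{p},\{i\}]=P[I,\emptyset\,|\,\vec{p}]\,Q[{\cal N},R,I,\emptyset\,|\,\vec{p}]\,\phi[R\cup I,\emptyset\,|\,\vec{p},\{i\}]$, where $P[I,\emptyset\,|\,\vec{p}]=1$, $\phi[R\cup I,\emptyset\,|\,\vec{p},\{i\}]=1$, and $Q[{\cal N},R,I,\emptyset\,|\,\vec{p}]=\prod_{l\in I}(1-p_l[i])=1$ (here ${\cal N}\setminus(R\cup I)=\{i\}$); hence $\phi=1$. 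For the inductive step, each sub-call $\phi[R\cup I,H\,|\,\vec{p},\{i\}]$ has a strictly smaller value of $n$ when $H\neq\emptyset$ and equals $1$ by clause (i) when $H=\emptyset$, so by the induction hypothesis all sub-calls equal $1$, leaving $\phi[R,I\,|\,\vec{p},\{i\}]=\sum_{H\subseteq{\cal N}\setminus R'}P[I,H\,|\,\vec{p}]\,Q[{\cal N},R,I,H\,|\,\vec{p}]$.

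It then remains to show that this sum equals $1$. This is the telescoping identity underlying Axiom~\ref{eq:epid-1}, and it applies cleanly in our setting precisely because every $Q$ term has its $k=i$ factor equal to $1$: the sum therefore factorizes over the free nodes as $\prod_{k\in{\cal N}\setminus(R\cup I\cup\{i\})}\big[(1-\prod_{l\in I}(1-p_l[k]))+\prod_{l\in I}(1-p_l[k])\big]=1$, since each bracket is $1$. Thus $\phi[R,I\,|\,\vec{p},\{i\}]=1$, which completes the induction, and taking $(R,I)=(\emptyset,\{s\})$ gives $q_i[\vec{p}]=1$.

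The step I expect to require the most care is the bookkeeping around the protected node $i$: one must confirm that $i$ stays out of $R\cup I$ throughout (so that $i$ is genuinely ``susceptible forever'' in the recursion) and recognize that the vanishing forwarding probabilities into $i$ are exactly what make both the surviving $Q$-factor and the normalization sum collapse to $1$. In particular, without the hypothesis the same sum would equal the strictly smaller $\prod_{l\in I}(1-p_l[i])$, so invoking Axiom~\ref{eq:epid-1} is justified only once the $k=i$ factors have been observed to be trivial. Everything beyond this is routine manipulation of the recursion of Proposition~\ref{prop:non-delivery}.
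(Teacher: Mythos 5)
Your proof is correct and follows essentially the same route as the paper's: an induction along the recursion of Proposition~\ref{prop:non-delivery} (the paper inducts on $|R|+|I|$, you on the equivalent count of susceptible nodes other than $i$), with the identical base case where only $i$ remains susceptible and the same collapse of the normalization sum to $1$ in the inductive step. Your explicit observation that this sum equals $\prod_{l\in I}(1-p_l[i])$ in general---and hence equals $1$ precisely because of the hypothesis---is a point the paper glosses over by invoking Axiom~\ref{eq:epid-1} directly, but it is a refinement of, not a departure from, the same argument.
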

\begin{proof}

The proof goes by induction on $r$ where the induction hypothesis is
that for every $r \in \{0\ldots |{\cal N}|\}$, $R \subseteq {\cal N} \cup \{s\} \setminus \{i\}$, and $I \subseteq {\cal N} \cup \{s\} \setminus \{i\} \cup R$
such that $|R|+|I| \leq |{\cal N}|+1-r$, we have $\phi[R,I|\vec{p},\{i\}] = 1$.

The base case is for $r=0$, where we have by the definition of $\vec{p}$.
$$\phi[R,I|\vec{p},\{i\}] =  \prod_{l \in I} (1 - p_l[i]) \phi[R \cup I, \emptyset | \vec{p},\{i\}] =  \prod_{l \in I} (1 - 0)\cdot 1 = 1.$$

Assume the induction hypothesis for any $r \in \{0 \ldots|{\cal N}|-1\}$. Consider any two $R$ and $I$ defined
as above for $r+1$, such that $|R|+|I| = |{\cal N}|-r$. Let $R_1 = R \cup I \cup \{i\}$ and $R_2 = H \cup R \cup I$.
It is true by Axiom~\ref{eq:epid-1} that:
$$
\begin{array}{ll}
\phi[R,I | \vec{p},\{i\}] = &\sum_{H \subseteq {\cal N} \setminus R_1}P[I,H|\vec{p}] Q[{\cal N},R,I,H|\vec{p}]\phi[R \cup I, H | \vec{p},\{i\}])\\
                                     &\\
                                     &=\sum_{H \subseteq {\cal N} \setminus R_1}P[I,H|\vec{p}] Q[{\cal N},R,I,H|\vec{p}]\cdot 1\\
                                     &\\
                                     & = 1.
\end{array}                                     
$$
Therefore, for $r=|{\cal N}|$, $q_i[\vec{p}] = \phi[\emptyset,\{s\}|\vec{p},\{i\}]=1$.
\end{proof}
\newpage
\subsection{Uniform Reliability}
Lemma~\ref{lemma:bottleneck-impact} shows that for some node $i$ and for every path $x$ from $s$ to and $i$,
every node of $x$ does not change its probability from $\vec{p}$ to $\vec{p}'$, then $q_i[\vec{p}] = q_i[\vec{p}']$.
As an intermediate step, Lemma~\ref{lemma:bt-aux} proves that the reliability is the same whenever $I$ and $R$
contain the same set of nodes from any path from $s$ to $i$.

For any $i \in {\cal N}$, $\vec{p} \in {\cal P}$, and $K \subseteq {\cal N}$, let 
$$D_i = \{j \in {\cal N} \cup \{s\} \setminus \{i\}| \pth[j,i] \neq \emptyset\},$$
and:
$$p[I,k] = 1 - \prod_{j \in I}(1-p_j[k]).$$

Therefore, it is possible to write:
$$
\begin{array}{l}
P[I,H|\vec{p}] = \prod_{k \in H} p[I,k].\\
Q[{\cal N},R,I,H|\vec{p}] = \prod_{k \in {\cal N} \setminus (R \cup I \cup H)} (1-p[I,k]).
\end{array}
$$

For any $L_1,L_2 \subseteq {\cal N} \setminus D_i$ such that $L_1 \cap L_2 = \emptyset$, 
since for every $j \in L$ we have $\pth[j,i] = \emptyset$ and $p_j[k] =0$ for every $k \in D_i$, and for every $H \subseteq D_i$,
\begin{equation}
\label{eq:bta-1}
\begin{array}{l}
p[I \cup L_1,k] = p[I,k].\\
P[I \cup L_1,H|\vec{p}] = P[I,H|\vec{p}].\\
Q[{\cal N},R \cup L_1,I \cup L_2,H|\vec{p}] = Q[{\cal N},I,H|\vec{p}].
\end{array}
\end{equation}

\begin{lemma}
\label{lemma:bt-aux}
For every $\vec{p} \in {\cal P}$, $i \in {\cal N}$, $R \subseteq {\cal N} \cup \{s\} \setminus \{i\}$, $I \subseteq D_i \setminus R$,
and $L_1,L_2 \subseteq {\cal N} \setminus (D_i \cup R)$ such that $L_1 \cap L_2 = \emptyset$,
$$\phi[R\cup L_1,I\cup L_2|\vec{p},\{i\}] = \phi[R,I|\vec{p},\{i\}].$$
\end{lemma}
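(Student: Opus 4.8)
The plan is to prove a stronger structural fact: the value of $\phi[R,I|\vec{p},\{i\}]$ depends only on the restrictions $R\cap D_i$ and $I\cap D_i$, because the epidemic confined to $D_i$ (the set of nodes that can reach $i$) evolves independently of the rest of the graph. The single ingredient driving everything is that no node outside $D_i$ can forward to a node inside $D_i$: if $p_j[k]>0$ with $k\in D_i$, then concatenating the edge $j\to k$ with a path in $\pth[k,i]$ yields a path in $\pth[j,i]$, so $j\in D_i$. Equivalently, for $j\in{\cal N}\setminus D_i$ and $k\in D_i$ we have $p_j[k]=0$, which is exactly what powers the identities in~\ref{eq:bta-1}.

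First I would establish the auxiliary claim that $\phi[R,J|\vec{p},\{i\}]=1$ for every $R$ and every $J\subseteq{\cal N}\setminus D_i$, by induction on $|({\cal N}\cup\{s\})\setminus R|$. The base case $J=\emptyset$ is the first clause of Proposition~\ref{prop:non-delivery}. For the step, observe that $P[J,H|\vec{p}]=\prod_{k\in H}p[J,k]=0$ whenever $H$ meets $D_i$, since $p[J,k]=0$ for $k\in D_i$; hence only $H\subseteq{\cal N}\setminus D_i$ contribute, the induction hypothesis gives $\phi[R\cup J,H|\vec{p},\{i\}]=1$ for those, and Axiom~\ref{eq:epid-1} makes the surviving weights sum to $1$. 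Intuitively, an infection seeded outside $D_i$ stays outside $D_i$ and never reaches $i$.

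Then I would prove the lemma by strong induction on $|({\cal N}\cup\{s\})\setminus R|$. The case $I=\emptyset$ is immediate: the right side is $1$ by definition, and the left side $\phi[R\cup L_1,L_2|\vec{p},\{i\}]$ equals $1$ by the auxiliary claim, since $L_2\subseteq{\cal N}\setminus D_i$. For $I\neq\emptyset$, I would expand one step of the recursion on both sides and split each summation index as $H=H_D\cup H_{\bar D}$ with $H_D=H\cap D_i$ and $H_{\bar D}=H\setminus D_i$. Since $I\neq\emptyset$, every recursive term has recovered set containing $R\cup I\supsetneq R$, so the induction hypothesis applies; adjoining $L_1\cup L_2$ to the recovered set and $H_{\bar D}$ to the infected set (all in ${\cal N}\setminus D_i$) collapses $\phi[R\cup L_1\cup I\cup L_2,\,H_D\cup H_{\bar D}|\vec{p},\{i\}]$ to $\phi[R\cup I,H_D|\vec{p},\{i\}]$, and symmetrically on the right. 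Thus the $\phi$-factor depends only on $H_D$.

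It remains to match the weights. Applying the partition axiom~\ref{eq:epid-2} with $A=D_i$ and $B={\cal N}\setminus D_i$ factors $\sum_H P\cdot Q$ into a $D_i$-sum times a $({\cal N}\setminus D_i)$-sum, and the latter equals $1$ by Axiom~\ref{eq:epid-1}. On the surviving $D_i$-sum the identities~\ref{eq:bta-1} give $P[I\cup L_2,H_D|\vec{p}]=P[I,H_D|\vec{p}]$ and, since $L_1,L_2\subseteq{\cal N}\setminus D_i$ neither belong to the index set $D_i\setminus(R\cup I\cup H_D)$ nor alter $p[\cdot,k]$ for $k\in D_i$, that $Q[D_i,R\cup L_1,I\cup L_2,H_D|\vec{p}]=Q[D_i,R,I,H_D|\vec{p}]$. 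Hence both sides reduce to the identical expression $\sum_{H_D\subseteq D_i\setminus(R\cup I\cup\{i\})}P[I,H_D|\vec{p}]\,Q[D_i,R,I,H_D|\vec{p}]\,\phi[R\cup I,H_D|\vec{p},\{i\}]$, which proves the lemma. I expect the main obstacle to be precisely this last bookkeeping: keeping the two levels of indexing straight, confirming that adjoining $({\cal N}\setminus D_i)$-nodes leaves the $D_i$-restricted factors $P$ and $Q$ untouched, and verifying that each appeal to the induction hypothesis is to a strictly smaller instance.
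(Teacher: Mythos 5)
Your proposal is correct and follows essentially the same route as the paper's proof: the same key observation that nodes outside $D_i$ never forward into $D_i$ (powering the identities in~\ref{eq:bta-1}), the same auxiliary fact~\ref{eq:bta-2}, the same partition of the infection sum via Axiom~\ref{eq:epid-2} with $A=D_i$, summing out the outside part with Axiom~\ref{eq:epid-1}, and an induction driven by the strictly growing recovered set. The only differences are cosmetic — you prove the auxiliary claim explicitly where the paper merely asserts it, you collapse both sides to a common $D_i$-restricted expression rather than chaining equalities from left to right, and you index the induction by $|({\cal N}\cup\{s\})\setminus R|$ with base case $I=\emptyset$ instead of the paper's counter $r$.
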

\begin{proof}
Fix $\vec{p}$ and $i$. First notice that for every $R \subseteq {\cal N}$ and $L \subseteq {\cal N} \setminus (D_i \cup R)$,
\begin{equation}
\label{eq:bta-2}
\phi[R,L|\vec{p},\{i\}] = 1.
\end{equation}

We now prove by induction that for every $R \subseteq {\cal N} \cup \{s\}$, $I \subseteq D_i \setminus R$,
and $L_1, L_2\subseteq {\cal N} \setminus (D_i \cup R)$ such that $L_1\cap L_2 = \emptyset$,
$$\phi[R, L_1,I\cup L_2|\vec{p},\{i\}] = \phi[R,I|\vec{p},\{i\}].$$

The induction goes on $r \in \{0\ldots |{\cal N}|\}$, where $|R| + |I| + |L| = |{\cal N}| + 1 - r$, where $L = L_1 \cup L_2$.
For $r=0$, by Axiom~\ref{eq:epid-1}, \ref{eq:bta-1} and~\ref{eq:bta-2}, we can write:
\begin{equation}
\label{eq:bta-3}
\begin{array}{ll}
\phi[R \cup L_1,I\cup L_2 | \vec{p},\{i\}] &= (1 - p[I\cup L_1,i])\phi[R\cup I \cup L,\emptyset | \vec{p},\{i\}]\\
                                                &= (1-p[I,i])\\
                                                &= (1 - p[I,i])\sum_{H \subseteq L } \phi[R \cup I,H|\vec{p},\{i\}]\\
                                                &= (1 - p[I,i])\sum_{H \subseteq {\cal N} \setminus (R \cup I \cup \{i\})}(P[I,H|\vec{p}]\cdot Q[{\cal N},R,I,H|\vec{p}]\\
                                                &\cdot \phi[R \cup I,H|\vec{p},\{i\}])\\
                                                &= \phi[R,I | \vec{p},\{i\}].
\end{array}
\end{equation}
This proves the base case.

Assume the induction hypothesis for every $r' \in  \{0 \ldots r\}$  and $r \in \{0\ldots |{\cal N}|-1\}$.
By Axioms~\ref{eq:epid-1} and~\ref{eq:epid-2}, and by~\ref{eq:bta-1} and the induction hypothesis, we can write:
$$
\begin{array}{ll}
\phi[R \cup L_1,I\cup L_2 | \vec{p},\{i\}] &=\sum_{H \subseteq {\cal N} \setminus (R \cup I \cup L \cup \{i\})}(\\
                                                &P[I \cup L_2,H|\vec{p}] \cdot Q[{\cal N},R \cup L_1,I \cup L_2,H|\vec{p}]  \cdot\\
                                                & \phi[R\cup I \cup L,H | \vec{p},\{i\}])\\
                                                &\\
                                                &=\sum_{H_1 \subseteq D_i \setminus (R \cup I \cup \{i\})}(\\
                                                &P[I \cup L_2,H_1|\vec{p}] \cdot Q[D_i,R \cup L_1,I \cup L_2,H_1|\vec{p}]\\
                                                &\sum_{H_2 \subseteq {\cal N} \setminus (R \cup L \cup \{i\} \cup D_i)}(\\
                                                &P[I \cup L_2,H_2|\vec{p}] \cdot Q[{\cal N} \setminus D_i,R \cup L_1,I \cup L_2,H_2|\vec{p}]\\
                                                & \phi[R\cup I \cup L,H_1 \cup H_2 | \vec{p},\{i\}]))\\
                                                &\\
                                                &=\sum_{H_1 \subseteq D_i \setminus (R \cup I \cup L \cup \{i\})}(\phi[R\cup I,H_1| \vec{p},\{i\}]\\
                                                &P[I,H_1|\vec{p}] \cdot Q[D_i,R \cup L_1,I \cup L_2,H_1|\vec{p}]\\
                                                &\sum_{H_2 \subseteq {\cal N} \setminus (R \cup L \cup \{i\} \cup D_i)}(\\
                                                &P[I \cup L_2,H_2|\vec{p}] \cdot Q[{\cal N} \setminus D_i,R \cup L_1,I \cup L_2,H_2|\vec{p}]))\\
                                                &\\
                                                &=\sum_{H_1 \subseteq D_i \setminus (R \cup I \cup L \cup \{i\})}(\phi[R\cup I,H_1| \vec{p},\{i\}]\\
                                                &P[I,H_1|\vec{p}] \cdot Q[D_i,R,I,H_1|\vec{p}])\\
                                                &\\
                                                &=\sum_{H_1 \subseteq D_i \setminus (R \cup I \cup L \cup \{i\})}(\\
                                                &P[I ,H_1|\vec{p}] \cdot Q[D_i,R,I ,H_1|\vec{p}]\\
                                                &\sum_{H_2 \subseteq {\cal N} \setminus (R \cup L \cup \{i\} \cup D_i)}(\\
                                                &P[I,H_2|\vec{p}] \cdot Q[{\cal N} \setminus D_i,R,I,H_2|\vec{p}]\phi[R\cup I,H_1 \cup H_2| \vec{p},\{i\}]))\\
                                                &\\
                                                &=\sum_{H \subseteq {\cal N} \setminus (R \cup I \cup \{i\})}(\\
                                                &P[I,H|\vec{p}] \cdot Q[{\cal N},R,I,H|\vec{p}] \cdot \phi[R \cup I,H | \vec{p},\{i\}])\\
                                                &\\
                                                &=\phi[R,I|\vec{p},\{i\}].                                   
\end{array}
$$
This proves the result.
\end{proof}

\newpage
\begin{lemma}
\label{lemma:bottleneck-impact}
Let $\vec{p},\vec{p}' \in {\cal P}$ be any two profiles of probabilities such that 
for some $i\in{\cal N}$, for every $x \in \pth[s,i]$, and for every $j \in x$, 
$\vec{p}_j = \vec{p}'_j$. Then, $q_i[\vec{p}] = q_i[\vec{p}']$.
\end{lemma}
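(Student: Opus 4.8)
The plan is to reduce the full infection process to the process restricted to the ancestors of $i$, and then observe that the hypothesis forces the two profiles to agree on precisely the probabilities that this restricted process uses. Recall the set $D_i = \{j \in {\cal N} \cup \{s\} \setminus \{i\} \mid \pth[j,i] \neq \emptyset\}$ of nodes that can reach $i$. First I would translate the hypothesis into a statement about $D_i$: since $G$ is connected from $s$, any $j \in D_i$ admits a path $s \rightsquigarrow j$ that can be concatenated with a path $j \rightsquigarrow i$, exhibiting $j$ on a path from $s$ to $i$; conversely any node on such a path can reach $i$ and so lies in $D_i$. Hence ``$\vec{p}_j = \vec{p}'_j$ for every $j$ on some $x \in \pth[s,i]$'' is the same as $\vec{p}_j = \vec{p}'_j$ for every $j \in D_i$ (together with $j=i$, whose own probabilities do not affect $q_i$). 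In particular every $p_l[k]$ with $l \in D_i$ and $k \in D_i \cup \{i\}$ coincides under $\vec{p}$ and $\vec{p}'$.

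Then I would prove, by induction on the number of susceptible nodes ${\cal N}\setminus(R\cup I)$, that $\phi[R,I|\vec{p},\{i\}] = \phi[R,I|\vec{p}',\{i\}]$ for every $R \subseteq {\cal N} \cup \{s\} \setminus \{i\}$ and every $I \subseteq D_i \setminus R$. The base case $I=\emptyset$ is immediate, since $\phi[R,\emptyset|\vec{p},\{i\}] = 1$ for both profiles. For the inductive step the point is to rewrite the recursion of Proposition~\ref{prop:non-delivery} so that only agreeing probabilities appear. Using Lemma~\ref{lemma:bt-aux}, each subterm $\phi[R\cup I,H|\vec{p},\{i\}]$ equals $\phi[R\cup I,H\cap D_i|\vec{p},\{i\}]$, so the contribution of the part of $H$ lying outside $D_i$ can be stripped. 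Writing $H = H_1 \sqcup H_2$ with $H_1 \subseteq D_i$ and $H_2 \subseteq {\cal N}\setminus(D_i\cup\{i\})$, the factor $P[I,H|\vec{p}]$ splits as $P[I,H_1|\vec{p}]\,P[I,H_2|\vec{p}]$, and by axiom~\ref{eq:epid-2} applied to the partition $(D_i\cup\{i\},\,{\cal N}\setminus(D_i\cup\{i\}))$ the factor $Q[{\cal N},R,I,H|\vec{p}]$ splits into a part over $D_i\cup\{i\}$ and a part over ${\cal N}\setminus(D_i\cup\{i\})$; by axiom~\ref{eq:epid-1} the sum over $H_2$ of the latter part collapses to $1$. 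What survives is
$$\phi[R,I|\vec{p},\{i\}] = \sum_{H_1 \subseteq D_i \setminus (R \cup I \cup \{i\})} P[I,H_1|\vec{p}]\,Q[D_i \cup \{i\},R,I,H_1|\vec{p}]\,\phi[R\cup I,H_1|\vec{p},\{i\}],$$
in which every factor depends only on $p_l[k]$ with $l \in I \subseteq D_i$ and $k \in D_i \cup \{i\}$, exactly the probabilities on which the two profiles agree. Applying the induction hypothesis to each $\phi[R\cup I,H_1|\cdot,\{i\}]$ gives equality of the two sums. Specializing to $R=\emptyset$, $I=\{s\}$ (legitimate since $s \in D_i$ by connectivity) then yields $q_i[\vec{p}] = \phi[\emptyset,\{s\}|\vec{p},\{i\}] = \phi[\emptyset,\{s\}|\vec{p}',\{i\}] = q_i[\vec{p}']$.

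I expect the main obstacle to be the bookkeeping in the inductive step: splitting $H$ and the $Q$-product so that the contributions of nodes outside $D_i$ normalize to $1$, while the factor $\prod_{l\in I}(1-p_l[i])$ governing $i$'s own escape from infection is retained inside $Q[D_i\cup\{i\},\dots]$ and seen to involve only the agreeing probabilities $p_l[i]$, $l \in D_i$. This is where it is essential to carry $D_i \cup \{i\}$ rather than $D_i$ through the factorization. Everything else is routine once Lemma~\ref{lemma:bt-aux} and axioms~\ref{eq:epid-1}--\ref{eq:epid-2} are in hand.
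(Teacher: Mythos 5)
Your proposal is correct and follows essentially the same route as the paper's own proof: you translate the hypothesis into agreement of the two profiles on the ancestor set $D_i$, and then prove $\phi[R,I|\vec{p},\{i\}] = \phi[R,I|\vec{p}',\{i\}]$ by induction, using Lemma~\ref{lemma:bt-aux} to strip the nodes outside $D_i$ and Axioms~\ref{eq:epid-1} and~\ref{eq:epid-2} to factor and collapse their contribution, exactly as the paper does. Your only deviations are cosmetic refinements --- restricting the induction hypothesis to $I \subseteq D_i$ and carrying $D_i \cup \{i\}$ rather than $D_i$ through the factorization, which keeps the factor $\prod_{l \in I}(1-p_l[i])$ in the retained part so that the sum over the outside sets collapses exactly to $1$, a point the paper's own chain of equalities treats loosely.
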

\begin{proof}
Assume this to be the case for a fixed $i$, $\vec{p}$, and $\vec{p}'$. Then, for every $x \in \pth[s,i]$ and $j \in x$,
it is true that, for every $k \in {\cal N}_j^{-1}$, there exists $x' \in \pth[s,i]$
such that 
\begin{equation}
\label{eq:bi-1}
k \in x' \land p_k[j] = p_k'[j].
\end{equation}

Define $p'[I,k]$ as in~\ref{eq:bta-1}, but for $\vec{p}'$.
Condition~\ref{eq:bi-1} implies that for every $I \subseteq {\cal N} \cup \{s\}$ and $k \in D_i \cup \{i\}$:
\begin{equation}
\label{eq:bi-2}
p'[I,k] = p[I,k].
\end{equation}

The rest of the proof is performed by induction on $r$ where the induction hypothesis is
that for every $r \in \{0\ldots|{\cal N}|\}$, $R \subseteq {\cal N} \cup \{s\} \setminus \{i\}$, and $I \subseteq {\cal N} \cup \{s\} \setminus \{i\} \cup R$
such that $|R|+|I| \leq |{\cal N}|+1-r$, we have $\phi[R,I|\vec{p},\{i\}] = \phi[R,I|\vec{p}',\{i\}]$.

The base case is for $r=0$, where we have by~\ref{eq:bi-2} and the definition of $\vec{p}$ and $\vec{p}'$:
$$
\begin{array}{ll}
\phi[R,I|\vec{p},\{i\}] & = p[I,i] \phi[R \cup I, \emptyset | \vec{p},\{i\}]\\
                                   &= p'[I,i] \phi[R \cup I, \emptyset | \vec{p}',\{i\}]\\
                                   & = \phi[R,I|\vec{p}',\{i\}].
\end{array}
$$

This proves the base case. Now, assume the induction hypothesis for some $r \in \{0\ldots |{\cal N}|-1\}$.
It is true by Lemma~\ref{lemma:bt-aux}, by Axioms~\ref{eq:epid-1} and~\ref{eq:epid-2}, by~\ref{eq:bi-2}, and the induction hypothesis that:
$$
\begin{array}{ll}
\phi[R,I | \vec{p},\{i\}] &=\sum_{H \subseteq {\cal N} \setminus (R \cup I \cup \{i\})}( P[I,H|\vec{p}] \cdot Q[{\cal N},R,I,H|\vec{p}] \cdot\phi[R\cup I,H | \vec{p},\{i\}])\\
                                                &\\
                                                   &=\sum_{H_1 \subseteq D_i \setminus (R \cup I \cup \{i\})}(\\
                                                &P[I,H_1|\vec{p}] \cdot Q[D_i,R,I,H_1|\vec{p}]\\
                                                &\sum_{H_2 \subseteq {\cal N} \setminus (R \cup \{i\} \cup D_i)}(\\
                                                &P[I,H_2|\vec{p}] \cdot Q[{\cal N} \setminus D_i,R,I,H_2|\vec{p}]\\
                                                & \phi[R\cup I,H_1 \cup H_2 | \vec{p},\{i\}]))\\
                                                &\\
                                                &=\sum_{H_1 \subseteq D_i \setminus (R \cup I \cup \{i\})}(\phi[R\cup I,H_1| \vec{p},\{i\}]\\
                                                &P[I,H_1|\vec{p}] \cdot Q[D_i,R,I,H_1|\vec{p}]\\
                                                &\sum_{H_2 \subseteq {\cal N} \setminus (R \cup \{i\} \cup D_i)}(\\
                                                &P[I,H_2|\vec{p}] \cdot Q[{\cal N} \setminus D_i,R,I,H_2|\vec{p}]))\\
                                                &\\
                                                &=\sum_{H_1 \subseteq D_i \setminus (R \cup I \cup \{i\})}(\phi[R\cup I,H_1| \vec{p}',\{i\}]\\
                                                &P[I,H_1|\vec{p}'] \cdot Q[D_i,R,I,H_1|\vec{p}'])\\
                                                &\\
                                                &=\sum_{H_1 \subseteq D_i \setminus (R \cup I \cup \{i\})}(\\
                                                &P[I ,H_1|\vec{p}'] \cdot Q[D_i,R,I ,H_1|\vec{p}']\\
                                                &\sum_{H_2 \subseteq {\cal N} \setminus (R \cup L \cup \{i\} \cup D_i)}(\\
                                                &P[I,H_2|\vec{p}'] \cdot Q[{\cal N} \setminus D_i,R,I,H_2|\vec{p}']\phi[R\cup I,H_1 \cup H_2| \vec{p}',\{i\}]))\\
                                                &\\
                                                &=\sum_{H \subseteq {\cal N} \setminus (R \cup I \cup \{i\})}(\\
                                                &P[I,H|\vec{p}'] \cdot Q[{\cal N},R,I,H|\vec{p}'] \cdot \phi[R \cup I,H | \vec{p}',\{i\}])\\
                                                &\\
                                                &=\phi[R,I|\vec{p}',\{i\}].                                   
\end{array}
$$
This concludes the proof by induction. Therefore, for $r=|{\cal N}|$, 
$$q_i[\vec{p}] = \phi[\emptyset,\{s\}|\vec{p},\{i\}]=\phi[\emptyset,\{s\}|\vec{p}',\{i\}] = q_i[\vec{p}'].$$
\end{proof}

\subsection{Single Impact}
Lemma~\ref{lemma:single-impact} provides an upper bound for the impact in the reliability when a single in-neighbor $j$ punishes node $i$.

\begin{lemma}
\label{lemma:single-impact}
For every $i \in {\cal N}$, $j \in {\cal N}_i^{-1}$, $\vec{p} \in {\cal P}$ such that $p_j[i] < 1$ and $q_i[\vec{p}]>0$, if $\vec{p}'$ is the profile
where only $j$ deviates from $p_j[i]$ to $p_j[i]' < p_j[i]$, then
$$q_i[\vec{p}'] \leq q_i[\vec{p}] \frac{1-p_{j}[i]'}{1-p_{j}[i]}.$$
\end{lemma}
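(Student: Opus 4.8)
The plan is to fix every entry of $\vec{p}$ except $p_j[i]$ and to regard the non-delivery probability $q_i[\vec{p}] = \phi[\emptyset,\{s\}|\vec{p},\{i\}]$ from Proposition~\ref{prop:non-delivery} as a function $q_i(t)$ of the single real variable $t = 1 - p_j[i] \in (0,1]$. I would show that this function is affine, i.e. that there are constants $b,c \ge 0$ — depending on all the other (fixed) forwarding probabilities but not on $t$ — with $q_i(t) = b + c\,t$. Granting this, the lemma is immediate: writing $t = 1 - p_j[i]$ and $t' = 1 - p_j[i]'$, so that $q_i[\vec{p}] = q_i(t)$ and $q_i[\vec{p}'] = q_i(t')$, we have $t' > t > 0$ because $p_j[i]' < p_j[i] < 1$; since $b \ge 0$ the map $s \mapsto q_i(s)/s = b/s + c$ is non-increasing on $(0,1]$, whence $q_i(t')/t' \le q_i(t)/t$. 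Multiplying by $t'>0$ gives exactly $q_i[\vec{p}'] \le q_i[\vec{p}]\,\frac{1-p_j[i]'}{1-p_j[i]}$ (the hypothesis $q_i[\vec{p}]>0$ guarantees the stated ratio is meaningful).

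The affineness is where the work lies, and I would establish it by inspecting how $p_j[i]$ enters the recursion of Proposition~\ref{prop:non-delivery}. Since $L = \{i\}$, the target $i$ never belongs to any infected set $I$ or $H$ occurring in the expansion of $\phi[\emptyset,\{s\}|\vec{p},\{i\}]$ (indeed $H \subseteq {\cal N}\setminus R'$ with $i \in R'$). Consequently the entry $p_j[i]$ can appear only in the factors $Q[{\cal N},R,I,H|\vec{p}] = \prod_{k \in {\cal N}\setminus(H\cup R\cup I)}\prod_{l\in I}(1-p_l[k])$, and there only in the sub-factor indexed by $k=i$, $l=j$, namely as $(1-p_j[i]) = t$, and only at a step for which $j \in I$. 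All other occurrences of $p_j[\cdot]$ are entries $p_j[k]$ with $k\neq i$, which are identical in $\vec{p}$ and $\vec{p}'$ and hence constant. Because each node joins the infected set at exactly one step along any branch of the recursion tree (after which it is permanently in $R$), the factor $t$ occurs at most once, and to the first power, in every complete product contributing to $\phi$. Thus $q_i$ is a polynomial of degree at most one in $t$, i.e. $q_i(t) = b + c\,t$. Both coefficients are sums of products of probabilities, hence nonnegative; equivalently $b = q_i|_{p_j[i]=1}\ge 0$ and $c = q_i|_{p_j[i]=0}-q_i|_{p_j[i]=1}\ge 0$, the latter because lowering a forwarding probability can only increase a non-delivery probability.

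To make the degree-one claim fully rigorous within the paper's formalism I would mirror the inductive arguments already used for Lemmas~\ref{lemma:prob-1}, \ref{lemma:pprob}, and~\ref{lemma:noneib}: an induction over the states $(R,I)$ of the recursion showing that $\phi[R,I|\vec{p},\{i\}]$ is affine in $t$, where the constant term collects the branches on which $j$ is never infected and the coefficient of $t$ collects those on which $j$ does become infected. The main obstacle is the bookkeeping that $j$ contributes the factor $t$ exactly once; I would handle it by splitting the sum over $H$ at each step according to whether $j \in I$ already, has just been infected ($j \in H$), or is still susceptible, and applying the induction hypothesis to the child states in each case. Once affineness with $b,c\ge 0$ is in hand, the concluding inequality follows as in the first paragraph.
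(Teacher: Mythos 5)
Your proof is correct, but it reaches the bound through a different key fact than the paper does. The paper's own proof takes the desired ratio inequality itself as the inductive invariant: it inducts over the states $(R,I)$ of the recursion in Proposition~\ref{prop:non-delivery}, showing $\phi[R,I|\vec{p}',\{i\}] \leq \phi[R,I|\vec{p},\{i\}]\,\frac{1-p_{j}'[i]}{1-p_{j}[i]}$ in each state, using exactly the trichotomy you describe: if $j \in R$ the two sides are equal (and the ratio exceeds $1$); if $j \in I$ the factor $(1-p_j[i])$ occurs exactly once, in the $Q$-term with $k=i$, and is replaced by $(1-p_j'[i])$, the child states being unaffected because $j$ then belongs to $R$; if $j$ is still susceptible, the weights $P$ and $Q$ are identical under $\vec{p}$ and $\vec{p}'$ and the induction hypothesis is applied to the children. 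You instead extract a structural statement --- $q_i$ is affine in $t=1-p_j[i]$, $q_i(t)=b+ct$ with $b,c\geq 0$ --- by what amounts to the same induction and the same case analysis, and then finish with the elementary observation that $s \mapsto b/s+c$ is non-increasing. What your route buys: the affineness is a cleaner, reusable characterization of how $q_i$ depends on a single forwarding probability into the target, it makes the slack explicit (equality holds precisely when $b=q_i|_{p_j[i]=1}=0$), and the concluding step involves no further combinatorics. What the paper's route buys: it never needs to formalize the ``degree at most one'' bookkeeping as an invariant, since the ratio inequality propagates through each case on its own. The total work is comparable, and your induction, once written out, would track the recursion in essentially the same way as the paper's.
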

\begin{proof}
Fix $i$, $j$, $\vec{p}$, and $\vec{p}'$.
The proof shows by induction that for every $r \in \{0 \ldots |{\cal N}|\}$, $R \subseteq {\cal N} \cup \{s\} \setminus \{i\}$ and
$I \subseteq  {\cal N} \cup \{s\} \setminus \{i\} \cup R$ such that $|R|+|I|- r =|{\cal N}|-1$: 
$$\phi[R,I|\vec{p}',\{i\}] \leq \phi[R,I|\vec{p},\{i\}] \frac{1-p_{j}'[i]}{1-p_{j}[i]}.$$

Notice that by the definition of $\phi$, if $j \in R$, then
\begin{equation}
\label{eq:si-1}
\phi[R,I|\vec{p}',\{i\}] = \phi[R,I|\vec{p},\{i\}].
\end{equation}

For the base case $r=0$, if $j \in R$, then the result follows immediately.
Thus, consider that $j \in I$. We can write
$$
\begin{array}{ll}
\phi[R,I|\vec{p}',\{i\}] & = p[I,i] \phi[R \cup I, \emptyset | \vec{p},\{i\}]\\
                                   &= p[I \setminus \{j\},i] \cdot p'[\{j\},i]\\
                                   &= p[I \setminus \{j\},i] \cdot (1-p_j[i])\frac{1-p_{j}'[i]}{1-p_{j}[i]}\\
                                   &= p[I,i]\frac{1-p_{j}'[i]}{1-p_{j}[i]}\\
                                   & = \phi[R,I|\vec{p},\{i\}]\frac{1-p_{j}'[i]}{1-p_{j}[i]}.
\end{array}
$$
This proves the induction step for $r=0$. Assume now that the induction hypothesis
is true for every $r' \in \{0 \ldots r\}$ and for some $r \in \{0 \ldots|{\cal N}|-1\}$.

If $j \notin I$, then by the induction hypothesis and by~\ref{eq:si-1},
$$
\begin{array}{ll}
\phi[R,I|\vec{p}',\{i\}] &=\sum_{H \subseteq {\cal N} \setminus (R \cup I \cup \{i\})}(P[I,H|\vec{p}']\cdot Q[{\cal N},R,I,H|\vec{p}'] \cdot \phi[R\cup I,H | \vec{p}',\{i\}])\\
                                                &\\
                                                & = \sum_{H \subseteq {\cal N} \setminus (R \cup I \cup \{i\})}(P[I,H|\vec{p}]\cdot Q[{\cal N},R,I,H|\vec{p}] \cdot \phi[R\cup I,H | \vec{p}',\{i\}])\\
                                                &\\
                                               & \leq \sum_{H \subseteq {\cal N} \setminus (R \cup I \cup \{i\})}(P[I,H|\vec{p}]\cdot Q[{\cal N},R,I,H|\vec{p}] \cdot \phi[R\cup I,H | \vec{p},\{i\}] \frac{1-p_{j}'[i]}{1-p_{j}[i]})\\
                                                &\\
                                                & \leq \phi[R,I|\vec{p},\{i\}] \frac{1-p_{j}'[i]}{1-p_{j}[i]}.
\end{array}
$$
For the final case where $j \in I$, we have by~\ref{eq:si-1}:
$$
\begin{array}{ll}
\phi[R,I|\vec{p}',\{i\}] &=\sum_{H \subseteq {\cal N} \setminus (R \cup I \cup \{i\})}(P[I,H|\vec{p}']\cdot Q[{\cal N},R,I,H|\vec{p}'] \cdot \phi[R\cup I,H | \vec{p}',\{i\}])\\
                                   &\\
                                   &= \sum_{H \subseteq {\cal N} \setminus (R \cup I \cup \{i\})}(\prod_{k \in H}p[I,k] \\
                                   & \prod_{k \in {\cal N} \setminus (H \cup R \cup I \cup \{i\})} (1- p[I,k])p[I\setminus \{j\},i] (1 - p_j[i])\frac{1-p_{j}'[i]}{1-p_{j}[i]}\phi[R\cup I,H | \vec{p},\{i\}])\\
                                   &\\
                                   & = \phi[R,I|\vec{p},\{i\}] \frac{1-p_{j}'[i]}{1-p_{j}[i]}.
\end{array}
$$
This concludes the proof.
\end{proof}

\newpage
\section{Public Monitoring}
\label{sec:proof:public}

\subsection{Evolution of the Network}
\label{sec:proof-pub-evol}

\subsubsection{Proof of Lemma~\ref{lemma:corr-0}.}
\label{proof:lemma:corr-0}
For every $h \in {\cal H}$, $r \in \{1 \ldots \pd-1\}$, $i \in {\cal N}$,  and $j \in {\cal N}_i$, 
$$\ds_i[j|h_r^*] = \{(k_1,k_2,r'+r) | (k_1,k_2,r') \in \ds_i[j|h] \land r' + r < \pd \},$$
where $h_r^* = \hevol[h,r|\vec{\sigma}^*]$.

\begin{proof}
Fix $i$, $h$, and $j$. The proof goes by induction on $r$, where the induction hypothesis is that,
for every $r \in \{1 \ldots \pd -1\}$, 
$$\ds_i[j|h_r^*] = \{(k_1,k_2,r'+r) | (k_1,k_2,r') \in \ds_i[j|h] \land r' + r < \pd\}.$$

By Definition~\ref{def:thr}, we have that for every $r \in \{1 \ldots \pd-1\}$, $\vec{p}^* = \vec{\sigma}^*[h_r^*]$, and $s^* = \sig[\vec{p}^*|h]$,
\begin{equation}
\label{eq:corr0}
\ds_i[j|h_{r+1}^*] = L_1[r+1|\vec{\sigma}^*] \cup L_2[r+1|\vec{\sigma}^*],
\end{equation}
where 
\begin{equation}
\label{eq:corr0-0}
\begin{array}{l}
L_1[r+1|\vec{\sigma}^*] = \{(k_1,k_2,r'+1)|(k_1,k_2,r') \in \ds_{i}[j|h_r^*] \land r' +1 < \pd\}.\\
L_2[r+1|\vec{\sigma}^*]= \{(k_1,k_2,0) | k_1,k_2 \in {\cal N} \land i,j \in \rs[k_1,k_2] \land s^*[k_1,k_2] = \mbox{\emph{defect}}\}.
\end{array}
\end{equation}

First, note that by Definition~\ref{def:pubsig} it holds that $s^*[k_1,k_2]=\mbox{\emph{cooperate}}$ for every $k_1 \in{\cal N}$ and $k_2 \in {\cal N}_{k_1}$.
Thus, by~\ref{eq:corr0-0}, for every $r \in \{1\ldots \pd -1\}$,
\begin{equation}
\label{eq:corr0-1}
L_2[r|\vec{\sigma}^*]= \{(k_1,k_2,0) | k_1,k_2 \in {\cal N} \land i,j \in \rs[k_1,k_2] \land s^*[k_1,k_2] = \mbox{\emph{defect}}\}=\emptyset.
\end{equation}
By~\ref{eq:corr0-0},
$$L_1[1|\vec{\sigma}^*] = \{(k_1,k_2,r'+1)|(k_1,k_2,r') \in \ds_{i}[j|h] \land r'  +1 < \pd\},$$
which, along with~\ref{eq:corr0-1} and~\ref{eq:corr0}, proves the base case.

Now, consider that the induction hypothesis is valid for any $r \in \{1\ldots \pd -2\}$.
We have by this assumption and by~\ref{eq:corr0-0} that
$$
\begin{array}{ll}
L_1[r+1|\vec{\sigma}^*] & = \{(k_1,k_2,r'+1)|(k_1,k_2,r') \in \ds_{i}[j|h_r^*] \land r' +1 < \pd\},\\
                                              & = \{(k_1,k_2,r'+1)|(k_1,k_2,r') \in \\
                                               &\{(l_1,l_2,r''+r) | (l_1,l_2,r'') \in \ds_i[j|h] \land r'' + r< \pd\} \land r' + 1 < \pd\},\\
                                               & = \{(k_1,k_2,r'+(r+1))|(k_1,k_2,r') \in \ds_{i}[j|h] \land r' + (r+1) < \pd\}.
\end{array}
$$
This fact, along with~\ref{eq:corr0-1} and~\ref{eq:corr0}, proves the induction step for $r+1$.
\end{proof}
%%%%%%%%%%%%%%%%%%%%%%%%%%%%%%%%%%%%%%%%%%%%%%%%%%%%%%

\newpage

\subsubsection{Proof of Lemma~\ref{lemma:corr-1}.}
\label{proof:lemma:corr-1}
For every $h \in {\cal H}$, $\vec{p}' \in {\cal P}$, $r \in \{1 \ldots \pd\}$,
$i \in {\cal N}$, and $j \in {\cal N}_i$:
$$\ds_i[j|h_r'] = \ds_i[j|h_r^*] \cup \{(k_1,k_2,r-1) | k_1,k_2 \in {\cal N} \land k_2 \in \cd_{k_1}[\vec{p}'|h] \land  i,j \in \rs[k_1,k_2]\},$$
where $h_r^* = \hevol[h,r|\vec{\sigma}^*]$, $h_r' = \hevol[h,r| \vec{\sigma}']$, and
$\vec{\sigma}' =\vec{\sigma}^*[h|\vec{p}']$ is the profile of strategies where all players follow $\vec{p}'$ in the first stage.

\begin{proof}
Fix $h$, $\vec{p}'$, $i$, and $j$. The proof goes by induction on $r$, where the induction hypothesis is that
for every $r \in \{1 \ldots \pd\}$, Equality~\ref{eq:res-corr1} holds.

By Definition~\ref{def:thr}, we have that for every $r \leq \pd$, $\vec{p}^r = \vec{\sigma}^*[h_r^*]$, and $s^* = \sig[\vec{p}^r|h_r^*]$:
\begin{equation}
\label{eq:corr1}
\ds_i[j|h_{r+1}^*] = L_1[r+1|\vec{\sigma}^*] \cup L_2[r+1|\vec{\sigma}^*],
\end{equation}
where 
\begin{equation}
\label{eq:corr1-1}
\begin{array}{l}
L_1[r+1|\vec{\sigma}^*] = \{(k_1,k_2,r'+1)|(k_1,k_2,r') \in \ds_{i}[j|h_r^*] \land r' +1 < \pd\}.\\
L_2[r+1|\vec{\sigma}^*]= \{(k_1,k_2,0) | k_1,k_2 \in {\cal N} \land i,j \in \rs[k_1,k_2] \land s^*[k_1,k_2] = \mbox{\emph{defect}}\}.
\end{array}
\end{equation}
Similarly, for every $r \leq \pd$, $\vec{p}^r = \vec{\sigma}'[h_r']$, and $s' = \sig[\vec{p}^r|h_r']$,
\begin{equation}
\label{eq:corr1-2}
\ds_i[j|h_{r+1}'] = L_1[r+1|\vec{\sigma}'] \cup L_2[r+1|\vec{\sigma}'],
\end{equation}
where 
\begin{equation}
\label{eq:corr1-3}
\begin{array}{l}
L_1[r+1|\vec{\sigma}'] = \{(k_1,k_2,r'+1)|(k_1,k_2,r') \in \ds_{i}[j|h_r'] \land r' +1 < \pd\}.\\
L_2[r+1|\vec{\sigma}']= \{(k_1,k_2,0) | k_1,k_2 \in {\cal N} \land i,j \in \rs[k_1,k_2] \land s'[k_1,k_2] = \mbox{\emph{defect}}\}.
\end{array}
\end{equation}

First, note that for every $r \in \{1 \ldots \pd\}$ and $\vec{p}'' \in {\cal P}$, such that $\vec{\sigma}'' = \vec{\sigma}^*[h|\vec{p}'']$, 
we have $\vec{\sigma}''[h'] = \vec{\sigma}^*[h']$ for every $h' \in {\cal H} \setminus \{h\}$.

Thus, for $h_r'' = \hevol[h,r|\vec{\sigma}'']$, $\vec{p}^* = \vec{\sigma}''[h_r'']$, and $s^* = \sig[\vec{p}^*|h_r'']$,
we have by Definition~\ref{def:pubsig} that $s^*[k_1,k_2]=\mbox{\emph{cooperate}}$ for every $k_1 \in{\cal N}$ and $k_2 \in {\cal N}_{k_1}$.
Thus, by Definition~\ref{def:thr}, for every $r \in \{1 \ldots \pd\}$,
\begin{equation}
\label{eq:corr1-4}
L_2[r|\vec{\sigma}'']= \{(k_1,k_2,0) | k_1,k_2 \in {\cal N} \land i,j \in \rs[k_1,k_2] \land s^*[k_1,k_2] = \mbox{\emph{defect}}\} = \emptyset.
\end{equation}

It follows by~\ref{eq:corr1-1} and~\ref{eq:corr1-3} that, for every $r \in \{1 \ldots \pd\}$,
\begin{equation}
\label{eq:corr1-5}
L_2[r|\vec{\sigma}^*] = L_2[r|\vec{\sigma}'] = \emptyset.
\end{equation}

The base case is when $r=1$. Since $h=h_0' = h_0^*$, by~\ref{eq:corr1-1} and~\ref{eq:corr1-3},
it is true that: 
\begin{equation}
\label{eq:corr1-6}
L_1[1|\vec{\sigma}'] = L_1[1|\vec{\sigma}^*].
\end{equation}

Furthermore, if players follow $\vec{p}'$, then for $s' = \sig[\vec{p}'|h]$,
we have $s'[k_1,k_2]=\mbox{\emph{defect}}$ iff $k_2 \in \cd_{k_1}[\vec{p}'|h]$. Thus:
\begin{equation}
\label{eq:corr1-7}
\begin{array}{ll}
L_2[1|\vec{\sigma}'] & = \{(k_1,k_2,0) | k_1,k_2 \in {\cal N} \land i,j \in \rs[k_1,k_2] \land s'[k_1,k_2]=\mbox{\emph{defect}}\}\\
                 & = \{(k_1,k_2,0) | k_1,k_2 \in {\cal N} \land i,j \in \rs[k_1,k_2] \land k_2 \in \cd_{k_1}[\vec{p}'|h]\}.
\end{array}
\end{equation}
Consequently, the base case follows from~\ref{eq:corr1},~\ref{eq:corr1-2},~\ref{eq:corr1-5},~\ref{eq:corr1-6}, and~\ref{eq:corr1-7}.

Hence, assume the induction hypothesis for $r \in \{1\ldots \pd-1\}$.
By the induction hypothesis and by~\ref{eq:corr1-1}, since $r < \pd$, it is also true that
\begin{equation}
\label{eq:corr1-8}
\begin{array}{ll}
L_1[r+1|\vec{\sigma}'] & = \{(k_1,k_2,r'+1) | (k_1,k_2,r') \in \ds_i[j|h_r'] \land r' +1< \pd\} \\
&\\
                     & = \{(k_1,k_2,r'+1) | (k_1,k_2,r') \in \ds_i[j|h_r^*] \land r' +1 <\pd\} \cup \\
                     & \cup \{(k_1,k_2,r'+1) | (k_1,k_2,r') \in \{(l_1,l_2,r-1) | l_1,l_2 \in {\cal N} \\
                     &\land l_2 \in \cd_{l_1}[\vec{p}'|h] \land i,j \in \rs[l_1,l_2] \} \land r'+1<\pd\}\\
&\\
                     & = L_1[r+1|\vec{\sigma}^*] \cup \{(k_1,k_2,(r+1)-1) |k_1,k_2 \in {\cal N}\\
                      &\land  k_2 \in \cd_{k_1}[\vec{p}'|h] \land i,j \in \rs[k_1,k_2]\}.
\end{array}
\end{equation}

The induction hypothesis follows from~\ref{eq:corr1},~\ref{eq:corr1-2},~\ref{eq:corr1-5}, and~\ref{eq:corr1-8} for $r+1$, which proves the result.
\end{proof}
%%%%%%%%%%%%%%%%%%%%%%%%%%%%%%%%%%%%%%%%%%%%%%%%%%%%%%%%%%%%%%%%%%%
\newpage

\subsubsection{Proof of Lemma~\ref{lemma:corr-2}.}
\label{proof:lemma:corr-2}
For every $h \in {\cal H}$, $\vec{p}' \in {\cal P}$, $r > \pd$,
$i \in {\cal N}$, and $j \in {\cal N}_i$,
$$\ds_i[j|h_r'] = \ds_i[j|h_r^*] = \emptyset,$$
where $h_r^* = \hevol[r|\vec{\sigma}^*]$, $h_r' = \hevol[r| \vec{\sigma}']$,
and $\vec{\sigma}' = \vec{\sigma}^*[h|\vec{p}']$.

\begin{proof}
Fix $h$, $\vec{p}'$, $i$, and $j$. The proof goes by induction on $r$, where the induction hypothesis is that
for every $r \in \{1 \ldots \pd\}$, Equality~\ref{eq:res-corr2} holds.

By Definition~\ref{def:thr}, we have that for every $r >0$, $\vec{p}^r = \vec{\sigma}^*[h_r^*]$, and $s^* = \sig[\vec{p}^r|h_r^*]$:
\begin{equation}
\label{eq:corr2}
\ds_i[j|h_{r+1}^*] = L_1[r+1|\vec{\sigma}^*] \cup L_2[r+1|\vec{\sigma}^*],
\end{equation}
where 
\begin{equation}
\label{eq:corr2-1}
\begin{array}{l}
L_1[r+1|\vec{\sigma}^*] = \{(k_1,k_2,r'+1)|(k_1,k_2,r') \in \ds_{i}[j|h_r^*] \land r' +1 < \pd\}.\\
L_2[r+1|\vec{\sigma}^*]= \{(k_1,k_2,0) | k_1,k_2 \in {\cal N} \land i,j \in \rs[k_1,k_2] \land s^*[k_1,k_2] = \mbox{\emph{defect}}\}.
\end{array}
\end{equation}
Similarly, for every $r >0$, $\vec{p}^r = \vec{\sigma}'[h_r']$, and $s' = \sig[\vec{p}^r|h_r']$,
\begin{equation}
\label{eq:corr2-2}
\ds_i[j|h_{r+1}'] = L_1[r+1|\vec{\sigma}'] \cup L_2[r+1|\vec{\sigma}'],
\end{equation}
where 
\begin{equation}
\label{eq:corr2-3}
\begin{array}{l}
L_1[r+1|\vec{\sigma}'] = \{(k_1,k_2,r'+1)|(k_1,k_2,r') \in \ds_{i}[j|h_r'] \land r' +1 < \pd\}.\\
L_2[r+1|\vec{\sigma}']= \{(k_1,k_2,0) | k_1,k_2 \in {\cal N} \land i,j \in \rs[k_1,k_2] \land s'[k_1,k_2] = \mbox{\emph{defect}}\}.
\end{array}
\end{equation}

First, note that for every $r >0$, $\vec{p}'' \in {\cal P}$, and $\vec{\sigma}'' = \vec{\sigma}^*[h|\vec{p}'']$, 
we have $\vec{\sigma}''[h'] = \vec{\sigma}^*[h']$ for every $h' \in {\cal H} \setminus \{h\}$.

Thus, for $h_r'' = \hevol[r|\vec{\sigma}'']$, $\vec{p}^* = \vec{\sigma}''[h_r'']$, and $s^* = \sig[\vec{p}^*|h_r'']$,
we have by Definition~\ref{def:pubsig} that $s^*[k_1,k_2]=\mbox{\emph{cooperate}}$ for every $k_1 \in{\cal N}$ and $k_2 \in {\cal N}_{k_1}$.
Thus, by Definition~\ref{def:thr}, for every $r>0$,
\begin{equation}
\label{eq:corr2-4}
L_2[r|\vec{\sigma}'' ]= \{(k_1,k_2,0) | k_1,k_2 \in {\cal N} \land i,j \in \rs[k_1,k_2] \land s^*[k_1,k_2] = \mbox{\emph{defect}}\} = \emptyset.
\end{equation}

It follows by~\ref{eq:corr2-1} and~\ref{eq:corr2-3} that, for every $r>\pd$,
\begin{equation}
\label{eq:corr2-5}
L_2[r|\vec{\sigma}^*] = L_2[r|\vec{\sigma}'] = \emptyset.
\end{equation}

By Lemma~\ref{lemma:corr-1},
$$\ds_i[j|h_{\pd}'] = \ds_i[j|h_{\pd}^*] \cup \{(k_1,k_2,\pd-1)|k_1,k_2 \in {\cal N} \land k_2 \in \cd_{k_1}[\vec{p}'|h] \land i,j \in \rs[k_1,k_2]\}.$$

Therefore, by~\ref{eq:corr2-1} and~\ref{eq:corr2-3},
$$
\begin{array}{ll}
L_1[\pd|\vec{\sigma}'] & =  \{(k_1,k_2,r'+1) |(k_1,k_2,r') \in \ds_i[j|h_{\pd}'] \land r'+1<\pd\} \\
&\\
                         & = \{(k_1,k_2,r'+1) | (k_1,k_2,r') \in \ds_i[j|h_{\pd}^*]  \land r'+1 <\pd\} \cup \\
                         & \cup \{(k_1,k_2,r'+1) | (k_1,k_2,r') \in \{(l_1,l_2,\pd-1)| l_1,l_2 \in {\cal N} \land l_2 \in \cd_{l_1}[\vec{p}'|h]  \\
                          &\land i,j \in \rs[l_1,l_2]\} \land r' +1 < \pd\}\\
                          &\\
                          & = \{(k_1,k_2,r'+1) | (k_1,k_2,r') \in \ds_i[j|h_{\pd}^*]  \land r'+1 <\pd\} \cup \\
                         & \cup \{(k_1,k_2,r'+1) | k_1,k_2 \in {\cal N} \land k_2 \in \cd_{k_1}[\vec{p}'|h] \land i,j \in \rs[k_1,k_2] \land \pd < \pd\}\\
                          &\\
                         & = \{(k_1,k_2,r'-1) | (k_1,k_2,r') \in \ds_i[j|h_{\pd}^*]  \land r' +1 < \pd\} \cup \emptyset\\
                         & = L_1[\pd+1|\vec{\sigma}^*].
\end{array}
$$

By Corollary~\ref{corollary:corr-0}, 
\begin{equation}
\label{eq:corr2-6}
L_1[\pd+1|\vec{\sigma}'] = L_1[\pd+1|\vec{\sigma}^*] = \emptyset.
\end{equation}

By~\ref{eq:corr2},~\ref{eq:corr2-2}~\ref{eq:corr2-5},~\ref{eq:corr2-6},
the base case is true.

Now, assume the induction hypothesis for some $r\geq \pd+1$.
By this assumption,~\ref{eq:corr2-1}, and~\ref{eq:corr2-3}:
\begin{equation}
\label{eq:corr2-7}
\begin{array}{ll}
L_1[r+1|\vec{\sigma}'] &=  \{(k_1,k_2,r'+1) | (k_1,k_2,r') \in \ds_i[j|h_{r}'] \land r'+1 < \pd\}\\
                   & = \{(k_1,k_2,r'+1) | (k_1,k_2,r') \in \emptyset\} \\
                   & = \emptyset.
\end{array}
\end{equation}
By Definition~\ref{def:thr}, by~\ref{eq:corr2-2}, and~\ref{eq:corr2-4},
the induction step is true for $r+1$, which proves the result.
\end{proof}
%%%%%%%%%%%%%%%%%%%%%%%%%%%%%%%%%%%%%%%%%%%%%%%%%%%%%%%%

\newpage
\subsection{Generic Results}
\label{sec:proof:gen-cond}

\subsubsection{Proof of Proposition~\ref{prop:folk}.}
\label{proof:prop:folk}
For every profile of punishing strategies $\vec{\sigma}^*$, if $\vec{\sigma}^*$ is a SPE,
then, for every $i \in {\cal N}$, $\frac{\beta_i}{\gamma_i} >\bar{p}_i$.
Consequently, $\psi[\vec{\sigma}^*] \subseteq (v,\infty)$, where $v = \max_{i \in {\cal N}} \bar{p}_i$.
\begin{proof}
Let $\vec{p}^* = \vec{\sigma}^*[h]$.
The equilibrium utility is 
$$\pi_i[\vec{\sigma}^*|\emptyset] = \sum_{r=0}^\infty\omega_i^r (1 - q_i[\vec{p}^*])(\beta_i - \gamma_i \bar{p}_i) = \frac{1 - q_i[\vec{p}^*]}{1-\omega_i}(\beta_i - \gamma_i \bar{p}_i).$$
If $\frac{\beta_i}{\gamma_i} \leq \bar{p}_i$, then 
\begin{equation}
\label{eq:od-1}
\pi_i[\vec{\sigma}^*|\emptyset]  \leq 0.
\end{equation}
Let $\sigma_i' \in \Sigma_i$ be a strategy such that, for every $h \in {\cal H}$, $\sigma_i'[h] = \vec{0}$,
and let $\vec{\sigma}' = (\sigma_i',\vec{\sigma}_{-i}^*)$, where $\vec{0}=(0)_{j \in {\cal N}_i}$. We have 
\begin{equation}
\label{eq:od-2}
\pi_i[\vec{\sigma}'|\emptyset] = (1-q_i[\vec{p}^*])\beta_i + \pi_i[\vec{\sigma}'|(h,\sig[\vec{p}'|h])] \geq (1-q_i[\vec{p}^*])\beta_i,
\end{equation}
where $\vec{p}' = (\vec{0},\vec{p}^*_{-i})$. By Lemma~\ref{lemma:pprob},
$q_i [\vec{p}^*] < 1$. Since $ \pi_i[\vec{\sigma}'|(h,\sig[\vec{p}'|h])] \geq 0$, it is true that
$$\pi_i[\vec{\sigma}^*|\emptyset] \leq 0 < \pi_i[\vec{\sigma}'|\emptyset].$$
This contradicts the assumption that $\vec{\sigma}^*$ is a SPE.
\end{proof}
%%%%%%%%%%%%%%%%%%%%%%%%%%%%%%%%%%%%%%%%%%%%%%%%%%%%%%%%%%%

\newpage
\subsubsection{Proof of Lemma~\ref{lemma:gen-cond-nec}.}
\label{proof:lemma:gen-cond-nec}
If $\vec{\sigma}^*$ is a SPE, then the DC Condition is fulfilled.

\begin{proof}
The proof consists in assuming that $\vec{\sigma}^*$ is a SPE and deriving~\ref{eq:gen-cond}.
By Property~\ref{prop:one-dev}, we must have, for every $h \in {\cal H}$ and $a_i' \in {\cal A}_i$,
\begin{equation}
\label{eq:drop-nec}
\pi_i[\vec{\sigma}^*|h] - \pi_i[\vec{\sigma}'|h]\geq 0,
\end{equation}
where $\sigma_i' = \sigma_i^*[h|a_i']$ and $\vec{\sigma}' = (\sigma_i',\vec{\sigma}_{-i}^*)$. 
This is true for any $\sigma_i'$, where $a_i'[\vec{p}_i'] = 1$ and $\vec{p}_i'$ differs
from $\sigma_i^*[h]$ exactly in that $i$ drops the nodes from any set $D \subseteq {\cal N}_{i}[h]$:
\begin{itemize}
  \item For every $j \in D$, $p_i'[j] = 0$.
  \item For every $j \in {\cal N}_i \setminus D$, $p_i'[j] = p_i[j|h]$  
\end{itemize}

For any pure profile of strategies $\vec{\sigma} \in \Sigma$, we can write
\begin{equation}
\label{eq:gcn-1}
\pi_i[\vec{\sigma}|h] = \sum_{r=0}^{\infty} \omega_i^r u_i[h,r|\vec{\sigma}].
\end{equation}
By Lemma~\ref{lemma:corr-2} and Definition~\ref{def:thr}, for every $r > \pd$, $j \in {\cal N}$, and $k \in {\cal N}_j$,
\begin{equation}
\label{eq:gcn-2}
\begin{array}{l}
\ds_j[k|h_r'] = \ds_j[k|h_r^*] = \emptyset,\\
\\
p_j[k|h_r'] = p_j[k|h_r^*],
\end{array}
\end{equation}
where $h_r' = \hevol[h,r|\vec{\sigma}']$ and $h_r^*=\hevol[h,r|\vec{\sigma}^*]$.
This implies that for every $r> \pd$:
$$q_i[h,r|\vec{\sigma}']=q_i[h,r|\vec{\sigma}^*],$$
$$\bar{p}_i[h,r|\vec{\sigma}']=\bar{p}_i[h,r|\vec{\sigma}^*],$$
$$u_i[h,r|\vec{\sigma}'] = u_i[h,r|\vec{\sigma}^*].$$ 
Thus, \ref{eq:drop-nec} and~\ref{eq:gcn-1} imply~\ref{eq:gen-cond}.
\end{proof}
%%%%%%%%%%%%%%%%%%%%%%%%%%%%%%%%%%%%%%%%%%%%%%%%%%%%%%%%%%%%%%%%%%
%%%%%%%%%%%%%%%%%%%%%%%%%%%%%%%%%%%%%%%%%%%%%%%%%%%%%%%%%%%%%%%%%%
%%%%%%%%%%%%%%%%%%%%%%%%%%%%%%%%%%%%%%%%%%%%%%%%%%%%%%%%%%%%%%%%%%

\newpage
\subsubsection{Proof of Lemma~\ref{lemma:best-response1}.}
\label{proof:lemma:best-response1}
For every $i \in {\cal N}$, $h \in {\cal H}$, $a_i \in BR[\vec{\sigma}_{-i}^*|h]$, and $\vec{p}_i \in {\cal P}_i$ such that $a_i[\vec{p}_i] > 0$,
it is true that for every $j \in {\cal N}_i$ we have $p_i[j] \in \{0,p_i[j|h]\}$.

\begin{proof}
Suppose then that there exist $h \in {\cal H}$, $i \in {\cal N}$, $a_i^1 \in BR[\vec{\sigma}_{-i}^*|h]$, and $\vec{p}^1_i \in {\cal P}_i$ such that $a_i^1[\vec{p}^1_i]>0$
and there exists $j \in {\cal N}_i$ such that $p_i^1[j] \notin \{0,p_i[j|h]\}$.
Consider an alternative $a_i^2 \in {\cal A}_i$:

\begin{itemize}
  \item Define $\vec{p}_i^2 \in {\cal P}_i$ such that for every $j \in {\cal N}_i$, if $p_i^1[j] \geq p_i[j|h]$, then $p_i^2[j] = p_i[j|h]$, else, $p_i^2[j]=0$.
  \item Set $a_i^2[\vec{p}_i^2] = a_i^1[\vec{p}_i^1] + a_i^1[\vec{p}_i^2]$ and $a_i^2[\vec{p}_i^1] = 0$.
  \item For every $\vec{p}_i'' \in {\cal P}_i \setminus \{\vec{p}_i^1,\vec{p}_i^2\}$, set $a_i^2[\vec{p}_i''] = a_i^1[\vec{p}_i'']$.
\end{itemize}

Consider the following auxiliary definitions:
\begin{itemize}
  \item $\vec{a}^1 = (a_i^1,\vec{p}_{-i}^*)$ and $\vec{a}^2 = (a_i^2,\vec{p}_{-i}^*)$, where $\vec{p}^* = \vec{\sigma}^*[h]$.
  \item $\sigma_i^1 = \sigma_i^*[h|a_i^1]$ and $\sigma_i^2 = \sigma_i^*[h|a_i^2]$.
  \item $\vec{p}^1 = (\vec{p}_i^1,\vec{p}^*_{-i})$ and $\vec{p}^2 = (\vec{p}_i^2,\vec{p}^*_{-i})$.
  \item $\vec{\sigma}^1 = (\sigma_i^1,\vec{\sigma}^*_{-i})$ and $\vec{\sigma}^2 = (\sigma_i^2,\vec{\sigma}^*_{-i})$. 
  \item $s^1= \sig[\vec{p}^1|h]$ and $s^2=\sig[\vec{p}^2|h]$.
\end{itemize}

Notice that for any $j \in {\cal N}_i$, $p_i^1[j] \geq p_i^2[j]$ and $p_i^1[j] \geq p_i[j|h]$ iff $p_i^2[j] \geq p_i[j|h]$.
Thus, by Definition~\ref{def:pubsig}, for any $s \in {\cal S}$,
\begin{equation}
\label{eq:br-1}
\begin{array}{l}
pr_i[s|a_i^1,h] = pr_i[s|a_i^2,h].\\
pr[s|\vec{a}^1,h] = pr[s|\vec{a}^2,h].
\end{array}
\end{equation}

Moreover, for some $j \in {\cal N}_i$, $p_i^1[j|h] > p_i^2[j|h]$, thus, it is true that
\begin{equation}
\label{eq:br-2}
u_i[\vec{a}^1] < u_i[\vec{a}^2].
\end{equation}

Recall that
$$\pi_i[\vec{\sigma}^1|h] = u_i[\vec{a}^1] + \omega_i\sum_{s \in {\cal S}} \pi_i[\vec{\sigma}^1|(h,s)]pr[s|\vec{a}^1,h],$$
$$\pi_i[\vec{\sigma}^2|h] = u_i[\vec{a}^2] + \omega_i\sum_{s \in {\cal S}} \pi_i[\vec{\sigma}^2|(h,s)]pr[s|\vec{a}^2,h].$$

By~\ref{eq:br-1} and the definition of $\vec{\sigma}^1$ and $\vec{\sigma}^2$,
$$\sum_{s \in {\cal S}} \pi_i[\vec{\sigma}|(h,s)]pr[s|\vec{a}^1,h] = \sum_{s \in {\cal S}} \pi_i[\vec{\sigma}|(h,s)]pr[s|\vec{a}^2,h].$$

By~\ref{eq:br-2},
$$\pi_i[\vec{\sigma}^1|h] < \pi_i[\vec{\sigma}^2|h].$$

This is a contradiction, since $a_i^1 \in BR[\vec{\sigma}_{-i}^*|h]$ by assumption, concluding the proof.
\end{proof}
%%%%%%%%%%%%%%%%%%%%%%%%%%%%%%%%%%%%%%%%%%%%%%%%%%%%%%%%%%%%%%%%%%
%%%%%%%%%%%%%%%%%%%%%%%%%%%%%%%%%%%%%%%%%%%%%%%%%%%%%%%%%%%%%%%%%%
%%%%%%%%%%%%%%%%%%%%%%%%%%%%%%%%%%%%%%%%%%%%%%%%%%%%%%%%%%%%%%%%%%
\newpage

\subsubsection{Proof of Lemma~\ref{lemma:best-response2}.}
\label{proof:lemma:best-response2}
For every $h \in {\cal H}$ and $i \in {\cal N}$, there exists $a_i \in BR[\vec{\sigma}_{-i}^*| h]$ and $\vec{p}_i \in {\cal P}_i$
such that $a_i[\vec{p}_i] = 1$.

\begin{proof}
For any $h \in {\cal H}$ and $i \in {\cal N}$, if $BR[\vec{\sigma}_{-i}^*|h]$ only contains pure strategies for the stage game,
since $BR[\vec{\sigma}_{-i}^*|h]$ is not empty, the result follows.
Suppose then that there exists a mixed strategy $a_i^1 \in BR[\vec{\sigma}_{-i}^*|h]$. 
We know from Lemma~\ref{lemma:best-response1} that every such $a_i^1$ attributes positive probability to one of two probabilities in $\{0,p_i[j|h]\}$, for every $j \in {\cal N}_i$.
Let $\sigma_i^1 = \sigma_i^*[h|a_i^1]$, $\vec{\sigma}^1 = (\sigma_i^1,\vec{\sigma}^*_{-i})$, and denote by ${\cal P}^*[h]$ the finite set of profiles of
probabilities that fulfill the condition of Lemma~\ref{lemma:best-response1}, i.e., for every $\vec{p} \in {\cal P}^*[h]$, $j\in{\cal N}$, and $k \in {\cal N}_j$,
$p_j[k] \in \{0,p_j[k|h]\}$. We can write
\begin{equation}
\label{eq:br2-0}
\pi_i[\vec{\sigma}^1|h]  = \sum_{\vec{p}_i \in {\cal P}^*_i[h]} (u_i[\vec{p}] + \omega_i \pi_i[\vec{\sigma}^1|(h,\sig[\vec{p}|h])])a_i^1[\vec{p}_i],
\end{equation}
where $\vec{p} = (\vec{p}_i,\vec{p}_{-i}^*)$ and $\vec{p}^* = \vec{\sigma}^*[h]$.

For any $\vec{p}_i^1 \in {\cal P}_i^*[h]$ such that $a_i^1[\vec{p}_i^1] >0$, let $\vec{p}^* = \vec{\sigma}^*[h]$, $\vec{p}^1 = (\vec{p}_i^1,\vec{p}_{-i}^*)$,
 $\sigma_i' = \sigma_i^*[h|\vec{p}_i^1]$,
and $\vec{\sigma}' = (\sigma_i',\vec{\sigma}^*_{-i})$.

There are three possibilities:
\begin{enumerate}
 \item $\pi_i[\vec{\sigma}^1|h] = \pi_i[\vec{\sigma}'|h]$.
 \item $\pi_i[\vec{\sigma}^1|h] < \pi_i[\vec{\sigma}'|h]$.
 \item $\pi_i[\vec{\sigma}^1|h] > \pi_i[\vec{\sigma}'|h]$.
\end{enumerate}

In possibility~1, it is true that there is $a_i' \in BR[\vec{\sigma}_{-i}^*|h]$ such that $a_i'[\vec{p}_i^1] = 1$ and the result follows. Possibility~2 contradicts
the assumption that $a_i^1 \in BR[\vec{\sigma}_{-i}^*|h]$. 

Finally, consider that possibility~3 is true. Recall that $a_i^1$ being mixed implies $a_i^1[\vec{p}_i^1] < 1$.
Thus, there must exist $\vec{p}_i^2 \in {\cal P}_i^*[h] \setminus \{\vec{p}_i^1\}$, $\sigma_i''=\sigma_i^*[h|\vec{p}_i^2]$, and $\vec{\sigma}'' = (\sigma_i',\vec{\sigma}^*_{-i})$,
such that $a_i^1[\vec{p}_i^2] > 0$ and
\begin{equation}
\label{eq:br2-1}
\pi_i[\vec{\sigma}'|h] < \pi_i[\vec{\sigma}''|h].
\end{equation}

Here, we can define $a_i^2 \in {\cal A}_i$ such that:
\begin{itemize}
  \item $a_i^2[\vec{p}_i^2]=a_i^1[\vec{p}_i^1] + a_i^1[\vec{p}_i^2]$; 
  \item $a_i^2[\vec{p}_i^1] = 0$.
  \item For every $\vec{p}_i'' \in {\cal P}^*[h] \setminus \{\vec{p}_i^1,\vec{p}_i^2\}$, $a_i^2[\vec{p}_i''] = a_i^1[\vec{p}_i'']$.  
\end{itemize}

Now, let:
\begin{itemize}
  \item $\sigma_i^2 = \sigma_i^*[h|a_i^2]$ and $\vec{\sigma}^2 = (\sigma_i^2,\vec{\sigma}^*_{-i})$.
  \item $\vec{p}^2 = (\vec{p}_i^2,\vec{p}^*_{-i})$.
  \item $\vec{\sigma}' = (\sigma_i[h|\vec{p}_i''],\vec{\sigma}_{-i}^*)$.
\end{itemize}

By~\ref{eq:br2-0},
$$\pi_i[\vec{\sigma}^1|h] = l_1 + \pi_i[\vec{\sigma}'|h]a_i^1[\vec{p}_i^1] + \pi_i[\vec{\sigma}''|h]a_i^1[\vec{p}_i^2],$$
$$\pi_i[\vec{\sigma}^2|h] = l_2 + \pi_i[\vec{\sigma}''|h]a_i^2[\vec{p}_i^2],$$
where 
$$l_1 = \sum_{\vec{p}'' \in {\cal P}^*[h] \setminus \{\vec{p}_i^1,\vec{p}_i^2\}}  (u_i[\vec{p}''] + \omega_i \pi_i[\vec{\sigma}^1|h,s''])a_i^1[\vec{p}_i''],$$
$$l_2 = \sum_{\vec{p}'' \in {\cal P}^*[h] \setminus \{\vec{p}_i^1,\vec{p}_i^2\}}  (u_i[\vec{p}''] + \omega_i \pi_i[\vec{\sigma}^2|h,s''])a_i^2[\vec{p}_i''],$$
and $s'' = \sig[\vec{p}''|h]$.

By the definition of $a_i^2$, we have that $l_1 = l_2$. 
It follows that:
$$
\begin{array}{ll}
\pi_i[\vec{\sigma}^1|h] - \pi_i[\vec{\sigma}^2|h] & = \pi_i[\vec{\sigma}'|h]a_i^1[\vec{p}_i^1]  + \pi_i[\vec{\sigma}''|h]a_i^1[\vec{p}_i^2] - \pi_i[\vec{\sigma}''|h](a_i^1[\vec{p}_i^2] + a_i^1[\vec{p}_i^1]) \\
                                                                                   & = (\pi_i[\vec{\sigma}'|h] - \pi_i[\vec{\sigma}''|h])a_i^1[\vec{p}_i^1].
\end{array}
$$

If follows from~\ref{eq:br2-1} that:
$$\pi_i[\vec{\sigma}^1|h] < \pi_i[\vec{\sigma}^2|h],$$
contradicting the assumption that $a_i^1 \in BR[\vec{\sigma}_{-i}^*| h]$. This concludes the proof.
\end{proof}
%%%%%%%%%%%%%%%%%%%%%%%%%%%%%%%%%%%%%%%%%%%%%%%%%%%%%%%%%%%%%%%%%%
%%%%%%%%%%%%%%%%%%%%%%%%%%%%%%%%%%%%%%%%%%%%%%%%%%%%%%%%%%%%%%%%%%
%%%%%%%%%%%%%%%%%%%%%%%%%%%%%%%%%%%%%%%%%%%%%%%%%%%%%%%%%%%%%%%%%%

\newpage
\subsubsection{Proof of Lemma~\ref{lemma:best-response}.}
\label{proof:lemma:best-response}
For every $h \in {\cal H}$ and $i \in {\cal N}$, there exists $\vec{p}_i \in {\cal P}_i$ and a pure strategy $\sigma_i=\sigma_i^*[h|\vec{p}_i]$ such that:
\begin{enumerate}
 \item For every $j \in {\cal N}_i$, $p_i[j] \in \{0,p_{i}[j|h]\}$.
 \item For every $a_i \in {\cal A}_i$, $\pi_i[\sigma_i,\vec{\sigma}_{-i}^*|h] \geq \pi_i[\sigma_i',\vec{\sigma}_{-i}^*|h]$,
where $\sigma_i' = \sigma_i^*[h|a_i]$.
\end{enumerate}

\begin{proof}
Consider any $h \in {\cal H}$ and $i \in {\cal N}$.
From Lemma~\ref{lemma:best-response2}, it follows that there exists $a_i \in BR[\vec{\sigma}_{-i}^*|h]$
and $\vec{p}_i \in {\cal P}_i$ such that $a_i[\vec{p}_i] = 1$. By Lemma~\ref{lemma:best-response1},
every such $a_i$ and $\vec{p}_i$ such that $a_i[\vec{p}_i]=1$ fulfill Condition~$1$.
Condition~$2$ follows from the definition of $BR[\vec{\sigma}_{-i}^*|h]$.
\end{proof}
%%%%%%%%%%%%%%%%%%%%%%%%%%%%%%%%%%%%%%%%%%%%%%%%%%%%%%%%%%%%%%%%%%
%%%%%%%%%%%%%%%%%%%%%%%%%%%%%%%%%%%%%%%%%%%%%%%%%%%%%%%%%%%%%%%%%%
%%%%%%%%%%%%%%%%%%%%%%%%%%%%%%%%%%%%%%%%%%%%%%%%%%%%%%%%%%%%%%%%%%

\subsubsection{Proof of Lemma~\ref{lemma:gen-cond-suff}.}
\label{proof:lemma:gen-cond-suff}
If the DC Condition is fulfilled, then $\vec{\sigma}^*$ is a SPE.

\begin{proof}
Assume that Inequality~\ref{eq:gen-cond} holds for every history $h$ and
$D \subseteq {\cal N}_{i}[h]$. In particular, these assumptions imply that,
for each $\vec{p}_i \in {\cal P}_i$ such that $p_i[j] \in \{0,p_i[j|h]\}$ for every $j \in {\cal N}_i$,
we have 
\begin{equation}
\label{eq:gcs}
\pi_i[\vec{\sigma}^*|h] \geq \pi_i[\sigma_i,\vec{\sigma}^*_{-i}|h],
\end{equation}
where $\sigma_i = \sigma_i^*[h|\vec{p}_i]$. By Lemma~\ref{lemma:best-response}, 
there exists one such $\vec{p}_i$ such that $\sigma_i$ is a local best response.
Consequently, by~\ref{eq:gcs}, for every $a_i \in {\cal A}_i$ and $\sigma_i'=\sigma_i^*[h|a_i]$,
$$\pi_i[\vec{\sigma}^*|h] \geq \pi_i[\sigma_i',\vec{\sigma}^*_{-i}|h].$$
By Property~\ref{prop:one-dev}, $\vec{\sigma}^*$ is a SPE.
\end{proof}
%%%%%%%%%%%%%%%%%%%%%%%%%%%%%%%%%%%%%%%%%%%%%%%%%%%%%%%%%%%%%%%%%%
%%%%%%%%%%%%%%%%%%%%%%%%%%%%%%%%%%%%%%%%%%%%%%%%%%%%%%%%%%%%%%%%%%
%%%%%%%%%%%%%%%%%%%%%%%%%%%%%%%%%%%%%%%%%%%%%%%%%%%%%%%%%%%%%%%%%%

\newpage

\subsection{Direct Reciprocity is not Effective}
\label{sec:proof:direct}

\subsubsection{Proof of Lemma~\ref{lemma:nec-btc}.}
\label{proof:lemma:nec-btc}
If $\vec{\sigma}^*$ is a SPE, then, for every $i \in {\cal N}$ and $j \in {\cal N}_i$, it is true that $q_i' > q_i^*$ and:
$$\frac{\beta_i}{\gamma_i} > \bar{p}_i + \frac{p_i[j|\emptyset]}{q_i' - q_i^*}\left(1-q_i' + \frac{1-q_i^*}{\pd}\right),$$
where $\vec{p}_i'$ is the strategy where $i$ drops $j$, $\vec{\sigma}' = (\sigma_i^*[\emptyset|\vec{p}_i'],\vec{\sigma}_{-i}^*)$,
$q_i'=q_i[\vec{\sigma}'[\emptyset]]$, and $q_i^* = q_i[\vec{\sigma}^*[\emptyset]]$.

\begin{proof}
The assumption that $\vec{\sigma}^*$ is a SPE implies by Theorem~\ref{theorem:gen-cond} that the DC Condition
is true for the history $\emptyset$, any node $i \in {\cal N}$, and $D =\{j\}$, where $j \in {\cal N}_i$. Define $\vec{p}_i' \in {\cal P}_i$ as:
\begin{itemize}
  \item $p_i'[j] =0$.
  \item For every $k \in {\cal N}_i \setminus \{j\}$, $p_i'[k] = p_i[k|\emptyset]$.
\end{itemize}

We have that:
\begin{equation}
\label{eq:nbtc-0}
\begin{array}{ll}
\bar{p}_i - \bar{p}_i[\emptyset,0|\vec{\sigma}'] & = \sum_{k \in {\cal N}_i} p_i[k|\emptyset] - \sum_{k \in {\cal N}_i \setminus \{j\}} p_i[k|\emptyset]\\
                                                                                & = p_i[j|\emptyset].
                                                                               \end{array}
\end{equation}

Let $\sigma_i' = \sigma_i^*[\emptyset| \vec{p}_i']$, $\vec{\sigma}' = (\sigma_i',\vec{\sigma}^*_{-i})$, $h_r^* = \hevol[\emptyset,r|\vec{\sigma}^*]$,
and $h_r' = \hevol[\emptyset,r|\vec{\sigma}']$. It is true by the definition of $u_i$ and by~\ref{eq:nbtc-0} that
\begin{equation}
\label{eq:nbtc-1}
\begin{array}{ll}
u_i[\emptyset,0|\vec{\sigma}^*] - u_i[\emptyset,0|\vec{\sigma}'] &= (1-q_i^*)(\beta_i - \gamma_i \bar{p}_i) - (1-q_i^*)(\beta_i - \gamma_i \bar{p}_i[\emptyset,0|\vec{\sigma}'])\\
                                                                  & = (1-q_i^*)\gamma_i ( \bar{p}_i[\emptyset,0|\vec{\sigma}'] -  \bar{p}_i)\\
                                                                  & = - (1-q_i^*)\gamma_i p_i[j|\emptyset].\\
                                                                  & = -c,
\end{array}
\end{equation}
where $c = (1-q_i^*)\gamma_i p_i[j|\emptyset]$. Notice that for every $k \in {\cal N} \setminus \{i\}$
\begin{equation}
\label{eq:nb-1}
\cd_k[\vec{p}'|h] = \emptyset,
\end{equation}
and 
\begin{equation}
\label{eq:nb-2}
\cd_i[\vec{p}'|h] = \{j\}.
\end{equation}

From~\ref{eq:nb-1} and~\ref{eq:nb-2}, and by Lemma~\ref{lemma:corr-1}, and Definition~\ref{def:thr}, for every $r \in \{1 \ldots \pd\}$,
\begin{equation}
\label{eq:nbtc-2}
\begin{array}{ll}
\ds_i[j|h_r'] & = \ds_i[j|h_r^*] \cup \{(k_1,k_2,r - 1)|  k_1,k_2 \in {\cal N} \land i,j \in \rs[k_1,k_2] \land k_2 \in \cd_{k_1}[\vec{p}'|h]\}\\
                       & = \ds_i[j|h_r^*] \cup \{(i,j,r-1)| i,j \in \rs[i,j]\}\\
                       & = \{(i,j,r-1)\}.
\end{array}
\end{equation}
and for every $k \in {\cal N} \setminus \{j\}$
\begin{equation}
\label{eq:nbtc-3}
\begin{array}{ll}
\ds_i[k|h_r'] & = \ds_i[k|h_r^*] \cup \{(k_1,k_2,r-1)| k_1,k_2 \in {\cal N} \land i,k \in \rs[k_1,k_2] \land k_2 \in \cd_{k_1}[\vec{p}'|h]\}\\
                       & = \ds_i[k|h_r^*] \cup \{(i,k,r-1)| i,k \in \rs[i,j]\}\\
                       & = \ds_i[k|h_r^*] = \emptyset.
\end{array}
\end{equation}
By~\ref{eq:nbtc-2} and~\ref{eq:nbtc-3}, Definition~\ref{def:thr}, and the definition of $\vec{p}'$, for every $r \in \{1 \ldots \pd\}$
\begin{equation}
\label{eq:nbtc-4}
\bar{p}_i[\emptyset,r|\vec{\sigma}'] = \sum_{k \in {\cal N}_i \setminus\{j\}} p_i[k|h_r']  =\sum_{k \in {\cal N}_i \setminus\{j\}} p_i[k|\emptyset] = \bar{p}_i - p_i[j|\emptyset].
\end{equation}

By~\ref{eq:nbtc-2} and~\ref{eq:nbtc-4}, for every $r \in \{1 \ldots \pd\}$,
\begin{equation}
\label{eq:nbtc-5}
\begin{array}{ll}
u_i[\emptyset,r|\vec{\sigma}^*] - u_i[\emptyset,r|\vec{\sigma}'] &= (1-q_i[\emptyset,r|\vec{\sigma}^*])(\beta_i - \gamma_i \bar{p}_i[\emptyset,r|\vec{\sigma}^*]) - \\
						&(1-q_i[\emptyset,r|\vec{\sigma}'])(\beta_i - \gamma_i \bar{p}_i[\emptyset,r|\vec{\sigma}'])\\
						&\\
                                                                   & = (1-q_i^*)(\beta_i - \gamma_i \bar{p}_i) - (1-q_i')(\beta_i -  \gamma_i \bar{p}_i + p_i[j|\emptyset])\\
                                                                   & = (q_i' - q_i^*)(\beta_i - \gamma_i \bar{p}_i)  - (1-q_i')\gamma_i p_i[j|\emptyset]\\
                                                                   & = a - b,
\end{array}
\end{equation}
where:
\begin{itemize}
  \item $a = (q_i' - q_i^*)(\beta_i - \gamma_i \bar{p}_i)$.
  \item $b = (1-q_i')\gamma_i p_i[j|\emptyset]$.
\end{itemize}

By Theorem~\ref{theorem:gen-cond}, the assumption that $\vec{\sigma}^*$ is a SPE,~\ref{eq:nbtc-1}, and~\ref{eq:nbtc-5},
\begin{equation}
\label{eq:nbtc-6}
\begin{array}{ll}
\sum_{r=0}^{\pd} \omega_i^r(u_i[h,r|\vec{\sigma}^*] - u_i[h,r|\vec{\sigma}']) &\geq 0\\
-c + \sum_{r=1}^{\pd} \omega_i^r (a-b) & \geq 0\\
-c + \frac{\omega_i-\omega_i^{\pd+1}}{1-\omega_i}(a-b) & \geq 0\\
-c(1-\omega_i) + (\omega_i-\omega_i^{\pd+1})(a-b) & \geq 0\\
\omega_i(a - b + c) - \omega_i^{\pd+1}(a-b) -c & \geq 0.
\end{array}
\end{equation}

This is a polynomial function of degree $\pd+1 \geq 2$ that has a zero in $\omega_i = 1$ and is negative for $\omega_i = 0$, since, by Lemma~\ref{lemma:pprob}, $c > 0$.
For any $\omega_i \in (0,1)$, a solution to~\ref{eq:nbtc-6} exists only if the polynomial is strictly concave. 
The second derivative is 
$$-(\pd+1)\pd\omega_i^{\pd-1}(a-b),$$ 
so we must have $a > b$ for this to be true, i.e.:

\begin{equation}
\label{eq:nbtc-7}
\begin{array}{ll}
(q_i' - q_i^*)(\beta_i - \gamma_i \bar{p}_i) &> (1-q_i')\gamma_i p_i[j|\emptyset]\\
\Rightarrow (q_i' - q_i^*)(\beta_i - \gamma_i \bar{p}_i) &> 0.
\end{array}
\end{equation}

By Proposition~\ref{prop:folk}, we know that $\beta_i > \gamma_i \bar{p}_i$. Thus,~\ref{eq:nbtc-7} implies $q_i' > q_i^*$, which concludes the first part of the proof.

Furthermore, a solution to~\ref{eq:nbtc-6} exists if and only if
there is a maximum of the polynomial for $\omega_i \in (0,1)$. We can find the zero of the first derivative in order to $\omega_i$, 
obtaining 
\begin{equation}
\label{eq:nbtc-8}
\omega_i^{\pd} = \frac{a-b+c}{(a-b)(\pd+1)}. 
\end{equation}

For a solution of~\ref{eq:nbtc-8} to exist for some $\omega_i \in (0,1)$, it must be true that: 
\begin{enumerate}
 \item $a\pd > b\pd + c$.
 \item $a - b + c > 0$.
\end{enumerate}

Condition~1) yields:
\begin{equation}
\label{eq:nbtc-9}
\begin{array}{ll}
\pd(q_i' - q_i^*)(\beta_i - \gamma_i \bar{p}_i) \pd  & > \pd(1-q_i')\gamma_i p_i[j|\emptyset]+ (1-q_i^*)\gamma_i p_i[j|\emptyset]\\
\pd(q_i' - q_i^*)\beta_i & > \pd(q_i' - q_i^*)\gamma_i \bar{p}_i + \pd(1-q_i')\gamma_i p_i[j|\emptyset] + (1-q_i^*)\gamma_i p_i[j|\emptyset]\\
\beta_i & > \gamma_i \bar{p}_i + \frac{1-q_i'}{q_i'-q_i^*}\gamma_i p_i[j|\emptyset] + \frac{1-q_i^*}{\pd(q_i' - q_i^*)} \gamma_i p_i[j|\emptyset]\\
\frac{\beta_i}{\gamma_i} & > \bar{p}_i + \frac{1}{q_i'-q_i^*}p_i[j|\emptyset]( 1-q_i' + \frac{1-q_i^*}{\pd})\\
\end{array}
\end{equation}
which is equivalent to~\ref{eq:nec-btc}. If Condition~1 is true, then:
$$a > b + \frac{c}{\pd} \Rightarrow a> b - c.$$
That is, Condition~1 implies Condition~2. Therefore, if $\vec{\sigma}^*$ is a SPE for some $\omega_i \in (0,1)$,
then~\ref{eq:nec-btc} must hold, which proves the result.
\end{proof}

%%%%%%%%%%%%%%%%%%%%%%%%%%%%%%%%%%%%%%%%%%%%%%%%%%%%%%%%%%%%%%%%%%
%%%%%%%%%%%%%%%%%%%%%%%%%%%%%%%%%%%%%%%%%%%%%%%%%%%%%%%%%%%%%%%%%%
%%%%%%%%%%%%%%%%%%%%%%%%%%%%%%%%%%%%%%%%%%%%%%%%%%%%%%%%%%%%%%%%%%

\newpage
\subsubsection{Proof of Lemma~\ref{lemma:dir-recip}.}
\label{proof:lemma:dir-recip}
Suppose that for any $i \in {\cal N}$ and $j \in {\cal N}_i$, $p_i[j|\emptyset] + q_i^* \ll 1$.
If $\vec{\sigma}^*$ is a SPE, then:
$$\psi[\vec{\sigma}^*] \subseteq \left(\frac{1}{q_i^*},\infty\right).$$

\begin{proof}
Fix $i$ and $j$ for which the assumption holds.
Let $\vec{p}_i'$ be defined as:
\begin{itemize}
  \item $p_i'[j] = 0$,
  \item $p_i'[k] = p_i[k|\emptyset]$ for every $k \in {\cal N}_i \setminus \{j\}$.
\end{itemize}

Let $q_i' = q_i[(\vec{p}_i',\vec{p}^*_{-i})]$ and $q_i^* = q_i[\vec{\sigma}^*[\emptyset]]$,
where $\vec{p}^* = \vec{\sigma}[\emptyset]$.
By Lemma~\ref{lemma:nec-btc}, we must have
\begin{equation}
\label{eq:dir-recip-1}
\frac{\beta_i}{\gamma_i} > \bar{p}_i + \frac{p_i[j|\emptyset]}{q_i' - q_i^*}\left(1-q_i' + \frac{1-q_i^*}{\pd}\right).
\end{equation}
By Lemma~\ref{lemma:single-impact} from Appendix~\ref{sec:epidemic}, it is true that $q_i'  \leq \frac{q_i^*}{1-p_i[j|\emptyset]}$.
By including this fact in~\ref{eq:dir-recip-1}, we obtain:
\begin{equation}
\label{eq:dr-5}
\begin{array}{ll}
\frac{\beta_i}{\gamma_i} &> \bar{p}_i + \frac{p_i[j|\emptyset]}{q_i' - q_i^*}\left(1-q_i' + \frac{1-q_i^*}{\pd}\right) > \bar{p}_i + \frac{p_i[j|\emptyset]}{q_i' - q_i^*}(1 - q_i')\\
                                            &\geq \bar{p}_i + \frac{p_i[j|\emptyset](1-p_i[j|\emptyset])}{(1-p_i[j|\emptyset])(q_i^* - q_i^*(1-p_i[j|\emptyset]))}(1- p_i[j|\emptyset] - q_i^*)\\
                                            &= \bar{p}_i + \frac{1}{q_i^*} (1 - p_i[j|\emptyset] - q_i^*)\\
                                            &\approx \bar{p}_i + \frac{1}{q_i^*}\\
                                            &> \frac{1}{q_i^*}.
\end{array}
\end{equation}
The result follows from the fact that if for every $i$ there exists $\omega_i \in (0,1)$ such that $\vec{\sigma}^*$ is a SPE, then $[0,\frac{1}{q_i^*}] \cap \psi[\vec{\sigma}^*] = \emptyset$.

\end{proof}

%%%%%%%%%%%%%%%%%%%%%%%%%%%%%%%%%%%%%%%%%%%%%%%%%%%%%%%%%%%%%%%%%%
%%%%%%%%%%%%%%%%%%%%%%%%%%%%%%%%%%%%%%%%%%%%%%%%%%%%%%%%%%%%%%%%%%
%%%%%%%%%%%%%%%%%%%%%%%%%%%%%%%%%%%%%%%%%%%%%%%%%%%%%%%%%%%%%%%%%%
\newpage

\subsection{Full Indirect Reciprocity is Sufficient}
\label{sec:proof:indir}

\subsubsection{Proof of Lemma~\ref{lemma:indir-equiv}.}
\label{proof:lemma:indir-equiv}
The profile of strategies $\vec{\sigma}^*$ is a SPE if and only if for every $h \in {\cal H}$ and $i \in {\cal N}$:
$$\sum_{r=1}^\pd (\omega_i^r u_i[h,r|\vec{\sigma}^*]) - (1-q_i[h,0|\vec{\sigma}^*])\gamma_i \bar{p}_i[h,0|\vec{\sigma}^*] \geq 0.$$

\begin{proof}
Using the result from Theorem~\ref{theorem:gen-cond}, it is true that $\vec{\sigma}^*$ is a SPE if and only if the DC Condition
is true for every $i \in {\cal N}$, $h \in {\cal H}$, and $D \subseteq {\cal N}_i[h]$.

Consider any strategy $\sigma_i'= \sigma_i^*[h|\vec{p}']$ where $\vec{p}' = (\vec{0},\vec{p}_{-i}^*)$ and $\vec{p}^*=\vec{\sigma}^*[h]$.
Alternatively, define $\sigma_i''=\sigma_i^*[h|\vec{p}'']$ where $\vec{p}'' = (\vec{p}_i'',\vec{p}^*_{-i})$ such that for some $D \subset {\cal N}_i[h]$:
\begin{itemize}
  \item For every $j \in D$, $p_i''[j] = 0$.
  \item For every $j \in {\cal N}_i \setminus D$, $p_i''[j] = p_i[j|h]$.
\end{itemize}
Define $h_r^* = \hevol[r|\vec{\sigma}^*]$.
Let $\vec{\sigma}' = (\sigma_i',\vec{\sigma}_{-i}^*)$, $h_r' = \hevol[r|\vec{\sigma}']$,
$\vec{\sigma}'' = (\sigma_i'',\vec{\sigma}_{-i}^*)$, and $h_r'' = \hevol[r|\vec{\sigma}']$.
By Lemma~\ref{lemma:corr-1} and the definition of full indirect reciprocity, 
for every $r \in \{1 \ldots \pd\}$ and $j \in {\cal N}_i^{-1}$, since for every $k \in {\cal N}\setminus \{i\}$,
$$
\begin{array}{l}
\cd_i[\vec{p}'|h] = {\cal N}_i,\\
\cd_i[\vec{p}''|h] = D,\\
\cd_k[\vec{p}'|h] = \cd_k[\vec{p}''|h] = \emptyset,
\end{array}
$$
it holds that
\begin{equation}
\label{eq:indir-2}
\begin{array}{ll}
\ds_j[i|h_r'] & = \ds_j[i|h_r^*] \cup \{(k_1,k_2,r-1)|k_1,k_2 \in {\cal N} \land k_2 \in \cd_{k_1}[\vec{p}'|h] \land j,i \in \rs[k_1,k_2]\}\\
                      & = \ds_j[i|h_r^*] \cup \{(i,k,r-1)|k \in {\cal N}_i \land j,i \in \rs[i,k]\}.\\
                      & = \ds_j[i|h_r^*] \cup \{(i,k,r-1)|k \in {\cal N}_i\}.
\end{array}
\end{equation}
and
\begin{equation}
\label{eq:indir-3}
\begin{array}{ll}
\ds_j[i|h_r''] & = \ds_j[i|h_r^*] \cup \{(k_1,k_2,r)|k_1,k_2 \in {\cal N} \land k_2 \in \cd_{k_1}[\vec{p}''|h] \land j,i \in \rs[k_1,k_2]\}\\
             & = \ds_j[i|h_r^*] \cup \{(i,k,r-1)|k \in D \land j,i \in \rs[i,k]\}.\\
                      & = \ds_j[i|h_r^*] \cup \{(i,k,r-1)|k \in D\}.
\end{array}
\end{equation}

By Definition~\ref{def:thr}, and by~\ref{eq:indir-2} and~\ref{eq:indir-3}, for every $r \in \{1 \ldots \pd\}$ and $j \in {\cal N}_i^{-1}$,
\begin{equation}
\label{eq:indir-4}
p_j[i|h_r'] = p_j[i|h_r''] = 0.
\end{equation}

It follows from~\ref{eq:indir-4} and Lemma~\ref{lemma:noneib} that for every $r \in \{1 \ldots \pd\}$
\begin{equation}
\label{eq:indir-5}
\begin{array}{ll}
q_i[h,r|\vec{\sigma}'] &= q_i[h,r|\vec{\sigma}''] = 1.\\
u_i[h,r|\vec{\sigma}'] &= u_i[h,r|\vec{\sigma}''] = 0.\\
\end{array}
\end{equation}

Furthermore, we have
$$\bar{p}_i[h,0|\vec{\sigma}'] \leq \bar{p}_i[h,0|\vec{\sigma}''],$$
which implies that
\begin{equation}
\label{eq:indir-6}
u_i[h,0|\vec{\sigma}'] > u_i[h,0|\vec{\sigma}''].
\end{equation}

By~\ref{eq:indir-5} and~\ref{eq:indir-6},
\begin{equation}
\label{eq:indir-7}
\sum_{r=0}^\pd \omega^r (u_i[h,r|\vec{\sigma}^*] - u_i[h,r|\vec{\sigma}']) < \sum_{r=0}^\pd \omega^r(u_i[h,r|\vec{\sigma}^*] - u_i[h,r|\vec{\sigma}'']).
\end{equation}

Finally, we have
\begin{equation}
\label{eq:indir-8}
\begin{array}{ll}
\sum_{r=0}^\pd \omega^r (u_i[h,r|\vec{\sigma}^*] - u_i[h,r|\vec{\sigma}']) &\geq 0\\
(1-q_i[h,0|\vec{\sigma}^*])(\beta_i - \bar{p}_i[h,0|\vec{\sigma}^*]) - (1-q_i[h,0|\vec{\sigma}^*])\beta_i + \sum_{r = 1}^{\pd} \omega^r u_i[h,r|\vec{\sigma}^*] & \geq 0\\
\sum_{r = 1}^{\pd} \omega^r u_i[h,r|\vec{\sigma}^*] - (1-q_i[h,0|\vec{\sigma}^*])\bar{p}_i[h,0|\vec{\sigma}^*]& \geq 0.
\end{array}
\end{equation}

It is direct to conclude by Theorem~\ref{theorem:gen-cond} that if $\vec{\sigma}^*$ is a SPE, then DC Condition is fulfilled for
$D= {\cal N}_i$. By~\ref{eq:indir-7}, if DC Condition is fulfilled for $D={\cal N}_i$, then it is also fulfilled for
every $D \subset {\cal N}_i$, and by Theorem~\ref{theorem:gen-cond}, $\vec{\sigma}^*$ is a SPE.
Thus, $\vec{\sigma}^*$ is a SPE iff~\ref{eq:indir-8} holds. This concludes the proof.
\end{proof}
%%%%%%%%%%%%%%%%%%%%%%%%%%%%%%%%%%%%%%%%%%%%%%%%%%%%%%%%%%%%%%%%%%
%%%%%%%%%%%%%%%%%%%%%%%%%%%%%%%%%%%%%%%%%%%%%%%%%%%%%%%%%%%%%%%%%%
%%%%%%%%%%%%%%%%%%%%%%%%%%%%%%%%%%%%%%%%%%%%%%%%%%%%%%%%%%%%%%%%%%

\newpage
\subsubsection{Proof of Lemma~\ref{lemma:maxh}}
\label{proof:lemma:maxh}
Let $h \in {\cal H}$ be defined such that for every $h' \in {\cal H}$,
the left side of Inequality~\ref{eq:indir-equiv} for $h$ is lower than or equal to the value for $h'$.
Then, for every $r \in \{1 \ldots \pd-2\}$,
$$u_i[h,r|\vec{\sigma}^*] = u_i[h,r+1|\vec{\sigma}^*].$$

\begin{proof}
The proof goes by contradiction. First, assume that $h$ minimizes the left side of Inequality~\ref{eq:indir-equiv}:
$$\sum_{r=1}^\pd (\omega_i^r u_i[h,r|\vec{\sigma}^*]) - (1-q_i[h,0|\vec{\sigma}^*])\gamma_i \bar{p}_i[h,0|\vec{\sigma}^*] \geq 0.$$
This implies that for every $h' \in {\cal H}$, we have:
\begin{equation}
\label{eq:mh}
\begin{array}{ll}
  \sum_{r=1}^\pd (\omega_i^r u_i[h,r|\vec{\sigma}^*])  - (1-q_i[h,0|\vec{\sigma}^*])\gamma_i \bar{p}_i[h,0|\vec{\sigma}^*]& \\
  - \sum_{r=1}^\pd (\omega_i^r u_i[h',r|\vec{\sigma}^*]) - (1-q_i[h',0|\vec{\sigma}^*])\gamma_i \bar{p}_i[h',0|\vec{\sigma}^*] & \leq 0\\
  \\
  \sum_{r=1}^\pd \omega_i^r (u_i[h,r|\vec{\sigma}^*] - u_i[h',r|\vec{\sigma}^*]) &\\
   + (1-q_i[h,0|\vec{\sigma}^*])\gamma_i (\gamma_i \bar{p}_i[h',0|\vec{\sigma}^*] - \bar{p}_i[h,0|\vec{\sigma}^*]) & \leq 0\\
\end{array}
\end{equation}

Assume by contradiction that for every $h$ that minimizes the above condition, there is some $r \in \{1 \ldots \pd -3\}$, 
$$u_i[h,r|\vec{\sigma}^*] \neq u_i[h,r+1|\vec{\sigma}^*].$$
Fix $h$. Without loss of generality, suppose
\begin{equation}
\label{eq:mh-0}
u_i[h,r|\vec{\sigma}^*] < u_i[h,r+1|\vec{\sigma}^*].
\end{equation}

By  Lemma~\ref{lemma:corr-0} and Definition~\ref{def:thr}, this implies the existence of $D \subset {\cal N}^2$, such that, for every $j \in {\cal N}$ and $k \in {\cal N}_j$,
\begin{equation}
\label{eq:mh-1}
\begin{array}{l}
\ds_j[k|h_{r+1}^*] = \ds_j[k|h_{r}^*] \setminus \{(k_1,k_2,\pd-1)|(k_1,k_2) \in D \land j,k \in \rs[k_1,k_2]\},\\
\ds_j[k|h_{r}^*] = \ds_j[k|h_{r+1}^*]\cup \{(k_1,k_2,\pd-1)|(k_1,k_2) \in D \land j,k \in \rs[k_1,k_2]\},\\
(k_1,k_2,\pd - r-1) \in \ds_j[k|h],
\end{array}
\end{equation}
where $h_r^* = \hevol[h,r|\vec{\sigma}^*]$, and there exist $j,k$ and $(k_1,k_2) \in D$ such that $i,j \in \rs[k_1,k_2]$, for instance, $k_1$ and $k_2$. 

Notice that, if~\ref{eq:mh-0} is true, then $\pd - r -1\geq1$.
Thus, we can define $h'$ such that:
\begin{itemize}
  \item $|h'| = |h|$.
  \item $\ds_j[k|h'] = \ds_j[k|h] \setminus \{(k_1,k_2,\pd - r-1) | (k_1,k_2) \in D\} \cup \{(k_1,k_2,\pd - r -2) | (k_1,k_2) \in D\}$.
\end{itemize}
Let $h_r' = \hevol[h',r|\vec{\sigma}^*]$. By Lemma~\ref{lemma:corr-0}, for every $r' \in \{0\ldots r-1\}$, $j \in {\cal N}$, and $k \in {\cal N}_j$,
\begin{equation}
\label{eq:mh-2}
\begin{array}{ll}
\ds_j[k|h_{r'+1}'] & = \{(l_1,l_2,r''+1) | (l_1,l_2,r'') \in \ds_j[k|h_{r'}'] \land r'' +1 < \pd \}\\
&\\
			   & = \{(l_1,l_2,r''+r') | (l_1,l_2,r'') \in (\ds_j[k|h] \setminus \{(k_1,k_2,\pd - r- 1) | (k_1,k_2) \in D\}\\
			   & \cup \{(k_1,k_2,\pd - r- 2) | (k_1,k_2) \in D\land j,k \in \rs[k_1,k_2]\}) \land r'' +r' < \pd \}\\
			   &\\
			   & = \{(k_1,k_2,r''+r') | (k_1,k_2,r'') \in \ds_j[k|h] \land (k_1,k_2) \notin D \land r'' +r' <\pd \}\\
			   & \cup \{(k_1,k_2,\pd - r+r'-2) | (k_1,k_2) \in D\land j,k \in \rs[k_1,k_2] \land \pd - r - 2 + r' < \pd \}\\
			   &\\
			   & = \ds_j[k|h_{r'+1}^*] \setminus \{(k_1,k_2,\pd-r+r'-1) | (k_1,k_2) \in D \land j,k \in \rs[k_1,k_2]\} \\
			   &\cup \{(k_1,k_2,\pd - r + r' - 2) | (k_1,k_2) \in D\land j,k \in \rs[k_1,k_2] \land r' -2 < r\}.\\
			   &\\
          		   & = \ds_j[k|h_{r'+1}^*] \setminus \{(k_1,k_2,\pd-r+r'-1) | (k_1,k_2) \in D \land j,k \in \rs[k_1,k_2]\} \\
			   &\cup \{(k_1,k_2,\pd - r + r' - 2) | (k_1,k_2) \in D\land j,k \in \rs[k_1,k_2] \land r' -1 < r\}.\\
			   &\\
			   & = \ds_j[k|h_{r'+1}^*] \setminus \{(k_1,k_2,\pd-r+r'-1) | (k_1,k_2) \in D \land j,k \in \rs[k_1,k_2]\} \\
			   &\cup \{(k_1,k_2,\pd - r + r' - 1) | (k_1,k_2) \in D\land j,k \in \rs[k_1,k_2]\}.\\
			   &\\
			   &=\ds_j[k|h_{r'+1}^*].
\end{array}
\end{equation}

By the definition of $\ds_j[k|h]$ and $\ds_j[k|h']$ and by Definition~\ref{def:thr}, for every $r' \in \{0\ldots r\}$, $j \in {\cal N}$, and $k \in {\cal N}_j$,
\begin{equation}
\label{eq:mh-3}
\begin{array}{l}
p_j[k|h_{r'}'] = p_j[k|h_{r'}^*].\\
\bar{p}_i[h',r'|\vec{\sigma}^*] = \bar{p}_i[h,r'|\vec{\sigma}^*].\\
q_i[h',r'|\vec{\sigma}^*] = q_i[h,r'|\vec{\sigma}^*].\\
u_i[h',r'|\vec{\sigma}^*] = u_i[h,r'|\vec{\sigma}^*].
\end{array}
\end{equation}

Furthermore, by~\ref{eq:mh-1},
\begin{equation}
\label{eq:mh-4}
\begin{array}{ll}
\ds_j[k|h_{r+1}'] & = \{(l_1,l_2,r'+1) | (l_1,l_2,r') \in \ds_j[k|h_{r}'] \land r' +1 <\pd \}\\
			   &\\
			   & = \{(l_1,l_2,r'+1) | (l_1,l_2,r') \in (\ds_j[k|h_{r}^*] \cup \{(k_1,k_2,\pd-2)|(k_1,k_2) \in D \land j,k \in \rs[k_1,k_2]\}) \\
			   &\land r' +1 < \pd\})\\
			   &\\
			   & = \{(k_1,k_2,r'+1) | (k_1,k_2,r') \in \ds_j[k|h_{r}^*] \} \cup \\
			   & \{(k_1,k_2,\pd-1)|(k_1,k_2) \in D \land j,k \in \rs[k_1,k_2] \land \pd -1 < \pd\}\\
			   &\\
                               & = \ds_j[k|h_{r+1}^*] \cup \{(k_1,k_2,\pd - 1)|(k_1,k_2) \in D\land j,k \in \rs[k_1,k_2]\}\\
                               & = \ds_j[k|h_{r}^*].
\end{array}
\end{equation}

By Definition~\ref{def:thr} and~\ref{eq:mh-4},
\begin{equation}
\label{eq:mh-5}
\begin{array}{l}
p_j[k|h_{r+1}'] = p_j[k|{h_{r}^*}].\\
\bar{p}_i[h',r+1|\vec{\sigma}^*] = \bar{p}_i[h,r|\vec{\sigma}^*].\\
q_i[h',r+1|\vec{\sigma}^*] = q_i[h,r|\vec{\sigma}^*].\\
u_i[h',r+1|\vec{\sigma}^*] = u_i[h,r|\vec{\sigma}^*].
\end{array}
\end{equation}

Finally, by Lemma~\ref{lemma:corr-0}, for every $r'>r$, $j \in {\cal N}$, and $k \in {\cal N}_j$,
\begin{equation}
\label{eq:mh-6}
\begin{array}{ll}
\ds_j[k|h_{r'+1}'] & = \{(l_1,l_2,r''+1) | (l_1,l_2,r'') \in \ds_j[k|h_{r'}'] \land r''+1 <\pd \}\\
			   &= \{(l_1,l_2,r'' +1)|  (l_1,l_2,r'') \in (\ds_j[k|h] \setminus \\
			   &\{(k_1,k_2,\pd - r -1) | (k_1,k_2) \in D \land  j,k \in \rs[k_1,k_2]\}\\ 
			   &\cup \{(k_1,k_2,\pd - r- 2) | (k_1,k_2) \in D \land j,k \in \rs[k_1,k_2]\}) \land r'' +1 <\pd \}\\
			   &\\
			   &= \{(l_1,l_2,r''+r'+1)| (l_1,l_2,r'') \in \ds_j[k|h] \land r'' + r' +1< \pd\}\\ 
			   & \setminus \{(k_1,k_2,\pd - r+r' - 1) | (k_1,k_2) \in D \land j,k \in \rs[k_1,k_2] \land r'< r\}\\
			   &\cup \{(k_1,k_2,\pd - r+r'-2)| (k_1,k_2) \in D \land j,k \in \rs[k_1,k_2] \land r'< r+1 \}\\
			   &\\
			   &= \{(l_1,l_2,r''+r')| (l_1,l_2,r'') \in \ds_j[k|h] \land r'' + r' < \pd\}\\ 
			   & \setminus \{(k_1,k_2,\pd - r+r' - 1) | (k_1,k_2) \in D \land j,k \in \rs[k_1,k_2]\land r'< r\}\\
			   &\cup \{(k_1,k_2,\pd - r+r'-2)| (k_1,k_2) \in D\land j,k \in \rs[k_1,k_2] \land r'< r\}\\	   
			   &\\
			   & = \{(l_1,l_2,r''-r') | (l_1,l_2,r'') \in \ds_j[k|h] \land r'' > r' \}\\
			   & = \ds_j[k|h_{r'+1}^*].
\end{array}
\end{equation}

Therefore, by Definition~\ref{def:thr}, for every $r' > r+1$, $j \in {\cal N}$, and $k \in {\cal N}_j$,
\begin{equation}
\label{eq:mh-7}
\begin{array}{l}
p_j[k|h_{r'}'] = p_j[k|h_{r'}^*].\\
\bar{p}_i[h',r'|\vec{\sigma}^*] = \bar{p}_i[h,r'|\vec{\sigma}^*].\\
q_i[h',r'|\vec{\sigma}^*] = q_i[h,r'|\vec{\sigma}^*].\\
u_i[h',r'|\vec{\sigma}^*] = u_i[h,r'|\vec{\sigma}^*].
\end{array}
\end{equation}

By~\ref{eq:mh-0},~\ref{eq:mh-3},~\ref{eq:mh-5}, and~\ref{eq:mh-7},
$$
\begin{array}{ll}
\sum_{r'=1}^\pd \omega_i^{r'}(u_i[h,r'|\vec{\sigma}^*] - u_i[h',r'|\vec{\sigma}^*])  & + (1-q_i[h,0|\vec{\sigma}^*])\gamma_i (\gamma_i \bar{p}_i[h',0|\vec{\sigma}^*] - \bar{p}_i[h,0|\vec{\sigma}^*])\\
 = u_i[h,r+1|\vec{\sigma}^*] - u_i[h',r+1|\vec{\sigma}^*] & = u_i[h,r+1|\vec{\sigma}^*] - u_i[h,r|\vec{\sigma}^*] > 0.
\end{array}
$$
This is a contradiction to~\ref{eq:mh}, which concludes the proof. For the case where 
$$u_i[h,r|\vec{\sigma}^*] > u_i[h,r+1|\vec{\sigma}^*],$$
the proof is identical, except that $h'$ is defined such that the punishments that end at stage $r$ are anticipated, i.e.:

$$\ds_j[k|h'] = \ds_j[k|h] \setminus \{(k_1,k_2,\pd - r-1) | (k_1,k_2) \in D\} \cup \{(k_1,k_2,\pd - r ) | (k_1,k_2) \in D\}.$$
\end{proof}

%%%%%%%%%%%%%%%%%%%%%%%%%%%%%%%%%%%%%%%%%%%%%%
%%%%%%%%%%%%%%%%%%%%%%%%%%%%%%%%%%%%%%%%%%%%%%%%%
\newpage

\subsubsection{Proof of Theorem~\ref{theorem:indir-suff}.}
\label{proof:theorem:indir-suff}
If there exists a constant $c \geq 1$ such that, for every $h \in {\cal H}$ and $i \in {\cal N}$, Assumption~\ref{eq:indir-assum} holds, then
$\psi[\vec{\sigma}^*] \supseteq (v,\infty)$, where
$$v = \max_{h \in {\cal H}}\max_{i \in {\cal N}} \bar{p}_i[h,0|\vec{\sigma}^*]\left(1 + \frac{c}{\pd}\right).$$

\begin{proof}
Assume by contradiction that, for any player $i \in {\cal N}$, $\vec{\sigma}^*$ is not a SPE for any $\omega_i \in (0,1)$ and that
$$\frac{\beta_i}{\gamma_i} > \max_{h \in {\cal H}} \bar{p}_i[h,0|\vec{\sigma}^*]\left(1 + \frac{c}{\pd}\right).$$
Fix $i$. The proof considers a history $h$ that minimizes the left side of Inequality~\ref{eq:indir-equiv}. The reason for this is that, if Inequality~\ref{eq:indir-equiv}
is true for $h$, then it is also true for every other history $h'$. If $\bar{p}_i[h,0|\vec{\sigma}^*] = 0$, then the inequality is trivially fulfilled.
Hence, consider that $\bar{p}_i[h,0|\vec{\sigma}^*] > 0$. By Lemma~\ref{lemma:maxh}, for every $r \in \{1 \ldots \pd-1\}$,
$$u_i[h,r|\vec{\sigma}^*] = u_i[h,r+1|\vec{\sigma}^*].$$

We are left with stage $\pd$. Let $u_i^h = u_i[h,1|\vec{\sigma}^*]$. If for every $k \in {\cal N}$ and $l \in {\cal N}_k$ we have $\ds_k[l|h_{\pd}^*] = \ds_k[l|h_{\pd-1}^*]$, where $h_r^* = \hevol[h,r|\vec{\sigma}^*]$, then 
$$u_i^h = u_i[h,\pd|\vec{\sigma}^*].$$
Otherwise, by Lemma~\ref{lemma:maxh},
$$u_i^h \leq u_i[h,\pd|\vec{\sigma}^*].$$
Either way, by Lemma~\ref{lemma:maxh}, for every $h' \in {\cal H}$:
\begin{equation}
\label{eq:is-1}
\begin{array}{ll}
-(1-q_i[h,0|\vec{\sigma}^*])\gamma_i \bar{p}_i[h,0|\vec{\sigma}^*] + \sum_{r = 1}^\pd \omega_i^r u_i^h & \leq \\
-(1-q_i[h',0|\vec{\sigma}^*])\gamma_i \bar{p}_i[h',0|\vec{\sigma}^*] +  \sum_{r=1}^\pd \omega_i^r u_i[h',r|\vec{\sigma}^*].
\end{array}
\end{equation}

We can write:
\begin{equation}
\label{eq:is-2}
\begin{array}{ll}
-(1-q_i[h,0|\vec{\sigma}^*])\gamma_i \bar{p}_i[h,0|\vec{\sigma}^*] + \sum_{r = 1}^\pd \omega_i^r u_i^h & \geq 0\\
-a + \frac{\omega_i - \omega_i^{\pd+1}}{1 - \omega_i} u_i^h & \geq 0\\
-a  + \omega_i(u_i^h + a) - \omega_i^{\pd+1} u_i^h & \geq 0,
\end{array}
\end{equation}
where 
$$a = (1-q_i[h,0|\vec{\sigma}^*])\gamma_i \bar{p}_i[h,0|\vec{\sigma}^*].$$
Again, this Inequality corresponds to a polynomial with degree $\pd+1$. If $q_i[h,1|\vec{\sigma}^*] = 1$, then by our assumptions $q_i[h,0|\vec{\sigma}^*] =1$, $a=0$, and
the Inequality holds. Suppose then that 
$$q_i[h,1|\vec{\sigma}^*],q_i[h,0|\vec{\sigma}^*] < 1.$$

The polynomial has a zero in $\omega_i = 1$. If $\bar{p}_i[h,0|\vec{\sigma}^*] = 0$, then $a=0$ and the Inequality holds.
Consider, then, that $\bar{p}_i[h,0|\vec{\sigma}^*] > 0$, which implies that $a>0$. In these circumstances, a solution to~\ref{eq:is-2}
exists for $\omega_i \in (0,1)$ iff the polynomial is strictly concave and has another zero in $(0,1)$. This is true iff
the polynomial has a maximum in $(0,1)$. The derivatives yield the following conditions:
\begin{enumerate}
 \item $\exists_{\omega_i \in (0,1)} u_i^h + a - (\pd+1) \omega_i u_i^h = 0 \Rightarrow \exists_{\omega_i \in (0,1)} \omega_i = \frac{u_i^h+a}{(\pd+1)u_i^h}$.
 \item $-(\pd+1)\pd u_i^h < 0 \Rightarrow u_i^h > 0$.
\end{enumerate}

Condition~$1$ implies:
\begin{equation}
\label{eq:is-3}
\begin{array}{ll}
u_i^h\pd & > a\\
(1-q_i[h,1|\vec{\sigma}^*]) \beta_i \pd &> (1-q_i[h,1|\vec{\sigma}^*])\gamma_i\bar{p}_i[h,1|\vec{\sigma}^*] \pd + (1-q_i[h,0|\vec{\sigma}^*])\gamma_i\bar{p}_i[h,0|\vec{\sigma}^*]\\
\frac{\beta_i}{\gamma_i} &> \bar{p}_i[h,1|\vec{\sigma}^*] + \bar{p}_i[h,0|\vec{\sigma}^*]\frac{1}{\pd} \frac{1-q_i[h,0|\vec{\sigma}^*]}{1-q_i[h,1|\vec{\sigma}^*]}.
\end{array}
\end{equation}

By the assumption that $q_i[h,0|\vec{\sigma}^*] \leq 1-c(1-q_i[h,1|\vec{\sigma}^*])$, if Inequality~\ref{eq:indir-suff} is true, then so is~\ref{eq:is-3}.
Furthermore, it also holds that
$$\beta_i > \gamma_i \bar{p}_i[h,1|\vec{\sigma}^*] \Rightarrow u_i^h >0,$$
thus, Condition~$2$ and~\ref{eq:is-2} are also true for some $\omega_i \in (0,1)$. By~\ref{eq:is-1},
for every $h' \in {\cal H}$ and $i \in {\cal N}$, Inequality~\ref{eq:indir-equiv} is true,
implying by Lemma~\ref{lemma:indir-equiv} that $\vec{\sigma}^*$ is a SPE for some value $\omega_i \in (0,1)$.
This is a contradiction, proving that if for every $i \in {\cal N}$ we have $\frac{\beta_i}{\gamma_i} \in (v,\infty)$, then $\vec{\sigma}^*$ is a SPE.
By the definition of $\psi$, $\psi[\vec{\sigma}^*] \supseteq (v,\infty)$.
\end{proof}
%%%%%%%%%%%%%%%%%%%%%%%%%%%%%%%%%%%%%%%%%%%%%%%%%%%%%%%%%%%%%%%%%%
%%%%%%%%%%%%%%%%%%%%%%%%%%%%%%%%%%%%%%%%%%%%%%%%%%%%%%%%%%%%%%%%%%
%%%%%%%%%%%%%%%%%%%%%%%%%%%%%%%%%%%%%%%%%%%%%%%%%%%%%%%%%%%%%%%%%%

\newpage
\section{Private Monitoring}
\label{sec:proof:private}

\subsection{Evolution of the Network}
\label{sec:proof-priv-evol}
\subsubsection{Auxiliary Lemma.}
\begin{lemma}
\label{lemma:pevol-aux}
For every $h \in {\cal H}$, $\vec{p}' \in {\cal P}$,
$i \in {\cal N}$, $j \in {\cal N}_i$, $k_1,k_2 \in {\cal N}$, and $r \in \{0 \ldots \del_i[k_1,k_2] + \pd[k_1,k_2|i,j] - v[k_1,k_2] -1\}$,
where $v[k_1,k_2] = \min[\del_i[k_1,k_2] - \del_j[k_1,k_2],0]$,
let $s_i' \in \sig[\vec{\sigma}'[h_{r}']|h_{r}']$ and $s_i^* \in \sig[\vec{\sigma}^*[h_{r}^*]|h_{r}^*]$,
where $h_r^* = \hevol[h,r|\vec{\sigma}^*]$, $h_r' =  \hevol[h,r|\vec{\sigma}']$, and $\vec{\sigma}' = \vec{\sigma}^*[h|\vec{p}']$.
Then, we have
\begin{enumerate}
  \item If $k_2 \notin \cd_{k_1}[\vec{p}'|h]$ or $k_2 \in \cd_{k_1}[\vec{p}'|h]$ and $\del_i[k_1,k_2] > r$,
we have $s_i^*[k_1,k_2] = s_i'[k_1,k_2]$.
 \item If $k_2 \in \cd_{k_1}[\vec{p}'|h]$ and $\del_i[k_1,k_2] = r$, then $s_i^*[k_1,k_2] = \mbox{\emph{cooperate}}$ and $s_i'[k_1,k_2] = \mbox{\emph{defect}}$.
 \item Else, $s_i^*[k_1,k_2] = s_i'[k_1,k_2] = \mbox{\emph{cooperate}}$.
\end{enumerate}
\end{lemma}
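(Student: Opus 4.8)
The plan is to unwind Definition~\ref{def:privsig} in two layers, exploiting that $\vec{\sigma}'$ coincides with $\vec{\sigma}^*$ in every stage except the first and that any node following $\vec{\sigma}^*$ emits a \emph{cooperate} signal. Index the stages that follow $h$ by $t \geq 1$, so that under $\vec{\sigma}'$ the action at relative stage $t=1$ is $\vec{p}'$ while every action at $t \geq 2$ is the one prescribed by $\vec{\sigma}^*$, and under $\vec{\sigma}^*$ every relative stage follows $\vec{\sigma}^*$. Since $h_{r}' = \hevol[h,r|\vec{\sigma}']$ and $h_{r}^* = \hevol[h,r|\vec{\sigma}^*]$ share the prefix $h$, their signals agree on the first $|h|$ positions; only the appended relative stages can differ.

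First I would record the instantaneous behaviour of the defected-from out-neighbour. For any observer $m$ with $\del_m[k_1,k_2]=0$ — in particular $m=k_2$, since $\del_{k_2}[k_1,k_2]=0$ — the $\del=0$ clause of Definition~\ref{def:privsig} states that the $[k_1,k_2]$ component of $m$'s signal at a stage is \emph{defect} iff the action of $k_1$ towards $k_2$ at that stage is below $p_{k_1}[k_2|\cdot]$; this condition does not depend on $m$. When $k_1$ follows $\vec{\sigma}^*$, its action equals the threshold (Definition~\ref{def:priv-thr}), so the component is \emph{cooperate}. Hence under $\vec{\sigma}^*$ no relative stage $t\geq 1$ produces a defection of $(k_1,k_2)$, under $\vec{\sigma}'$ the only one occurs at $t=1$ and is present iff $k_2 \in \cd_{k_1}[\vec{p}'|h]$ (by the definition of $\cd$), and at relative stages $t\leq 0$ (inside $h$) the component is identical under both profiles.

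Next I would apply the delayed clause of Definition~\ref{def:privsig} to $i$ itself; note that the stated range of $r$ forces $\del_i[k_1,k_2]<\infty$. The freshly generated $s_i[k_1,k_2]$ after a history $g$ equals the $[k_1,k_2]$ component of the signal $k_2$ observed $\del_i[k_1,k_2]$ stages earlier, provided $|g|\geq\del_i[k_1,k_2]$, and is \emph{cooperate} otherwise. Taking $g\in\{h_r',h_r^*\}$, both of length $|h|+r$, the relevant past signal of $k_2$ is the one observed at relative stage $r-\del_i[k_1,k_2]+1$, which by the previous step is an instantaneous observation. The same formula holds when $\del_i[k_1,k_2]=0$, where the shift is null and $s_i[k_1,k_2]$ reflects the relative-stage-$(r+1)$ action, again relative stage $r-\del_i[k_1,k_2]+1$. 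So in all cases observation index $r$ reads off relative stage $r-\del_i[k_1,k_2]+1$.

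The three claims then follow from the sign of $r-\del_i[k_1,k_2]+1$. If $\del_i[k_1,k_2]=r$ the relevant relative stage is $1$: under $\vec{\sigma}^*$ it is \emph{cooperate}, while under $\vec{\sigma}'$ it is \emph{defect} exactly when $k_2\in\cd_{k_1}[\vec{p}'|h]$, yielding claims~1 and~2. If $\del_i[k_1,k_2]<r$ the relative stage is $\geq 2$, which under both profiles follows $\vec{\sigma}^*$ and is therefore \emph{cooperate}, yielding claim~3 (and claim~1 when $k_2\notin\cd_{k_1}[\vec{p}'|h]$). If $\del_i[k_1,k_2]>r$ the relative stage is $\leq 0$: either $|h|+r\geq\del_i[k_1,k_2]$, so the observation lies in the shared prefix $h$ and the two signals coincide, or $|h|+r<\del_i[k_1,k_2]$, so both fall into the \emph{cooperate} default; either way claim~1 holds. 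I expect the only delicate point to be exactly this boundary — aligning the length condition $|g|\geq\del_i[k_1,k_2]$ with the time-shift so that the $r<\del_i[k_1,k_2]$ case never spuriously reports a defection — together with the off-by-one in identifying $k_2$'s last-$\del_i[k_1,k_2]$-th signal with relative stage $r-\del_i[k_1,k_2]+1$.
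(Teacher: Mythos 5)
Your proof is correct and takes essentially the same approach as the paper's. Both arguments reduce $i$'s delayed observation at index $r$ to $k_2$'s instantaneous observation shifted by $\del_i[k_1,k_2]$, note that under $\vec{\sigma}'$ the only possible defection of $(k_1,k_2)$ is the first-stage one (present iff $k_2 \in \cd_{k_1}[\vec{p}'|h]$) while $\vec{\sigma}^*$ never produces one, and conclude with the same case analysis on $r$ versus $\del_i[k_1,k_2]$ — the first stage, later stages governed by $\vec{\sigma}^*$, or observations falling in the shared prefix $h$ or the \emph{cooperate} default.
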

\begin{proof}
Fix $i$, $j$, $k_1$, $k_2$, $h$, and $\vec{p}'$.

Let $s_{k_2}' \in \sig[\vec{\sigma}'[h]|h]$ and $s_{k_2}^* \in \sig[\vec{\sigma}^*[h]|h]$. 
For $k_2 \notin \cd_{k_1}[\vec{p}'|h]$ and $h_{k_1} \in h$, 
$$p_{k_1}'[k_2] = p_{k_1}[k_2|h_{k_1}],$$
and, by Definition~\ref{def:privsig}, 
$$s_i^*[k_1,k_2] = s_i'[k_1,k_2].$$

If $k_2 \in \cd_{k_1}[\vec{p}'|h]$ and $r = \del_i[k_1,k_2]$, then, by the definition of $\vec{\sigma}'$ and $\vec{\sigma}^*$,
$$s_{k_2}'[k_1,k_2] = \mbox{\emph{defect}},$$
$$s_{k_2}^*[k_1,k_2] = \mbox{\emph{cooperate}},$$
and, by Definition~\ref{def:privsig}, 
$$s_i'[k_1,k_2]  = \mbox{\emph{defect}}.$$
$$s_i^*[k_1,k_2] = \mbox{\emph{cooperate}}.$$

If $r > \del_i[k_1,k_2]$, then
$$s_i'[k_1,k_2] = s_i^*[k_1,k_2]= \mbox{\emph{cooperate}}.$$ 
To see this, assume first that $s_i'[k_1,k_2] = \mbox{\emph{defect}}$. Then,  by Definition~\ref{def:privsig}, there must
exist a round $r'>0$ and history $\hevol[h,r'|\vec{\sigma}']$ after which $k_2$ defects $k_1$.
That is, define $r' = r - \del_i[k_1,k_2]$. For $s_{k_2}'' \in \sig[\vec{\sigma}'[h_{r'}']|h_{r'}']$,
$s_{k_2''}[k_1,k_2] = \mbox{\emph{defect}}$, which is true iff $p_{k_1}''[k_2] < p_{k_1}[k_2|h_{r'}']$.
Since $r' > 0$, this contradicts the definition of $\vec{\sigma}'$. Hence, $s_i^*[k_1,k_2] = \mbox{\emph{cooperate}}$. 

If we assume that $s_i^*[k_1,k_2] = \mbox{\emph{defect}}$, then by Definition~\ref{def:privsig}, there must
exist a round $r'>0$ and history $\hevol[h,r'|\vec{\sigma}']$ after which $k_2$ defects $k_1$. As before, since $r'>0$,
another contradiction is reached and we can conclude that we must have $s_i^*[k_1,k_2]= \mbox{\emph{cooperate}}$.

For $r < \del_i[k_1,k_2]$,the result holds immediately. This is because by Definition~\ref{def:privsig}, $s_i'[k_1,k_2] = \mbox{\emph{defect}}$
iff $|h| + r\geq \del_i[k_1,k_2]$ and for $s_{k_2}'' = h_{k_2}^{\del_i[k_1,k_2]-r}$, $s_{k_2}''[k_1,k_2] = \mbox{\emph{defect}}$. This implies that 
$$s_{i}^*[k_1,k_2] = \mbox{\emph{defect}}.$$
Similarly, if $s_i'[k_1,k_2] = \mbox{\emph{cooperate}}$, then $s_{k_2}''[k_1,k_2] = \mbox{\emph{cooperate}}$, implying that
$$s_{i}^*[k_1,k_2] = \mbox{\emph{cooperate}}.$$
This proves the result.
\end{proof}

\newpage

\subsubsection{Proof of Lemma~\ref{lemma:priv-corr-1}.}
\label{proof:lemma:priv-corr-1}
For every $h \in {\cal H}$, $\vec{p}' \in {\cal P}$, $r >0$,
$i \in {\cal N}$, and $j \in {\cal N}_i$:
$$
\begin{array}{ll}
\ds_i[j|h_{i,r}'] =& \ds_i[j|h_{i,r}^*] \cup \{(k_1,k_2,r -1 - \del_i[k_1,k_2]+v[k_1,k_2]) | k_1,k_2 \in {\cal N} \land \\
                            & k_2 \in \cd_{k_1}[\vec{p}'|h] \land r \in \{\del_i[k_1,k_2]+1 \ldots \del_i[k_1,k_2] + \pd[k_1,k_2|i,j]-v[k_1,k_2]\}\land\\
                            & v[k_1,k_2] = \min[\del_i[k_1,k_2] - \del_j[k_1,k_2],0]\},\\
\end{array}
$$
where $h_{i,r}^* \in \hevol[h,r|\vec{\sigma}^*]$, $h_{i,r}' \in \hevol[h,r| \vec{\sigma}']$, and
$\vec{\sigma}' =\vec{\sigma}^*[h|\vec{p}']$ is the profile of strategies where all players follow $\vec{p}'$ in the first stage.

\begin{proof}
Fix $h$, $\vec{p}'$, $i$, and $j$. The proof goes by induction on $r$, where the induction hypothesis is that
for every $r \geq 0$, Equality~\ref{eq:priv-res-corr1} holds for $r+1$:
$$
\begin{array}{ll}
\ds_i[j|h_{i,r+1}'] =& \ds_i[j|h_{i,r+1}^*] \cup \{(k_1,k_2,r  - \del_i[k_1,k_2]+v[k_1,k_2]) | k_1,k_2 \in {\cal N} \land \\
                            & k_2 \in \cd_{k_1}[\vec{p}'|h] \land r+1 \in \{\del_i[k_1,k_2]+1 \ldots \del_i[k_1,k_2] + \pd[k_1,k_2|i,j]-v[k_1,k_2]\}\land\\
                            & v[k_1,k_2] = \min[\del_i[k_1,k_2] - \del_j[k_1,k_2],0]\},\\
\end{array}
$$

 We will simplify the notation by first 
dropping the factor $v[k_1,k_2] = \min[\del_i[k_1,k_2] - \del_j[k_1,k_2],0]$ whenever possible, and by removing the redundant
indexes $k_1,k_2,i,j$, except when distinguishing between different delays. We will also remove the factor $k_1,k_2 \in {\cal N}$.
Namely:
\begin{itemize}
 \item $\del_i[k_1,k_2]= \del_i$ and $\del_j[k_1,k_2] = \del_j$.
 \item $v[k_1,k_2] = v$.
 \item $\pd[k_1,k_2|i,j] = \pd$.
\end{itemize}

By Definition~\ref{def:priv-thr}, we have that for every $r \geq 0$, $h_r^* = \hevol[h,r|\vec{\sigma}^*]$, and $s_i^* \in \sig[\vec{\sigma}^*[h_r^*]|h_r^*]$,
\begin{equation}
\label{eq:priv-corr1}
\ds_i[j|h_{i,r+1}^*] = L_1[r+1|\vec{\sigma}^*] \cup L_2[r+1|\vec{\sigma}^*],
\end{equation}
where 
\begin{equation}
\label{eq:priv-corr1-1}
\begin{array}{ll}
L_1[r+1|\vec{\sigma}^*] = & \{(k_1,k_2,r'+1)|(k_1,k_2,r') \in \ds_{i}[j|h_{i,r}^*] \land r' +1 < \pd\}.\\
L_2[r+1|\vec{\sigma}^*] = & \{(k_1,k_2,v) | \land s_i^*[k_1,k_2] = \mbox{\emph{defect}}\}.
\end{array}
\end{equation}
Similarly, for every $r\geq0$, $h_r' = \hevol[h,r|\vec{\sigma}']$, and $s_i' \in \sig[\vec{\sigma}'[h_r']|h_r']$,
\begin{equation}
\label{eq:priv-corr1-2}
\ds_i[j|h_{i,r+1}'] = L_1[r+1|\vec{\sigma}'] \cup L_2[r+1|\vec{\sigma}'],
\end{equation}
where 
\begin{equation}
\label{eq:priv-corr1-3}
\begin{array}{ll}
L_1[r+1|\vec{\sigma}'] = & \{(k_1,k_2,r'+1)|(k_1,k_2,r') \in \ds_{i}[j|h_{i,r}'] \land r' +1 < \pd\}.\\
L_2[r+1|\vec{\sigma}'] = & \{(k_1,k_2,v) | s_i'[k_1,k_2] = \mbox{\emph{defect}}\}.
\end{array}
\end{equation}

For any $r \geq 0$, let $s_i' \in \sig[\vec{\sigma}'[h_{r}']|h_{r}']$ and $s_i^* \in \sig[\vec{\sigma}^*[h_{r}^*]|h_{r}^*]$,
where $h_r^* = \hevol[h,r|\vec{\sigma}^*]$, $h_r' =  \hevol[h,r|\vec{\sigma}']$, and $\vec{\sigma}' = \vec{\sigma}^*[h|\vec{p}']$.

By Lemma~\ref{lemma:pevol-aux}, we have:
\begin{enumerate}
 \item $s_i^*[k_1,k_2] = s_i'[k_1,k_2]$ for $k_2 \notin \cd_{k_1}[\vec{p}'|h]$ and $k_2 \in \cd_{k_1}[\vec{p}'|h]$ such that $\del_i> r$.
 \item $s_i'[k_1,k_2] = \mbox{\emph{defect}}$ and $s_i^*[k_1,k_2] = \mbox{\emph{cooperate}}$ for $k_2 \in \cd_{k_1}[\vec{p}'|h]$ such that $\del_i= r$.
 \item $s_i'[k_1,k_2] =s_i^*[k_1,k_2]= \mbox{\emph{cooperate}}$ for $k_2 \in \cd_{k_1}[\vec{p}'|h]$ such that $\del_i< r$.
\end{enumerate}

By~\ref{eq:priv-corr1-1}, and items 1), 2), and 3) above,
\begin{equation}
\label{eq:priv-corr1-4}
\begin{array}{ll}
L_2[r+1|\vec{\sigma}^*] &=  \{(k_1,k_2,v) | s_i^*[k_1,k_2] = \mbox{\emph{defect}}\}\\
			   		&=  \{(k_1,k_2,v) |  k_2 \notin \cd_{k_1}[\vec{p}'|h] \land s_i^*[k_1,k_2] = \mbox{\emph{defect}}\}\cup\\
					& \{(k_1,k_2,v) | k_2 \in \cd_{k_1}[\vec{p}'|h] \land \del_i > r  \land s_i^*[k_1,k_2] = \mbox{\emph{defect}}\},
\end{array}
\end{equation}
Equation~\ref{eq:priv-corr1-4}, 1), 2), and 3) allows us to write:
\begin{equation}
\label{eq:priv-corr1-5}
\begin{array}{ll}
L_2[r+1|\vec{\sigma}'] &= \{(k_1,k_2,v) | s_i'[k_1,k_2] = \mbox{\emph{defect}}\}\\
                                        &\\
                                         &= \{(k_1,k_2,v) | k_2 \notin \cd_{k_1}[\vec{p}',h] \land s_i'[k_1,k_2] = \mbox{\emph{defect}}\}\\
                                        & \cup \{(k_1,k_2,v) | k_2 \in \cd_{k_1}[\vec{p}'|h] \land  s_i'[k_1,k_2] = \mbox{\emph{defect}}\}\\
                                        &\\
                                        &=\{(k_1,k_2,v) | k_2 \notin \cd_{k_1}[\vec{p}'|h] \land s_i^*[k_1,k_2] = \mbox{\emph{defect}}\}\cup\\
		                     & \{(k_1,k_2,v) | k_2 \in \cd_{k_1}[\vec{p}'|h] \land \del_i > r  \land s_i^*[k_1,k_2] = \mbox{\emph{defect}}\} \cup\\
		                     & \{(k_1,k_2,v) | k_2 \in \cd_{k_1}[\vec{p}'|h] \land \del_i = r\}\\
		                     &\\
		                     & =L_2[r+1|\vec{\sigma}^*] \cup \{(k_1,k_2,v) | k_2 \in \cd_{k_1}[\vec{p}'|h] \land \del_i = r\}.
\end{array}
\end{equation}

Now proceed to the base case, for $r=0$. Since $h=\hevol[h,0|\vec{\sigma}^*] = \hevol[h,0|\vec{\sigma}']$, by~\ref{eq:priv-corr1-1} and~\ref{eq:priv-corr1-3},
it is true that: 
\begin{equation}
\label{eq:priv-corr1-6}
L_1[1|\vec{\sigma}'] = L_1[1|\vec{\sigma}^*].
\end{equation}

Furthermore, by~\ref{eq:priv-corr1-5},
\begin{equation}
\label{eq:priv-corr1-7}
\begin{array}{ll}
L_2[r+1|\vec{\sigma}'] &= L_2[r+1|\vec{\sigma}^*] \cup \{(k_1,k_2,v) | k_2 \in \cd_{k_1}[\vec{p}'|h] \land \del_i = 0\}\\
                                        &= L_2[r+1|\vec{\sigma}^*] \cup \{(k_1,k_2,r - \del_i + v) | k_2 \in \cd_{k_1}[\vec{p}'|h] \land 1 \geq \del_i +1 \land 1 < \del_i + \pd - v\}\\
                                        &= L_2[r+1|\vec{\sigma}^*] \cup \{(k_1,k_2,r- \del_i + v) | k_2 \in \cd_{k_1}[\vec{p}'|h] \land r +1 \in \{\del_i + 1 \ldots  \del_i + \pd - v\}\}.
\end{array}
\end{equation}

The base case is true by~\ref{eq:priv-corr1},~\ref{eq:priv-corr1-2},~\ref{eq:priv-corr1-6} and~\ref{eq:priv-corr1-7}.

Hence, assume the induction hypothesis for some $r \geq 0$ and consider the induction step for $r+1$, which
consists in determining the value of $\ds_i[j|h_{i,r+2}']$.

By the induction hypothesis and by~\ref{eq:priv-corr1-1},
\begin{equation}
\label{eq:priv-corr1-8}
\begin{array}{ll}
L_1[r+2|\vec{\sigma}'] & = \{(k_1,k_2,r'+1)|(k_1,k_2,r') \in (A \cup B)\land r'+1 < \pd\}\\
                                            & =  \{(k_1,k_2,r'+1)|(k_1,k_2,r') \in A \land r'+1 < \pd\} \cup \\
                                            & \{(k_1,k_2,r'+1)|(k_1,k_2,r') \in B \land r'+1 < \pd\},
\end{array}
\end{equation}

where 
$$
\begin{array}{ll}
A = & \ds_{i}[j|h_{i,r+1}^*]\\
B = &\{(k_1,k_2,r - \del_i+v) |k_2 \in \cd_{k_1}[\vec{p}'|h] \land r+1 \in \{\del_i+1 \ldots \del_i + \pd-v\}.
\end{array}
$$

We have by~\ref{eq:priv-corr1-1}
\begin{equation}
\label{eq:priv-corr1-9}
\begin{array}{ll}
 &\{(k_1,k_2,r'+1)|(k_1,k_2,r') \in A \land r'+1 < \pd\}\\
=& \{(k_1,k_2,r'+1)|(k_1,k_2,r') \in \ds_{i}[j|h_{i,r+1}^*] \land r'+1 < \pd\}\\
=L_1[r+2|\vec{\sigma}^*].
\end{array}
\end{equation}

It is also true that
\begin{equation}
\label{eq:priv-corr1-10}
\begin{array}{ll}
  &\{(k_1,k_2,r'+1)|(k_1,k_2,r') \in B \land r'+1 < \pd\}\\
=&  \{(l_1,l_2,r'+1) | (l_1,l_2,r') \in  \{(k_1,k_2,r - \del_i+v) | k_2 \in \cd_{k_1}[\vec{p}'|h] \land \\
               &r+1 \in \{\del_i +1\ldots \del_i + \pd-v\}\} \land r'+1<\pd\}\\
&\\
= & \{(k_1,k_2,r+1 - \del_i+v) |k_2 \in \cd_{k_1}[\vec{p}'|h] \land r+1 \in \{\del_i+1 \ldots \del_i + \pd-v\} \land r +1 - \del_i + v < \pd\}\\
   &\\
 =& \{(k_1,k_2,r +1- \del_i+v) |k_2 \in \cd_{k_1}[\vec{p}'|h] \land r+1 \in \{\del_i+1 \ldots \del_i + \pd-v-1\} \}\\
   &\\
   =& \{(k_1,k_2,r +1- \del_i+v) |k_2 \in \cd_{k_1}[\vec{p}'|h] \land r+2 \in \{\del_i+2 \ldots \del_i + \pd-v \}\}
 \end{array}
\end{equation}

By~\ref{eq:priv-corr1},~\ref{eq:priv-corr1-2},~\ref{eq:priv-corr1-7},~\ref{eq:priv-corr1-8},~\ref{eq:priv-corr1-9}, and~\ref{eq:priv-corr1-10},
$$
\begin{array}{ll}
\ds_i[j|h_{i,r+2}'] & = L_1[r+2|\vec{\sigma}^*] \cup\\
			 & \{(k_1,k_2,r+1 - \del_i+v) | k_2 \in \cd_{k_1}[\vec{p}'|h] \land r+2 \in \{\del_i+2 \ldots \del_i + \pd]-v \} \cup\\
			 & L_2[r+2|\vec{\sigma}^*] \cup  \{(k_1,k_2,v) |  k_2 \in \cd_{k_1}[\vec{p}'|h] \land \del_i = r+1\}\\
                             &\\
                             &= L_1[r+2|\vec{\sigma}^*] \cup L_2[r+2|\vec{\sigma}^*] \cup \\
                              & \{(k_1,k_2,r+1 - \del_i+v) | k_2 \in \cd_{k_1}[\vec{p}'|h] \land r+2 \in \{\del_i+2 \ldots \del_i + \pd-v \}\cup\\                              
                              & \{(k_1,k_2,r +1 - \del_i + v) |  k_2 \in \cd_{k_1}[\vec{p}'|h] \land \del_i+1 = r+2\}\\
                              &\\
                              &= \ds_i[j|h_{i,r+2}^*] \cup\\
 			&  \{(k_1,k_2,r+1 - \del_i+v) | k_2 \in \cd_{k_1}[\vec{p}'|h] \land r+2 \in \{\del_i+1 \ldots \del_i + \pd-v\}.
\end{array}
$$
This proves the induction step and concludes the proof.
\end{proof}
%%%%%%%%%%%%%%%%%%%%%%%%%%%%%%%%%%%%%%%%%%%%%%%%%%%%%%%%%%%%%%%%%%%

\newpage
\subsubsection{Proof of Lemma~\ref{lemma:priv-history}.}
\label{proof:lemma:priv-history}
For every $i \in {\cal N}$, $j \in {\cal N}_i$, $h \in {\cal H}$, and $h_i,h_j \in h$:
$$p_i[j|h_i] = p_i[j|h_j].$$

\begin{proof}
Fix $i$, $j$, $h$, and $h_i,h_j \in h$.
Consider any tuple $(k_1,k_2,r) \in \ds_i[j|h_i]$ and let $K_i$ and $K_j$ represent
the sets used by $i$ and $j$ to compute $p_i[j|h_i]$ and $p_i[j|h_j]$, respectively.
By Lemma~\ref{lemma:priv-corr-1} and Definitions~\ref{def:privsig} and~\ref{def:priv-thr}, 
it is true that for some $r' \geq d_i[k_1,k_2]$:
\begin{equation}
\label{eq:pvh-0}
r=r' - \del_i[k_1,k_2]+v_i,
\end{equation}
$|h_i| \geq r'+1$ and, for $s_i = h_i^{r'+1}$, 
\begin{equation}
\label{eq:pvh-1}
s_i[k_1,k_2] = \mbox{\emph{defect}},
\end{equation}
where $v_i = \min[\del_i[k_1,k_2] - \del_j[k_1,k_2],0]$.

By Definition~\ref{def:privsig}, this implies that 
$$|h_i|\geq r' + \del_i[k_1,k_2]+1,$$ 
and for $h_{k_2} \in h$ and $s_{k_2} = h_{k_2}^{r' + \del_i[k_1,k_2]+1}$:
\begin{equation}
\label{eq:pvh-2}
s_{k_2}[k_1,k_2] = \mbox{\emph{defect}}.
\end{equation}

Since $r' \geq \del_i[k_1,k_2]$, if $r <0$, then we have  by~\ref{eq:pvh-0}
$$v_i < 0 \Rightarrow \del_i[k_1,k_2] < \del_j[k_1,k_2],$$
which implies by Definition~\ref{def:privsig}, 
that $j$ has yet to observe the defection that caused $i$ to add $(k_1,k_2,v_i)$ to $\ds_i[j|h_i]$.
Consequently, $j$ has not included this tuple in $\ds_j[i|h_j]$ or in $K_j$. Also, by Definition~\ref{def:priv-thr},
$i$ does not include the tuple in $K_i$, since $r < 0$. 

Consider, now, that $r \geq0$, where by Definition~\ref{def:priv-thr} $i$ adds the tuple to $K_i$. By~\ref{eq:pvh-0},
\begin{equation}
\label{eq:pvh-3}
r' \geq \del_i[k_1,k_2] - v_i \geq \del_i[k_1,k_2] - \del_i[k_1,k_2] + \del_j[k_1,k_2] = \del_j[k_1,k_2].
\end{equation}

Furthermore, since by Definition~\ref{def:priv-thr}, $r < \pd[k_1,k_2|i,j]$, we also have by~\ref{eq:pvh-0}
\begin{equation}
\label{eq:pvh-4}
r' < \pd[k_1,k_2|i,j] + \del_i[k_1,k_2] - v_i.
\end{equation}
If $\del_i[k_1,k_2] \leq \del_j[k_1,k_2]$, then $v_i <0$, $v_j = 0$, and by~\ref{eq:pvh-3} and~\ref{eq:pvh-4} we have
$$r' < \pd[k_1,k_2|i,j] + \del_j[k_1,k_2] - v_j,$$
$$r' +1\in \{\del_j[k_1,k_2]+1 \ldots \del_j[k_1,k_2] + \pd[k_1,k_2|i,j] -v_j\},$$
where $v_j = \min[\del_j[k_1,k_2] - \del_i[k_1,k_2],0]$.
By Lemma~\ref{lemma:priv-corr-1}, $j$ adds $(k_1,k_2,r'-\del_j[k_1,k_2] + v_j)$ to $\ds_j[i|h_j]$,
such that $v_j=0$ and by~\ref{eq:pvh-3}:
\begin{equation}
\label{eq:pvh-5}
r' -\del_j[k_1,k_2] +v_j \geq 0.
\end{equation}

If $\del_i[k_1,k_2] > \del_j[k_1,k_2]$, then $v_i = 0$,
$$v_j = -(\del_i[k_1,k_2] - \del_j[k_1,k_2]),$$
hence, by~\ref{eq:pvh-4}
$$\del_j[k_1,k_2] + \pd[k_1,k_2|i,j] -v_j = \pd[k_1,k_2|i,j] + \del_i[k_1,k_2] - v_i> r'.$$
Therefore, by~\ref{eq:pvh-3},
$$r' +1\in \{\del_j[k_1,k_2]+1 \ldots \del_j[k_1,k_2] + \pd[k_1,k_2|i,j] -v_j\},$$
and by Lemma~\ref{lemma:priv-corr-1} $j$ adds $(k_1,k_2,r'-\del_j[k_1,k_2] + v_j)$ to $\ds_j[i|h_j]$.
Again, by~\ref{eq:pvh-0} and the assumption that $r \geq 0$,
\begin{equation}
\label{eq:pvh-6}
r'-\del_j[k_1,k_2] + v_j = r' - \del_i[k_1,k_2]  = r + \del_i[k_1,k_2] - \del_i[k_1,k_2]-v_i =r \geq 0.
\end{equation}
In any case, by~\ref{eq:pvh-5} and~\ref{eq:pvh-6}, $j$ adds the tuple to $K_j$.
This proves that $i$ adds the tuple to $K_i$ iff $j$ adds the tuple to $K_j$,
implying that $K_i = K_j$. Since $p_i[j|h_i]$ and $p_i[j|h_j]$ are obtained by applying the same 
deterministic functions to $K_i$ and $K_j$, respectively, we have
$$p_i[j|h_i] = p_i[j|h_j].$$
\end{proof}

\newpage

\subsection{Generic Results}
\label{sec:proof:priv-drop}

\subsubsection{Proof of Proposition~\ref{prop:priv-folk}.}
\label{proof:prop:priv-folk}
For every assessment $(\vec{\sigma}^*,\vec{\mu}^*)$, if $(\vec{\sigma}^*,\vec{\mu}^*)$ is Sequentially Rational,
then, for every $i \in {\cal N}$, $\frac{\beta_i}{\gamma_i} \geq \bar{p}_i$. 
Consequently, $\psi[\vec{\sigma}^*] \subseteq (v,\infty)$,
where $v = \max_{i \in {\cal N}} \bar{p}_i$.

\begin{proof}
Let $\vec{p}^* = \vec{\sigma}^*[\emptyset]$.
For $h_i = \emptyset$, the only history $h$ that fulfills $\mu_i^*[h|h_i]>0$ is $h = \emptyset$.
Therefore, the equilibrium utility is also
$$\pi_i[\vec{\sigma}^*|\vec{\mu}^*,\emptyset] = \sum_{r=0}^\infty\omega_i^r (1 - q_i[\vec{p}^*])(\beta_i - \gamma_i \bar{p}_i) = \frac{1 - q_i[\vec{p}^*]}{1-\omega_i}(\beta_i - \gamma_i \bar{p}_i).$$
If $\frac{\beta_i}{\gamma_i} \leq \bar{p}_i$, then 
\begin{equation}
\label{eq:priv-od-1}
\pi_i[\vec{\sigma}^*|\vec{\mu}^*,\emptyset]  \leq 0.
\end{equation}

Let $\sigma_i' \in \Sigma_i$ be a strategy such that, for every $h_i \in {\cal H}_i$, $\sigma_i'[h_i] = \vec{0}$,
and let $\vec{\sigma}' = (\sigma_i',\vec{\sigma}_{-i}^*)$, where $\vec{0}=(0)_{j \in {\cal N}_i}$. We have 
\begin{equation}
\label{eq:priv-od-2}
\pi_i[\vec{\sigma}'|\vec{\mu}^*,\emptyset] = (1-q_i[\vec{p}^*])\beta_i + \pi_i[\vec{\sigma}'|(\sig[\vec{p}'|\emptyset])] \geq (1-q_i[\vec{p}^*])\beta_i,
\end{equation}
where $\vec{p}' = (\vec{0},\vec{p}^*_{-i})$. By Lemma~\ref{lemma:pprob},
$q_i [\vec{p}^*] < 1$. Since $ \pi_i[\vec{\sigma}'|\sig[\vec{p}'|\emptyset]] \geq 0$, it is true that
$$\pi_i[\vec{\sigma}^*|\vec{\mu}^*,\emptyset] \leq 0 < \pi_i[\vec{\sigma}'|\vec{\mu}^*,\emptyset].$$
This contradicts the assumption that $\vec{\sigma}^*$ is a SPE.
\end{proof}
%%%%%%%%%%%%%%%%%%%%%%%%%%%%%%%%%%%%%%%%%%%%%%%%%%%%%%%%%%%

\newpage
\subsubsection{Proof of Lemma~\ref{lemma:priv-best-response1}.}
\label{proof:lemma:priv-best-response1}
For every $i \in {\cal N}$, $h_i \in {\cal H}_i$, $a_i \in BR[\vec{\sigma}_{-i}^*|\vec{\mu}^*,h_i]$, and $\vec{p}_i \in {\cal P}_i$ such that $a_i[\vec{p}_i] > 0$,
it is true that for every $j \in {\cal N}_i$ we have $p_i[j] \in \{0,p_i[j|h_i]\}$.

\begin{proof}
Suppose then that there exist $i \in {\cal N}$, $h_i \in {\cal H}_i$, $a_i^1 \in BR[\vec{\sigma}_{-i}^*|\vec{\mu}^*,h_i]$, 
and $\vec{p}^1_i \in {\cal P}_i$ such that $a_i^1[\vec{p}^1_i]>0$ and there exists $j \in {\cal N}_i$ such that $p_i^1[j] \notin \{0,p_i[j|h_i]\}$.
Fix any $h$ such that $h_i \in h$ and define an alternative $a_i^2 \in {\cal A}_i$:

\begin{itemize}
  \item  $\vec{a}^1 = (a_i^1,\vec{p}_{-i}^*)$ and $\vec{a}^2 = (a_i^2,\vec{p}_{-i}^*)$, where $\vec{p}^* = \vec{\sigma}^*[h]$.
  \item Define $\vec{p}_i^2 \in {\cal P}_i$ such that for every $j \in {\cal N}_i$, if $p_i^1[j] \geq p_i[j|h_i]$, then $p_i^2[j] = p_i[j|h_i]$, else, $p_i^2[j]=0$.
  \item Set $a_i^2[\vec{p}_i^2] = a_i^1[\vec{p}_i^1] + a_i^1[\vec{p}_i^2]$ and $a_i^2[\vec{p}_i^1] = 0$.
  \item For every $\vec{p}_i'' \in {\cal P}_i \setminus \{\vec{p}_i^1,\vec{p}_i^2\}$, set $a_i^2[\vec{p}_i''] = a_i^1[\vec{p}_i'']$.
  \item Define $\sigma_i^1 = \sigma_i^*[h_i|\vec{p}_i^1]$ and $\sigma_i^2 = \sigma_i^*[h_i|\vec{p}_i^2]$.
  \item Set $\vec{\sigma}^1 = (\sigma_i^1,\vec{\sigma}_{-i}^*)$ and $\vec{\sigma}^2 = (\sigma_i^2,\vec{\sigma}_{-i}^*)$.
\end{itemize}

Notice that for any $j \in {\cal N}_i$, $p_i^1[j] \geq p_i^2[j]$ and $p_i^1[j] \geq p_i[j|h_i]$ iff $p_i^2[j] \geq p_i[j|h_i]$.
Thus, by Definition~\ref{def:pubsig}, for every $s \in {\cal S}$,
\begin{equation}
\label{eq:pvbr-1}
\begin{array}{l}
pr_i[s|a_i^1,h] = pr_i[s|a_i^2,h].\\
pr[s|\vec{a}^1,h] = pr[s|\vec{a}^2,h].
\end{array}
\end{equation}

Moreover, for some $j \in {\cal N}_i$, $p_i^1[j|h_i] > p_i^2[j|h_i]$, thus, it is true that
\begin{equation}
\label{eq:pvbr-2}
u_i[\vec{a}^1] < u_i[\vec{a}^2].
\end{equation}
Recall that
$$\pi_i[\vec{\sigma}^1|h] = u_i[\vec{a}^1] + \omega_i\sum_{s \in {\cal S}} \pi_i[\vec{\sigma}^1|(h,s)]pr[s|\vec{a}^1,h],$$
$$\pi_i[\vec{\sigma}^2|h] = u_i[\vec{a}^2] + \omega_i\sum_{s \in {\cal S}} \pi_i[\vec{\sigma}^2|(h,s)]pr[s|\vec{a}^2,h].$$

By~\ref{eq:pvbr-1} and the definition of $\vec{\sigma}^1$ and $\vec{\sigma}^2$,
$$\sum_{s \in {\cal S}} \pi_i[\vec{\sigma}|(h,s)]pr[s|\vec{a}^1,h] = \sum_{s \in {\cal S}} \pi_i[\vec{\sigma}|(h,s)]pr[s|\vec{a}^2,h].$$

It follows from~\ref{eq:pvbr-2} that, for every $h \in {\cal H}$ such that $h_i \in h$,
$$\pi_i[\vec{\sigma}^1|h] < \pi_i[\vec{\sigma}^2|h].$$
Consequently,
$$\pi_i[\vec{\sigma}^1|h_i] < \pi_i[\vec{\sigma}^2|h_i].$$

This is a contradiction, since $a_i^1 \in BR[\vec{\sigma}_{-i}^*|h_i]$ by assumption, concluding the proof.
\end{proof}
%%%%%%%%%%%%%%%%%%%%%%%%%%%%%%%%%%%%%%%%%%%%%%%%%%%%%%%%%%%%%%%%%%
%%%%%%%%%%%%%%%%%%%%%%%%%%%%%%%%%%%%%%%%%%%%%%%%%%%%%%%%%%%%%%%%%%
%%%%%%%%%%%%%%%%%%%%%%%%%%%%%%%%%%%%%%%%%%%%%%%%%%%%%%%%%%%%%%%%%%
\newpage

\subsubsection{Proof of Lemma~\ref{lemma:priv-best-response2}.}
\label{proof:lemma:priv-best-response2}
For every $i \in {\cal N}$ and $h_i \in {\cal H}_i$, there exists $a_i \in BR[\vec{\sigma}_{-i}^*|\vec{\mu}^*,h_i]$ and $\vec{p}_i \in {\cal P}_i$
such that $a_i[\vec{p}_i] = 1$.

\begin{proof}
Fix $i$ and $h_i$. If $BR[\vec{\sigma}_{-i}^*|\vec{\mu}^*,h_i]$ only contains pure strategies for the stage game,
since $BR[\vec{\sigma}_{-i}^*|\vec{\mu}^*,h_i]$ is not empty, the result follows.
Suppose then that there exists a mixed strategy $a_i^1 \in BR[\vec{\sigma}_{-i}^*|\vec{\mu}^*,h_i]$. 
We know from Lemma~\ref{lemma:priv-best-response1} that every such $a_i$ attributes positive probability to one of two probabilities in $\{0,p_i[j|h_i]\}$, for every $j \in {\cal N}_i$.
Denote by ${\cal P}_i^*[h_i]$ the finite set of profiles of probabilities that fulfill the condition of Lemma~\ref{lemma:best-response1}, i.e., for every $\vec{p}_i \in {\cal P}_i^*[h_i]$ and $j \in {\cal N}_i$,
$p_i[j] \in \{0,p_i[j|h_i]\}$. Define ${\cal P}^*[h]$ similarly for any $h \in {\cal H}$.

For any $h \in {\cal H}$ such that $h_i \in h$, we can write
\begin{equation}
\label{eq:priv-br2-0}
\pi_i[\vec{\sigma}^1|h]  = \sum_{\vec{p}_i \in {\cal P}^*_i[h_i]} (u_i[\vec{p}] + \omega_i \pi_i[\vec{\sigma}^1|(h,\sig[\vec{p}|h])])a_i^1[\vec{p}_i],
\end{equation}
where $\vec{p} = (\vec{p}_i,\vec{p}_{-i}^*)$ and $\vec{p}^* = \vec{\sigma}^*[h]$.

For any $\vec{p}_i^1 \in {\cal P}_i^*[h_i]$ such that $a_i[\vec{p}_i^1] >0$, let
$\sigma_i^1 = \sigma_i^*[h_i|a_i^1]$, $\vec{\sigma}^1 = (\sigma_i^1,\vec{\sigma}^*_{-i})$, $\sigma_i' = \sigma_i^*[h_i|\vec{p}_i^1]$,
and $\vec{\sigma}' = (\sigma_i',\vec{\sigma}^*_{-i})$.

There are three possibilities:
\begin{enumerate}
 \item $\pi_i[\vec{\sigma}^1|\vec{\mu}^*,h_i] = \pi_i[\vec{\sigma}'|\vec{\mu}^*,h_i]$.
 \item $\pi_i[\vec{\sigma}^1|\vec{\mu}^*,h_i] < \pi_i[\vec{\sigma}'|\vec{\mu}^*,h_i]$.
 \item $\pi_i[\vec{\sigma}^1|\vec{\mu}^*,h_i] > \pi_i[\vec{\sigma}'|\vec{\mu}^*,h_i]$.
\end{enumerate}

In possibility~1, it is true that there exists $a_i' \in BR[\vec{\sigma}_{-i}^*|\vec{\mu}^*,h_i]$ such that $a_i'[\vec{p}_i^1] = 1$ and the result follows. Possibility~2 contradicts
the assumption that $a_i^1 \in BR[\vec{\sigma}_{-i}^*|\vec{\mu}^*,h_i]$. 

Finally, consider that possibility~3 is true. Recall that $a_i^1$ being mixed implies $a_i^1[\vec{p}_i^1] < 1$.
Thus, there must exist $\vec{p}_i^2 \in {\cal P}_i^*[h_i]$, $\sigma_i''=\sigma_i^*[h_i|\vec{p}_i^2]$, and $\vec{\sigma}'' = (\sigma_i',\vec{\sigma}^*_{-i})$,
such that $a_i^1[\vec{p}_i^2] > 0$ and
\begin{equation}
\label{eq:priv-br2-1}
\pi_i[\vec{\sigma}'|\vec{\mu}^*,h_i] < \pi_i[\vec{\sigma}''|\vec{\mu}^*,h_i].
\end{equation}

Here, we can define $a_i^2 \in {\cal A}_i$ such that:
\begin{itemize}
  \item $a_i^2[\vec{p}_i^2]=a_i^1[\vec{p}_i^1] + a_i^1[\vec{p}_i^2]$; 
  \item $a_i^2[\vec{p}_i^1] = 0$.
  \item For every $\vec{p}_i'' \in {\cal P}_i^*[h_i] \setminus \{\vec{p}_i^1,\vec{p}_i^2\}$, $a_i^2[\vec{p}_i''] = a_i^1[\vec{p}_i'']$.  
\end{itemize}

Now, let $\sigma_i^2 = \sigma_i^*[h_i|a_i^2]$, and $\vec{\sigma}^2 = (\sigma_i^2,\vec{\sigma}^*_{-i})$.

By~\ref{eq:priv-br2-0}, it holds that for every $h \in {\cal H}$ such that $h_i \in h$:
$$\pi_i[\vec{\sigma}^1|h] = l_1 + \pi_i[\vec{\sigma}'|h]a_i^1[\vec{p}_i^1] + \pi_i[\vec{\sigma}''|h]a_i^1[\vec{p}_i^2],$$
$$\pi_i[\vec{\sigma}^2|h] = l_2 + \pi_i[\vec{\sigma}''|h]a_i^2[\vec{p}_i^2],$$
where 
$$l_1 = \sum_{\vec{p}'' \in {\cal P}^*[h] \setminus \{\vec{p}_i^1,\vec{p}_i^2\}}  (u_i[\vec{p}''] + \omega_i \pi_i[\vec{\sigma}^1|h,s''])a_i^1[\vec{p}_i''],$$
$$l_2 = \sum_{\vec{p}'' \in {\cal P}^*[h] \setminus \{\vec{p}_i^1,\vec{p}_i^2\}}  (u_i[\vec{p}''] + \omega_i \pi_i[\vec{\sigma}^2|h,s''])a_i^2[\vec{p}_i''],$$
and $s'' = \sig[\vec{p}''|h]$.

By the definition of $a_i^2$, we have that $l_1 = l_2$. 
It follows that
$$
\begin{array}{ll}
\pi_i[\vec{\sigma}^1|h] - \pi_i[\vec{\sigma}^2|h] & = \pi_i[\vec{\sigma}'|h]a_i^1[\vec{p}_i^1]  + \pi_i[\vec{\sigma}''|h]a_i^1[\vec{p}_i^2] - \pi_i[\vec{\sigma}''|h](a_i^1[\vec{p}_i^2] + a_i^1[\vec{p}_i^1]) \\
                                                                                   & = (\pi_i[\vec{\sigma}'|h] - \pi_i[\vec{\sigma}''|h])a_i^1[\vec{p}_i^1].
\end{array}
$$

Consequently, by~\ref{eq:priv-br2-1},
$$
\begin{array}{ll}
\pi_i[\vec{\sigma}^1|\vec{\mu}^*,h_i] - \pi_i[\vec{\sigma}^2|\vec{\mu}^*,h_i] &= \sum_{h \in {\cal H}}\mu_i^*[h|h_i](\pi_i[\vec{\sigma}^1|h] - \pi_i[\vec{\sigma}^2|h])\\
														  &=\sum_{h \in {\cal H}}\mu_i^*[h|h_i](\pi_i[\vec{\sigma}'|h] - \pi_i[\vec{\sigma}''|h])a_i^1[\vec{p}_i^1]\\
														  &=(\pi_i[\vec{\sigma}'|\vec{\mu}^*,h_i] - \pi_i[\vec{\sigma}''|\vec{\mu}^*,h_i])a_i^1[\vec{p}_i^1]\\
														  &<0.
\end{array}
$$
Thus,
$$\pi_i[\vec{\sigma}^1|\vec{\mu}^*,h_i] < \pi_i[\vec{\sigma}^2|\vec{\mu}^*,h_i],$$

contradicting the assumption that $a_i^1 \in BR[\vec{\sigma}_{-i}^*|\vec{\mu}^*,h_i]$. This concludes the proof.
\end{proof}
%%%%%%%%%%%%%%%%%%%%%%%%%%%%%%%%%%%%%%%%%%%%%%%%%%%%%%%%%%%%%%%%%%
%%%%%%%%%%%%%%%%%%%%%%%%%%%%%%%%%%%%%%%%%%%%%%%%%%%%%%%%%%%%%%%%%%
%%%%%%%%%%%%%%%%%%%%%%%%%%%%%%%%%%%%%%%%%%%%%%%%%%%%%%%%%%%%%%%%%%

\newpage
\subsubsection{Proof of Lemma~\ref{lemma:priv-best-response}.}
\label{proof:lemma:priv-best-response}
For every $i \in {\cal N}$ and $h_i \in {\cal H}_i$, there exists $\vec{p}_i \in {\cal P}_i$ and a pure strategy $\sigma_i=\sigma_i^*[h_i|\vec{p}_i]$ such that:
\begin{enumerate}
 \item For every $j \in {\cal N}_i$, $p_i[j] \in \{0,p_{i}[j|h_i]\}$.
 \item For every $a_i \in {\cal A}_i$, $\pi_i[\sigma_i,\vec{\sigma}_{-i}^*|\vec{\mu}^*,h_i] \geq \pi_i[\sigma_i',\vec{\sigma}_{-i}^*|\vec{\mu}^*,h_i]$,
where $\sigma_i' = \sigma_i^*[h_i|a_i]$.
\end{enumerate}

\begin{proof}
Consider any $i \in {\cal N}$ and $h_i \in {\cal H}$.
From Lemma~\ref{lemma:priv-best-response2}, it follows that there exists $a_i \in BR[\vec{\sigma}_{-i}^*|\vec{\mu}^*,h_i]$
and $\vec{p}_i \in {\cal P}_i$ such that $a_i[\vec{p}_i] = 1$. By Lemma~\ref{lemma:priv-best-response1},
every such $a_i$ and $\vec{p}_i$ such that $a_i[\vec{p}_i]=1$ fulfill Condition~$1$.
Condition~$2$ follows from the definition of $BR[\vec{\sigma}_{-i}^*|\vec{\mu}^*,h_i]$.
\end{proof}
%%%%%%%%%%%%%%%%%%%%%%%%%%%%%%%%%%%%%%%%%%%%%%%%%%%%%%%%%%%%%%%%%%
%%%%%%%%%%%%%%%%%%%%%%%%%%%%%%%%%%%%%%%%%%%%%%%%%%%%%%%%%%%%%%%%%%
%%%%%%%%%%%%%%%%%%%%%%%%%%%%%%%%%%%%%%%%%%%%%%%%%%%%%%%%%%%%%%%%%%

\subsubsection{Proof of Lemma~\ref{lemma:priv-drop-suff}.}
\label{proof:lemma:priv-drop-suff}
If the PDC Condition is fulfilled and $(\vec{\sigma}^*,\vec{\mu}^*)$ is Preconsistent, then $(\vec{\sigma}^*,\vec{\mu}^*)$ is Sequentially Rational.

\begin{proof}
Assume that Inequality~\ref{eq:priv-drop} holds for every player $i$, history $h_i$ and
$D \subseteq {\cal N}_{i}[h_i]$. In particular, these assumptions imply that,
for each $\vec{p}_i \in {\cal P}_i$ such that $p_i[j] \in \{0,p_i[j|h_i]\}$ for every $j \in {\cal N}_i$,
we have 
\begin{equation}
\label{eq:priv-gcs}
\pi_i[\vec{\sigma}^*|h_i] \geq \pi_i[\sigma_i,\vec{\sigma}^*_{-i}|h_i],
\end{equation}
where $\sigma_i = \sigma_i^*[h_i|\vec{p}_i]$. By Lemma~\ref{lemma:priv-best-response}, 
there exists one such $\vec{p}_i$ such that $\sigma_i$ is a local best response.
Consequently, by~\ref{eq:priv-gcs}, for every $a_i \in {\cal A}_i$ and $\sigma_i'=\sigma_i^*[h|a_i]$,
$$\pi_i[\vec{\sigma}^*|h_i] \geq \pi_i[\sigma_i',\vec{\sigma}^*_{-i}|h_i].$$
By Property~\ref{prop:priv-one-dev}, $(\vec{\sigma}^*,\vec{\mu}^*)$ is Sequentially Rational.
\end{proof}
%%%%%%%%%%%%%%%%%%%%%%%%%%%%%%%%%%%%%%%%%%%%%%%%%%%%%%%%%%%%%%%%%%
%%%%%%%%%%%%%%%%%%%%%%%%%%%%%%%%%%%%%%%%%%%%%%%%%%%%%%%%%%%%%%%%%%
%%%%%%%%%%%%%%%%%%%%%%%%%%%%%%%%%%%%%%%%%%%%%%%%%%%%%%%%%%%%%%%%%%

\newpage

\subsubsection{Proof of Lemma~\ref{lemma:paths}.}
\label{proof:lemma:paths}
If the assessment $(\vec{\sigma}^*,\vec{\mu}^*)$ is Preconsistent and 
Sequentially Rational and $G$ is redundant, then for every $i \in {\cal N}$ and $j \in {\cal N}_i$,
there exists $k \in {\cal N} \setminus \{i\}$, $x \in \pth[s,i]$, and $x' \in \pth[j,k]$, such that $k \in x$ and $i \notin x'$.

\begin{proof}
Suppose that there exists a player $i \in {\cal N}$, and a neighbor $j \in {\cal N}_i$ 
such that for every $k \in {\cal N} \setminus \{i\}$, $x' \in \pth[j,k]$, 
and $x \in \pth[s,i]$, we have $k \notin x$ or $i \in x'$. Define $D_j \subseteq {\cal N}$ and $D \subseteq {\cal N}_i$ as:
\begin{equation}
\label{eq:pths1}
\begin{array}{l}
D_j = \{k \in {\cal N} \setminus \{i\} | \exists_{x \in \pth[j,k]} i \notin x\}.\\
D = \{j \in {\cal N}_i | \forall_{k \in D_j}\forall_{x \in \pth[s,i]} \forall_{x' \in \pth[j,k]}  k \notin x \vee i \in x'\}.
\end{array}
\end{equation}

Let $\rs_D = \cup_{j \in D} D_j$. By our assumptions, $D$ is not empty. Define $\sigma_i' = \sigma_i^*[h_i|\vec{p}_i']$ for every $h_i \in {\cal H}$ such that:
\begin{itemize}
  \item For every $j \in D$, $p_i'[j] = 0$.
  \item For every $j \in {\cal N}_i \setminus D$, $p_i'[j] = p_i[j|\emptyset]$.
\end{itemize}
Let $\vec{\sigma}' = (\sigma_i',\vec{\sigma}_{-i}^*)$.

Notice that, for every $k \in {\cal N} \setminus \{i\}$,
\begin{equation}
\label{eq:pths4}
\begin{array}{ll}
\cd_i[\vec{p}'|h] = D.\\
\cd_k[\vec{p}'|h] = \emptyset.
\end{array}
\end{equation}

For every $j \in D$ and $k \in {\cal N} \setminus (\rs_D \cup \{i\})$, we have that $\del_k[i,j]= \infty$. Therefore, by Lemma~\ref{lemma:priv-corr-1} and by~\ref{eq:pths4},
for every $l \in {\cal N}_k$, $r\geq0$,
\begin{equation}
\label{eq:pths2}
\ds_k[l|h_{k,r}'] = \ds_k[l|h_{k,r}^*],
\end{equation}
where $h_{k,r}' \in \hevol[\emptyset,r|\vec{\sigma}']$ and $h_{k,r}^* \in \hevol[\emptyset,r|\vec{\sigma}']$.

By Definition~\ref{def:priv-thr} and~\ref{eq:pths2}, we have that for every $r \geq 0$,
\begin{equation}
\label{eq:pths5}
\begin{array}{ll}
p_k[l|h_{k,r}'] &= p_k[l|h_{l,r}^*].
\end{array}
\end{equation}

Consequently, by~\ref{eq:pths1} and \ref{eq:pths5},
for every $r >0$, there exist $\vec{p}^*= \vec{\sigma}^*[\hevol[\emptyset,r-1|\vec{\sigma}^*]]$ and $\vec{p}' = \vec{\sigma}'[\hevol[\emptyset,r-1|\vec{\sigma}']]$
such that for every $x \in \pth[s,i]$ and $k \in x \setminus \{i\}$ we have $\vec{p}_k^* = \vec{p}_k'$.

It follows from Lemma~\ref{lemma:bottleneck-impact} of Appendix~\ref{sec:epidemic} that
for every $r \geq 0$
\begin{equation}
\label{eq:ph-2}
q_i[\emptyset,r|\vec{\sigma}^*] = q_i[\emptyset,r|\vec{\sigma}'].
\end{equation}

By the definition of $\vec{p}_i'$, for every $r \geq 0$,
\begin{equation}
\label{eq:pths8}
\bar{p}_i[\emptyset,r|\vec{\sigma}'] < \bar{p}_i[\emptyset,r|\vec{\sigma}^*].
\end{equation}
By the definition of $u_i[h,r|\vec{\sigma}]$, from~\ref{eq:ph-2} and~\ref{eq:pths8},
we have for every $r\geq 0$
$$u_i[\emptyset,r|\vec{\sigma}^*] < u_i[\emptyset,r|\vec{\sigma}'].$$ 
This implies that
$$\sum_{r=0}^{\infty}\omega_i^r (u_i[\emptyset,r|\vec{\sigma}^*] - u_i[\emptyset,r|\vec{\sigma}']) < 0.$$
Since $h=\emptyset$ is the only history such that $\mu_i[h|\emptyset]=1$,
by Theorem~\ref{theorem:priv-drop}, $(\vec{\sigma}^*,\vec{\mu}^*)$ cannot be Sequentially Rational, which is a contradiction.
\end{proof}
%%%%%%%%%%%%%%%%%%%%%%%%%%%%%%%%%%%%%%%%%%%%%%%%%%%%
%%%%%%%%%%%%%%%%%%%%%%%%%%%%%%%%%%%%%%%%%%%%%%%%%%%%

\newpage
\subsection{Redundancy may Reduce Effectiveness}
\label{sec:proof:coord} 

\subsubsection{Proof of Theorem~\ref{theorem:problem}.}
\label{proof:theorem:problem}
If $G$ is redundant and there exist $i \in {\cal N}$, $j \in {\cal N}_i$, and $k \in {\cal N}_i^{-1}$ such that for every $x \in \pth[j,k]$ we have $i \in x$,
then Equality~\ref{eq:problem} holds.

\begin{proof}
Assume that there exist $i \in {\cal N}$, $j \in {\cal N}_i$, and $k \in {\cal N}_i^{-1}$ such that for every $x \in \pth[j,k]$ we have $i \in x$.
This implies by Definition~\ref{def:privsig} that
\begin{equation}
\label{eq:noally}
\del_k[i,j] = \infty.
\end{equation}

Define $\sigma_i' = \sigma_i^*[h_i|\vec{p}_i']$ for every $h_i \in {\cal H}_i$, such that $p_i'[j] = 0$ and $p_i'[l] = p_i[l|h_i]$ for every $l \in {\cal N}_i \setminus \{j\}$
and let $\vec{\sigma}' = (\sigma_i',\vec{\sigma}_{-i}^*)$. 

Notice that
$$\bar{p}_i[\emptyset,r|\vec{\sigma}'] = \bar{p}_i[\emptyset,r|\vec{\sigma}^*] - p_i[j|\emptyset].$$
By Theorem~\ref{theorem:priv-drop}, if $(\vec{\sigma}^*,\vec{\mu}^*)$ is Sequentially Rational, 
then, for every $r \geq 0$ and the empty history:
\begin{equation}
\label{eq:prb-0}
\begin{array}{ll}
\sum_{r=0}^{\infty} \omega_i (u_i[\emptyset,r|\vec{\sigma}^*] - u_i[\emptyset,r|\vec{\sigma}']) & \geq 0\\
&\\
\sum_{r=0}^{\infty} \omega_i ((1-q_i[\emptyset,r|\vec{\sigma}^*])(\beta_i - \gamma_i \bar{p}_i[\emptyset,r|\vec{\sigma}^*]) -&\\
 (1-q_i[\emptyset,r|\vec{\sigma}'])(\beta_i - \gamma_i \bar{p}_i[\emptyset,r|\vec{\sigma}'])) & \geq 0\\
 &\\
\sum_{r=0}^{\infty} \omega_i ((q_i[\emptyset,r|\vec{\sigma}']-q_i[\emptyset,r|\vec{\sigma}^*])(\beta_i - \gamma_i \bar{p}_i[\emptyset,r|\vec{\sigma}^*]) \\
  -(1-q_i[\emptyset,r|\vec{\sigma}']) \gamma_i p_i[j|\emptyset]  & \geq 0.
\end{array}
\end{equation}

By Lemma~\ref{lemma:prob-1}, since $G$ is connected from $s$ and $q_i$ is continuous in [0,1], for every $r \geq 0$:
\begin{equation}
\label{eq:prb-1}
\lim_{\vec{\sigma}^* \to \vec{1}} q_i[\emptyset,r|\vec{\sigma}^*]= q_i[\vec{1}] = 0.
\end{equation}

Now, let $\vec{p}^* = \vec{\sigma}^*[\emptyset]$, $\vec{p}' = (\vec{p}_i',\vec{p}_{-i}^*)$, $h_{k,r}' \in \hevol[\emptyset,r|\vec{\sigma}']$, and $h_{k,r}^* \in \hevol[\emptyset,r|\vec{\sigma}^*]$. 
We have that for every $l \in {\cal N} \setminus \{i\}$:
\begin{equation}
\label{eq:prb-2}
\cd_l[\vec{p}'|\emptyset] = \emptyset,
\end{equation}
and
\begin{equation}
\label{eq:prb-3}
\cd_i[\vec{p}'|\emptyset] = \{j\}.
\end{equation}

It follows immediately by Definition~\ref{def:priv-thr} and Lemma~\ref{lemma:priv-corr-1} that, for every $k \in {\cal N}$ and $l \in {\cal N}_k$ such that $\del_k[i,j]=\infty$, and $r\geq0$,
\begin{equation}
\label{eq:prb-4}
\begin{array}{l}
\ds_k[l|h_{k,r}'] = \ds_k[l|h_{k,r}^*].\\
p_{k}[l|h_{k,r}'] = p_{k}[l|h_{k,r}^*].
\end{array}
\end{equation}

For any $r \geq 0$, let
$$\vec{p}'' = \lim_{\vec{\sigma}^* \to \vec{1}} \vec{\sigma}'[h_{r}'].$$

Since $G$ is redundant, there is a path $x \in \pth[s,k]$ such that $i \notin x$. Furthermore, by~\ref{eq:noally},
for every $l \in x \setminus \{i\}$, $d_l[i,j] = \infty$. By~\ref{eq:prb-4}, this implies that $p_l''[a] = 1$ for every $a \in {\cal N}_l$.

Therefore, by Lemma~\ref{lemma:prob-1} and, since $G$ is connected from $s$ and $q_i$ is continuous in $[0,1]$, for every $r \geq 0$,
\begin{equation}
\label{eq:prb-6}
\lim_{\vec{\sigma}^* \to \vec{1}} q_i[\emptyset,r|\vec{\sigma}']= q_i[\vec{p}'']= 0.
\end{equation}

By~\ref{eq:prb-0},~\ref{eq:prb-1}, and~\ref{eq:prb-6},
\begin{equation}
\label{eq:prb-7}
\begin{array}{ll}
\lim_{\vec{\sigma}^* \to \vec{1}} \sum_{r=0}^{\infty} \omega_i (u_i[\emptyset,r|\vec{\sigma}^*] - u_i[\emptyset,r|\vec{\sigma}'])) & =\\
&\\
\lim_{\vec{\sigma}^* \to \vec{1}} \sum_{r=0}^{\infty} \omega_i ((q_i[\emptyset,r|\vec{\sigma}']-q_i[\emptyset,r|\vec{\sigma}^*])(\beta_i - \gamma_i \bar{p}_i[\emptyset,r|\vec{\sigma}^*]) &\\
- (1-q_i[\emptyset,r|\vec{\sigma}']) \gamma_i p_i[j|\emptyset] ) &=\\
&\\
\lim_{\vec{\sigma}^* \to \vec{1}} - (1-q_i[\emptyset,r|\vec{\sigma}']) \gamma_i p_i[j|\emptyset] ) &=\\
&\\
-\gamma_i & < 0.
\end{array}
\end{equation}

Therefore, in the limit, the PDC Condition is never fulfilled for any values of $\beta_i$, $\gamma_i$, and $\omega_i \in (0,1)$,
which implies by Theorem~\ref{theorem:priv-drop} that $(\vec{\sigma}^*,\vec{\mu}^*)$ is not Sequentially Rational for an arbitrarily large reliability 
and
$$\lim_{\vec{\sigma}^* \to \vec{1}} \psi[\vec{\sigma}^*|\vec{\mu}^*] = \emptyset.$$
\end{proof}
%%%%%%%%%%%%%%%%%%%%%%%%%%%%%%%%%%%%%%%%%%%%%%%%%%%%%%%%%%%%%%%%%%
%%%%%%%%%%%%%%%%%%%%%%%%%%%%%%%%%%%%%%%%%%%%%%%%%%%%%%%%%%%%%%%%%%
%%%%%%%%%%%%%%%%%%%%%%%%%%%%%%%%%%%%%%%%%%%%%%%%%%%%%%%%%%%%%%%%%%

\newpage

\subsection{Coordination is Desirable}

\subsubsection{Proof of Theorem~\ref{theorem:problem-coord}.}
\label{proof:theorem:problem-coord}
If the graph is redundant and $\vec{\sigma}^*$ does not enforce coordination, then there is
a definition of $\vec{\sigma}^*$ such that:
$$\lim_{\vec{\sigma}^* \to \vec{1}} \psi[\vec{\sigma}^*|\vec{\mu}^*] = \emptyset.$$
\begin{proof}
In the aforementioned circumstances, define $\vec{\sigma}^*$ such that
for every $i \in {\cal N}$, $j \in {\cal N}_i$, and $h_i \in {\cal H}_i$,
\begin{equation}
\label{eq:coo-1}
p_i[j|h_i]>0 \equiv p_i[j|h_i] = p_i[j|\emptyset].
\end{equation}

By the assumption that $\vec{\sigma}^*$ does not enforce coordination,
there exists $i \in {\cal N}$ and $j \in {\cal N}_i$ such that for every $r>0$ there is $k \in {\cal N}_i^{-1}$
for which
\begin{equation}
\label{eq:coo-2}
r \leq \del_k[i,j] \vee r \geq \del_k[i,j] + \pd[i,j|k,i] + 1.
\end{equation}

Fix $r$. Let $\sigma_i' = \sigma_i^*[\emptyset|\vec{p}_i']$ and $\vec{\sigma}' = (\sigma_i',\vec{\sigma}_{-i}^*)$, such that
\begin{itemize}
  \item $p_i'[j] = 0$.
  \item For every $k \in {\cal N}_i \setminus \{j\}$, $p_i'[k] = p_i[k|\emptyset]$.
\end{itemize}
For $\vec{p}' = \vec{\sigma}'[\emptyset]$, we have
\begin{equation}
\label{eq:coo-3}
\cd_i[\vec{p}'|\emptyset] = \{j\},
\end{equation}
and for every $k \in {\cal N} \setminus \{i\}$
\begin{equation}
\label{eq:coo-4}
\cd_k[\vec{p}'|\emptyset] = \emptyset.
\end{equation}

By Lemma~\ref{lemma:priv-corr-1} and Definition~\ref{def:priv-thr},
and by~\ref{eq:coo-3}, and~\ref{eq:coo-4}, we have for every $a \in {\cal N}$ and $b \in {\cal N}_a$, $h_{a,r}' \in \hevol[\emptyset,r|\vec{\sigma}']$,
and $h_{a,r}^* \in \hevol[\emptyset,r|\vec{\sigma}^*]$:
\begin{equation}
\label{eq:coo-5}
\begin{array}{ll}
\ds_a[b|h_{a,r}'] =& \ds_a[b|h_{a,r}^*] \cup \{(k_1,k_2,r-1 - \del_a[k_1,k_2]+v[k_1,k_2]) | k_1,k_2 \in {\cal N} \land \\
                            & k_2 \in \cd_{k_1}[\vec{p}'|\emptyset] \land r \in \{\del_a[k_1,k_2] +1 \ldots \del_a[k_1,k_2] + \pd[k_1,k_2|a,b]-v[k_1,k_2]\}\land\\
                            & v[k_1,k_2] = \min[\del_a[k_1,k_2] - \del_b[k_1,k_2],0]\}\\
                            &\\
                            &= \emptyset \cup \{(i,j,r-1 - \del_a[i,j]+v[i,j]) | v[i,j] = \min[\del_a[i,j] - \del_b[i,j],0]\}.
\end{array}
\end{equation}

For $\ds_k[i|h_{k,r}']$, we have $v[i,j] = 0$, since $\del_i[i,j] = 0$. Therefore, by~\ref{eq:coo-5},
\begin{equation}
\label{eq:coo-6}
p_a[b|h_{a,r}] = p_a[b|\emptyset].
\end{equation}

Also, for every $a \in {\cal N} \setminus ({\cal N}_i \cup {\cal N}_i^{-1} \cup\{k,i\})$ and $b \in {\cal N}_a$, by Definition~\ref{def:priv-thr} and the definition of $\vec{\sigma}^*$
in this context, $a$ does not react to a defection of $i$ from $j$, which implies that:
\begin{equation}
\label{eq:coo-7}
\begin{array}{l}
\ds_a[b|h_{a,r}'] =\{(i,j,r-1 - \del_a[i,j]+v[i,j]) | v[i,j] = \min[\del_a[i,j] - \del_b[i,j],0]\}.\\
p_a[b|h_{a,r}'] = p_a[b|h_{a,r}^*] = p_a[b|\emptyset].
\end{array}
\end{equation}

Since the graph is redundant, there exists $x \in \pth[s,k]$ such that, for every $a \in ({\cal N}_i \cup {\cal N}_i^{-1} \cup \{i\}) \setminus \{k\}$, 
$a \notin x$. Thus, by~\ref{eq:coo-6} and~\ref{eq:coo-7}, for 
$$\vec{p}' = \lim_{\vec{\sigma}^* \to \vec{1}} \vec{\sigma}'[\hevol[\emptyset,r|\vec{\sigma}']],$$ 
there exists a path $x \in \pth[s,i]$ such that for every $a \in x \setminus \{i\}$ and $b \in {\cal N}_a$ we have
$$p_a'[b] = 1.$$

By Lemma~\ref{lemma:prob-1}, given that $q_i$ is continuous in $[0,1]$, for every $r>0$,
\begin{equation}
\label{eq:coo-8}
\begin{array}{ll}
\lim_{\vec{\sigma}^* \to \vec{1}} (q_i[\emptyset,r|\vec{\sigma}^*] - q_i[\emptyset,r|\vec{\sigma}']) &= 0.\\
&\\
\lim_{\vec{\sigma}^* \to \vec{1}} (u_i[\emptyset,r|\vec{\sigma}^*] - u_i[\emptyset,r|\vec{\sigma}']) &=\\
\lim_{\vec{\sigma}^* \to \vec{1}} (1-q_i[\emptyset,r|\vec{\sigma}^*])(\beta_i - \gamma_i \bar{p}_i[\emptyset,r|\vec{\sigma}^*]) - (1-q_i[\emptyset,r|\vec{\sigma}'])(\beta_i - (\gamma_i \bar{p}_i[\emptyset,r|\vec{\sigma}^*] + p_i[j|h_{ir}^*]))  &=\\
\lim_{\vec{\sigma}^* \to \vec{1}} -\gamma_i p_i[j|\emptyset]  &<0.
\end{array}
\end{equation}

Therefore, in the limit, the PDC Condition is never fulfilled for any values of $\beta_i$ and $\gamma_i$,
which implies by Theorem~\ref{theorem:priv-drop} that $(\vec{\sigma}^*,\vec{\mu}^*)$ is never Sequentially Rational and:
$$\lim_{\vec{\sigma}^* \to \vec{1}} \psi[\vec{\sigma}^*|\vec{\mu}^*] = \emptyset.$$
\end{proof}

\newpage
\subsection{Impact of Delay}
\label{sec:proof:priv-indir}

\subsubsection{Auxiliary Lemmas.}
Lemma~\ref{lemma:aux:impdel-1} shows that the utility obtained by $i$ during the $\pd$ stages that follow stage $\mdel_i$
after any defection by $i$ is null. This is because during that period $i$ is necessarily punished by every in-neighbor.

\begin{lemma}
\label{lemma:aux:impdel-1}
For every $i \in {\cal N}$, $h \in {\cal H}$ and $h_i \in h$, $D \subseteq {\cal N}_i[h_i]$,
and $r \in \{\mdel_i +1 \ldots \mdel_i + \pd\}$, $u_i[h,r|\vec{\sigma}'] = 0$,
where $\vec{\sigma}' = (\sigma_i^*[h_i|\vec{p}_i'],\vec{\sigma}_{-i}^*)$
and $i$ drops every node from $D$ in $\vec{p}_i'$.
\end{lemma}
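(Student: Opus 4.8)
The plan is to reduce the claim, via Lemma~\ref{lemma:noneib}, to showing that in each of the relevant stages every in-neighbour of $i$ refuses to forward to $i$. Concretely, I would first dispose of the degenerate case: the statement is only of interest when $D\neq\emptyset$ (when $D=\emptyset$ the profile $\vec{p}_i'$ coincides with $\sigma_i^*[h_i]$ and no deviation occurs), so fix some $j\in D$. Since $j\in{\cal N}_i[h_i]$ we have $p_i[j|h_i]>0$, while $\vec{p}_i'$ sets $p_i'[j]=0$, so $j\in\cd_i[\vec{p}'|h]$: node $i$ defects from $j$ in the first stage played from $h$ under $\vec{\sigma}'$. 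Because $u_i[h,r|\vec{\sigma}']=(1-q_i[h,r|\vec{\sigma}'])(\beta_i-\gamma_i\bar{p}_i[h,r|\vec{\sigma}'])$, it suffices to prove $q_i[h,r|\vec{\sigma}']=1$ for every $r\in\{\mdel_i+1\ldots\mdel_i+\pd\}$, and by Lemma~\ref{lemma:noneib} this in turn follows once I show $p_k[i]=0$ in the profile driving stage $r$, for every in-neighbour $k\in{\cal N}_i^{-1}$.

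Next I would track, for a fixed $k\in{\cal N}_i^{-1}$, how the defection of $i$ from $j$ propagates into $\ds_k[i|\cdot]$. Under full indirect reciprocity there is a path from $j$ to $k$ avoiding $i$, so $\del_k[i,j]<\infty$ and $k$ genuinely reacts. Applying Lemma~\ref{lemma:priv-corr-1} to the set $\ds_k[i|\cdot]$ with $(k_1,k_2)=(i,j)$, and using $\del_i[i,j]=0$ so that $v[i,j]=\min[\del_k[i,j]-\del_i[i,j],0]=0$, the tuple recording this defection occurs in $\ds_k[i|\cdot]$ with a non-negative index exactly while the elapsed stage count lies in $\{\del_k[i,j]+1\ldots\del_k[i,j]+\pd[i,j|k,i]\}$. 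By the coordinated definition of punishment durations (Definition~\ref{def:overlap}), and again because $\del_i[i,j]=0$ gives $g=\max[\del_k[i,j],\del_i[i,j]]=\del_k[i,j]$, we have $\pd[i,j|k,i]=\mdel_i+\pd-\del_k[i,j]$, so this reaction window is precisely $\{\del_k[i,j]+1\ldots\mdel_i+\pd\}$.

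The key observation is then purely arithmetic: since $\del_k[i,j]\le\mdel_i$ by the definition of $\mdel_i=\max_{j\in{\cal N}_i}\max_{k\in{\cal N}_i^{-1}}\del_k[i,j]$, the target window $\{\mdel_i+1\ldots\mdel_i+\pd\}$ is contained in the reaction window $\{\del_k[i,j]+1\ldots\mdel_i+\pd\}$ of every $k$ simultaneously. Hence, throughout those stages, $\ds_k[i|\cdot]$ contains a tuple $(i,j,r')$ with $r'\ge0$ and $j\in{\cal N}_i$; by the indirect-reciprocity clause of Definition~\ref{def:priv-thr} this forces $p_k[i]=0$. As this holds for every $k\in{\cal N}_i^{-1}$, Lemma~\ref{lemma:noneib} yields $q_i[h,r|\vec{\sigma}']=1$ and therefore $u_i[h,r|\vec{\sigma}']=0$.

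The main obstacle is the delay bookkeeping rather than any deep idea: one must keep the two stage indices (the number of elapsed stages recorded in $\ds_k[i|\cdot]$ versus the stage whose utility is being evaluated) aligned, and verify that the coordination rule makes all the per-neighbour reaction windows overlap on the full interval $\{\mdel_i+1\ldots\mdel_i+\pd\}$ regardless of the individual delays $\del_k[i,j]$. Once the window arithmetic is pinned down, the reduction through Lemmas~\ref{lemma:priv-corr-1} and~\ref{lemma:noneib} is routine.
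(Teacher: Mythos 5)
Your proof is correct and follows essentially the same route as the paper's: apply Lemma~\ref{lemma:priv-corr-1} with $v[i,j]=0$ (since $\del_i[i,j]=0$) and Definition~\ref{def:overlap} to show that each in-neighbour $k$'s reaction window $\{\del_k[i,j]+1\ldots\mdel_i+\pd\}$ contains the target window $\{\mdel_i+1\ldots\mdel_i+\pd\}$ because $\del_k[i,j]\le\mdel_i$, then conclude $p_k[i]=0$ via Definition~\ref{def:priv-thr} and $q_i=1$ via Lemma~\ref{lemma:noneib}. The only cosmetic differences are that you track a single defection $j\in D$ rather than all of $D$ (which suffices), and you explicitly flag the degenerate case $D=\emptyset$, which the paper's proof leaves implicit.
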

\begin{proof}
By Definition~\ref{def:overlap}, since $\del_i[i,j] = 0$ for every $i$,
then, for every $k \in {\cal N}_i^{-1}$, $\max[\del_k[i,j],\del_i[i,j]] = \del_k[i,j]$ and:
\begin{equation}
\label{eq:aid1-1}
\begin{array}{l}
\pd[i,j|k,i] \leq \mdel_i.\\
\pd[i,j|k,i] = \mdel_i - \del_k[i,j] + \pd.
\end{array}
\end{equation}

Notice that for $\vec{p}' = \vec{\sigma}'[h]$
\begin{equation}
\label{eq:aid1-2}
\cd_i[\vec{p}'|h] = D,
\end{equation}
and for every $j \in {\cal N} \setminus \{i\}$
\begin{equation}
\label{eq:aid1-3}
\cd_j[\vec{p}'|h] = \emptyset.
\end{equation}

By Lemma~\ref{lemma:priv-corr-1} and by~\ref{eq:aid1-1}, for every $k \in {\cal N}_i^{-1}$ and $r \in \{\mdel_i+1 \ldots \mdel_i + \pd\}$:
\begin{equation}
\label{eq:aid1-4}
\begin{array}{ll}
\ds_k[i|h_{k,r}'] =& \ds_k[i|h_{k,r}^*] \cup \{(k_1,k_2,r-1 - \del_k[k_1,k_2]+v[k_1,k_2]) | k_1,k_2 \in {\cal N} \land \\
                            & k_2 \in \cd_{k_1}[\vec{p}'|h] \land r \in \{\del_k[k_1,k_2] + 1\ldots \del_i[k_1,k_2] + \pd[k_1,k_2|k,i]-v[k_1,k_2]\}\land\\
                            & v[k_1,k_2] = \min[\del_k[k_1,k_2] - \del_i[k_1,k_2],0]\},\\
                            &\\
                            & =\ds_k[i|h_{k,r}^*] \cup \{(i,j,r-1 - \del_k[i,j]+v[i,j]) | j \in D \land  \\
                            & r \in\{\del_k[i,j]+1 \ldots \del_k[i,j] + \pd[i,j|k,i]-v[i,j]\}\land\\
                            & v[i,j] = \min[\del_k[i,j] - \del_i[i,j],0]\}\\
                            &\\
                            & =\ds_k[i|h_{k,r}^*] \cup \{(i,j,r-1 - \del_k[i,j]+v[i,j]) | j \in D \land \\
                            &r \in\{\del_k[i,j]+1 \ldots \del_k[i,j] + \pd[i,j|k,i] - v[i,j]\} \land v[i,j] = 0\}\\
                            &\\
                            &\supseteq \ds_k[i|h_{k,r}^*] \cup \{(i,j,r-1 - \del_k[i,j]) | j \in D \land  r \in\{\mdel_i+1 \ldots \mdel_i + \pd\}\}\\
                            &\\
                            & = \ds_k[i|h_{k,r}^*] \cup \{(i,j,r-1 - \del_k[i,j]+v[i,j])\}
\end{array}
\end{equation}
where $h_{k,r}^* \in \hevol[h,r|\vec{\sigma}^*]$ and $h_{k,r}' \in \hevol[h,r| \vec{\sigma}']$.

By~\ref{eq:aid1-4} and Definition~\ref{def:priv-thr}, it follows that, for every $r \in \{\mdel_i+1 \ldots \mdel_i + \pd\}$,
$$p_k[i|h_{k,r}'] = 0,$$
which by Lemma~\ref{lemma:noneib} leads to
\begin{equation}
\label{eq:aid1-5}
\begin{array}{l}
q_i[h,r|\vec{\sigma}'] = 1.\\
u_i[h,r|\vec{\sigma}'] = 0.
\end{array}
\end{equation}
This concludes the proof.
\end{proof}

\newpage

Lemma~\ref{lemma:aux:impdel-2} proves that all punishments of a node $i$ are concluded after stage $\mdel_i + \pd$
that follows any defection of $i$.
\begin{lemma}
\label{lemma:aux:impdel-2}
For every $i \in {\cal N}$, $h \in {\cal H}$ and $h_i \in h$, $D \subseteq {\cal N}_i[h_i]$,
and $r > \mdel_i + \pd$, 
$$u_i[h,r|\vec{\sigma}'] = u_i[h,r|\vec{\sigma}^*],$$
where $\vec{\sigma}' = (\sigma_i^*[h_i|\vec{p}_i'],\vec{\sigma}_{-i}^*)$
and $i$ drops every node from $D$ in $\vec{p}_i'$.
\end{lemma}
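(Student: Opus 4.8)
The plan is to prove the private-monitoring analogue of Lemma~\ref{lemma:corr-2}: I will show that once $r > \mdel_i + \pd$ all defection sets coincide under $\vec{\sigma}'$ and $\vec{\sigma}^*$, and then read off equality of the reliabilities, the forwarding costs, and hence the utilities. First I would record that the deviation consists solely of $i$ dropping the set $D$, so that for $\vec{p}' = \vec{\sigma}'[h]$ we have $\cd_i[\vec{p}'|h] = D$ and $\cd_k[\vec{p}'|h] = \emptyset$ for every $k \in {\cal N} \setminus \{i\}$. By Lemma~\ref{lemma:priv-corr-1}, this forces, for every $a \in {\cal N}$, $b \in {\cal N}_a$, and $r > 0$, the set $\ds_a[b|h_{a,r}']$ to differ from $\ds_a[b|h_{a,r}^*]$ only by the tuples $(i,j,r-1-\del_a[i,j]+v[i,j])$ with $j \in D$, which are present exactly when $r \in \{\del_a[i,j]+1 \ldots \del_a[i,j] + \pd[i,j|a,b] - v[i,j]\}$ and $v[i,j] = \min[\del_a[i,j]-\del_b[i,j],0]$, where $h_{a,r}^* \in \hevol[h,r|\vec{\sigma}^*]$ and $h_{a,r}' \in \hevol[h,r|\vec{\sigma}']$.

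Second, I would substitute the coordinated durations of Definition~\ref{def:overlap} into the upper endpoint $\del_a[i,j] + \pd[i,j|a,b] - v[i,j]$ of this range. Writing $g = \max[\del_a[i,j],\del_b[i,j]]$, when $g < \mdel_i + \pd$ we have $\pd[i,j|a,b] = \mdel_i + \pd - g$, and a short case split on the sign of $\del_a[i,j] - \del_b[i,j]$ shows the endpoint equals exactly $\mdel_i + \pd$ in both cases: if $\del_a[i,j] \leq \del_b[i,j]$ then $v[i,j] = \del_a[i,j]-\del_b[i,j]$ and $g = \del_b[i,j]$, while if $\del_a[i,j] > \del_b[i,j]$ then $v[i,j] = 0$ and $g = \del_a[i,j]$, each collapsing to $\mdel_i + \pd$. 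Hence for every $r > \mdel_i + \pd$ no extra tuple survives and $\ds_a[b|h_{a,r}'] = \ds_a[b|h_{a,r}^*]$.

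Third, the degenerate case $g \geq \mdel_i + \pd$, where Definition~\ref{def:overlap} sets $\pd[i,j|a,b] = 0$, must be handled separately. Here the admissible range shrinks to $\{\del_a[i,j]+1 \ldots \del_a[i,j] - v[i,j]\}$, which is empty when $v[i,j] = 0$ and equals $\{\del_a[i,j]+1 \ldots \del_b[i,j]\}$ when $v[i,j] < 0$. In the latter situation the third component $r - 1 - \del_a[i,j] + v[i,j]$ simplifies to $r - 1 - \del_b[i,j] \leq -1$, so by Definition~\ref{def:priv-thr} the surviving tuples never enter the set $K$ used to compute the threshold probabilities. Consequently, in every case the sets $K$ extracted from $\ds_a[b|h_{a,r}']$ and $\ds_a[b|h_{a,r}^*]$ agree for all pairs $(a,b)$, giving $p_a[b|h_{a,r}'] = p_a[b|h_{a,r}^*]$ throughout the network, and therefore $q_i[h,r|\vec{\sigma}'] = q_i[h,r|\vec{\sigma}^*]$ and $\bar{p}_i[h,r|\vec{\sigma}'] = \bar{p}_i[h,r|\vec{\sigma}^*]$; the definition of $u_i$ then yields $u_i[h,r|\vec{\sigma}'] = u_i[h,r|\vec{\sigma}^*]$.

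I expect the main obstacle to be the bookkeeping of the delay offset $v[i,j]$ together with the boundary case $\pd[i,j|a,b]=0$: one must check both that the reaction genuinely terminates by stage $\mdel_i + \pd$ when $g < \mdel_i + \pd$, and that the residual tuples in the degenerate case remain confined to the negative ``waiting'' region so that they leave $K$, and hence the played probabilities, unchanged. Everything else is a routine consequence of Lemma~\ref{lemma:priv-corr-1} and the arithmetic of Definitions~\ref{def:overlap} and~\ref{def:priv-thr}.
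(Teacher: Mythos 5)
Your proposal is correct and follows essentially the same route as the paper's own proof: both invoke Lemma~\ref{lemma:priv-corr-1} to isolate the extra tuples $(i,j,\cdot)$ with $j \in D$, then use the arithmetic of Definition~\ref{def:overlap} (via the identity $\del_a[i,j]-v[i,j]=\max[\del_a[i,j],\del_b[i,j]]$) to show these tuples either vanish for $r > \mdel_i+\pd$ or retain a negative third component and so never enter $K$, whence the threshold probabilities, $q_i$, $\bar{p}_i$, and $u_i$ coincide. Your handling of the degenerate case $\pd[i,j|a,b]=0$ (arguing directly that the surviving tuples satisfy $r-1-\del_b[i,j]\leq -1$) is the contrapositive of the paper's phrasing but is the same argument.
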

\begin{proof}
Notice that for $\vec{p}' = \vec{\sigma}'[h]$ and $j \in {\cal N} \setminus \{i\}$
\begin{equation}
\label{eq:aid2-1}
\begin{array}{l}
\cd_i[\vec{p}'|h] = D.\\
\cd_j[\vec{p}'|h] = \emptyset.
\end{array}
\end{equation}

By Lemma~\ref{lemma:priv-corr-1} and by~\ref{eq:aid2-1}, for every $k \in {\cal N}$, $l \in {\cal N}_k$, and $r > \mdel_i + \pd$,
\begin{equation}
\label{eq:aid2-2}
\begin{array}{ll}
\ds_k[l|h_{k,r}'] =& \ds_k[l|h_{k,r}^*] \cup \{(k_1,k_2,r-1 - \del_k[k_1,k_2]+v[k_1,k_2]) | k_1,k_2 \in {\cal N} \land \\
                            & k_2 \in \cd_{k_1}[\vec{p}'|h] \land r \in \{\del_k[k_1,k_2] + 1\ldots \del_k[k_1,k_2] + \pd[k_1,k_2|k,l]-v[k_1,k_2]\}\land\\
                            & v[k_1,k_2] = \min[\del_k[k_1,k_2] - \del_l[k_1,k_2],0]\},\\
                            &\\
                            & =\ds_k[l|h_{k,r}^*] \cup \{(i,j,r-1 - \del_k[i,j]+v[i,j]) | j \in D \land  \\
                            & r \in\{\del_k[i,j]+1 \ldots \del_k[i,j] + \pd[i,j|k,l]-v[i,j]\}\land\\
                            & v[i,j] = \min[\del_k[i,j] - \del_l[i,j],0]\}\\
                            &\\
                            & =\ds_k[l|h_{k,r}^*] \cup A\\
\end{array}
\end{equation}
where $h_{k,r}^* \in \hevol[h,r|\vec{\sigma}^*]$ and $h_{k,r}' \in \hevol[h,r| \vec{\sigma}']$.

The goal now is to show that for every $(i,j,r') \in A$, we have $r' < 0$. Fix any $j$.

By Definition~\ref{def:overlap}, for every $k \in {\cal N}$ and $l \in {\cal N}_k$ such that
\begin{equation}
\label{eq:aid2-4}
g=\max[\del_k[i,j],\del_l[i,j]] > \mdel + \pd, 
\end{equation}
we have
$$\pd[i,j|k,l] = 0.$$
Thus, by~\ref{eq:aid2-4}, if $g= \del_k[i,j]$, then we have $v[i,j] = 0$ and
$$
\begin{array}{l}
\del_k[i,j] + \pd[i,j|k,l] - v[i,j] =  \del_k[i,j].
\end{array}
$$
Thus, there is no $r$ such that
$$r \in\{\del_k[i,j]+1 \ldots \del_k[i,j] + \pd[i,j|k,l]-v[i,j]\},$$
which implies that $A = \emptyset$ and the result is true.

If $g = \del_l[i,j]$,  then we have $v[i,j] = \del_k[i,j] - \del_l[i,j]$ and
$$
\begin{array}{l}
\del_k[i,j] + \pd[i,j|k,l] - v[i,j] = \del_l[i,j].\\
r' = r - 1 - \del_k[i,j] + v[i,j] = r-1 - \del_l[i,j].
\end{array}
$$
Here, if $r'\geq0$, then $\del_l[i,j] \leq r-1$ and 
there is no $r > \mdel_i + \pd$ such that
$$r \in\{\del_k[i,j]+1 \ldots \del_k[i,j] + \pd[i,j|k,l]-v[i,j]\},$$
which implies that $A = \emptyset$. So, either $A= \emptyset$ or $r'<0$, which concludes
the step for any $k \in {\cal N}$ and $l \in {\cal N}_k$ that fulfill~$\ref{eq:aid2-4}$.

Consider now that 
\begin{equation}
\label{eq:aid2-5}
g=\max[\del_k[i,j],\del_l[i,j]] < \mdel_i + \pd,
\end{equation}
which results by Definition~\ref{def:overlap} in
$$\pd[i,j|k,l] = \mdel_i + \pd - g.$$
Notice that
$$\del_k[i,j] -  v[i,j] = g,$$
which implies that
$$\del_k[i,j] + \pd[i,j|k,l]-v[i,j] = g + \mdel_i + \pd - g = \mdel_i + \pd.$$
Thus, by~\ref{eq:aid2-4} there is no $r > \mdel_i + \pd$ such that
$$r \in\{\del_k[i,j]+1 \ldots \del_k[i,j] + \pd[i,j|k,l]-v[i,j]\},$$
which implies that $A = \emptyset$.

This allows us to conclude that for every $k \in {\cal N}$, $l \in {\cal N}_k$, and $r > \mdel_i + \pd$,
$$\ds_k[l|h_{k,r}']  = \ds_k[l|h_{k,r}^*] \cup A,$$
where for every $(i,j,r') \in A$ we have $r' < 0$.
By Definition~\ref{def:priv-thr}, $i$ adds $(k_1,k_2,r'') \in \ds_k[l|h_{k,r}^*]$ to $K$ if and only if $i$ adds $(k_1,k_2,r'')$ to $\ds_k[l|h_{k,r}']$.
Therefore,
\begin{equation}
\label{eq:aid2-6}
\begin{array}{l}
p_k[l|h_{k,r}'] = p_k[l|h_{k,r}^*].\\
q_i[h,r|\vec{\sigma}'] = q_i[h,r|\vec{\sigma}^*].\\
u_i[h,r|\vec{\sigma}'] = u_i[h,r|\vec{\sigma}^*].
\end{array}
\end{equation}
This concludes the proof.
\end{proof}

\newpage

\subsubsection{Proof of Lemma~\ref{lemma:delay-equiv}.}
\label{proof:lemma:delay-equiv}
If $(\vec{\sigma}^*,\mu^*)$ is Preconsistent, Assumption~\ref{def:non-neg} holds, and Inequality~\ref{eq:delay-equiv} 
is fulfilled for every $i \in {\cal N}$, $h_i \in {\cal H}_i$, and $h \in {\cal H}$ such 
that $\mu_i^*[h|h_i]>0$, then $(\vec{\sigma}^*,\vec{\mu}^*)$ is Sequentially Rational:
$$
- \sum_{r=0}^{\mdel_i} \omega_i^r ((1-q_i[h,r|\vec{\sigma}^*])\gamma_i \bar{p}_i[h,r|\vec{\sigma}^*]  + \epsilon \beta_i)+ \sum_{r=\mdel_i+1}^{\mdel_i + \pd} \omega_i^r u_i[h,r | \vec{\sigma}^*]  \geq 0.
$$

\begin{proof}
Fix $i$, $h_i$, and $h$. Define $\vec{\sigma}' = (\sigma_i^*[h_i|\vec{p}_i'],\vec{\sigma}_{-i}^*)$ for any $D \subseteq {\cal N}_i[h_i]$, where:
\begin{itemize}
  \item For every $j \in D$, $p_i'[j] = 0$.
  \item For every $j \in {\cal N}_i \setminus D$, $p_i'[j] = p_i[j|h_i]$.
\end{itemize}

By Assumption~\ref{def:non-neg}, for every $r \in \{0 \ldots \mdel_i\}$,
\begin{equation}
\label{eq:de-1}
\begin{array}{lll}
\bar{p}_i[h,r|\vec{\sigma}'] & \geq 0.\\
q_i[h,r|\vec{\sigma}^*] - q_i[h,r|\vec{\sigma}']  &\geq \epsilon.\\
u_i[h,r|\vec{\sigma}^*] - u_i[h,r|\vec{\sigma}'] & = (1-q_i[h,r|\vec{\sigma}^*])(\beta_i -\gamma_i \bar{p}_i[h,r|\vec{\sigma}^*]) -(1-q_i[h,r|\vec{\sigma}'])(\beta_i - \gamma_i \bar{p}_i[h,r|\vec{\sigma}'])\\
                                      				        & \geq -(1-q_i[h,r|\vec{\sigma}^*])\gamma_i - (q_i[h,r|\vec{\sigma}^*]- q_i[h,r|\vec{\sigma}'])\beta_i\\
					      			        & \geq -(1-q_i[h,r|\vec{\sigma}^*])\gamma_i - \epsilon \beta_i.
\end{array}
\end{equation}

By Lemma~\ref{lemma:aux:impdel-1}, for every $r \in \{\mdel_i+1 \ldots \mdel_i + \pd\}$,
\begin{equation}
\label{eq:de-2}
u_i[h,r|\vec{\sigma}'] = 0.
\end{equation}

Finally, by Lemma~\ref{lemma:aux:impdel-2}, for every $r \geq \mdel_i+\pd+1$,
\begin{equation}
\label{eq:de-3}
\begin{array}{l}
u_i[h,r|\vec{\sigma}^*] = u_i[h,r|\vec{\sigma}'].\\
\end{array}
\end{equation}

It follows from~\ref{eq:de-1},~\ref{eq:de-2}, and~\ref{eq:de-3} that:
$$
\begin{array}{ll}
\sum_{r = 0}^{\infty} \omega_i^r(u_i[h,r|\vec{\sigma}^*] - u_i[h,r|\vec{\sigma}']) & \geq\\
- \sum_{r=0}^{\mdel_i} \omega_i^r ((1-q_i^*[h,r|\vec{\sigma}^*])\gamma_i \bar{p}_i[h,r|\vec{\sigma}^*]  + \epsilon \beta_i)+ \sum_{r=\mdel_i+1}^{\mdel_i + \pd} \omega_i^r u_i^*[h,r | \vec{\sigma}^*] .
\end{array}
$$
Therefore, if Inequality~\ref{eq:delay-equiv} is fulfilled for every $i \in {\cal N}$, $h_i \in {\cal H}_i$, and $h \in {\cal H}$ such 
that $\mu_i^*[h|h_i]>0$, then the PDC Condition holds. Consequently, by Theorem~\ref{theorem:priv-drop},
$(\vec{\sigma}^*,\vec{\mu}^*)$ is Sequentially Rational.
\end{proof}

\newpage

\subsubsection{Proof of Lemma~\ref{lemma:priv-suff}.}
\label{proof:lemma:priv-suff}
If $(\vec{\sigma}^*,\vec{\mu}^*)$ is Preconsistent, 
Assumptions~\ref{def:non-neg} and~\ref{def:priv-assum} hold, 
and Inequality~\ref{eq:priv-suff} is fulfilled for every $h$, $i \in {\cal N}$, and $r,r' \leq \mdel_i + \pd$ such that $q_i[h,r'|\vec{\sigma}^*] < 1$,
then there exist $\omega_i \in (0,1)$ for every $i \in {\cal N}$ such that $(\vec{\sigma}^*,\vec{\mu}^*)$ is Sequentially Rational:
$$\frac{\beta_i}{\gamma_i}> \bar{p}_i[h,r|\vec{\sigma}^*] \frac{1}{A}+\bar{p}_i[h,r'|\vec{\sigma}^*]\frac{1}{B - C},$$
where
\begin{itemize}
  \item $A = 1 - \frac{\epsilon(\mdel_i+1)}{(1-q_i[h,r|\vec{\sigma}^*])\pd}$.
  \item $B=\frac{\pd}{c}$.
  \item $C= \frac{\epsilon(\mdel_i+1)}{1-q_i[h,r'|\vec{\sigma}^*]}$.
\end{itemize}

\begin{proof}
Consider the above assumptions and assume by contradiction that $(\vec{\sigma}^*,\vec{\mu}^*)$ is not Sequentially Rational.

The proof considers history $h_1$ that minimizes the first component of Inequality~\ref{eq:delay-equiv} and $h_2$ that minimizes the second component,  for any history $h \in {\cal H}$.
More precisely, fix $h$ and $i$:
\begin{equation}
\label{eq:ps-1}
\begin{array}{ll}
h_1 = \mbox{\emph{argmin}}_{\hevol[h,r|\vec{\sigma}^*]| r \in \{0 \ldots \mdel_i\}} -((1-q_i[h,r|\vec{\sigma}^*])\gamma_i \bar{p}_i[h,r|\vec{\sigma}^*]  + \epsilon \beta_i).\\
h_2 = \mbox{\emph{argmin}}_{\hevol[h,r|\vec{\sigma}^*]| r \in \{\mdel_i +1 \ldots \mdel_i + \pd\}} u_i[h,r|\vec{\sigma}^*].
\end{array}
\end{equation}

Let $u_i^{h_2} = u_i[h_2,0|\vec{\sigma}^*]$. We can write:
\begin{equation}
\label{eq:ps-2}
\begin{array}{ll}
- \sum_{r=0}^{\mdel_i} \omega_i^r ((1-q_i^*[h,r|\vec{\sigma}^*])\gamma_i \bar{p}_i[h,r|\vec{\sigma}^*]  + \epsilon \beta_i)+ \sum_{r=\mdel_i+1}^{\mdel_i + \pd} \omega_i^r u_i^*[h,r | \vec{\sigma}^*]  & \geq \\
-\sum_{r = 0}^{\mdel_i} \omega_i((1-q_i[h_1,0|\vec{\sigma}^*])\gamma_i \bar{p}_i[h_1,0|\vec{\sigma}^*] + \epsilon \beta_i) + \sum_{r = 1}^{\mdel_i + \pd} \omega_i^r u_i^{h_2} & =\\
- a \frac{1-\omega_i^{\mdel_i +1}}{1-\omega_i} + \frac{\omega_i^{\mdel_i + 1} - \omega_i^{\mdel_i + \pd+1}}{1 - \omega_i} u_i^{h_2},
\end{array}
\end{equation}
where 
$$a = (1-q_i[h_1,0|\vec{\sigma}^*])\gamma_i \bar{p}_i[h_1,0|\vec{\sigma}^*] + \epsilon \beta_i.$$

We want to fulfill
\begin{equation}
\label{eq:ps-3}
\begin{array}{ll}
- a \frac{1-\omega_i^{\mdel_i +1}}{1-\omega_i} + \frac{\omega_i^{\mdel_i + 1} - \omega_i^{\mdel_i + \pd+1}}{1 - \omega_i} u_i^{h_2} \geq 0\\
-a  + \omega_i^{\mdel_i + 1}(u_i^{h_2} + a) - \omega_i^{ \mdel_i + \pd+1} u_i^{h_2} & \geq 0.
\end{array}
\end{equation}

Again, this inequality corresponds to a polynomial with degree $\pd+1$. If $q_i^*[h_1,0|\vec{\sigma}^*] = 1$, then by our assumptions $q_i[h_2,0|\vec{\sigma}^*] =1$, $a=0$, and
the Inequality holds. Suppose then that 
$$q_i[h_1,0|\vec{\sigma}^*], q_i[h_2,0|\vec{\sigma}^*] < 1.$$

The polynomial has a zero in $\omega_i = 1$. If $a=0$, then the Inequality holds.
Consider, then, that $a > 0$. In these circumstances, a solution to~\ref{eq:ps-3}
exists for $\omega_i \in (0,1)$ iff the polynomial is strictly concave and has another zero in $(0,1)$. This is true iff
the polynomial has a maximum in $(0,1)$. The derivatives yield the following conditions:
\begin{enumerate}
 \item $\exists_{\omega_i \in (0,1)} (\mdel_i+1)(u_i^{h_2} + a) - (\pd+\mdel_i + 1) \omega_i u_i^{h_2} = 0 \Rightarrow \exists_{\omega_i \in (0,1)} \omega_i = \frac{(\mdel_i+1)(u_i^{h_2}+a)}{(\pd+\mdel_i+1)u_i^{h_2}}$.
 \item $-(\pd+\mdel_i + 1)\pd u_i^{h_2} < 0 \Rightarrow u_i^{h_2} > 0$.
\end{enumerate}

By our assumptions, Condition~$1$ implies that:
\begin{equation}
\label{eq:ps-4}
\begin{array}{ll}
u_i^{h_2}\pd& > (\mdel_i+1)a\\
&\\
(1-q_i[h_2,0|\vec{\sigma}^*]) \beta_i \pd &> (1-q_i[h_2,0|\vec{\sigma}^*])\gamma_i\bar{p}_i[h_2,0|\vec{\sigma}^*] \pd + (1-q_i[h_1,0|\vec{\sigma}^*])\gamma_i\bar{p}_i[h_1,0|\vec{\sigma}^*] + \epsilon \beta_i\\
&\\
((1-q_i[h_2,0|\vec{\sigma}^*])\pd - \epsilon(\mdel_i +1)) \beta_i &> (1-q_i[h_2,0|\vec{\sigma}^*])\gamma_i\bar{p}_i[h_2,0|\vec{\sigma}^*] \pd + (1-q_i[h_1,0|\vec{\sigma}^*])\gamma_i\bar{p}_i[h_1,0|\vec{\sigma}^*]\\
&\\
\frac{\beta_i}{\gamma_i} ((1-q_i[h_2,0|\vec{\sigma}^*])\pd - \epsilon(\mdel_i +1)) & > (1-q_i[h_2,0|\vec{\sigma}^*])\bar{p}_i[h_2,0|\vec{\sigma}^*] \pd + (1-q_i[h_1,0|\vec{\sigma}^*])\bar{p}_i[h_1,0|\vec{\sigma}^*].
\end{array}
\end{equation}

Solving in order to the benefit-to-cost ratio,
$$
\begin{array}{l}
\frac{(1-q_i[h_2,0|\vec{\sigma}^*])\bar{p}_i[h_2,0|\vec{\sigma}^*]\pd}{(1-q_i[h_2,0|\vec{\sigma}^*])\pd - \epsilon(\mdel_i +1)}=\bar{p}_i[h_2,0|\vec{\sigma}^*] \frac{1}{1 - \frac{\epsilon(\mdel_i +1)}{(1-q_i[h_2,0|\vec{\sigma}^*])\pd}}=\bar{p}_i[h_2,0|\vec{\sigma}^*] \frac{1}{A}.
\end{array}
$$
where $A = 1 - \frac{\epsilon(\mdel_i+1)}{(1-q_i[h_2,0|\vec{\sigma}^*])\pd}$.

Continuing, by Assumption~\ref{def:priv-assum}, it is true that
$$\frac{(1-q_i[h_2,0|\vec{\sigma}^*])}{(1-q_i[h_1,0|\vec{\sigma}^*])}\geq \frac{1}{c}.$$
Therefore,
$$
\begin{array}{l}
\frac{(1-q_i[h_1,0|\vec{\sigma}^*])\bar{p}_i[h_1,0|\vec{\sigma}^*]}{(1-q_i[h_2,0|\vec{\sigma}^*])\pd - \epsilon(\mdel_i +1)}= \bar{p}_i[h_1,0|\vec{\sigma}^*] \frac{1}{\frac{(1-q_i[h_2,0|\vec{\sigma}^*])\pd}{(1-q_i[h_1,0|\vec{\sigma}^*])} - \frac{\epsilon(\mdel_i +1)}{(1-q_i[h_1,0|\vec{\sigma}^*])}}\\
\\
\leq  \bar{p}_i[h_1,0|\vec{\sigma}^*] \frac{1}{\frac{\pd}{c} - \frac{\epsilon(\mdel_i+1)}{1-q_i[h_1,0|\vec{\sigma}^*]}} =  \bar{p}_i[h_1,0|\vec{\sigma}^*] \frac{1}{B -C}.
\end{array}
$$
where:
\begin{itemize}
  \item $B = \frac{\pd}{c}$.
  \item $C= \frac{\epsilon(\mdel_i+1)}{1-q_i[h_1,0|\vec{\sigma}^*]}$.
\end{itemize}

In summary, we have
\begin{equation}
\label{eq:ps-5}
\begin{array}{ll}
\frac{\beta_i}{\gamma_i} &> \bar{p}_i[h_2,0|\vec{\sigma}^*] \frac{1}{1 - A} +  \bar{p}_i[h_1,0|\vec{\sigma}^*] \frac{1}{B -C} \Rightarrow\\
\frac{\beta_i}{\gamma_i} ((1-q_i[h_2,0|\vec{\sigma}^*])\pd - \epsilon(\mdel_i +1)) & > (1-q_i[h_2,0|\vec{\sigma}^*])\bar{p}_i[h_2,0|\vec{\sigma}^*] \pd + (1-q_i[h_1,0|\vec{\sigma}^*])\bar{p}_i[h_1,0|\vec{\sigma}^*].
\end{array}
\end{equation}

Consequently, if Inequality~\ref{eq:priv-suff} is true, then so is~\ref{eq:ps-4}.
Furthermore, it also holds that
$$\beta_i > \gamma_i \bar{p}_i[h_1,0|\vec{\sigma}^*] \Rightarrow u_i^h >0.$$

That is, Inequality~\ref{eq:priv-suff} implies Conditions~1 and~2 of the polynomial for any $h$ and some $\omega_i \in (0,1)$,
which by transitivity implies that~\ref{eq:ps-3} is true. By~\ref{eq:ps-2},
Inequality~\ref{eq:delay-equiv} is fulfilled for every history $h$. Lemma~\ref{lemma:delay-equiv},
allows us to conclude that $(\vec{\sigma}^*,\vec{\mu}^*)$ is Sequentially Rational.
This is a contradiction, proving the result.
\end{proof}
%%%%%%%%%%%%%%%%%%%%%%%%%%%%%%%%%%%%%%%%%%%%%%%%%%%%%%%%%%%%%%%%%%
%%%%%%%%%%%%%%%%%%%%%%%%%%%%%%%%%%%%%%%%%%%%%%%%%%%%%%%%%%%%%%%%%%
%%%%%%%%%%%%%%%%%%%%%%%%%%%%%%%%%%%%%%%%%%%%%%%%%%%%%%%%%%%%%%%%%%

\newpage

\subsubsection{Proof of Theorem~\ref{theorem:priv-effect}.}
\label{proof:theorem:priv-effect}
If $(\vec{\sigma}^*,\vec{\mu}^*)$ is Preconsistent, Assumptions~\ref{def:non-neg} and~\ref{def:priv-assum} hold for $\epsilon \ll 1$,
and $\pd \geq \mdel +1$, then there exists a constant $c>0$
such that $\psi[\vec{\sigma}^*|\vec{\mu}^*] \supseteq (v,\infty)$, where
$$
v = \max_{i \in {\cal N}} \max_{h \in {\cal H}}\bar{p}_i[h,0|\vec{\sigma}^*](1+c).
$$

\begin{proof}
The idea is to simplify Inequality~\ref{eq:priv-suff} for $\epsilon \ll 1$ and $\pd \geq \mdel + 1$.

Recall that 
$$A = 1 - \frac{\epsilon(\mdel_i+1)}{(1-q_i[h,r|\vec{\sigma}^*])\pd}.$$
Thus, this yields
$$\frac{1}{A} = \frac{(1-q_i[h,r|\vec{\sigma}^*]) \pd}{(1-q_i[h,r|\vec{\sigma}^*])\pd (1 - \epsilon \frac{\mdel_i+1}{\pd})} \leq \frac{(1-q_i[h,r|\vec{\sigma}^*]) \pd}{(1-q_i[h,r|\vec{\sigma}^*]) \pd(1-\epsilon)} \approx 1.$$

Moreover, by Assumption~\ref{def:priv-assum},
$$
\begin{array}{l}
\frac{1}{B - C} = \frac{1}{\frac{\pd}{c} -  \frac{\epsilon(\mdel_i+1)}{1-q_i[h,r|\vec{\sigma}^*]}}=\\
\\
\frac{1-q_i[h,r|\vec{\sigma}^*]}{(1-q_i[h,r|\vec{\sigma}^*])\frac{\pd}{c} - \epsilon(\mdel_i+1)}\leq\\
\\
\frac{(1-q_i[h,r|\vec{\sigma}^*])}{(1-q_i[h,r+\pd|\vec{\sigma}^*])\pd - \epsilon(\mdel_i+1)}\leq\\
\\
\frac{(1-q_i[h,r|\vec{\sigma}^*])}{(1-q_i[h,r+\pd|\vec{\sigma}^*]- \epsilon)(\mdel_i+1)}\approx\\
\\
\frac{(1-q_i[h,r|\vec{\sigma}^*])}{(1-q_i[h,r+\pd|\vec{\sigma}^*])(\mdel_i+1)}\leq\\
\\
\frac{(1-q_i[h,r|\vec{\sigma}^*])}{(1-q_i[h,r+\pd|\vec{\sigma}^*](\mdel_i+1))} \leq\\
\\
\frac{c}{\mdel_i+1}.
\end{array}
$$

Thus, for any $r,r' \geq 0$,
$$\bar{p}_i[h,r|\vec{\sigma}^*] \frac{1}{A}+\bar{p}_i[h,r'|\vec{\sigma}^*]\frac{1}{B - C} \leq \bar{p}_i[h,r|\vec{\sigma}^*] + \bar{p}_i[h,r'|\vec{\sigma}^*] \frac{c}{\mdel_i +1}.$$

Thus, there exists a constant $c'= \frac{c}{\mdel_i + 1}$ such that if for every $i$ we have
$$\frac{\beta_i}{\gamma_i} > \max_{h \in {\cal H}} \bar{p}_i[h|\vec{\sigma}^*] (1 + c') \geq \bar{p}_i[h,r|\vec{\sigma}^*] + \bar{p}_i[h,r'|\vec{\sigma}^*] \frac{c}{\mdel_i +1},$$
then Inequality~\ref{eq:priv-suff} is fulfilled for every $h$, $r$, and $r'$, and for some $\omega_i \in (0,1)$.
By Lemma~\ref{lemma:priv-suff}, this implies $(\vec{\sigma}^*,\vec{\mu}^*)$ is Sequentially Rational
and the result follows.
\end{proof}

\end{document}